\newcommand{\george}[1]{{\color{green} George: #1}}
\newcommand{\zihan}[1]{{\color{blue} Zihan: #1}}
\newcommand{\tianyi}[1]{{\color{purple} Tianyi: #1}}
\newtheorem{theorem}{Theorem}
\newtheorem{observation}[theorem]{Observation}
\newtheorem{lemma}[theorem]{Lemma}
\newtheorem{claim}[theorem]{Claim}
\newtheorem{corollary}[theorem]{Corollary}
\newtheorem{definition}[theorem]{Definition}
\newcommand{\poly}{\mathsf{poly}}
\newcommand{\aset}{{\mathcal{A}}}
\newcommand{\bset}{{\mathcal{B}}}
\newcommand{\dset}{{\mathcal{D}}}
\newcommand{\sset}{{\mathcal{S}}}
\newcommand{\dist}{\textnormal{\textsf{dist}}}
\newcommand{\eps}{\varepsilon}
\newcommand{\set}[1]{\left\{ #1 \right\}}
\newenvironment{properties}[2][0]
{
	\begin{enumerate} \setcounter{enumi}{#1}}{\end{enumerate}}
\begin{document}

\begin{titlepage}
	
\title{Paths and Intersections: Exact Emulators for Planar Graphs}
\date{}
\author{George Z. Li\thanks{Carnegie Mellon University, PA, USA. Email: {\tt gzli929@gmail.com}. Part of the work was done at the 2023 DIMACS REU program, supported by NSF grants CCF-1836666 and CNS-2150186.}  
\and 
Zihan Tan\thanks{University of Minnesota Twin Cities, MN, USA. Email: {\tt zihantan1993@gmail.com}.}
\and 
Tianyi Zhang \thanks{ETH Zürich, \href{}{tianyi.zhang@inf.ethz.ch. Supported by the starting grant ``A New Paradigm for Flow and Cut Algorithms'' (no. TMSGI2\_218022) of the Swiss National Science Foundation. Work done at Tel Aviv University, supported by European Research Council (ERC) under the European Union’s Horizon 2020 research and innovation programme (grant agreement No 803118 UncertainENV).}} } 

\maketitle
	
\thispagestyle{empty}

\begin{abstract}
    We study vertex sparsification for preserving distances in planar graphs. Given an edge-weighted planar graph with $k$ terminals, the goal is to construct an emulator, which is a smaller edge-weighted planar graph that contains the terminals and exactly preserves the pairwise distances between them. We construct exact planar emulators of size $O(f^2k^2)$ in the setting where terminals lie on $f$ faces in the planar embedding of the input graph. Our result generalizes and interpolates between the previous results of Chang and Ophelders and Goranci, Henzinger, and Peng which is an $O(k^2)$ bound in the setting where all terminals lie on a single face (i.e., $f=1$), and the result of Krauthgamer, Nguyen, and Zondiner, which is an $O(k^4)$ bound for the general case (i.e., $f=k$).

    Our construction follows a recent new way of analyzing graph structures, by viewing graphs as paths and their intersections, which we believe is of independent interest.
\end{abstract}

\end{titlepage}

\tableofcontents

\newpage

\section{Introduction}

Graph compression is an approach that reduces the size of large graphs while preserving key properties, such as flow/cut/distance information. By shrinking the graph before performing computations, it saves computational resources. This approach has played a crucial role in designing faster and more efficient graph algorithms.

We study a specific type of graph compression, called \emph{vertex sparsifiers for preserving distances}, on planar graphs. The input is an edge-weighted planar graph $G$ and a subset $T\subseteq V(G)$ of its vertices called \emph{terminals}. The goal is to compute a smaller planar graph $H$ that contains all terminals in $T$ and preserves their pairwise distances in $G$.
Such a graph $H$ is called an \emph{emulator} of $G$. Formally, $H$ is a \emph{quality-$q$} emulator of $G$ with respect to $T$, for some real number $q\ge 1$, iff
\[
\forall t,t'\in T, \quad \dist_G(t,t')\le \dist_H(t,t')\le q\cdot \dist_G(t,t').
\]
The research focus in vertex sparsification lies in the trade-off between the quality and the emulator size, measured by $|V(H)|$, the number of vertices in the emulator. Ideally, we would like to preserve all distances accurately with small-sized emulators.

If there were no constraints on the structure of the emulator, then for any graph $G$ (not necessarily planar) with $k$ terminals, we could simply set $H$ as the clique on the terminals and give each edge $(t,t')$ length $\dist_G(t,t')$, trivially achieving optimal quality $q=1$ and optimal size $|V(H)|=k$.
Therefore, for distance-preserving vertex sparsification to be interesting and useful, the emulator needs to inherit some structural properties from the input graph $G$. A natural requirement is that $H$ be a \emph{minor} of $G$. Such emulators are called \emph{distance-preserving minors} and have been studied extensively.
In the setting where $q=1$, it was shown by Krauthgamer, Nguyen, and Zondiner \cite{krauthgamer2014preserving} via a simple analysis that every graph $G$ admits a quality-$1$ distance-preserving minor with size $O(k^4)$, and they also proved a lower bound of $\Omega(k^2)$, leaving a gap to be closed. Another interesting setting is when $|V(H)|=k$, and is called the \emph{Steiner Point Removal} problem (first studied by Gupta \cite{gupta2001steiner}), requiring that the emulator only contain terminals.
For general graphs, the best achievable quality is shown to be $O(\log |T|)$ \cite{kamma2015cutting,cheung2018steiner,filtser2018steiner} and $\tilde \Omega(\sqrt{\log |T|})$ \cite{chan2006tight,chen2024lower}.
For minor-free graphs, after an exciting line of work
\cite{basu2008steiner,filtser2020scattering,hershkowitz20211,chang2023covering,chang2023shortcut}, Chang, Conroy, Le, Milenkovic, Solomon and Than recently showed that quality $O(1)$ can be achieved.

Another regime that recently received much attention is planar emulators for planar graphs. The input is a planar graph $G$, and goal is to construct an emulator $H$ that is also a planar graph but not necessarily a minor of $G$.
A recent work by Chang, Krauthgamer, and Tan \cite{chang2022almost} showed that quality $(1+\eps)$ can be achieved by emulators of near-linear size $O(k\cdot\poly(\log k/\eps))$. For exact (quality-$1$) planar emulators, the best upper and lower bounds remain $O(k^4)$ and $\Omega(k^2)$ \cite{krauthgamer2014preserving}.
Recently, Chang and Ophelders~\cite{ChangO20} and Goranci, Henzinger, and Peng~\cite{goranci2020improved} studied the case where all terminals lie on a single face in the planar embedding of $G$, and they showed that such graphs have exact planar emulators of size $O(k^2)$, which is also shown to be tight.

Therefore, the natural next question is:
\[\emph{Does every planar graph with k terminals have an exact planar emulator of size $O(k^2)$?}
\]

\subsection{Our Result}

In this paper, we make progress towards this question. Our result is summarized in the following theorem.

\begin{theorem}
\label{thm: main}
Let $G$ be an edge-weighted planar graph. Let $T\subseteq V(G)$ be a set of $k$ vertices, such that there exist $f$ faces in the planar embedding of $G$ that contain all vertices of $T$. Then there exists another edge-weighted planar graph $H$ with $T\subseteq V(H)$ and $|V(H)|=O(f^2k^2)$, such that for all  $t,t'\in T$, $\dist_H(t,t')=\dist_G(t,t')$.
\end{theorem}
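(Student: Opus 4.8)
The plan is to realize $H$ as a suitably pruned union of shortest paths of $G$, reducing the theorem to a bound on the number of ``branch vertices'' of that union.

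\emph{Canonical geodesics and the reduction.} I would first assume (by a consistent tie-breaking rule, or an infinitesimal perturbation that is undone at the end) that shortest paths in $G$ are unique, and write $\pi(u,v)$ for the \emph{canonical geodesic} between $u$ and $v$. Two facts I would record up front: (i) every subpath of a canonical geodesic is itself canonical, so two canonical geodesics sharing two vertices share the whole subpath between them; hence (ii) for any canonical $\pi(a,b)$ and $\pi(c,d)$ the intersection $\pi(a,b)\cap\pi(c,d)$ is a single (possibly empty) contiguous subpath, and thus, drawn in the fixed planar embedding, two canonical geodesics meet in only $O(1)$ ``branch events''. Call a family $\mathcal{P}$ of canonical geodesics, each between two terminals, \emph{certifying} if $\bigcup_{P\in\mathcal{P}}P$ contains a shortest $t$--$t'$ path for every $t,t'\in T$. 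Given a certifying $\mathcal{P}$, let $H$ be obtained from the subgraph $\bigcup_{P\in\mathcal{P}}P$ by replacing every maximal path whose internal vertices are all non-terminal of degree $2$ with a single edge of the corresponding length. Then $H$ is a topological minor of $G$, hence planar; $T\subseteq V(H)$ since terminals are never contracted away; and $\dist_H(t,t')=\dist_G(t,t')$ for all $t,t'\in T$ — the ``$\le$'' direction because $\bigcup\mathcal{P}$ retains a $t$--$t'$ geodesic and contraction preserves its length, the ``$\ge$'' direction because every $t$--$t'$ path of $H$ unfolds into a $t$--$t'$ walk of equal length in $G$. Finally, by (ii), the number of vertices of degree $\ge 3$ in $\bigcup\mathcal{P}$ is $O(|\mathcal{P}|^2)$, so $|V(H)|=O(|\mathcal{P}|^2+k)$. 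Everything therefore reduces to one statement.

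\emph{Main Lemma: $G$ admits a certifying family $\mathcal{P}$ with $|\mathcal{P}|=O(fk)$.} Two sanity checks frame it. For $f=k$, taking $\mathcal{P}=\{\pi(t,t'):t,t'\in T\}$ is certifying and has size $O(k^2)=O(fk)$, recovering the $O(k^4)$ bound of Krauthgamer--Nguyen--Zondiner. For $f=1$, the lemma should amount to a reproving of the single-face constructions of Chang--Ophelders and of Goranci--Henzinger--Peng, namely that $O(k)$ geodesics certify all pairs among terminals on one face, giving the $O(k^2)$ emulator.

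\emph{Building $\mathcal{P}$.} Let $F_1,\dots,F_f$ be the faces carrying the terminals, assign each terminal to one face it lies on to get $T=T_1\cup\cdots\cup T_f$ with $\sum_\ell|T_\ell|=k$, and fix the cyclic order of $T_\ell$ along $\partial F_\ell$. For each unordered pair of faces $(F_\ell,F_{\ell'})$, including $\ell=\ell'$, I would produce a family $\mathcal{Q}_{\ell,\ell'}$ of $O(|T_\ell|+|T_{\ell'}|)$ canonical geodesics that certifies every pair $(t,t')$ with $t\in T_\ell$ and $t'\in T_{\ell'}$; then $\mathcal{P}=\bigcup_{\ell\le\ell'}\mathcal{Q}_{\ell,\ell'}$ is certifying, and since each face $F_\ell$ participates in $f$ of these pairs contributing $O(|T_\ell|)$ geodesics each time, $|\mathcal{P}|=O\bigl(f\sum_\ell|T_\ell|\bigr)=O(fk)$, which plugged into the reduction yields $|V(H)|=O(f^2k^2)$. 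The construction of $\mathcal{Q}_{\ell,\ell'}$ is where the $f$-face topology and the ``paths and their intersections'' viewpoint of the abstract do the work: by property (ii) the $|T_\ell|\cdot|T_{\ell'}|$ geodesics $\{\pi(t,t')\}$ form a ``non-crossing'', Monge-type family of curves running from $\partial F_\ell$ to $\partial F_{\ell'}$, and one extracts from it a linear-sized generating subfamily whose union still contains a shortest path for every remaining pair.

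\emph{Main obstacle.} I expect the crux to be exactly this last step: controlling how canonical geodesics between (and within) the distinguished faces interleave, and distilling from the quadratically many geodesics a linear-sized generating subfamily, with an uncrossing argument that is uniform across all $f$ faces and robust to the fact that a geodesic joining two faces may wind past the others. Making this combinatorics precise — and double-checking that the pruned union really has only $O(f^2k^2)$ branch vertices — is the main technical content; the rest of the argument (perturbation, topological-minor planarity, exactness of the contracted weights) is routine.
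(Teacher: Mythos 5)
Your reduction from the theorem to a linear-size ``certifying'' family $\mathcal{P}$ of geodesics is internally sound, and you correctly identify it as the crux. The difficulty is that this Main Lemma is not established in the paper, and the paper's own technical overview effectively explains why a subgraph/minor approach of this kind is blocked. Your construction always produces an emulator $H$ that is a topological minor of $G$, drawn exactly as in $G$; whenever two geodesics in $\bigcup\mathcal{P}$ cross in $G$, the emulator inherits the corresponding triangle inequalities. In the $2$-face discussion the paper points out that if the $t_1$-$t'_1$ and $t_2$-$t'_2$ shortest paths cross and both survive into the emulator, then the emulator is forced to satisfy $\dist_H(t_1,t'_2)+\dist_H(t_2,t'_1)\le\dist_H(t_1,t'_1)+\dist_H(t_2,t'_2)$, which $G$ need not satisfy; but what you actually need is the stronger requirement that the union of $O(fk)$ geodesics \emph{contain} a $t$-$t'$ geodesic for every terminal pair. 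There is no reason that a random pair like the $t_1$-$t'_3$ shortest path should lie inside a union of a linear number of other geodesics, and indeed your appeal to the $f=1$ case does not go through: Chang--Ophelders and Goranci--Henzinger--Peng do \emph{not} realize their $O(k^2)$ emulator as a union of $O(k)$ geodesics of $G$. Their quarter-grid is an abstract universal structure whose edge weights are computed from distances, not a subgraph of $G$; the best known bound for exact distance-preserving \emph{minors} in the single-face case is still only $O(k^4)$.

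The paper therefore abandons the subgraph route entirely. It keeps the $O(fk)$ ``critical paths'' as combinatorial objects, but redraws them in a new planar graph $H^*$ whose crossing pattern can differ from that of $G$ (Section 3 iteratively moves critical paths via ``minimal bad pair'' reroutings so that designated $1$-bend ``canonical paths'' cross exactly when the corresponding $G$-geodesics cross), and then it does \emph{not} inherit edge weights from $G$: it sets up a linear program over edge lengths of $H^*$ whose constraints say each canonical path has length equal to the corresponding $G$-distance and each simple terminal-to-terminal path has length at least it, and proves the LP feasible via Farkas' lemma, reducing feasibility to the non-existence of a certain dominated-but-cheaper terminal flow, which it rules out by morphing flows between $H^*$ and $G$ using Wye-Delta flow-equivalence. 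Your write-up handles none of this; the ``routine'' part you describe is genuinely routine, but the lemma you isolate as the remaining technical content is precisely the thing the authors chose not to (and, as far as the paper's evidence goes, cannot) prove. To salvage the proposal you would either need to establish the Main Lemma, or follow the paper in decoupling the combinatorial structure of the emulator from the metric of $G$ and arguing about edge weights abstractly.
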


When $f$ takes its minimum value $1$, our result recovers the $O(k^2)$ bound given by \cite{ChangO20,goranci2020improved}.
When $f$ takes its maximum value $k$, our result recovers the $O(k^4)$ bound given by \cite{krauthgamer2014preserving}.
Therefore, our result can be viewed as interpolating and generalizing these previous results.

Unlike the previous work \cite{ChangO20} which showed a universal construction and gave clean formula for directly computing edge weights from terminal distances, or the previous work \cite{goranci2020improved,krauthgamer2014preserving} which constructed the emulators by performing a series of operations on the input graph, we first construct a skeleton graph that captures the shortest path structures in the input graph, and then \emph{use a linear program} to find the edge weights that preserve the shortest-path distances between terminals.
For the first part, our construction of the skeleton graph relies on a new understanding recently adopted in \cite{chen2025path}: instead of viewing a graph as consisting of vertices and edges, we view it as being formed by paths and their intersections.
For the second part, the crux is establishing the feasibility of the edge-weight linear program. We take the dual of this LP, reducing the feasibility to proving the flow-equivalence properties of certain path-morphing operations, which is then done by Wye-Delta transformations\footnote{The Wye-Delta transformation is a celebrated technique for analyzing electrical networks. This technique has also been used in \cite{goranci2020improved} for simplifying half-grid graphs with terminals on the diagonal. We employ this technique in a different way, to show non-existence of certain flows.}. We believe that our approaches in both parts are of independent interest and will prove useful for more distance-based graph problems.

\paragraph{Related Work.}
Chang, Gawrychowski, Mozes, and Weimann \cite{chang2018near} also studied emulators for unweighted planar graphs, and proved that every $n$-vertex undirected unweighted planar graph admits an exact emulator (not necessarily planar) of size $\tilde O(\min\{k^2,\sqrt{kn}\})$, which is near-optimal.


\paragraph{Organization.} 
We provide a technical overview \Cref{sec: warmup}, featuring our skeleton graph construction, using the $2$-face case ($f=2$) as an example, to provide intuitions for the general $f$-face case. The detailed construction for the $f$-face case is then provided in \Cref{sec: main}. In \Cref{sec: edge weight}, we show how to compute the edge weights for the skeleton graph to preserves the shortest-path distances, where in \Cref{subsec: weight overview} contains a high-level overview of the approach in this part.

\section{Technical Overview: the 2-Face Case}
\label{sec: warmup}

Our approach consists of two steps: constructing the emulator graph and setting its edge weight.
In this section, we provide an overview of the emulator construction step, illustrating the ideas in the special case where all terminals lie on $2$ faces. An overview of the edge weight setting step can be found in  \Cref{subsec: weight overview}.

The construction for the $O(k^4)$-size emulators \cite{krauthgamer2014preserving} works as follows.
\begin{itemize}
    \item For every pair of terminals $t,t'$, take its shortest path $\pi_{t,t'}$ in $G$ (assuming it is unique, which can be achieved by standard techniques), so we have in total $\binom{k}{2}$ terminal shortest paths.
    \item For every pair of terminal shortest paths $\pi,\pi'$, their intersection must be either empty or a subpath of both $\pi$ and $\pi'$. Call the endpoints of this subpath \emph{special vertices}, so in total we have $2\cdot\binom{\binom{k}{2}}{2}=O(k^4)$ special vertices.
    \item Let $H$ be the union of all terminal shortest paths $\set{\pi_{t,t'}}_{t,t'\in T}$ (so $H\subseteq G$, and clearly $H$ is an exact emulator). The only vertices in $H$ with degree $\ne 2$ are special vertices (which include all terminals). We replace each induced path between a pair of special vertices with an edge connecting them whose length is the total length of the induced path, and the resulting graph $H$ is a minor of $G$ with $O(k^4)$ vertices. 
\end{itemize}

Their approach is simple and elegant, but also conveys an important message: the key information to be stored in an emulator is the \emph{intersections between terminal shortest paths} in $G$.

But how can we reduce the number of such intersections, which can be $O(k^4)$? The previous work \cite{ChangO20}
showed us a way: let different pairs of shortest paths share their intersections. Specifically, they constructed a compact quarter-grid structure that make  single-terminal-source shortest paths overlap significantly, and thereby ensuring that the intersections between them and other shortest paths are reused heavily. Alternatively, their strategy can be also interpreted as: let shortest paths ``support'' each other, in that some shortest paths simply come from the union of two other shortest paths. See \Cref{fig:quartergrid} for an illustration.

\begin{figure}[h]
	\centering
	\includegraphics[scale=0.1]{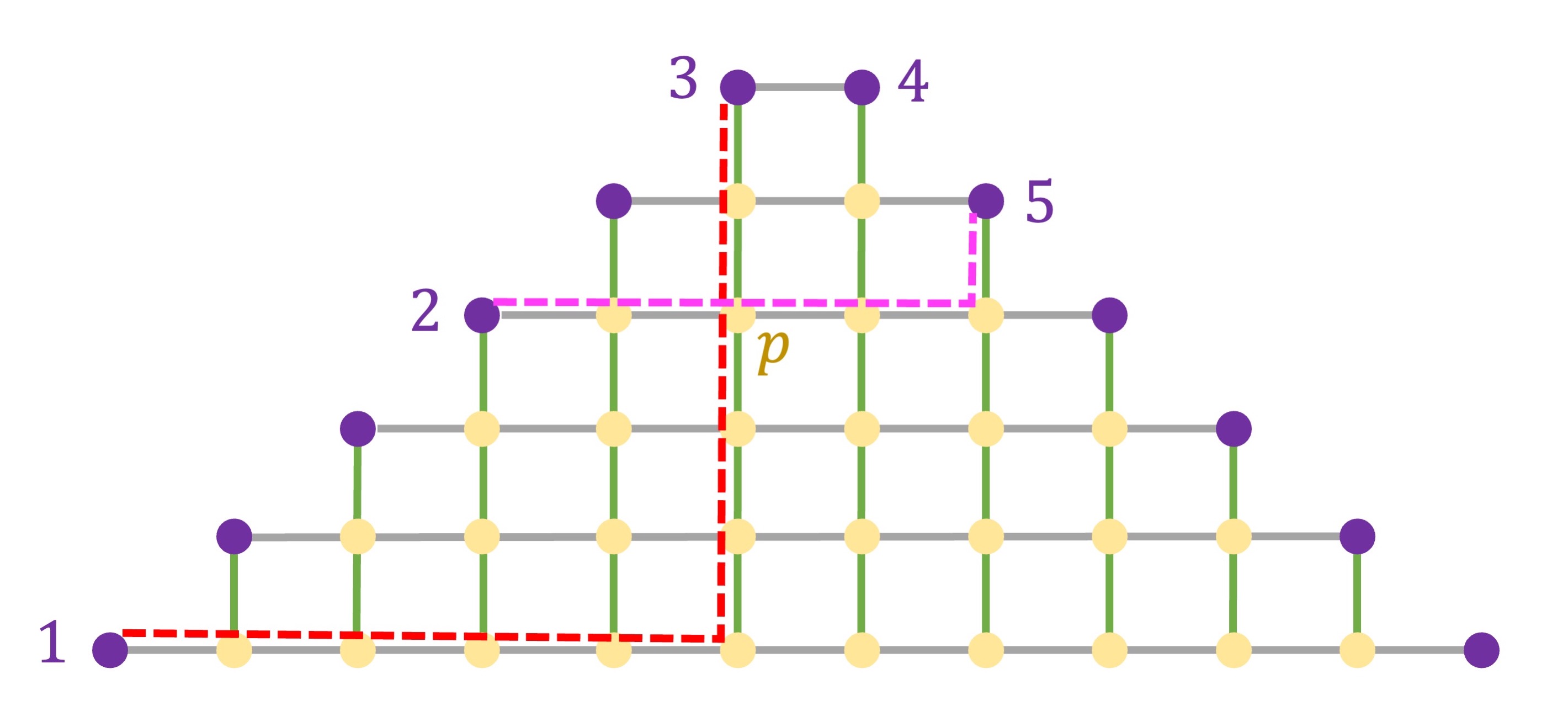}
	\caption{An illustration of the quarter-grid construction in \cite{ChangO20}. Terminals (purple) lie on the boundary. Terminal shortest paths are L-shaped. The $1$-$3$ path (red) and the $2$-$5$ path (pink) intersect at $p$. The $1$-$3$ path and the $2$-$4$ path also intersect at $p$, so $p$ as an intersection is shared/reused. Moreover, the $2$-$3$ paths is the $2$-$p$-$3$ L-shaped path, ``supported'' by the pink and the red paths.\label{fig:quartergrid}}
\end{figure}

Our approach also follows this general strategy of decreasing the number of intersections: \emph{extract important shortest paths and let them support other shortest paths}.
But our way of carrying out this strategy is fundamentally different from the previous work. We now discuss them in more detail, using the $2$-face instances as an example.

\subsection*{Critical paths}

Assume the input graph $G$ is a $2$-face instance. That is, all terminals lie on the boundaries of two faces, that we assume without loss of generality to be the outer face and the inner face.

Denote by $t_1,\ldots,t_{I}$ the terminals lying on the outer face and denote by $t'_1,\ldots,t'_{J}$ the terminals lying on the inner face, both in the clockwise order. For each pair $i,j$, we denote by $P_{i,j}$ the shortest path in $G$ connecting $t_i$ to $t'_j$. Consider now a terminal $t_i$ and all shortest paths $P_{i,1},\ldots,P_{i,J}$. 
We use the following observation, whose proof is straightforward and is deferred to \Cref{apd: Proof of obs: split}.

\begin{observation}
\label{obs: split}
For each $i$, there is a unique $j$, such that the inner face is contained in the region enclosed by path $P_{i,j}$, path $P_{i,j+1}$, and the segment of the inner boundary  from $t'_j$ clockwise to $t'_{j+1}$. 
\end{observation}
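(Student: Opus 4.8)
The plan is to reduce the statement to a fact about how the inner boundary cuts the complement of a tree, and then settle it by a direct topological analysis. Throughout, let $C$ be the inner boundary cycle (through $t'_1,\dots,t'_J$ in clockwise order), $a_j$ its segment from $t'_j$ clockwise to $t'_{j+1}$ (so $C=a_1\cup\cdots\cup a_J$), and $F_{\mathrm{in}},F_{\mathrm{out}}$ the inner and outer faces; we work on the sphere, so $F_{\mathrm{out}}$ is the face containing the point at infinity. First I would record a consequence of uniqueness of shortest paths: for $j\ne j'$ the paths $P_{i,j}$ and $P_{i,j'}$ agree on a common prefix starting at $t_i$ and are vertex-disjoint afterwards (if they met again at a vertex $w$, both of their prefixes to $w$ would be shortest $t_i$–$w$ paths, hence equal). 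Consequently $\Pi_i:=\bigcup_{j=1}^{J}P_{i,j}$ is a subtree of the shortest-path tree rooted at $t_i$. Let $v_j$ be the last common vertex of $P_{i,j}$ and $P_{i,j+1}$; then $P_{i,j}\cup P_{i,j+1}\cup a_j$ decomposes as a simple closed curve $\ell_j$ — the subpath of $P_{i,j}$ from $v_j$ to $t'_j$, then $a_j$, then the subpath of $P_{i,j+1}$ from $t'_{j+1}$ back to $v_j$ — together with the common prefix from $t_i$ to $v_j$, which hangs off $\ell_j$ at $v_j$ as a ``tail'', and ``the region enclosed'' is the bounded complementary region $R_j$ of $\ell_j$ (I will check along the way that the tail always dangles into the unbounded region, so this is unambiguous). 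Thus the statement to prove is: $F_{\mathrm{in}}\subseteq R_j$ for exactly one $j$.

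Next I would analyze the faces of the embedded graph $\Pi_i\cup C$. Since $\Pi_i$ is a tree, its complement on the sphere is an open disk $D$ whose boundary is the contour walk of $\Pi_i$. As $C$ is drawn inside $\overline{D}$ with its $J$ vertices on $\partial D$, the standard planarity fact that a cycle embedded in a disk has its vertices in cyclic order along the boundary forces $t'_1,\dots,t'_J$ to appear along $\partial D$ in the order $1,\dots,J$ (with $t_i$ sitting in one of the $J$ gaps). Drawing $C$ into $D$ therefore cuts $D$ into $F_{\mathrm{in}}$ (the inside of $C$) and $J$ further faces $\Phi_1,\dots,\Phi_J$, where $\Phi_m$ is bounded by $a_m$ together with the portion of $\partial D$ from $t'_m$ to $t'_{m+1}$; equivalently, $\Phi_m$ is the face of $\Pi_i\cup C$ lying on the opposite side of the arc $a_m$ from $F_{\mathrm{in}}$. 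Since $F_{\mathrm{out}}$ is a face of $G$ other than $F_{\mathrm{in}}$, it is disjoint from $C$ and from the inside of $C$, so $F_{\mathrm{out}}\subseteq\Phi_{j^*}$ for a unique index $j^*$.

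Finally I would identify $R_j$ and conclude. For $j\ne j^*$ the portion of $\partial D$ from $t'_j$ to $t'_{j+1}$ contains no marked vertex besides its endpoints (in particular not $t_i$), hence equals the tree path from $t'_j$ through $v_j$ to $t'_{j+1}$; so $\ell_j$ bounds precisely the face $\Phi_j$, the tail lies in the unbounded region, and $R_j=\Phi_j$, which is disjoint from $F_{\mathrm{in}}$. For $j=j^*$ the portion of $\partial D$ from $t'_{j^*}$ to $t'_{j^*+1}$ detours through $t_i$, so $\Phi_{j^*}$ is the \emph{unbounded} face (it contains the point at infinity, consistently with $F_{\mathrm{out}}\subseteq\Phi_{j^*}$) and both the tail and $\Phi_{j^*}$ lie on the unbounded side of $\ell_{j^*}$; hence $R_{j^*}$ is the bounded side of $\ell_{j^*}$, and since the arc $a_{j^*}\subseteq\ell_{j^*}$ separates $F_{\mathrm{in}}$ from $\Phi_{j^*}$ we get $F_{\mathrm{in}}\subseteq R_{j^*}$. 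Therefore $F_{\mathrm{in}}\subseteq R_j$ if and only if $j=j^*$, which yields both existence and uniqueness.

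The step I expect to be the main obstacle is making the contour/face analysis rigorous in degenerate configurations: a shortest path $P_{i,j}$ may run along edges of $C$ (so $\Pi_i\cap C$ exceeds the $J$ terminals and the ``$D$ cut by $C$'' picture must be adjusted), one may have $v_j=t_i$ (empty tail, so $t_i$ lies on $\ell_j$ and one must inspect the rotation system at $t_i$ to see that $F_{\mathrm{out}}$ still lies outside $R_j$), the inner boundary may fail to be a simple cycle when $G$ is not $2$-connected, or a terminal $t'_\ell$ may lie in the interior of some $P_{i,j}$. I would dispatch these by a routine (if tedious) case analysis, or more cleanly by perturbing the edge weights so that all shortest paths become unique and in general position with respect to the two boundaries and noting that the asserted containment of closed regions is stable under the limit; of these, the ``boundary-hugging shortest path'' case looks the fiddliest.
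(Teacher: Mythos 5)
Your proof is correct but takes a genuinely different route from the paper's. The paper argues existence and uniqueness as two separate direct claims: for existence it observes that if $t_i$ does not split at any $(t'_j,t'_{j+1})$ with $j<J$, then the union $A=\bigcup_{j<J}A_j$ of the corresponding regions is exactly the region bounded by $P_{i,1}$, the boundary arc from $t'_1$ to $t'_J$, and $P_{i,J}$, and since $A$ misses the inner face, the complementary region $A_J$ must contain it; for uniqueness it assumes two splitting pairs and uses the fact that two shortest paths from $t_i$ can only share a prefix to force $P_{i,\ell}$ and $P_{i,\ell+1}$ into the interior of $R_j$, whence $R_\ell$ cannot contain the inner face. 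Your argument replaces both pieces with a single global face count: form the tree $\Pi_i=\bigcup_j P_{i,j}$, note that its spherical complement is a disk $D$, that inserting $C$ cuts $D$ into $F_{\mathrm{in}}$ plus $J$ wedges $\Phi_1,\dots,\Phi_J$ (one per boundary gap, with the $t'_j$ appearing on $\partial D$ in cyclic order), and that exactly one wedge $\Phi_{j^*}$ contains $F_{\mathrm{out}}$; then $R_j=\Phi_j$ is disjoint from $F_{\mathrm{in}}$ for $j\ne j^*$, while $R_{j^*}$ is the bounded side of $\ell_{j^*}$ and contains $F_{\mathrm{in}}$. This unifies existence and uniqueness into one pigeonhole statement and makes the ``exactly one'' conclusion structurally transparent, at the cost of setting up the contour/face picture (the cyclic-order claim on $\partial D$ and the identification $R_j=\Phi_j$) with some care. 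Your list of degenerate cases (paths hugging $C$, a terminal interior to some $P_{i,j}$, $v_j=t_i$, a non-simple inner boundary) is the right set of worries; the paper's proof is quietly exposed to the same issues, since the cleanup properties G1--G3 are only imposed in Section 3, and a perturbation/general-position argument is a reasonable way for you to dispatch them.
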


In other words, there exists a unique $j$, such that when we ``morph'' from $P_{i,j}$ to $P_{i,j+1}$, we suddenly ``switch to the other side''.
In this case, we say that terminal $t_i$ \emph{splits} at the terminal pair $(t'_j,t'_{j+1})$.
Intuitively, the paths $P_{i,j}$ and $P_{i,j+1}$ govern the behavior of all other shortest paths from $t_i$, so we call them \emph{critical paths} from $t_i$
(see \Cref{fig:split}).
Critical paths are the important shortest paths we extract from the input graph $G$.

\begin{figure}[h]
	\centering
	\includegraphics[scale=0.1]{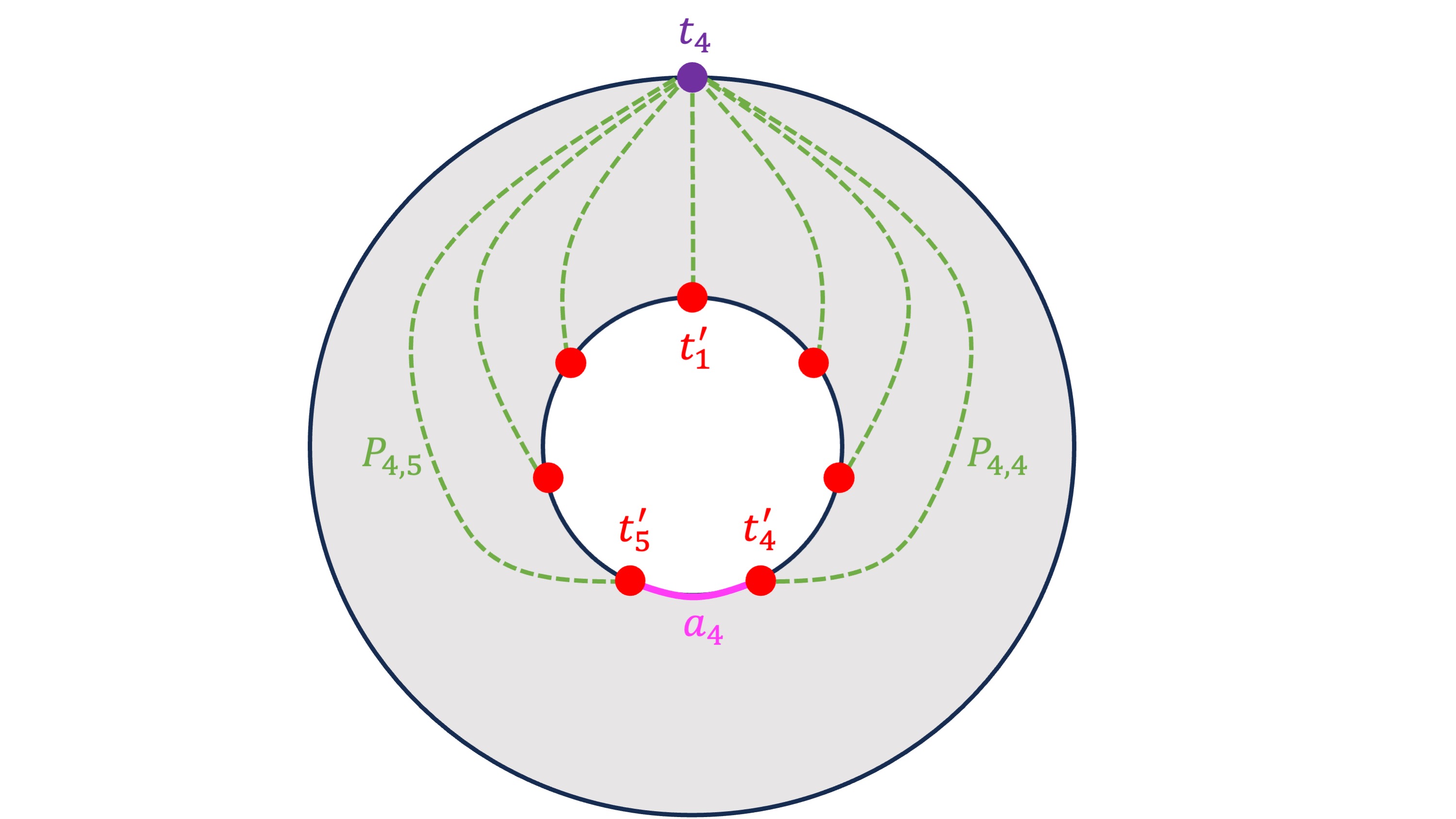}
	\caption{An illustration of $t_4$ splitting at $(t'_4,t'_5)$. Paths $P_{4,4}$ and $P_{4,5}$ are critical paths from $t_4$.\label{fig:split}}
\end{figure}

Furthermore, we can show that, as we move from terminal to terminal clockwise on the outer face, their ``split location'' also moves clockwise on the inner face. The proof is deferred to \Cref{apd: Proof of obs: split move}.

\begin{observation}
\label{obs: split move}
Let $t_{i_1},t_{i_2},t_{i_3}$ be terminals on the outer face clockwise in this order. If $t_{i_1},t_{i_2},t_{i_3}$ split at $(t'_{j_1},t'_{j_1+1}),(t'_{j_2},t'_{j_2+1}),(t'_{j_3},t'_{j_3+1})$, respectively, then $t'_{j_1},t'_{j_2},t'_{j_3}$ lie on the inner face clockwise in this order.
\end{observation}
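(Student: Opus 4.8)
The plan is to deduce the cyclic monotonicity from the non-crossing structure of shortest paths after passing to the universal cover of the natural annulus. Delete the interiors of the outer face $F_1$ (carrying $t_1,\dots,t_I$ clockwise) and the inner face $F_2$ (carrying $t'_1,\dots,t'_J$ clockwise) from the sphere; this leaves an annulus $\Sigma$ in which every $P_{i,j}$ lives with one endpoint on each boundary circle. Let $\widetilde\Sigma\cong\mathbb{R}\times[0,1]$ be the universal cover, with deck translation $x\mapsto x+1$, the top line covering $\partial F_1$ and the bottom line covering $\partial F_2$; fix lifts of the terminals so that (with a consistent choice of orientation) the clockwise order around each boundary circle matches the left-to-right order of the corresponding lifts inside one period. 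I would first record the facts about unique shortest paths that get used: any two of them meet in a (possibly empty) common subpath; the paths $P_{i,1},\dots,P_{i,J}$ out of a single terminal $t_i$ form an embedded tree, which lifts homeomorphically; and I would fix the convention that no shortest path passes through a terminal other than its endpoints or meets $\partial F_1\cup\partial F_2$ except at its endpoints, postponing the routine treatment of the degenerate configurations.

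The key reformulation is that the split of $t_i$ is a ``wrap'' in the universal cover. Fix the lift $\widetilde P_{i,j}$ of $P_{i,j}$ starting at the chosen lift $\widetilde t_i$, and let $e(i,j)$ be its other endpoint, a definite lift of $t'_j$ on the bottom line. I claim---and would prove first, as this is essentially \Cref{obs: split} read upstairs---that the cyclic sequence $j\mapsto e(i,j)$ is strictly increasing except for a single ``descent,'' located exactly at the split index $j_i$; equivalently, the leftmost and rightmost among $e(i,1),\dots,e(i,J)$ are $e(i,j_i+1)$ and $e(i,j_i)$, the endpoints of the two critical paths.

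To finish, suppose $t_{i_1},t_{i_2},t_{i_3}$ occur clockwise on $\partial F_1$, and choose lifts with $\widetilde t_{i_1},\widetilde t_{i_2},\widetilde t_{i_3}$ inside one period and in the corresponding order. I would show the three ``endpoint blocks'' $B_\ell:=\{e(i_\ell,1),\dots,e(i_\ell,J)\}$ are stacked weakly left to right. Indeed, the lift $\widetilde P_{i_1,j_1}$ reaching the rightmost endpoint of $B_1$ is a simple top-to-bottom arc, hence it separates the strip; $\widetilde t_{i_2}$ lies on its right, so the connected set $\widetilde P_{i_2,k}$ does too, giving $e(i_2,k)\ge\max B_1$ for every $k$, i.e. $\max B_1\le\min B_2$; likewise $\max B_2\le\min B_3$; and $\widetilde t_{i_3}<\widetilde t_{i_1}+1$ forces $B_3$ to lie left of the once-shifted block of $t_{i_1}$, so $\max B_3-\min B_1\le 1$. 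Hence the rightmost endpoints of $B_1,B_2,B_3$---which, by the previous paragraph, are the lifts of $t'_{j_1},t'_{j_2},t'_{j_3}$---occur in this left-to-right order within a window of length $\le 1$, which means $t'_{j_1},t'_{j_2},t'_{j_3}$ occur clockwise on $\partial F_2$.

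The hard part will be making rigorous the ``$\widetilde P_{i_2,k}$ stays on the right'' step: unlike same-source shortest paths, two shortest paths from \emph{different} terminals can genuinely cross transversally in a planar graph---even in the two-face setting---so one cannot directly invoke non-crossing. What rescues the argument is that a transversal crossing downstairs need not make the relevant \emph{lifts} intersect upstairs: $\widetilde P_{i_2,k}\cap\widetilde P_{i_1,j_1}$ is a single lift of the connected set $P_{i_2,k}\cap P_{i_1,j_1}$ (or empty), so the separating arc $\widetilde P_{i_1,j_1}$ is never actually crossed by $\widetilde P_{i_2,k}$ and the side-of-the-arc conclusion stands. Turning this into a clean proof, together with discharging the degeneracies set aside above---critical pairs that share a long initial stem rather than being internally disjoint, and shortest paths touching face boundaries or other terminals---is the delicate part of the write-up.
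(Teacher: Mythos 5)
Your universal-cover reformulation is elegant and genuinely different from the paper's argument, which proceeds by direct contradiction: if the cyclic order of $t'_{j_1},t'_{j_2},t'_{j_3}$ fails, then the critical path $P_{i_2,j_2+1}$ is forced to cross $P_{i_1,j_1}$ twice, violating Property~\ref{prop: G3}. Recasting the split as the unique descent of $j\mapsto e(i,j)$ along the bottom line of $\mathbb{R}\times[0,1]$ is a clean way to read Observation~\ref{obs: split}. However, the stacking claim $\max B_1\le\min B_2\le\max B_2\le\min B_3$ that the rest of your argument rests on is not established, and the justification you offer for the key separation step is wrong.

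The gap is the claim that $\widetilde P_{i_2,k}$ never crosses the separating arc $\widetilde P_{i_1,j_1}$. Your reasoning --- that $\widetilde P_{i_2,k}\cap\widetilde P_{i_1,j_1}$ is either empty or a single lift of the connected set $P_{i_2,k}\cap P_{i_1,j_1}$, ``so the separating arc $\widetilde P_{i_1,j_1}$ is never actually crossed'' --- is a non sequitur: a connected intersection of two simple arcs can perfectly well be a transversal crossing, with $\widetilde P_{i_2,k}$ entering the shared subarc on one side of $\widetilde P_{i_1,j_1}$ and leaving on the other. Concretely, $\widetilde t_{i_2}$ lies on the top edge of the bounded disk in the strip enclosed by the segment $[\widetilde t_{i_1},\widetilde t_{i_1}+1]$ on top, the two critical lifts $\widetilde P_{i_1,j_1}$ and $\widetilde P_{i_1,j_1+1}+1$ on the sides, and the short terminal-free lift of the arc from $t'_{j_1}$ to $t'_{j_1+1}$ at the bottom. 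To reach a terminal on the bottom line, $\widetilde P_{i_2,k}$ must exit this disk exactly once, and the one-crossing property of shortest paths alone does not privilege the wall $\widetilde P_{i_1,j_1+1}+1$ over the wall $\widetilde P_{i_1,j_1}$; exiting through the latter gives $e(i_2,k)<\max B_1$, defeating the inequality. Ruling out that exit --- exactly what would give $\max B_1\le\min B_2$ --- needs an argument that uses the split structure of $t_{i_2}$ (and very likely the definition of its own critical paths), not merely the topology of lifts; this is precisely the step where the paper's short double-crossing contradiction is carrying the actual load. You flag this yourself as ``the hard part'' and ``the delicate part of the write-up,'' and the concern is well-founded: as it stands the proposal is a plan with its load-bearing step missing. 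A smaller caveat worth a sign-check: the two boundary circles of an annulus inherit opposite induced orientations, so ``clockwise matches left-to-right'' on both the top and the bottom line is not automatic and affects which lift is the left wall and which the right.
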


\underline{\emph{Simplifying Assumptions:}} In order to better illustrate the ideas in the construction,
we assume that  $k$ is even and $I=J=k/2$. That is, $k/2$ terminals lie on the outer face, and $k/2$ terminals lie on the inner face. Furthermore, we assume that for each $i\in [k/2]$, terminal $t_i$ splits at $(t'_i,t'_{i+1})$, and so paths $P_{i,i}$ and $P_{i,i+1}$ are critical paths from terminal $t_i$.

\subsection*{Emulator structure for preserving inter-face distances\footnote{We remark that, for the purpose of technical overview, we only discuss the emulator structure needed for preserving distances between a terminal on the inner face and a terminal on the outer face, which is the core part of $2$-face emulators. A complete description of the emulator structure is provided in \Cref{apd: 2-face emulator}.}}

Now that we have found the important shortest paths (i.e., the critical paths), it is time for them to form a structure and support other shortest paths. Specifically, the plan is to construct an emulator consisting of only critical paths, and then design the route of other shortest paths accordingly. 

The first question is: \emph{in our emulator $H$, how should these critical paths intersect each other?}

Consider a pair $P_{1,1},P_{2,2}$ of critical paths, where $P_{1,1}$ connects $t_1$ to $t'_1$ and $P_{2,2}$ connects $t_2$ to $t'_2$. Note that, if we let $P_{1,1}$ and $P_{2,2}$ intersect in $H$, then we can extract a $t_1$-to-$t'_2$ path and a $t_2$-to-$t'_1$ path by non-repetitively using the edges of $E(P_{1,1})\sqcup E(P_{2,2})$ (if an edge belongs to both paths then we are allowed to use it twice). Since in graph $H$, $P_{1,1}$ is meant to be the shortest path connecting $t_1$ and $t'_1$ and $P_{2,2}$ is meant to be the shortest path connecting $t_2$ and $t'_2$, by allowing $P_{1,1}$ and $P_{2,2}$ to intersect in $H$, we are essentially enforcing that
\[\dist_H(t_1,t'_2)+\dist_H(t_2,t'_1)\le \dist_H(t_1,t'_1)+\dist_H(t_2,t'_2).\]
However, this is not necessarily true for the corresponding distances in $G$. Therefore, since we want $H$ to be an emulator for $G$, to be on the safe side, we cannot allow paths $P_{1,1}$ and $P_{2,2}$ to intersect.


By similar reasons, for each $1\le i\le k/2$, we cannot allow path $P_{i,i}$ and path $P_{i+1,i+1}$ to intersect. Therefore, the paths $P_{1,1},P_{2,2},\ldots,P_{k/2,k/2}$ must be ``parallel'' to each other. Similarly, the paths $P_{1,2},P_{2,3},\ldots,P_{k/2,1}$ must be parallel, too.
This leads us to the structure of $H$:
it is simply obtained by (i) adding, for each $i$, two curves from $t_i$ to $t'_{i}$ and $t'_{i+1}$ respectively, that share only the endpoint $t_i$ and are in the same shape as their corresponding critical paths in $G$; (ii) ensuring that curves corresponding to parallel paths do not intersect; (iii) replacing each intersection between two curves with a new vertex. See \Cref{fig: H3} for an illustration.
By reorganizing the terminal locations, we obtain a grid-shape structure shown on the right of \Cref{fig: H3}.
It is easy to verify that $|V(H)|=O(k^2)$.

\begin{figure}[h]
\centering
\subfigure
{\scalebox{0.1}{\includegraphics{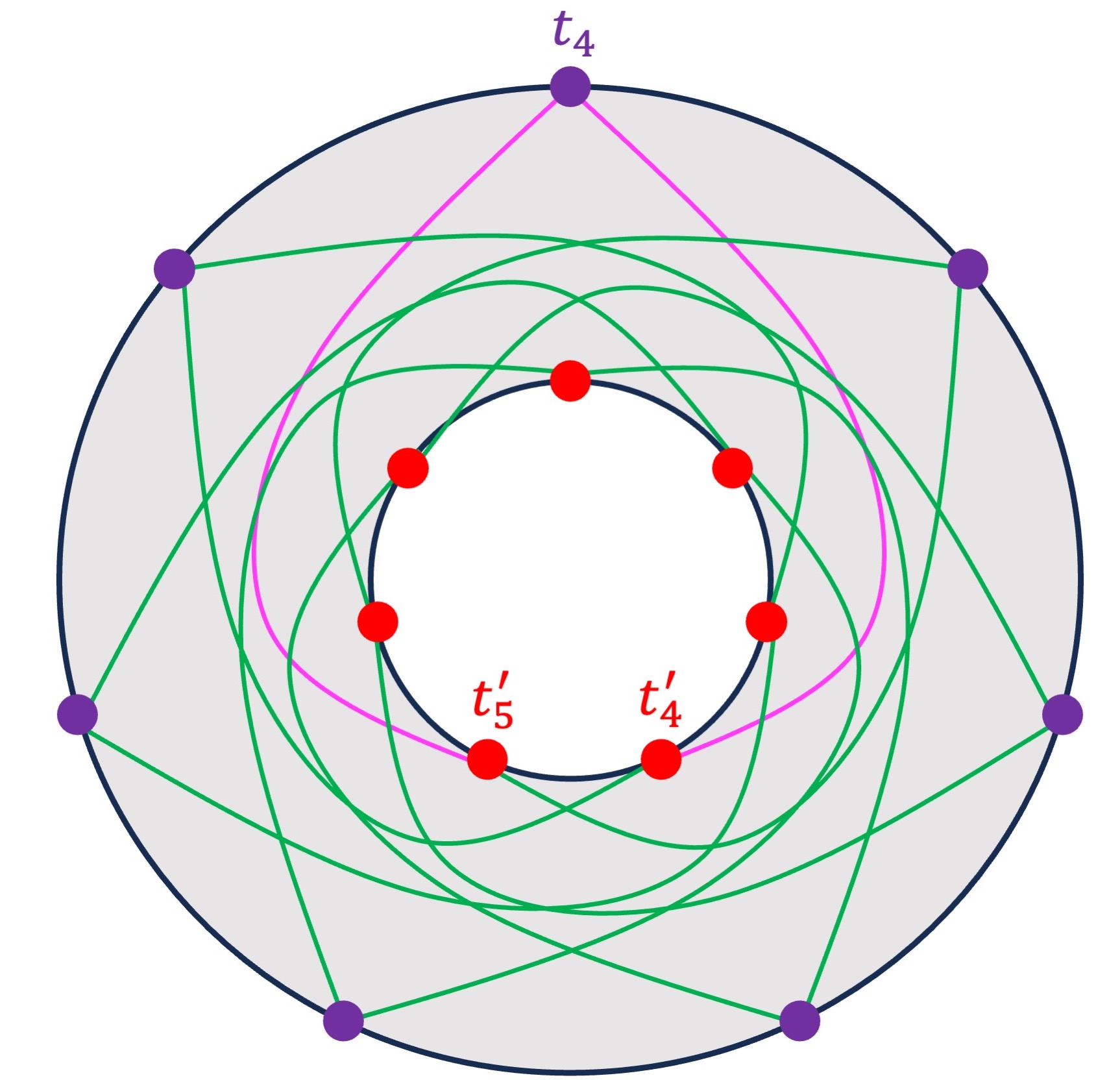}}}
\hspace{0.5cm}
\subfigure
{\scalebox{0.105}{\includegraphics{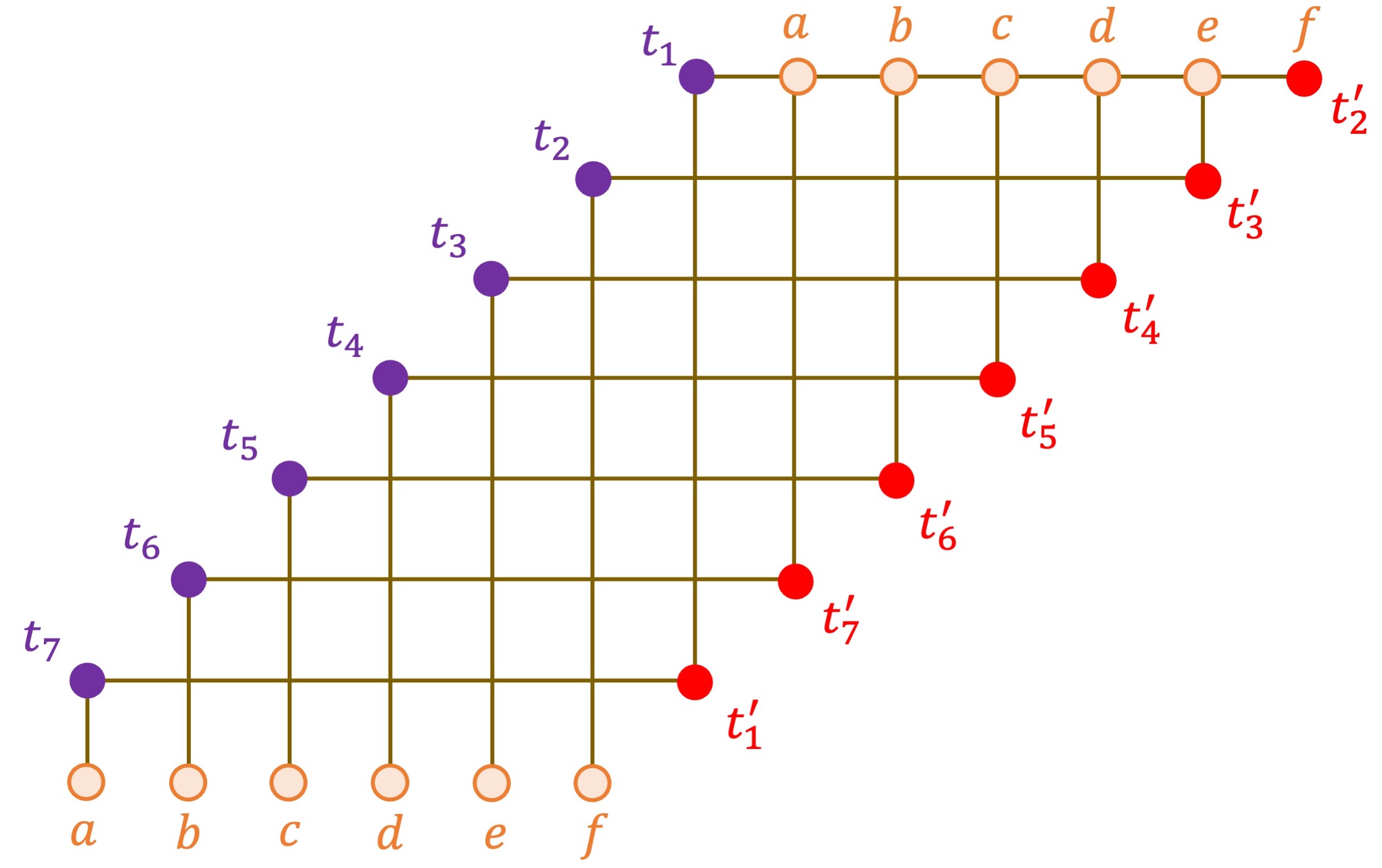}}}
\caption{An illustration of the structure of $H$. Left: the picture after adding all curves. The curves corresponding to the critical paths of $t_4$ are shown in pink. Right: the grid-shape structure after reorganizing the location of terminals. Each intersection between a horizontal line and a vertical line represents a new vertex. The down edge from $t_7$ connects to the second vertex on the first row (both marked $a$), and similarly for $b,c,d,e,f$. \label{fig: H3}}
\end{figure}

The next question is: \emph{how should we design the route of other shortest paths on this structure?}

Consider the $t_1$-$t'_3$ shortest path. Since its shape is supposed to be governed by critical paths $P_{1,1}$ and $P_{1,2}$, it may not escape the area enclosed by $P_{1,1}, P_{1,2}$ and the inner face boundary. 
Moreover, its intersections with $P_{1,2}$ and $P_{2,3}$ should be subpaths.
So there are two natural ($1$-bend) choices: 
\begin{enumerate}
\item $t_1$ first goes horizontally to $e$ and then vertically down to $t'_3$; and 
\item $t_1$ first goes vertically down to path $P_{2,3}$ and then follows $P_{2,3}$ horizontally to $t'_3$. 
\end{enumerate}
They are in fact symmetric, so let's take option (2).

Surprisingly, this single choice we made for the $t_1$-$t'_3$ shortest path essentially determines the routes for all other shortest paths. To see why it is true, note that
\begin{itemize}
    \item by similar reasons, the shortest paths between $t_1$-$t'_3$, $t_2$-$t'_4$, $t_3$-$t'_5,\ldots,$ $t_{k/2}$-$t'_2$ should be parallel (mutually vertex-disjoint) as well, and the only way to achieve this property in the structure of $H$ is to route, for each $i$, the $t_i$-$t'_{i+2}$ shortest path as: starting at $t_i$, going one step vertically down to the critical path $P_{i+1,i+2}$, and then following $P_{i+1,i+2}$ horizontally to $t'_{i+2}$; 
    \item when we next consider the $t_1$-$t'_4$ shortest path, since its intersections with $t_1$-$t'_2$, $t_1$-$t'_3$ should both be its subpaths, the only natural ($1$-bend) option is to first go $2$ steps vertically down to the critical path $P_{3,4}$ and then follow $P_{3,4}$ horizontally to $t'_4$ (see \Cref{fig:direction}); this in turn forces the similar routes for all $t_2$-$t'_5$, $t_3$-$t'_6,\ldots, t_{k/2}$-$t'_3$ shortest paths: all first $2$ steps down and then all the way to the right; and
    \item finally, via similar arguments, we can argue that all the $t_i$-$t'_j$ shortest paths should take the ``first vertically down, then horizontally right'' $1$-bend route.
\end{itemize}

\begin{figure}[h]
	\centering
	\includegraphics[scale=0.1]{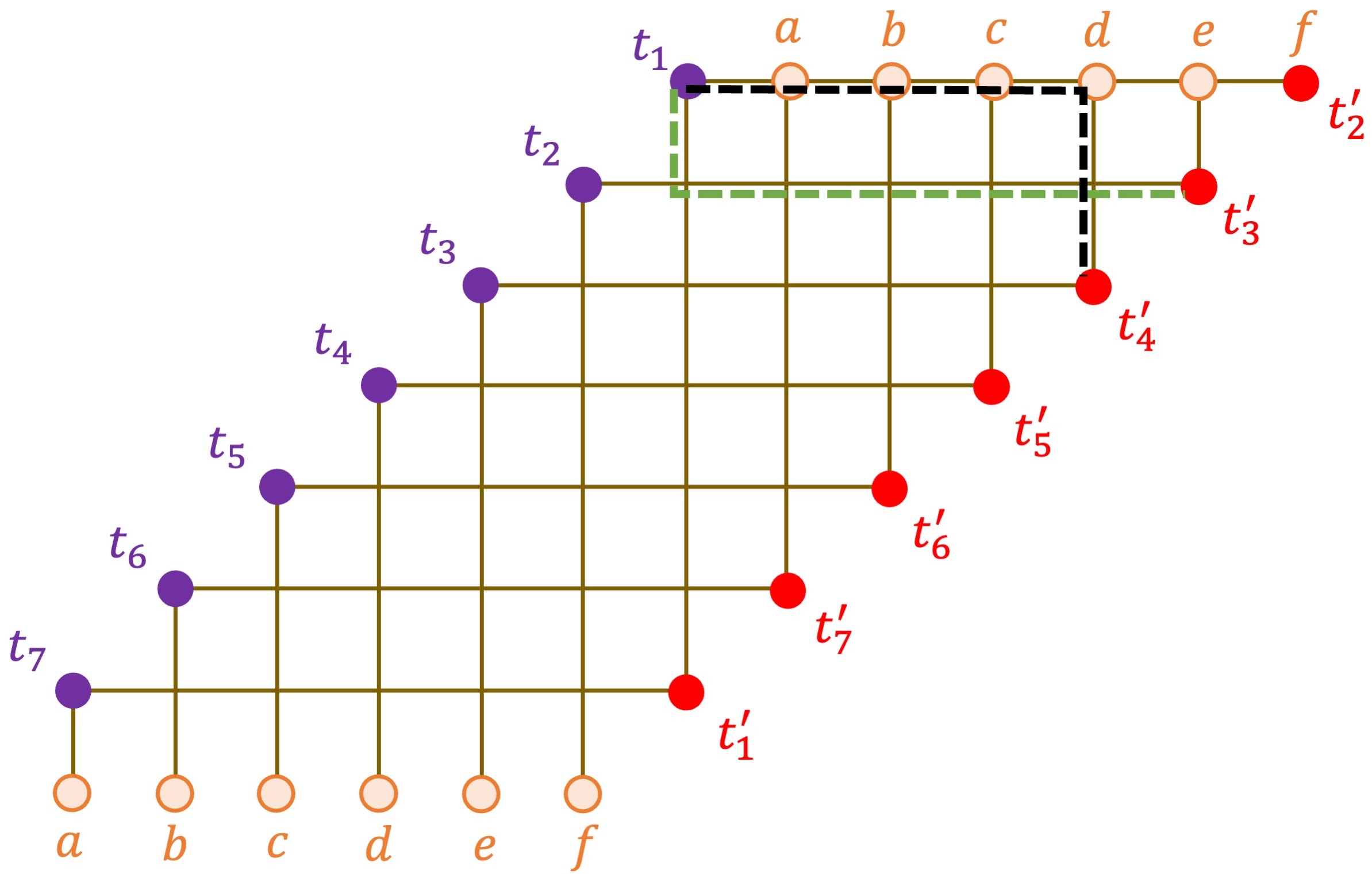}
	\caption{If the $t_1$-$t'_3$ shortest path follows the ``down-right'' route (shown in green), while the $t_1$-$t'_4$ shortest path follows the ``right-down'' route (shown in black), then this will create a pair of vertices with multiple shortest paths between them, which is undesired.\label{fig:direction}}
\end{figure}

To summarize, by taking option (2) for the $t_1$-$t'_3$ shortest path, we essentially prioritize the vertical direction over the horizontal direction in the structure of $H$ shown in \Cref{fig: H3}, thereby guiding all $t_i$-$t'_j$ shortest paths to go first vertically and then horizontally. Equivalently, we are prioritizing the vertical critical paths over the horizontal ones. 
Looking back, if we had taken option (1) for routing the $t_1$-$t'_3$ shortest path, then we would have eventually prioritized the horizontal critical paths over the vertical ones. In either way, we have to break the ties between the two directions to establish a correct shortest path structure in $H$.
To formalize this, we introduce some new notions.


\paragraph{Primary/secondary paths, and canonical paths.}
From a terminal $t_i$, there are two critical paths connecting it to $t'_i$ and $t'_{i+1}$, respectively. We call $P_{i,i}$ the \emph{primary path from $t_i$} and $P_{i,i+1}$ the \emph{secondary path from $t_i$}. Similarly, from each terminal $t'_j$ there are two critical paths connecting it to $t_{j-1}$ and $t_{j}$. We call $P_{j-1,j}$ the \emph{primary path from $t'_j$} and $P_{j,j}$ the \emph{secondary path from $t'_j$}. 

Between a pair $t_i, t'_j$ of terminals, we define their \emph{canonical path} $\gamma_{i,j}$ as follows. Let $p$ denote the node where the primary paths from $t_i$ and $t'_j$ intersect. The canonical path $\gamma_{i,j}$ is the union of (i) the subpath of the $t_i$-primary path between $t_i$ and $p$; and (ii) the subpath of the $t'_j$-primary path between $p$ and $t'_j$. 

For every pair $t_i, t'_j$, we will eventually ensure that \emph{canonical path $\gamma_{i,j}$ is the shortest $t_i$-$t'_j$ path in $H$}, whose length is exactly $\dist_G(t_i,t'_j)$, their distance in $G$.

\subsection*{Summary for the $2$-face emulators and generalization to $f$-face emulators}

At the end of this section, we briefly summarize the ideas for constructing the emulator graph for $2$-face instances. The overall idea is to extract important shortest paths from the input graph and let them support other shortest paths. Concretely, in a $2$-face instance
\begin{itemize}
    \item from each terminal, there are two inter-face terminal shortest paths that govern the shape of others from left and right sides, and we call them \emph{critical paths};
    \item to form a structure using only critical paths, we have to draw all left critical paths in parallel (and they become horizontal lines) and similarly draw all right critical paths in parallel (and they become vertical lines), thereby obtaining a grid-shape structure;
    \item to design routes for other shortest paths in this structure, we have to prioritize one direction (vertical/horizontal) in the grid over the other, thereby classifying the critical paths as primary and secondary ones, and eventually set the routes of other shortest paths as the concatenation of primary subpaths.
\end{itemize}

Since the whole structure consists of $O(k)$ critical paths, and every pair of critical paths intersect at most once, the total number of vertices in the graph is $O(k^2)$.

Our construction for $f$-face instances follows the same strategy as the $2$-face case: extract critical paths, draw them in parallel, let them intersect and form a structure, prioritize directions and guide the routes of other shortest paths. The main difference is that, from a terminal, we will need more than $2$ critical paths to govern the shapes of others. 
Specifically, we will have $O(f)$ critical paths from each terminal (see \Cref{fig: critical_example}), therefore having a total of $O(fk)$ critical paths. Their pairwise intersections form a structure on $O(f^2k^2)$ nodes, which will be our emulator structure.

\begin{figure}[h]
\centering
\includegraphics[scale=0.12]{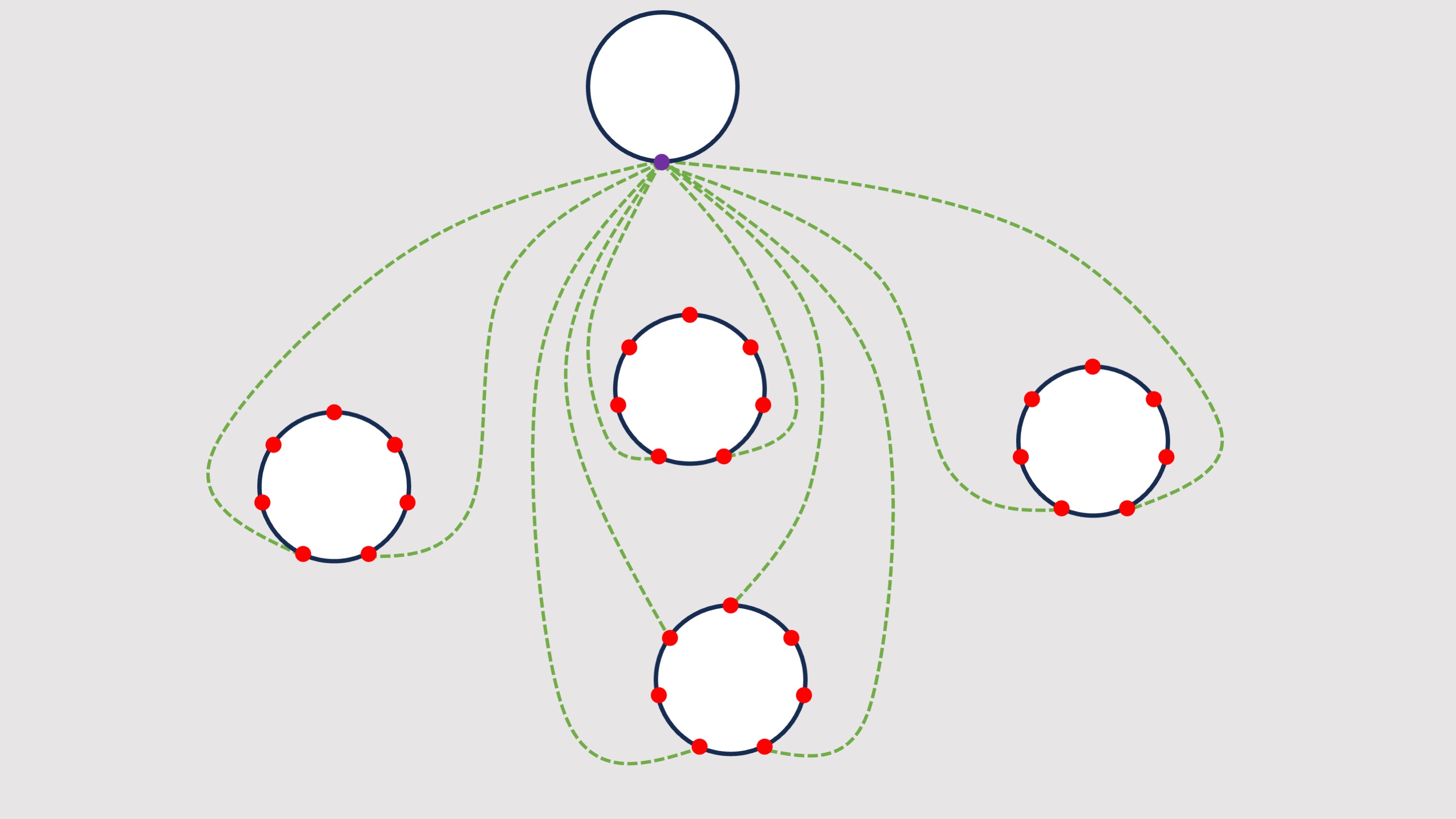}
\caption{An illustration of $O(f)$ critical paths from a terminal in a $f$-face instance.\label{fig: critical_example}}
\end{figure}

\section{The $f$-face Case: Constructing the Graph}
\label{sec: main}

In this section we provide the proof of \Cref{thm: main}. Recall that we are given a plane graph $G$ and a set $T$ of its vertices called terminals, such that all terminals lie on the boundaries of $f$ faces in its planar embedding, and the goal is to construct a planar emulator $H$ with $|V(H)|=O(f^2k^2)$. 
Similar to the way we handle the $2$-face case in \Cref{sec: warmup}, we will first show the graph structure and then assign weights to its edges.

\subsection{Simplifying the input graph}

In this subsection, we massage the input graph $G$ so that it has good properties that simplifies our construction of the emulator. Let $T$ be the set of terminals, and let $F_1,\ldots,F_f$ be the faces in the planar embedding of $G$ that contain all terminals, then we would like to ensure that

\begin{properties}{G}
\item the boundary of each face is a simple cycle that only contains terminals;\label{prop: G1}
\item the face cycles are vertex-disjoint, so each terminal lies on only one face among $F_1,\ldots,F_f$; and \label{prop: G2}
\item for distinct terminals $t_1,t_2,t_3,t_4$, if we denote by $P$ the $t_1$-$t_2$ shortest path, and denote by $Q$ the $t_3$-$t_4$ shortest path, then either
\begin{itemize}
    \item $P,Q$ are vertex-disjoint; or
    \item $P,Q$ intersect at a single internal vertex, and they cross over this intersection (that is, if we denote by $e_1,e_2$ the edges of $P$ and by $e_3,e_4$ the edges of $Q$ incident to the intersection, then the circular order of the edges $e_1,e_2,e_3,e_4$ entering this vertex is either $e_1,e_3,e_2,e_4$ or $e_1,e_4,e_2,e_3$).
\end{itemize}\label{prop: G3}
\end{properties}

For a face $F$, label the terminals on $F$ in a clockwise order as $t_1,\ldots,t_{r}$. To guarantee Property~\ref{prop: G1}, we simply add edges $(t_i,t_{i+1})$ with weight $\dist_G(t_i,t_{i+1})$ for each $1\le i\le r$ (using the convention $r+1=1$) inside face $F$, so now terminals $t_1,\ldots,t_{r}$ lie on a new face surrounded by the added edges. Clearly, the new face satisfies Property~\ref{prop: G1}, the boundary of the new face contains the same terminals as $F$, and the distance between any pair of terminals remains unchanged. Perform similar operations on other faces and eventually $G$ satisfies Property~\ref{prop: G1}.

Let $F_1,F_2$ be two faces and let $T_{1,2}$ denote the set of terminals lying on both $F_1$ and $F_2$. In order to guarantee Property~\ref{prop: G2}, we simply think of $T_{1,2}$ as terminals on $F_1$ but not on $F_2$. That is, when we add edges to form a new face (in the previous paragraph) for $F_2$, we simply ignore terminals in $T_{1,2}$. It is easy to verify that the new face to replace $F_2$ does not contain terminals in $T_{1,2}$. Perform similar operations on other pairs of faces and eventually $G$ satisfies Property~\ref{prop: G2}.

For achieving Property~\ref{prop: G3}, first we can ensure by standard techniques that between every pair of its vertices there is a unique shortest path connecting them in $G$ (for example, the lexicographic perturbation scheme in \cite{erickson2018holiest}).
As a corollary, the intersection between every pair $P,P'$ of shortest paths in $G$ is either empty or a subpath of both $P$ and $P'$.

In order to achieve Property~\ref{prop: G3}, we ``redraw'' the graph as follows. First, we remove from $G$ all edges from that does not participate in any shortest path connecting a pair of terminals. Then,
\begin{enumerate}
\item for each vertex $u$, create a tiny disc $\dset_u$ around $u$ (so all discs $\set{\dset_u}_{u\in V(G)}$ are disjoint);
\item for each edge $e=(u,v)$, create a thin strip $\sset_e$ along the image of $e$, with one end touching disc $\dset_u$ and the other end touching disc $\dset_v$, such that all strips $\set{\sset_e}_{e\in E(G)}$ are disjoint;
\item for each terminal shortest path $P=(u_0,u_1,\ldots,u_r)$, we define its \emph{channel} as the union of:\\ $\dset_{u_0},\sset_{(u_0,u_1)}, \dset_{u_1},\sset_{(u_1,u_2)},\ldots,\dset_{u_r}$;
\item for each terminal shortest path $P$ connecting $t$ to $t'$, we redraw it as a simple curve $\gamma_P$ from $t$ to $t'$ that lies entirely in its channel, such that there are no crossings in any strips, and all crossings happen within some disc;
\item repeatedly, for a pair $P,Q$ whose images $\gamma_P,\gamma_Q$ cross at least twice, find a pair $c,c'$ of crossings between them that appear consecutively, and uncross $\gamma_P,\gamma_Q$ at their segment between $c,c'$ (decreasing the total number of curve crossings by $2$), until there are no such pairs; and
\item view the curves as the drawing of the new graph, by placing a new vertex on each crossing, and viewing each segment between a pair of consecutive crossings as an edge;
\item for each new edge $(u',v')$, if its image passes through strips $e_1,e_2,\ldots,e_r$, then its new weight is defined to be $w'(u',v')=\sum_{1\le i\le r}w(e_i)$; if its image lies entirely in some disc, then its new weight $w'(u',v')=0$.
\end{enumerate}
 
\begin{figure}[h]
\centering
\subfigure
{\scalebox{0.07}{\includegraphics{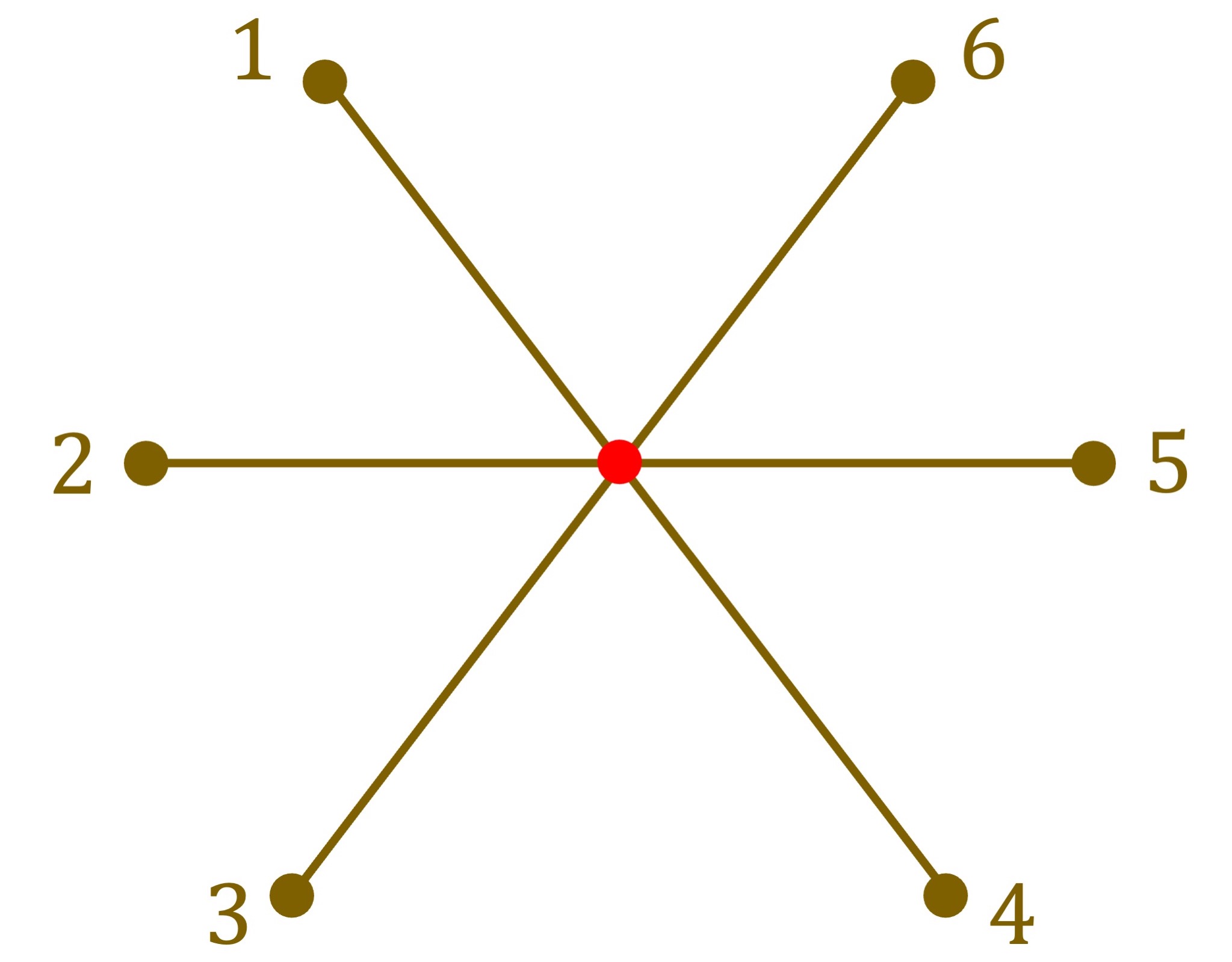}}}
\hspace{1.0cm}
\subfigure
{\scalebox{0.07}{\includegraphics{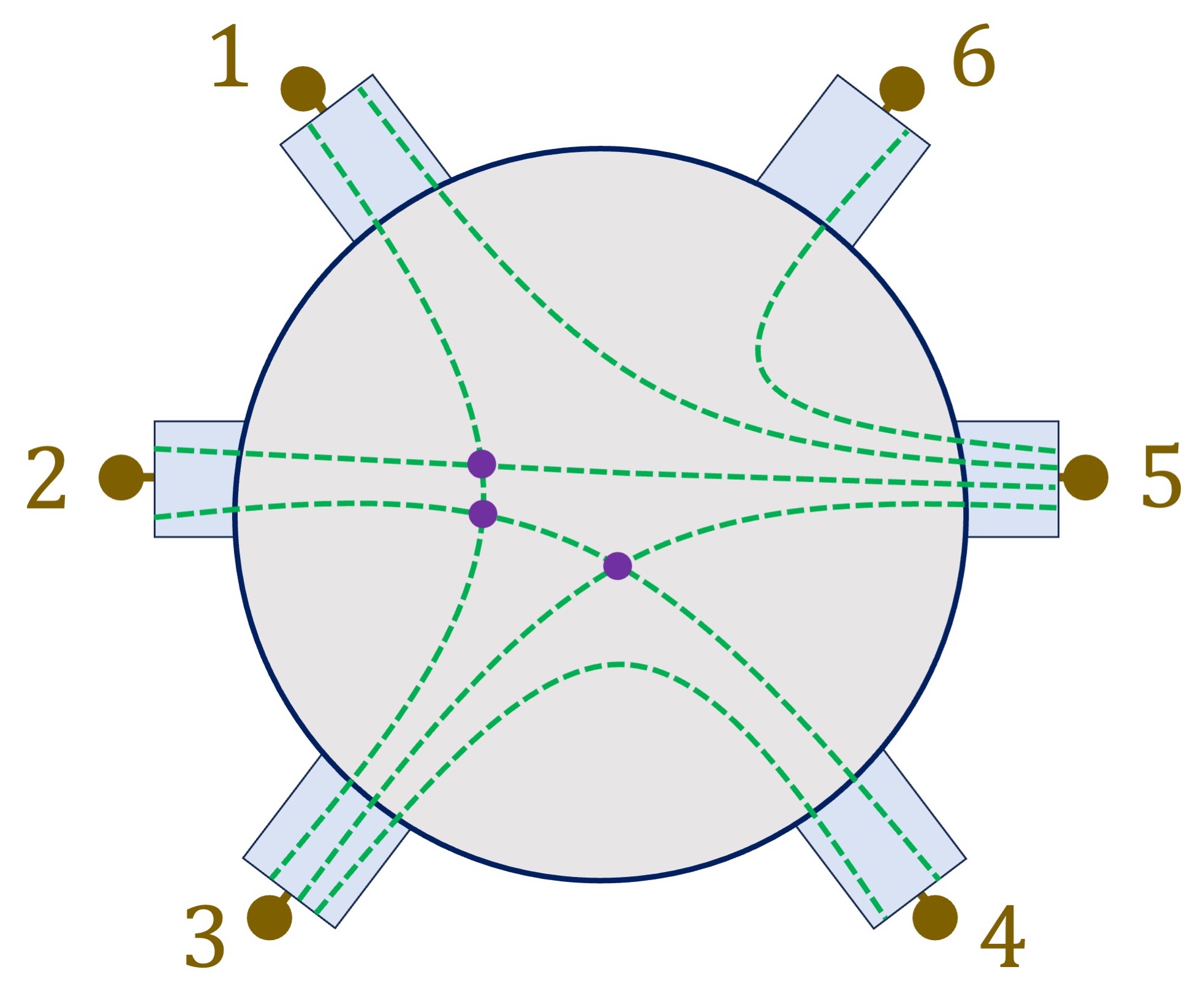}}}
\caption{An illustration of replacing shortest paths with curves. The red vertex in the original graph (left) supports shortest paths $1$-$3$, $1$-$5$, $2$-$4$, $2$-$5$, $3$-$4$, $3$-$5$, $5$-$6$. The curves in its disc and its incident thin strips are shown on the right, where new vertices (purple) are placed on the created intersections. \label{fig: YD}}
\end{figure}    

See \Cref{fig: YD} for an illustration. It is easy to verify that the above process produces a new graph satisfying Property~\ref{prop: G3}, does not change the faces and the distance between any pair of terminals.

\subsection{The graph structure of the emulator}

Denote by $F_1,\ldots,F_f$ the faces in the plane embedding of $G$ where all terminals lie on. For each $1\le r\le f$, let $T_r$ be the set of terminals lying on face $F_r$.
We will construct several graphs and then stick them together to preserve terminal distances in $G$. Specifically, we construct 
\begin{itemize}
\item for each $1\le r\le f$, a one-face emulator $H_r$ with terminals in $T_r$ lying on a single face $F'_r$ in the same order as they lie on $F_r$, such that for every pair $t,t'\in T_r$, $\dist_{H_r}(t,t')=\dist_{G}(t,t')$; 
and in fact we will simply use the construction of \cite{ChangO20} and \cite{goranci2020improved}, which gives a construction of $H_r$ on $O(|T_r|^2)$ vertices;
and
\item a central graph $H^*$ that contains all terminals, where for each $1\le r\le f$, terminals in $T_r$ lie on a single face $F^*_r$ in the same order in which they lie on $F_r$ in $G$, and we will ensure that 
\begin{itemize}
    \item for each pair $t,t'$ of terminals lying on different faces, $\dist_{H^*}(t,t')=\dist_G(t,t')$;
    \item for each pair $t,t'$ of terminals lying on the same face, $\dist_{H^*}(t,t')\ge \dist_G(t,t')$.
\end{itemize}
\end{itemize}
The graphs $H^*,H_1,\ldots,H_f$ are then glued together in a similar way as in \Cref{apd: 2-face emulator}, by placing the drawings of $H_1,\ldots,H_f$ into the faces $F^*_1,\ldots,F^*_r$ of $H^*$, respectively, and identifying the copies of the each terminal. The correctness of the emulator follows immediately from the above properties.
Therefore, the only missing part is the construction of the central graph $H^*$.

\subsection{Constructing the graph $H^*$: preparation}
\label{sec: prepare}

The construction of graph $H^*$ is a generalization of  the $2$-face emulator construction in \Cref{sec: warmup}. 
The high-level idea for the $2$-face case is to keep, for each terminal $t$ on the outer face, the two critical paths, one named primary and the other named secondary, that govern the shapes of all shortest paths connecting $t$ to some terminal on the inner face, and let them cross in the same way as they cross in $G$. 
Moreover, we define canonical paths connecting every pair of terminals lying on different faces as the concatenation of subpaths of their primary paths,
and designate them as the shortest paths in the emulator. By doing this, we are able to ensure that two canonical paths cross iff their corresponding real shortest paths in $G$ cross.

We will construct $H^*$ using a similar approach. We first discuss on a high level (i) how to get rid of the simplifying assumptions made in \Cref{sec: warmup}; and (ii) what additional problems we will encounter for general $f$-face instances and how we plan to handle them.

The two assumptions we made in \Cref{sec: warmup} were: the numbers of terminals lying on the inner and the outer faces are the same, and the split location moves ``smoothly''. 
The advantage for these assumptions is that we only need to consider critical paths from terminals lying on the outer face, as they ``automatically include'' the critical paths from terminals on the inner face. Specifically, we could have defined critical paths from terminals on the inner face in the same way as the terminals on the outer face and added them into the structure of $H$, but in fact they were already there. Specifically, in \Cref{fig:direction}, for terminal $t'_1$: the critical path from $t_7$ to $t'_1$ and the critical path from $t_1$ to $t'_1$ are exactly the critical paths we wanted to add from $t'_1$; in other words, we could also say that $t'_1$ splits at $(t_7,t_1)$. Similarly, the critical paths from other terminal $t'_i$ were also there already.
However, this is no longer true if we remove the two simplifying assumptions. For example, many outer face terminals may split at the same location, leaving some inner face terminals not serving as the endpoint of any critical path, while others serving as many. Therefore, we do need to include critical paths from inner face terminals as well.

For $f$-face instances, when we look at all shortest paths from a terminal $t$ to other terminals, we will find that their shapes change more frequently, will need more than two of them to govern the rest. Luckily, we are able to show that there always exist $O(f)$ paths among them to govern the rest. This increases the number of total critical paths we need to add to $O(fk)$, which then leads to the $O(f^2k^2)$ bound of the size of $H^*$ (see \Cref{lem: critical} and \Cref{clm: size}). 

Another problem we encounter in handling $f$-face instances is that it becomes unclear what ``cross in the same way as they cross in $G$'' means. In \Cref{sec: warmup}, this simply meant that a pair of critical paths cross in $H$ iff they cross in $G$. However, for $f$-face instances, there can be more than three critical paths crossing each other, so the orders in which the crossings appear on these paths are crucial in the structure of $H^*$, and need to be handled with care. In fact, we cannot simply let the crossings appear in the same order as in $G$, but need to delicately move them, such that when we define canonical paths as the concatenation of certain critical subpaths, the ``canonical paths cross iff their corresponding shortest paths cross in $G$'' property is achieved.
All these need to be formalized, which we do now.

\paragraph{Critical Paths.}
For each $1\le r\le f$, denote $T_r=\set{t^r_1,\ldots,t^r_{|T_r|}}$, where the terminals are indexed according to the clockwise order in which they appear on the boundary of $F_r$. 
Let $t$ be a terminal not in $T_r$. 
We say that the $t$-$t^r_j$ shortest path in $G$ is \emph{equivalent to} the $t$-$t^r_{j+1}$ shortest path in $G$, iff the region enclosed by the union of (i) the $t$-$t^r_j$ shortest path; (ii) the $t$-$t^r_{j+1}$ shortest path; and (iii) boundary segment of face $F_r$ clockwise from $t^r_j$ to $t^r_{j+1}$, contains none of the faces in $\set{F_1,\ldots,F_f}$.
We say that the $t$-$t^r_j$ shortest path in $G$ is \emph{critical}, iff the $t$-$t^r_j$ shortest path in $G$ is not equivalent to
either the $t$-$t^r_{j+1}$ shortest path or the $t$-$t^r_{j-1}$ shortest path.

Intuitively, the $t$-$t^r_j$ shortest path is critical, iff the curve morphing from the $t$-$t^r_{j-1}$ shortest path to the $t$-$t^r_{j}$ shortest path and then to the $t$-$t^r_{j+1}$ shortest path has to ``scan over'' some other face and therefore incur a fundamental shape change. See \Cref{fig:critical} for an illustration.

\begin{figure}[h]
\centering
\includegraphics[scale=0.12]{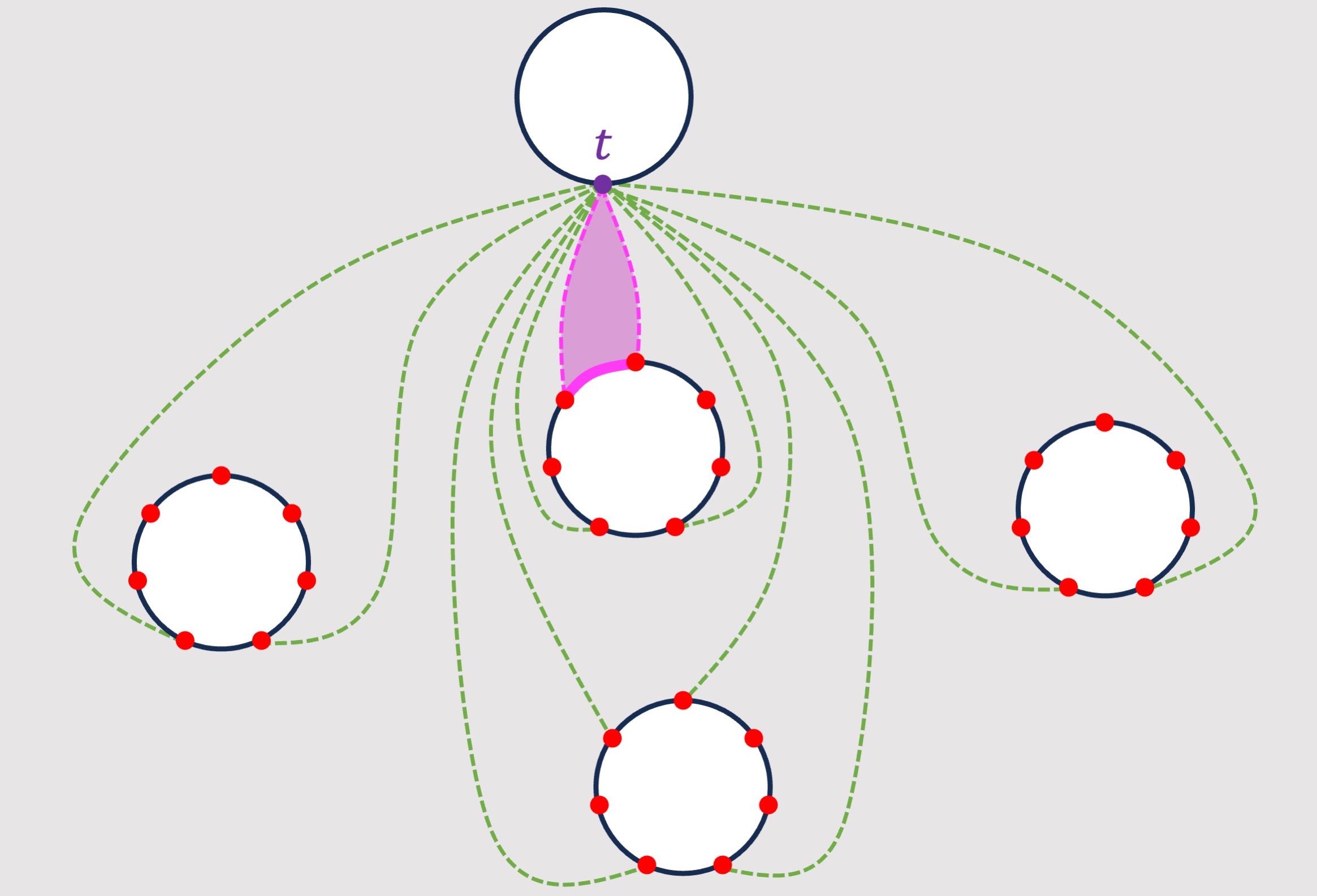}
\caption{All critical paths (green) from terminal $t$ and two non-critical shortest paths (pink) from $t$ with their enclosed region with the boundary segment (which contains no faces from $\set{F_1,\ldots,F_f}$).\label{fig:critical}}
\end{figure}

We have the following lemma.

\begin{lemma}
\label{lem: critical}
For each terminal $t$, the number of critical paths from $t$ is $O(f)$.
\end{lemma}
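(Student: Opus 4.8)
The plan is to fix a terminal $t$ and bound the number of critical paths from $t$ by a charging argument against the $f$ faces $F_1,\dots,F_f$. For each face $F_r$ with $t\notin T_r$, I will count the critical $t$-$t^r_j$ shortest paths among $t^r_1,\dots,t^r_{|T_r|}$, and show this count is $O(1)$ per face (more precisely, $O(f)$ in total, with a small constant coming from the number of faces the morphing can ``scan over''). Summing over the $f$ faces then gives $O(f)$ critical paths from $t$.

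First I would set up the morphing picture carefully. Using Property~\ref{prop: G3} (shortest paths pairwise cross at most once and transversally) together with the uniqueness of shortest paths, the family $\{P_{t,t^r_1},\dots,P_{t,t^r_{|T_r|}}\}$ of shortest paths from $t$ to the terminals on $F_r$, in clockwise order, behaves like a monotone ``sweep'' of curves from $t$: consecutive paths $P_{t,t^r_j}$ and $P_{t,t^r_{j+1}}$ together with the clockwise boundary arc of $F_r$ from $t^r_j$ to $t^r_{j+1}$ enclose a region $R_j$, and I would argue these regions partition (up to shared boundary paths) a canonical region associated with $t$ and $F_r$. By definition, $P_{t,t^r_j}$ is non-critical iff $R_{j-1}$ or $R_j$ contains no face from $\{F_1,\dots,F_f\}$; equivalently, $P_{t,t^r_j}$ is critical iff \emph{both} of the regions $R_{j-1},R_j$ bordering it contain at least one of the $F_\ell$'s. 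The key structural claim is that the regions $R_1,\dots,R_{|T_r|-1}$ are interior-disjoint, so each face $F_\ell$ (other than $F_r$, and other than the face $t$ lies on) lies in at most one $R_j$; hence at most $f-1$ of the regions $R_j$ can be ``face-containing.''

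Next I would translate this into a bound on critical indices $j$. Think of the sequence $R_1,\dots,R_{|T_r|-1}$ as a binary string, where $R_j$ is marked if it contains some $F_\ell$. As just argued, the number of marked positions is at most $f-1$ (in fact at most $f-2$ after excluding $F_r$ and the face containing $t$). An index $j$ gives a critical path exactly when its two neighboring regions $R_{j-1}$ and $R_j$ are \emph{both} marked — i.e., $j$ sits strictly inside a maximal run of marked positions. Since each maximal run of length $m$ contributes exactly $m-1$ critical indices, the total number of critical indices for face $F_r$ is at most (number of marked positions) $-$ (number of maximal runs) $\le f-2$. Summing over the at most $f$ faces $F_r$, the number of critical paths from $t$ is $O(f^2)$ — wait, that's too weak; the refinement needed is that the \emph{same} face $F_\ell$ can be charged by the sweep around $F_r$ for several different $r$, but globally each face $F_\ell$ can only ever lie in a bounded number of these regions across \emph{all} sweeps, because a region $R_j$ from the $F_r$-sweep is bounded by two shortest paths from $t$ plus an arc of $F_r$, and $F_\ell$ sits in such a region for a given $r$ at most once; since there are $f$ choices of $r$, $F_\ell$ gets charged at most $f$ times, giving $O(f^2)$ total again. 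To get $O(f)$ I would instead charge each \emph{critical path} $P_{t,t^r_j}$ directly to a face in its bordering region and argue, via planarity and the cyclic order of shortest paths from $t$, that this charging is globally $O(1)$-to-one — e.g., the critical path $P_{t,t^r_j}$ separates the face it is charged to from $t$ in a way that can happen for only a constant number of critical paths from $t$.

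The main obstacle I anticipate is exactly this last step: turning the per-face count into a \emph{global} $O(f)$ bound rather than $O(f^2)$. The clean way is to order \emph{all} shortest paths from $t$ (to all terminals on all faces) by their cyclic slope at $t$, observe that between two cyclically consecutive critical paths the morphing scans over a nonempty set of faces, and show that the ``scanned face sets'' of distinct critical-path gaps are essentially disjoint (each face $F_\ell\ne$ the face of $t$ gets scanned over in $O(1)$ gaps because a single curve homotopy from $t$ enters and exits the disc bounded by $F_\ell$'s boundary cycle a bounded number of times). That yields at most $O(f)$ gaps, hence $O(f)$ critical paths. I would present the argument in this ``single global sweep around $t$'' form, using Property~\ref{prop: G3} to guarantee the sweep is well-defined and the faces are simple disjoint cycles (Properties~\ref{prop: G1}, \ref{prop: G2}) so that ``scanning over $F_\ell$'' is an unambiguous topological event.
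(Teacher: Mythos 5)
Your per-face setup is on the right track, and you correctly flagged the crux: a per-face count of $O(f)$ critical paths sums to $O(f^2)$, and some global structure is needed to get down to $O(f)$. But the way you propose to close that gap — a ``single global sweep around $t$'' with the assertion that each face $F_\ell$ is scanned over in only $O(1)$ of the gaps — is exactly what is missing, and you do not actually prove it. It is not clear that the global cyclic ordering of \emph{all} shortest paths out of $t$ (to terminals on all $f$ faces simultaneously) behaves like a monotone sweep, since the regions coming from different target faces $F_r$ overlap each other heavily; ``$F_\ell$ gets scanned at most a bounded number of times'' is a nontrivial topological claim, and ``bounded'' must in fact be an absolute constant for your bound to come out to $O(f)$. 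You flag this as ``the main obstacle I anticipate,'' and it remains unresolved.

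The paper closes this gap with a different, cleaner global structure: laminarity. It looks at the family $\mathcal{A}$ of regions, each enclosed by a pair of paired critical paths from $t$ together with a boundary arc of their target face, restricted to regions that contain at least one face. Since any two critical paths emanating from $t$ can meet only at $t$ (two shortest paths cross at most once), the boundaries of two such regions cannot cross, so $\mathcal{A}$ is a laminar family. Projecting each region to the set of faces of $\{F_1,\dots,F_f\}$ it contains gives a laminar family $\mathcal{B}$ over a universe of size $f$, hence $|\mathcal{B}|\le 2f-1$; a short case analysis (Type~1 regions whose two non-$t$ endpoints lie inside $B$ versus Type~2 regions whose endpoints lie outside $B$) shows each $B\in\mathcal{B}$ corresponds to at most two regions in $\mathcal{A}$, so $|\mathcal{A}|\le 4f-2$, giving at most $8f-4$ critical paths. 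Laminarity is the disjointness principle your argument is groping for; without it (or an equivalent), the $O(f^2)$-to-$O(f)$ step does not go through.

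One smaller issue: you characterize $P_{t,t^r_j}$ as critical iff both neighboring regions $R_{j-1},R_j$ are marked. The paper's subsequent ``pairing'' discussion makes clear that the intended notion is that a path is critical iff it is the first or last path in its equivalence class, i.e.\ iff at least one of $R_{j-1},R_j$ contains a face. Under your reading, the first path of an equivalence class of size $\ge 2$ would not be critical, contradicting the pairing construction (``the $t$-$t^r_{a_i+1}$ path and the $t$-$t^r_{a_{i+1}}$ path are critical paths, and we say that they are paired''). This does not fix the $O(f)$ issue but would make your count wrong even at the per-face level.
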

\begin{proof}
Let terminal $t$ lie on $F_1$ without loss of generality. Consider the pairs $(P_1,P_2)$ of critical paths with an endpoint at $t$ such that $P_1$, $P_2$, and an edge enclose a region containing some face.
Define the set system $\mathcal{A}$ to be the collection of regions induced by such pairs $(P_1,P_2)$ with one endpoint at terminal $t$. We claim that $\mathcal{A}$ is a laminar family. Indeed, the boundary of two regions for different critical paths can never cross since two critical paths starting at $t$ can only have one intersection (at $t$), and no other intersections.
Equivalently, the set system $\mathcal{B}$ consisting of the set of faces in the regions enclosed induced by a pair of critical paths $(P_1,P_2)$ from $t$ must also be a laminar family. Since the universe of $\mathcal{B}$ is of size $f$, we have $|\mathcal{B}|\le 2f-1$. 
\begin{observation}
    Each set of faces $B\in\mathcal{B}$ corresponds to at most two different regions $A_1,A_2\in\mathcal{A}$.    
\end{observation}
\begin{proof}
    Let us say $A_1$ is a Type 1 region if $A_1$ has terminal endpoints $(t,t_1,t_2)$ where $t_1,t_2\in B$ and $A_2$ is a Type 2 region if $A_2$ has terminal endpoints $(t,t_1',t_2')$ where $t_1',t_2'\not\in B$. We will show that there is at most one region of each type (for each $B\in\mathcal{B}$).
    
    Suppose there is another Type 1 region $A\neq A_1$ for $B$, defined by two critical paths $P_1,P_2$ from $t$ to consecutive terminals on some face in $B$. Since both endpoints of $P_1$ (resp. $P_2$) are in $A_1$ and $P_1$ (resp. $P_2$) can't cross the boundary of $A_1$ twice, we have $P_1$ (resp. $P_2$) lies entirely in $A_1$ except for the endpoint $t$. But then $A$ doesn't contain $t_1$ or $t_2$, since that would mean $P_1$ or $P_2$ would touch the boundary of $A_1$, a contradiction. 
    
    Suppose there is another Type 2 region $A\neq A_2$ for $B$, defined by two critical paths $P_1,P_2$ from $t$ to consecutive terminals on some face not in $B$. Since $A$ doesn't contain any other faces, we have that $t_1',t_2'\not\in A$. But by laminarity of $\mathcal{A}$, this implies that $A\subseteq A_2$ since $A\cap A_2\neq\emptyset$ and $t_1'\not\in A$ so $A_2\not\subseteq A$. But there are no faces in $A_2$ not in $B$ (by definition of $B$), so the endpoints of $P_1,P_2$ must be $t_1',t_2'$ and $A=A_2$.
\end{proof}
Combining $|\mathcal{B}|\le 2f-1$ with the above observation, we have $|\mathcal{A}|\le 4f-2$. Since each region in $\mathcal{A}$ corresponds to at most $2$ critical paths, there are at most $8f-4$ critical paths starting at $t$.
\end{proof}

\paragraph{Primary and secondary paths, and canonical paths.}
The critical paths can be naturally paired according to their shapes.
Let $t$ be a terminal and consider a face $F_r$ such that $t\notin F_r$. For terminals $t^r_1,\ldots,t^r_{|T_r|}$ that lie on $F_r$, according to the equivalence relation defined on the shortest paths from $t$ to them in $G$, the circular ordering $(1,\ldots,|T_r|,1)$ can be partitioned into segments: 
\[[a_0+1,\ldots,a_1],[a_1+1,\ldots,a_2],\ldots,[a_{s}+1,a_0],\]
where for each $0\le i\le s$, the $t$-$t^r_{a_i+1}$ path, the $t$-$t^r_{a_i+2}$ path, ..., and the $t$-$t^r_{a_{i+1}}$ path are equivalent to each other, but not to the $t$-$t^r_{a_i}$ path or $t$-$t^r_{a_{i+1}+1}$ path.
By definition, the $t$-$t^r_{a_i+1}$ path and the $t$-$t^r_{a_{i+1}}$ path are critical paths, and we say that they are \emph{paired}. 
We call the boundary segment of $F_r$ going clockwise from $t^r_{a_i+1}$ to $t^r_{a_{i+1}}$ the \emph{governed segment} of these paired shortest paths.

For paired critical paths, we define one of them as \emph{primary}, and the other as \emph{secondary}, as follows. Consider paired critical paths $t$-$t^r_{a}$ and $t$-$t^r_{b}$, where $t^r_a$ and $t^r_b$ lie on face $F_r$ and $t$ lies on face $F_{r^*}$, and the $F_r$ boundary segment between $t^r_{a}$ and $t^r_{b}$ is going clockwise from $t^r_{a}$ to $t^r_{b}$. Now
\begin{itemize}
\item if $r^*<r$, then the $t$-$t^r_{a}$ path is primary, and the $t$-$t^r_{b}$ path is secondary; and
\item if $r^*>r$, then the $t$-$t^r_{b}$ path is primary, and the $t$-$t^r_{a}$ path is secondary.
\end{itemize}

Consider now a pair $t,t'$ of terminals lying on distinct faces $F_r,F_{r'}$, respectively, with $r<r'$. The \emph{canonical path} between terminals $t,t'$ are defined as follows.
Let $t'$-$t^r_{a}$, $t'$-$t^r_b$ be the paired critical paths whose governed segment $[t^r_{a},t^r_b]$ on face $F_r$ contains $t$, and let $t$-$t^{r'}_{a}$, $t$-$t^{r'}_b$ be the paired critical paths whose governed segment $[t^{r'}_{a},t^{r'}_b]$ on face $F_{r'}$ contains $t'$. So the $t$-$t^{r'}_{a}$ shortest path and the $t'$-$t^{r}_{b}$ shortest path must cross, and we denote their crossing by $p$.
The $t$-$t'$ canonical path is defined to be the concatenation of 
\begin{itemize}
\item the subpath of the $t$-$t^{r'}_{a}$ shortest path between $t$ and $p$; and 
\item the subpath of the $t'$-$t^{r}_{b}$ shortest path between $t'$ and $p$.
\end{itemize}
Additionally, we call the intersection $p$ the \emph{bend} of the canonical path $t$-$t'$. 

\subsection{Constructing the graph $H^*$: the algorithm}

Similar to the construction of the structure of $H$ in \Cref{sec: warmup}, we can construct $H^*$ by (i) starting with the subgraph of $G$ induced by all critical paths from all terminals; (ii) suppressing all degree-$2$ vertices; and (iii) for every pair $t,t'$ of terminals, define its canonical path $\gamma_{t,t'}$ as in \Cref{sec: prepare}.
However, this construction does not achieve the crucial property that a pair of canonical shortest paths cross iff their corresponding shortest paths in $G$ cross (for example, see \Cref{fig: misplace_1}).
Therefore, we need to modify the structure by an iterative process to achieve this.

\begin{figure}[h]
\centering
\includegraphics[scale=0.12]{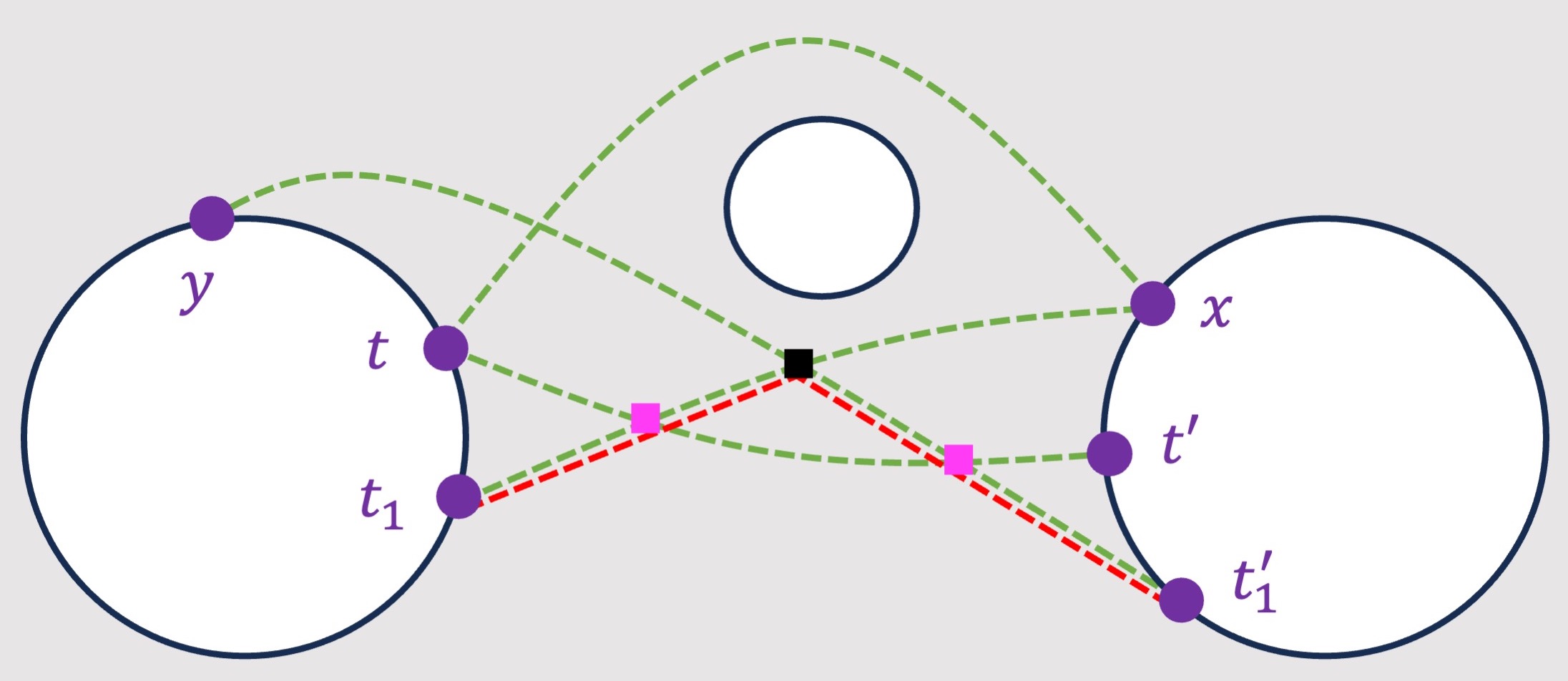}
\caption{An illustration of multiple crossings between the $t$-$t'$ critical path (the green dashed line between $t,t'$) and the $t_1$-$t'_1$ canonical path (red), and some other relevant critical paths.\label{fig: misplace_1}}
\end{figure}

Throughout, we maintain a graph $\hat H$, that is initialized as the structure obtained in the previous paragraph.
We also maintain a drawing of $\hat H$. Initially, the drawing of $\hat H$ is induced by the drawing of $G$ as $\hat H$ comes from $G$.
We will ensure that graph $\hat H$ always satisfies the following properties.

\begin{enumerate}
\item Graph $\hat H$ is an $f$-face instance that contains all terminals; for each $1\le r\le f$, the terminals in $T_r$ lie on $F_r$ in the same order in which they appear on $F_r$ in $G$.
\label{prop: f-face instance}
\item Graph $\hat H$ is the union of, for each critical path $P$ in $G$, a path $\rho_P$ connecting the same pair of terminals as $P$. For a pair $P,P'$ of critical paths whose endpoints are four distinct terminals, \label{prop: canonical intersection}
\begin{itemize}
    \item if $P,P'$ are vertex-disjoint, then $\rho_P,\rho_{P'}$ are also vertex-disjoint; and
    \item if $P,P'$ cross (at a single vertex), then $\rho_P,\rho_{P'}$ cross (also at a single vertex).
\end{itemize}
\item For every pair $t_1,t_2$ of terminals on different faces, the region enclosed by the $t_1$-$t_2$ canonical path in $\hat H$ and the $t_1$-$t_2$ shortest path in $G$ contains no face and no other terminals.
\label{prop: region}
\end{enumerate}

Clearly, Properties \ref{prop: f-face instance} and \ref{prop: canonical intersection} are satisfied by the initial $\hat H$. We show that Property \ref{prop: region} is also satisfied.

\begin{observation}
The initial graph $\hat H$ satisfies Property \ref{prop: region}.
\end{observation}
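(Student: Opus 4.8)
The claim is that the initial graph $\hat H$ — which is literally the subgraph of $G$ induced by all critical paths, with degree-$2$ vertices suppressed, drawn inside the embedding inherited from $G$ — satisfies Property~\ref{prop: region}: for every pair $t_1,t_2$ of terminals on different faces, the region enclosed by the $t_1$-$t_2$ canonical path in $\hat H$ and the $t_1$-$t_2$ shortest path in $G$ contains no face and no other terminal. The first key observation is that in the \emph{initial} $\hat H$ the canonical path $\gamma_{t_1,t_2}$ is drawn exactly as it sits in $G$: it is the concatenation of a subpath of one critical shortest path (from $t_2$ to $t^r_a$, say) up to the bend $p$, and a subpath of another critical shortest path (from $t_1$ to $t^r_b$) from $p$ onward, and both of these are genuine shortest paths of $G$ drawn in their $G$-positions. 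So the region in question is a region of the plane bounded by two curves both of which live in $G$'s embedding, and I just have to reason about $G$.

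The plan is to bound this region using the two ``governed segment'' critical-path pairs that define the canonical path. By construction, $t_1$ (on $F_{r'}$) lies in the governed segment $[t^r_a, t^r_b]$ of the paired critical paths from $t_2$ into $F_r$, and $t_2$ lies in the governed segment of the paired critical paths from $t_1$ into $F_{r'}$. The defining property of a governed segment (equivalence of all the shortest paths across it) says precisely that the closed region bounded by the $t_2$-$t^r_a$ path, the $t_2$-$t^r_b$ path, and the $F_r$-boundary arc from $t^r_a$ to $t^r_b$ contains \emph{none} of the faces $F_1,\dots,F_f$; symmetrically for the pair at $t_1$. Now I want to sandwich the canonical/shortest-path region inside the intersection of these two ``governed'' regions (or inside a region built from their boundaries). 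Concretely: the canonical path goes $t_1 \to p \to t_2$ along the $t_1$-$t^{r'}_a$ path and the $t_2$-$t^r_b$ path; the true shortest path $P_{t_1,t_2}$ of $G$ is, by equivalence within the governed segments, ``parallel'' to both of these, i.e. it cannot escape either governed region. I would make this precise with a crossing/Jordan-curve argument: the closed curve formed by $\gamma_{t_1,t_2} \cup P_{t_1,t_2}$ bounds a region $R$; any face or terminal strictly inside $R$ would force one of the bounding shortest subpaths to cross a critical path in a way that violates either the uniqueness-of-shortest-paths consequence (Property~\ref{prop: G3} gives single crossings that are genuine crossings) or the definition of ``equivalent'' shortest paths within the governed segment; in either case we reach a contradiction with the fact that the governed regions are face-free. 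A clean way to organize it: show $R$ is contained in (governed region at $t_1$) $\cup$ (governed region at $t_2$) minus the boundary arcs, so $R$ inherits face-freeness; and show no terminal of $F_r$ or $F_{r'}$ can be interior to $R$ because the $F_r$ (resp.\ $F_{r'}$) boundary arc separates $R$ from those terminals, while terminals on other faces would themselves witness a forbidden face (the face they lie on) inside a governed region.

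The main obstacle I anticipate is the careful bookkeeping of \emph{which} crossing $p$ is the bend and verifying that the two subpaths making up $\gamma_{t_1,t_2}$ together with $P_{t_1,t_2}$ actually bound a topological disc with the orientations I want — i.e., handling the case analysis of how the four endpoints $t_1,t_2,t^r_a,t^r_b,t^{r'}_a,t^{r'}_b$ and the bend interleave around the boundary, and the degenerate cases where $P_{t_1,t_2}$ itself passes through $p$ or shares a subpath with one of the critical paths. These degeneracies should only make the region \emph{smaller}, so they are benign, but they need to be dispatched. I expect the actual face-freeness to follow almost immediately once the containment $R \subseteq (\text{governed region at }t_1) \cup (\text{governed region at }t_2)$ is established, since that containment reduces everything to the definition of ``equivalent'' / ``governed segment'' already set up in \Cref{sec: prepare}. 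If a direct containment argument is awkward, the fallback is a minimality/contradiction argument: assume a face $F_\ell$ (or terminal) lies inside $R$, walk along $\partial R$, and use Property~\ref{prop: G3} plus uniqueness of shortest paths to extract a shortest path that would have to ``go around'' $F_\ell$, contradicting the criticality/equivalence structure that defined the governed segments containing $t_1$ and $t_2$.
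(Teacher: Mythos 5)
Your proposal is correct and follows essentially the same approach as the paper: both show that the region $R$ enclosed by the canonical path and the shortest path is contained in the governed regions defined by the paired critical paths from $t_1$ and $t_2$, and then deduce face-freeness and terminal-freeness from the structure of those regions. The only cosmetic difference is that the paper works uniformly with the intersection $A_1\cap A_2$ of the regions each containing exactly one face (which yields both conclusions in one stroke), whereas you alternate between the intersection and a union of face-free wedges and give a separate Jordan-curve sketch for the terminals, but the underlying containment argument is the same.
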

\begin{proof}
    Observe that by our construction of the critical and canonical paths, both the $t_1$-$t_2$ canonical path and the $t_1$-$t_2$ shortest path lie in the region enclosed by the critical paths outgoing from $t_1$ to the face $F_{t_2}$ where $t_2$ lies. Symmetrically, the two paths also lie in the region enclosed by the critical paths outgoing from $t_2$ to the face $F_{t_1}$ where $t_1$ lies. Call these regions $A_1$ and $A_2$, respectively. Observe that the only face which lies in region $A_1$ is $F_{t_2}$ and symmetrically, the only face which lies in region $A_2$ is $F_{t_1}$. Since the region enclosed by the $t_1$-$t_2$ canonical path and the $t_1$-$t_2$ shortest path lies in the intersection of $A_1$ and $A_2$, there must be no faces contained in it. Similarly, the only terminals which lie in region $A_1$ are $t_1$ and the terminals on $F_{t_2}$ and symmetrically, the only terminals which lie in $A_2$ are $t_2$ and the terminals on $F_{t_1}$. Since the region enclosed by the $t_1$-$t_2$ canonical path and the $t_1$-$t_2$ shortest path lies in the intersection of $A_1$ and $A_2$, the only terminals contained in the region are $t_1$ and $t_2$.
\end{proof}

Eventually, at the end of the iterative process, we will ensure that the resulting graph $\hat H$ satisfies the following additional property.

\begin{enumerate}[resume]
\item For a canonical path $\gamma_{t_1,t_2}$ between $t_1,t_2$ and another canonical path $\gamma_{t_3,t_4}$ between $t_3,t_4$ with $t_1,t_2,t_3,t_4$ being distinct terminals, $\gamma_{t_1,t_2}$ and $\gamma_{t_3,t_4}$ cross iff the $t_1$-$t_2$ shortest path in $G$ and the $t_3$-$t_4$ shortest path in $G$ cross.
\label{prop: intersect iff}
\end{enumerate}

\subsubsection{Definitions and notations}

\paragraph{Representation of critical paths and their subpaths.} The Property~\ref{prop: canonical intersection} ensures that, vaguely speaking, for any two non-terminal vertices $u,v$, the number of canonical paths that contain both of them is either $0$ or $1$ (since otherwise there will be two canonical paths crossing twice at $u,v$). 
Therefore, for notational convenience, when $u,v$ both lie on some canonical path $\rho$, we use $(u,v)$ to denote $\rho[u,v]$, the subpath of $\rho$ between $u$ and $v$. The notation is unambiguous even with $\rho$ being absent.
When we want to emphasize the direction of such a critical path/subpath, we will also use the notation $u\to v$ instead of $(u,v)$.
When $u,u'$ are terminals lying on the same face, we let $(u,u')$ be the face segment from $u$ to $u'$.
For a terminal $x$, we denote by $F_x$ the face that contains it.

\paragraph{Triangles.}
For three vertices $x,y,z$ where every pair of them appear on some canonical path, we denote by $(x,y,z)$ the triangle formed by $(x,y)$, $(y,z)$ and $(z,x)$.
We can also define triangles for $x,y,z$ where $x,y$ are terminals lying on the same face, and both pairs $x,z$ and $y,z$ lie on the same canonical path.

\paragraph{Areas.}\footnote{We will mostly use the notions areas to describe a type of ``morphing'' process that we will use later. Essentially, morphing corresponds to continuous deformation between a pair of homotopic simple curves (when terminal faces are viewed as holes). For curves $\alpha,\beta$ between terminals, we say that $\alpha$ is homotopic to $\beta$, or equivalently that $\alpha$ can be morphed into $\beta$, iff we can gradually change $\alpha$ to $\beta$, with the endpoints of $\alpha$ always staying on the same faces and the curve not scanning over any terminal faces. To avoid sophisticated formal definitions in topology, instead of saying that two curves are homotopic, we will just say that their enclosed area contains no terminal faces.} For a triangle $(x,y,z)$, we denote by $\textsc{Area}(x,y,z)$ its enclosed region. For critical paths $(x,y)$ and $(x',y')$ where $F_x=F_{x'}$ and $F_y=F_{y'}$, \emph{the area between paths $(x,y),(x',y')$}, denoted by $\textsc{Area}((x,y),(x',y'))$, is defined to be the area enclosed by (i) $(x,y)$; (ii) $(x',y')$; (iii) the segment of $F_x$ between $x,x'$; and (iv) the segment of $F_y$ between $y,y'$. That is, if $(x,y)$ and $(x,y')$ cross at $z$, then it is the union $\textsc{Area}(x,x',z)\cup\textsc{Area}(y,y',z)$; if $(x,y)$ and $(x',y')$ don't cross, then it is the quadrilateral $\textsc{Area}(x,y,y',x')$. See Figure \ref{fig: area_definition} for an illustration. 

\begin{figure}[h]
\centering
\subfigure
{\scalebox{0.25}{\includegraphics{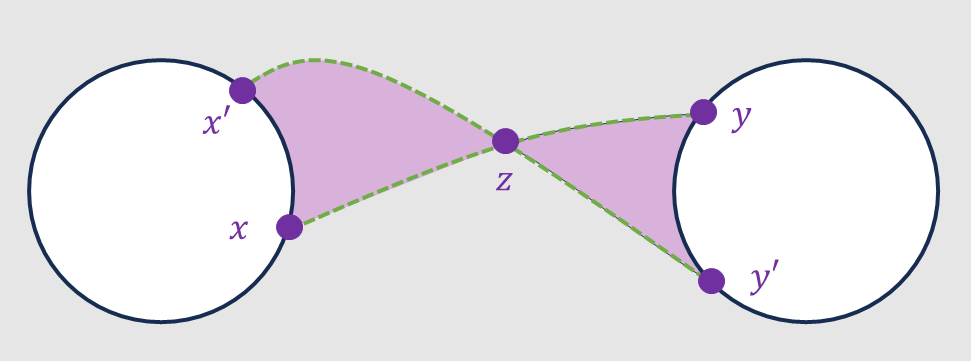}}}
\hspace{1.0cm}
\subfigure
{\scalebox{0.25}{\includegraphics{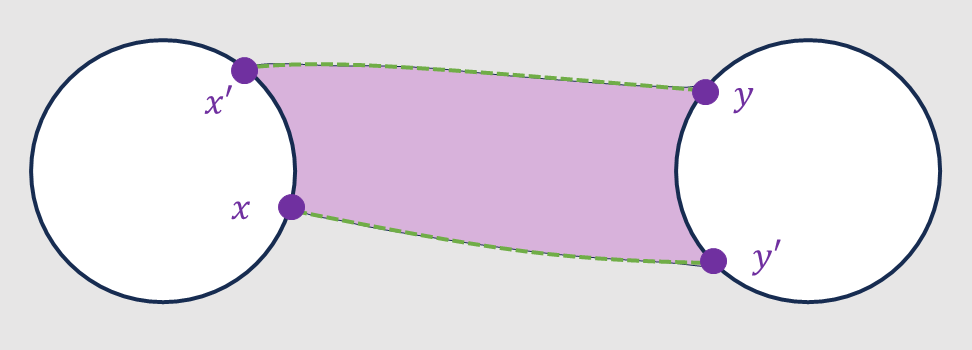}}}
\caption{An illustration of $\textsc{Area}((x,y),(x',y'))$ (pink). \label{fig: area_definition}}
\end{figure}  

\paragraph{Relative directions between critical paths.} 
Fix two faces $F_1$ and $F_2$, where $F_1$ has lower face index than $F_2$. Consider the canonical path $\gamma$ between a vertex $u\in F_1$ and a vertex $v\in F_2$.
Assume this canonical path is formed by primary paths $(u,v')$ and $(v,u')$, and their paired secondary paths are $(u,v'')$ and $(v,u'')$, respectively.
The claim is that when we move from $u$ to $v'$ along the path $(u,v')$ and reaches $w$, the subpath $(w,v)$ exits from its left side (see Figure \ref{fig:orientation}).

This can be proved as follows.
Since $F_1$ has lower face index, the path $u\to v'$ leaves from the right side of the path $u\to v''$ at $u$. Similarly, the path $v\to u'$ leaves from the left side of $v\to u''$ at $v$. From the figure, this immediately implies that when moving $u$ to $v'$, we have $(w,v)$ must be on the left side. For ease of reference later, we state the claim formally as an observation:

\begin{figure}[h]
\centering
\includegraphics[scale=0.4]{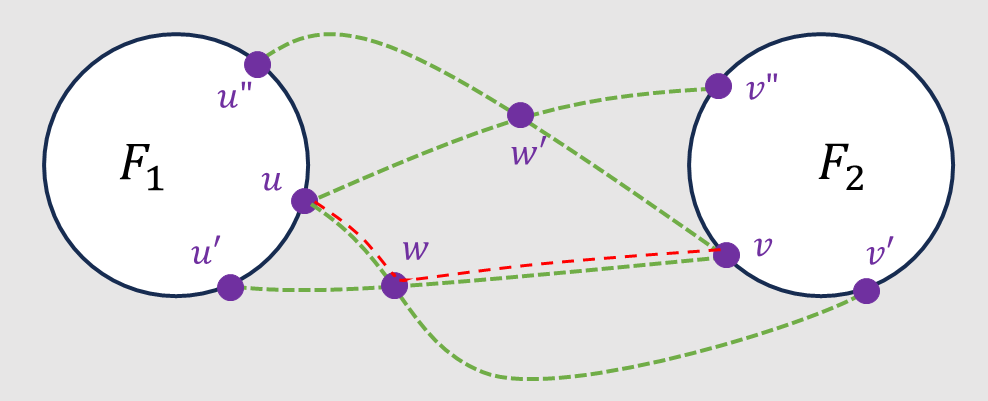}
\caption{An illustration of the shape of the $u$-$v$ canonical path (red).\label{fig:orientation}}
\end{figure}
\begin{observation}
Fix two faces $F_1$ and $F_2$, where $F_1$ has a lower face index than $F_2$. Let $(u,v')$ be a primary path from $u$ to $F_2$ and $(v,u')$ be a primary path from $v$ to $F_1$, that form the $u$-$v$ canonical path with bend $w$. Then when we move along $u\to v'$, the subpath $(w,v)$ exits from its left.
    \label{obs:orientations}
\end{observation}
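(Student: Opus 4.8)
The statement to prove is Observation~\ref{obs:orientations}, which asserts an orientation fact about how the ``far half'' of a canonical path sits relative to the primary path forming its first half. Since the excerpt already gives a short informal argument, the plan is mainly to pin down the topological setup so the argument becomes rigorous.

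\textbf{Setup and plan.} I would first fix the plane embedding and recall the relevant definitions: the canonical path $\gamma_{u,v}$ with $u\in F_1$, $v\in F_2$, $r_1<r_2$, is the concatenation at the bend $w$ of the subpath of the primary path $(u,v')$ from $u$ to $w$, and the subpath of the primary path $(v,u')$ from $v$ to $w$ (here $v'\in F_2$, $u'\in F_1$, and $w$ is the crossing of $(u,v')$ with $(v,u')$ guaranteed by the definition of canonical path). The paired secondary paths are $(u,v'')$ and $(v,u'')$. The key local facts I would extract from the definition of primary/secondary (the ``$r^*<r$ vs.\ $r^*>r$'' rule) together with the clockwise labelling convention are: (i) at $u$, among the two paired critical paths to $F_2$, the one going to $t^{r_2}_a$ is primary and the one to $t^{r_2}_b$ is secondary where the governed segment runs clockwise from $t^{r_2}_a$ to $t^{r_2}_b$; this, combined with the clockwise order of terminals on $F_2$ and the fact that $(u,v')$ and $(u,v'')$ share only the endpoint $u$ (Property~\ref{prop: G3} / the laminarity used in Lemma~\ref{lem: critical}), forces $u\to v'$ to leave $u$ on the ``right'' side of $u\to v''$; (ii) symmetrically, since $F_1$ now plays the role of the ``higher-vs-lower'' face for $v$ (i.e.\ from $v$'s perspective the target face $F_1$ has the lower index), the rule flips and $v\to u'$ leaves $v$ on the ``left'' side of $v\to u''$.

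\textbf{Main argument.} With (i) and (ii) in hand, I would argue about the single crossing $w$. The paths $(u,v')$ and $(v,u')$ cross exactly once (Property~\ref{prop: G3} applied to the underlying shortest paths, which is inherited by $\hat H$ via Property~\ref{prop: canonical intersection}), so the concatenation $\gamma_{u,v}=(u\to w)\cdot(w\to v)$ is a simple curve. Consider a point $x$ on the interior of the subpath $u\to v'$ lying strictly before $w$, and a point $y$ on it strictly after $w$ (so $y$ is on $w\to v'$, not on $\gamma_{u,v}$). The claim ``$(w,v)$ exits to the left when we traverse $u\to v'$'' is precisely the statement that the subpath $w\to v$ of $\gamma_{u,v}$ departs from $w$ into the left half-plane determined by the oriented arc $u\to v'$ at $w$. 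To see this, note that the four arc-germs at $w$ — namely the incoming germ of $u\to v'$, the outgoing germ of $u\to v'$ (toward $v'$), and the two germs of $v\to u'$ (toward $v$ and toward $u'$) — alternate around $w$ because $(u,v')$ and $(v,u')$ \emph{cross} at $w$. So the germ toward $v$ is either immediately clockwise or immediately counterclockwise of the outgoing germ toward $v'$; which of the two it is is exactly determined by the global picture, and facts (i)–(ii) force it to be the left one. Concretely, I would close the curve $\gamma_{u,v}$ up through the boundary arcs of $F_1$ and $F_2$ (going from $v$ along $F_2$ back toward $v'$ and from $u$ along $F_1$ toward $u'$ is not quite it — rather one uses that $F_2$ lies on a definite side of $\gamma_{u,v}$, since by Observation's hypotheses / Property~\ref{prop: region} the area bounded by $\gamma_{u,v}$ and the $u$-$v$ shortest path contains no terminal face) to determine on which side $v'$ lies relative to $\gamma_{u,v}$, and then read off the alternation. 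The cleanest way to present this is to place the four terminals $u,u',v,v'$ and the two crossing primary paths into the standard ``X'' picture as in Figure~\ref{fig:orientation}, verify that facts (i)–(ii) pin down the cyclic order $(u, u'', v'', v)$-style of arcs around $u$ and around $v$, and then observe that there is a unique planar way (up to the crossing at $w$) to realise this, in which $w\to v$ indeed departs to the left of $u\to v'$.

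\textbf{Main obstacle.} I expect the only real difficulty is bookkeeping: carefully translating the index inequality $r_1<r_2$ and the clockwise labelling convention into a statement about which of the two paired critical paths at $u$ (respectively $v$) is the ``right'' one and which is the ``left'' one, and doing so consistently for both endpoints (where the roles of ``lower'' and ``higher'' face swap). There is no deep content — once the germs at $u$ and $v$ are oriented correctly, the alternation of crossing arcs at $w$ makes the conclusion immediate — so the proof should be short, essentially a picture plus the two local orientation facts (i) and (ii) derived from the primary/secondary definition, plus a one-line appeal to ``arcs of two curves crossing at a point alternate around that point.'' I would keep the write-up at the level of the informal paragraph already present, but state (i) and (ii) as explicit sub-claims with one-sentence justifications each, and cite Property~\ref{prop: G3} (inherited via Property~\ref{prop: canonical intersection}) for the single-crossing fact.
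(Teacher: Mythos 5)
Your proposal is correct and takes essentially the same approach as the paper. The paper's argument (given in the paragraph preceding the observation) is exactly your facts (i) and (ii): the primary path $u\to v'$ leaves to the right of the secondary $u\to v''$ at $u$ because $F_1$ has lower index, the primary $v\to u'$ leaves to the left of the secondary $v\to u''$ at $v$ because the index comparison flips, and planarity then forces $(w,v)$ to depart on the left of $u\to v'$; you are just filling in the ``from the figure, this immediately implies'' step with the germ-alternation reasoning at the crossing $w$.
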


\paragraph{$1$-bend paths.} 
We say a path $Q$ between terminals $a$ and $b$, where $F_a$ has lower face index than $F_b$, is a \emph{$1$-bend path} with bend $c$, iff (i) $Q[a,c]$ is a subpath of some primary path $P_1$ from $a$ to $F_b$; (ii) $Q[b,c]$ is a subpath of some primary path $P_2$ from $b$ to $F_a$; and (iii) when moving along $P$ from $a$ to the endpoint on $b$, $(c,b)$ exits from its the left (i.e., $Q$ satisfies the property  in  \Cref{obs:orientations}). We say that a 1-bend path $Q$ formed by primary paths $P_1$ and $P_2$ is  \emph{safe}, iff $\textsc{Area}(P_1,P_2)$ contains no terminal faces. All canonical paths are safe 1-bend paths, but the converse is not true.

\paragraph{Bad pairs.} 
Let $P$ be a critical path and $Q$ be a safe 1-bend path. We say $(P,Q)$ is a \emph{bad pair}, iff (i) there exist faces $F_a,F_b$, such that both $P,Q$ connect a terminal on $F_a$ to a terminal on $F_b$; (ii) $P$ and $Q$ cross more than once; and (iii) the endpoints of $P$ and $Q$ are four distinct terminals. Note that such a pair $P$ and $Q$ cross exactly twice, since pairs of critical paths cross at most once and $Q$ is formed by two critical paths. We say that $(P,Q)$ is a \emph{canonical bad pair} iff $(P,Q)$ is a bad pair and $Q$ is a canonical path.

\subsubsection{Description of an iteration}

While a canonical bad pair exists, our algorithm will iteratively decrease the number of bad pairs from $\hat H$ by moving some of the critical paths.
As canonical paths are defined as concatenations of subpaths of critical paths, they will also evolve during this iterative process. We will then show that the final $\hat{H}$ which contains no canonical bad pairs also must satisfy Property \ref{prop: intersect iff}. If in the current graph $\hat H$, there are no canonical bad pairs, then we return $\hat H$ and terminate the algorithm. Otherwise, the first step is to find a bad pair with some additional properties.

Let $(P,Q)$ be a bad pair. Suppose $P$ is a primary path from $a$ to $b$. 
Let $(c,f)$ and $(e,g)$ denote the primary paths forming $Q$, where $c$ lies on $F_a$ and $e$ lies on $F_b$. Let $d$ denote the bend of $Q$, and let $h$ and $i$ denote the two crossings between $P$ and $Q$, where $h$ is closer to $a$ and $i$ is closer to $b$. See Figure \ref{fig:notations} for an illustration.

\begin{figure}[h]
\centering
\includegraphics[scale=0.35]{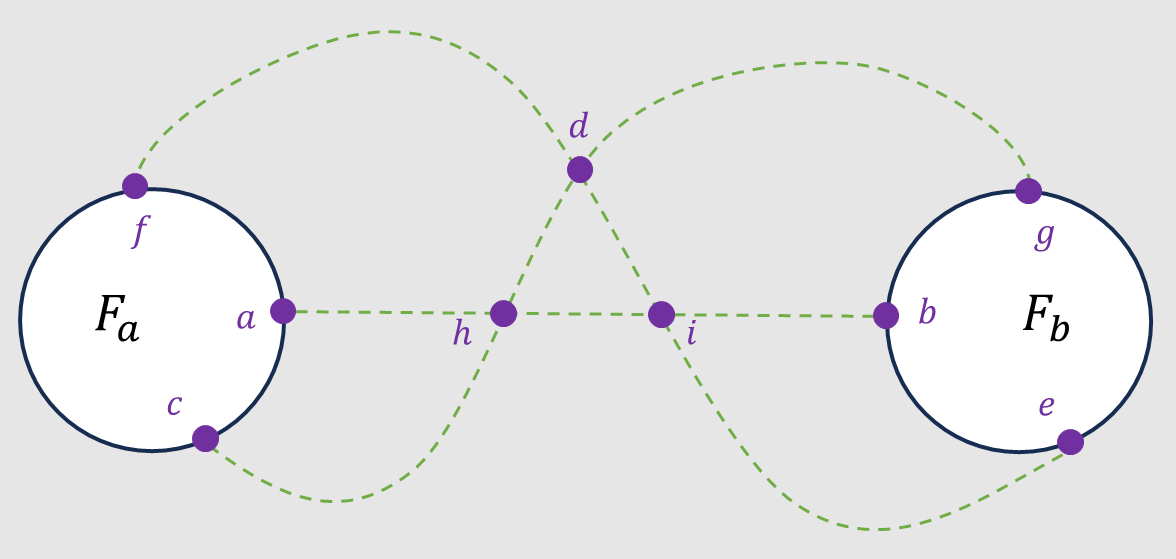}
\caption{Notations for relevant nodes induced by a bad pair.\label{fig:notations}}
\end{figure}


We say that $(P,Q)$ is a \emph{minimal bad pair}, iff
\begin{itemize}
    \item no critical path between $F_a$ and $F_b$ crosses $(h,d)$ and $(d,i)$ and
    \item no primary path from $F_b$ to $F_a$ crosses $(h,i)$ and forms a safe 1-bend path with $(a,b)$.
\end{itemize}

\paragraph{Step 1. Finding a minimal bad pair.} Given a canonical bad pair, we find a minimal bad pair through an iterative process as follows.
We maintain the bad pair $(\hat{P},\hat{Q})$ throughout the process, initialized as the given canonical bad pair $(P,Q)$, and repeat the following until convergence:
\begin{itemize}
    \item \textbf{Case 1. There exists a critical path $R$ between $F_a$ and $F_b$ which crosses $(h,d),(d,i)$.} We update $\hat P$ to be $R$, and then we update $a,b,\ldots,h,i,F_a,F_b$ with respect to $(\hat P,\hat Q)$ accordingly.
    \item \textbf{Case 2. There exists a primary path $R$ from $F_b$ to $F_a$ that crosses $(h,i)$ and forms a safe $1$-bend path with $(a,b)$.} Note that in this case, $R$ must also cross $(h,g)$, because $R$ must leave the $(h,g,b)$ region. We update $\hat Q$ to be the 1-bend path formed by $(c,g)$ and $R$. Then we update $a,b,\ldots,h,i,F_a,F_b$ with respect to $(\hat P,\hat Q)$ accordingly. Note that $\hat{Q}$ satisfies the property in Observation \ref{obs:orientations} since $R$ has the same structure as primary path $(e,f)$ which formed a 1-bend path with $(c,g)$, so $\hat{Q}$ is indeed a 1-bend path. Furthermore, by Observation \ref{obs:wing-no-face} below, $\hat{Q}$ is actually a safe 1-bend path. 
\end{itemize}

Now, we show that Step 1 indeed finds a minimal bad pair. We first need to define some language. For iteration $t$ of the process, let $a_t,b_t,\ldots,i_t$ denote the corresponding points in the graph. Let $\Delta_t=\textsc{Area}(h_t,d_t,i_t)$ and $W_t=\textsc{Area}(c_t,d_t,f_t)\cup\textsc{Area}(g_t,d_t,e_t)$.

\begin{observation}
    For any $t\ge0$, $\Delta_t$ and $W_t$ don't contain any terminal faces. \label{obs:wing-no-face}
\end{observation}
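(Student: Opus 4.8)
The plan is to argue by induction on the iteration count $t$, maintaining both statements ($\Delta_t$ contains no terminal face, and $W_t$ contains no terminal face) as a joint invariant together with Properties~\ref{prop: f-face instance}--\ref{prop: region}. For the base case $t=0$: the pair $(\hat P,\hat Q)$ is the given \emph{canonical} bad pair, so $\hat Q$ is a canonical path. By definition of a canonical path, $\hat Q$ is a safe $1$-bend path, hence $\textsc{Area}(P_1,P_2)$ (where $P_1,P_2$ are the primary paths forming $\hat Q$) contains no terminal face; since $W_0 = \textsc{Area}(c_0,d_0,f_0)\cup\textsc{Area}(g_0,d_0,e_0)$ is a subregion of $\textsc{Area}(P_1,P_2)$, it contains no terminal face either. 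For $\Delta_0 = \textsc{Area}(h_0,d_0,i_0)$: this triangle is bounded by the subpath $(h_0,i_0)$ of the critical path $\hat P$ and the two subpaths $(h_0,d_0),(d_0,i_0)$ of $\hat Q$. I would show $\Delta_0$ is contained in the region bounded by the $\hat Q$-defining canonical path and the corresponding shortest path in $G$ — or more directly, that $\Delta_0$ is enclosed between $\hat P$ and a homotopic portion of $\hat Q$ which by Property~\ref{prop: region} encloses no face — so it contains no terminal face. (This is the place to use that $\hat P$ and $\hat Q$ connect the same pair of faces $F_a,F_b$, so the two-crossing ``bigon'' $\Delta_0$ is null-homotopic relative to the terminal faces.)

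For the inductive step, assume the claim holds at iteration $t$ and the process performs one update to reach $t+1$. In \textbf{Case 1} we replace $\hat P$ by a critical path $R$ that crosses both $(h_t,d_t)$ and $(d_t,i_t)$. The new crossings $h_{t+1},i_{t+1}$ of $R$ with $\hat Q$ lie on these two segments, so the new bigon $\Delta_{t+1}=\textsc{Area}(h_{t+1},d_{t+1},i_{t+1})$ is a \emph{subregion} of the old $\Delta_t$ (it is $\Delta_t$ with the two ``ears'' cut off by $R$); since $\Delta_t$ contained no terminal face, neither does $\Delta_{t+1}$. The wing region $W$ is unchanged in Case~1 except that $d$ may move — but $d_{t+1}=d_t$ since $\hat Q$ is not modified, so $W_{t+1}=W_t$ and the invariant passes through. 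In \textbf{Case 2} we replace $\hat Q$ by the $1$-bend path formed by $(c_t,g_t)$ and a primary path $R$ from $F_b$ to $F_a$ that crosses $(h_t,i_t)$ and forms a safe $1$-bend path with $(a_t,b_t)$. Here I would show: (i) the new wing $W_{t+1}=\textsc{Area}(c_{t+1},d_{t+1},f_{t+1})\cup\textsc{Area}(g_{t+1},d_{t+1},e_{t+1})$ is a subregion of $\textsc{Area}\bigl((c_t,g_t)\text{-side},R\bigr)$, which contains no terminal face precisely because the new $1$-bend path is \emph{safe} (this safety is exactly what the parenthetical remark in Case~2 asserts, invoking this very Observation — so the induction has to be set up so that the Case~2 safety claim is a consequence of the invariant at step $t$, not a separate fact); and (ii) $\Delta_{t+1}$ is again a bigon between the critical path $\hat P$ and the new safe $1$-bend $\hat Q$, null-homotopic relative to the terminal faces by the same argument as the base case, using that $\hat P,\hat Q$ still join $F_a,F_b$ and that no terminal face was swept across — the new $\hat Q$ lies inside the region bounded by the old $\hat Q$ and $R$, which is face-free.

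The main obstacle I anticipate is the Case~2 analysis, specifically disentangling the apparent circularity: the algorithm's Case~2 step asserts the new $\hat Q$ is a \emph{safe} $1$-bend path ``by Observation~\ref{obs:wing-no-face} below,'' while Observation~\ref{obs:wing-no-face} is the statement being proved. The resolution is to prove the two halves in lockstep: at step $t$ the invariant gives us that $W_t$ and $\Delta_t$ are face-free, and from this \emph{plus} the defining condition of Case~2 (that $R$ forms a safe $1$-bend with $(a_t,b_t)$, which is a hypothesis of the case, not a conclusion) we deduce that the \emph{new} composite path $\hat Q=(c_t,g_t)\cup R$ is safe, i.e. $W_{t+1}$ is face-free; the key geometric input is that the region $\textsc{Area}(c_{t+1},d_{t+1},f_{t+1})\cup\textsc{Area}(g_{t+1},d_{t+1},e_{t+1})$ can be decomposed using the old wing $W_t$ and the safe-$1$-bend region of $R$ with $(a_t,b_t)$, both face-free by hypothesis, together with pieces of $\Delta_t$. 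I would also need to be careful that the relative-direction/orientation condition (Observation~\ref{obs:orientations}) is preserved so that $\hat Q$ genuinely remains a $1$-bend path in the technical sense, as the text notes; this is a routine check once the shapes of $R$ and $(e_t,f_t)$ are matched up. A secondary point to handle cleanly is termination — that $\Delta_{t+1}\subsetneq\Delta_t$ strictly (or a similar monotone measure decreases), ensuring the ``until convergence'' loop in Step~1 actually converges — but that is needed for the algorithm's correctness rather than for this Observation per se.
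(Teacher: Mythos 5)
Your proposal is correct and follows essentially the same route as the paper: induction on $t$, with Case~1 giving $\Delta_{t+1}\subseteq\Delta_t$ and $W_{t+1}=W_t$, and Case~2 decomposing $\Delta_{t+1}\cup W_{t+1}$ inside $\Delta_t\cup W_t\cup\textsc{Area}(R,(a_t,b_t))$, where the face-freeness of $\textsc{Area}(R,(a_t,b_t))$ comes from the Case~2 hypothesis that $R$ forms a safe $1$-bend path with $(a_t,b_t)$ (not from the safety of the new $\hat Q$, thus breaking the apparent circularity you flagged — this is exactly what the paper's Figure~\ref{fig:wing-induction} argument does). Your extra care on the base case for $\Delta_0$ and on distinguishing hypothesis from conclusion in Case~2 is sound, if slightly more elaborate than the paper's one-line base case.
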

\begin{proof}
We prove the claim by induction on $t$. For the base case: $W_0$ and $\Delta_0$ don't contain any terminal faces since $Q$ is a canonical path. For the induction step: If $\hat{P}$ is updated, then $\Delta_{t+1}$ is a subset of $\Delta_t$ and $W_{t+1}=W_t$. By induction, $W_{t+1}$ and $\Delta_{t+1}$ don't contain any terminal faces. If $\hat{Q}$ is updated, let $R$ be the path which forms a 1-bend path with $(a_t,b_t)$ and causes the update to $\hat{Q}$. By our choice of $R$,  $\textsc{Area}(R,(a_t,b_t))$ contains no terminal faces. In {Figure \ref{fig:wing-induction}}, we see that in both cases, $\Delta_{t+1}\cup W_{t+1}$ is a subset of  $\Delta_{t}\cup W_t\cup\textsc{Area}(R,(a_t,b_t))$, completing the proof since the latter doesn't contain terminal faces. 
\end{proof}

\begin{figure}[h]
\centering
\subfigure
{\scalebox{0.22}{\includegraphics{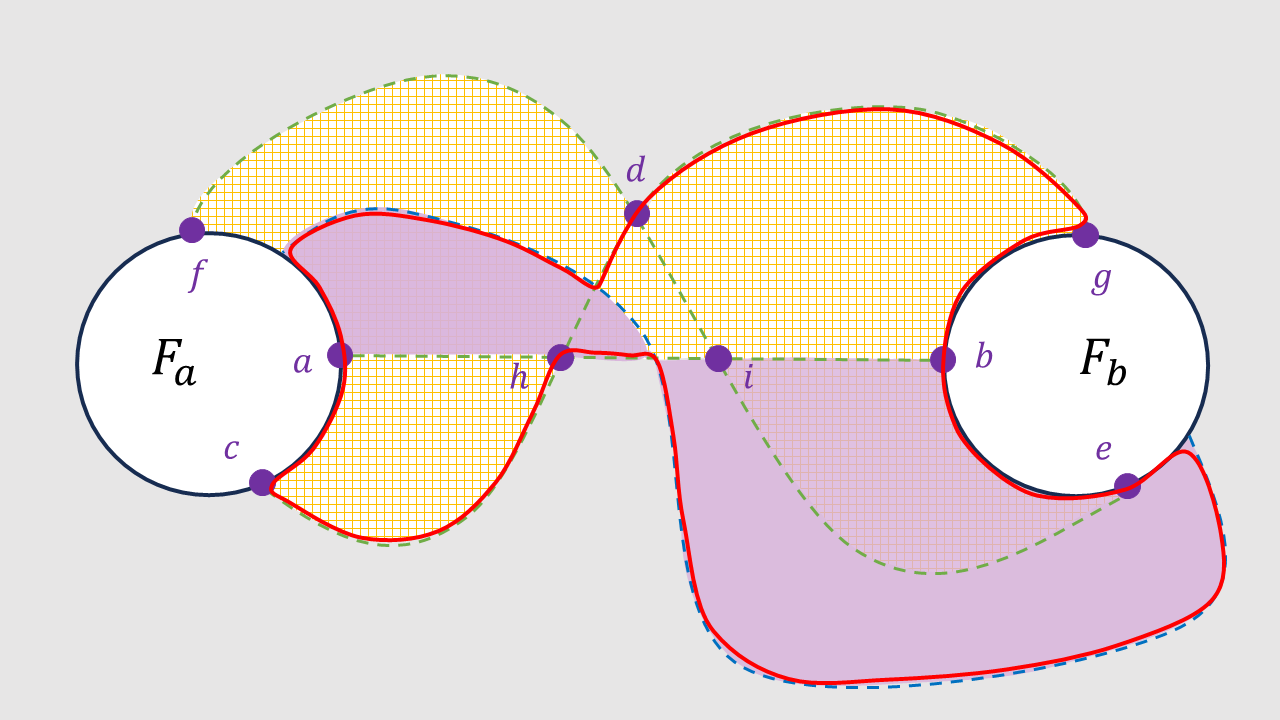}}}
\hspace{1.0cm}
\subfigure
{\scalebox{0.22}{\includegraphics{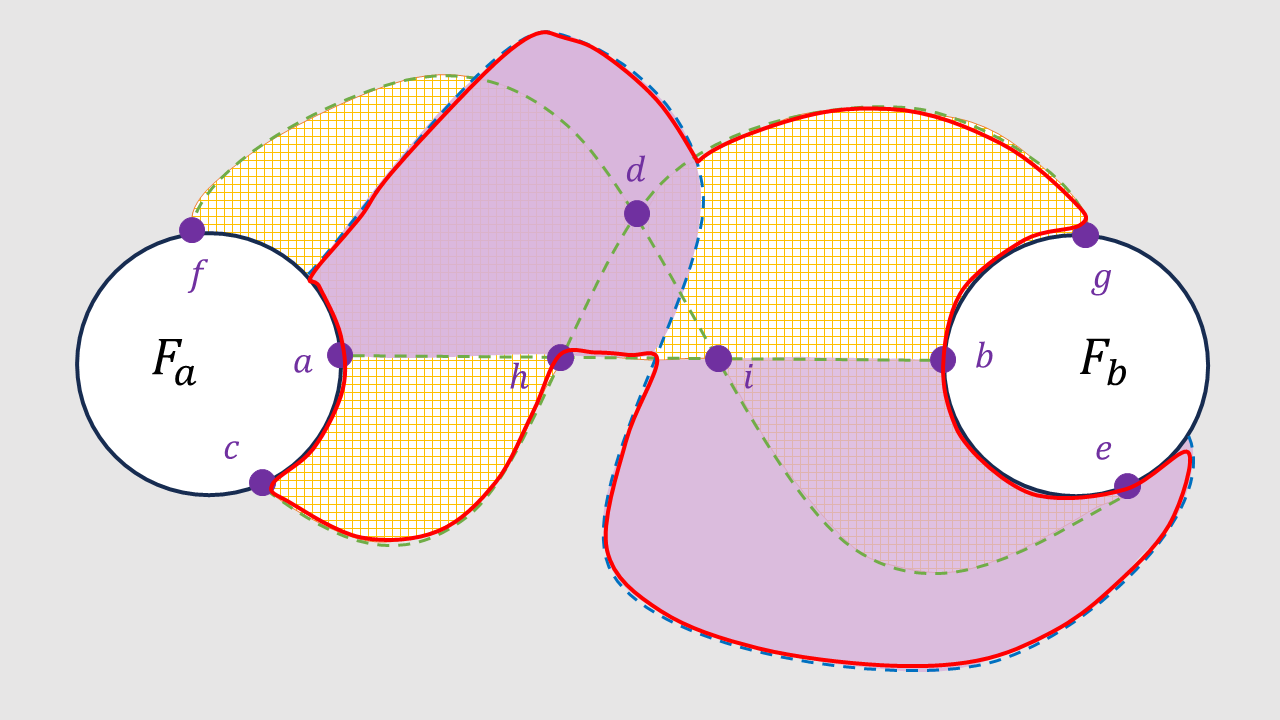}}}
\caption{In the figures, the pink shaded region is $\textsc{Area}(R,(a_t,b_t))$, the yellow striped region is $\Delta_{t}\cup W_t$, and the region outlined in a red curve is $\Delta_{t+1}\cup W_{t+1}$. In both cases, $\Delta_{t+1}\cup W_{t+1}$ is a subset of the union $\Delta_{t}\cup W_t\cup\textsc{Area}(R,(a_t,b_t))$.}\label{fig:wing-induction}
\end{figure} 

\begin{observation}
    $\textsc{Area}((a,b),(a_t,b_t))$ doesn't contain any terminal face.
\end{observation}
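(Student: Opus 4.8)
The plan is to imitate the inductive argument that was just used for Observation~\ref{obs:wing-no-face}: induct on the iteration index $t$, tracking how the region between the original critical path $(a,b)=(a_0,b_0)$ (the first component of the canonical bad pair the process started from) and the current critical path $(a_t,b_t)$ evolves each time $\hat P$ is replaced. The key point is that a single step only sweeps the region $\Delta_t\cup W_t$, which Observation~\ref{obs:wing-no-face} already tells us is free of terminal faces.

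For the base case $t=0$ the region $\textsc{Area}((a,b),(a_0,b_0))$ is empty, so it contains no terminal face. For the inductive step, assume $\textsc{Area}((a,b),(a_t,b_t))$ contains no terminal face. If iteration $t+1$ is a Case~2 update then $\hat P$, its endpoints, and the faces $F_a,F_b$ are untouched, so $(a_{t+1},b_{t+1})=(a_t,b_t)$ and there is nothing to prove. If iteration $t+1$ is a Case~1 update then $\hat P$ is replaced by a critical path $R=(a_{t+1},b_{t+1})$ between $F_a$ and $F_b$ that crosses both $(h_t,d_t)$ and $(d_t,i_t)$, while $\hat Q$ (and hence $\Delta_t,W_t$) is unchanged. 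I claim
\[ \textsc{Area}((a,b),(a_{t+1},b_{t+1}))\ \subseteq\ \textsc{Area}((a,b),(a_t,b_t))\ \cup\ \Delta_t\ \cup\ W_t. \]
Granting this, the right-hand side contains no terminal face --- the first term by the inductive hypothesis, the rest by Observation~\ref{obs:wing-no-face} --- which closes the induction.

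To justify the displayed containment I would argue as follows. Since any two critical paths cross at most once (Property~\ref{prop: G3}) and $\hat Q$ is a concatenation of two critical subpaths, $R$ meets $\hat Q$ in exactly two points, one on $(h_t,d_t)$ and one on $(d_t,i_t)$; these are the new crossings $h_{t+1},i_{t+1}$, and each lies strictly between the corresponding crossing of $(a_t,b_t)$ with $\hat Q$ and the bend $d_t$, so $R$ is pushed toward $d_t$ relative to $(a_t,b_t)$. A parity count on the boundary of the triangle $\Delta_t=\textsc{Area}(h_t,d_t,i_t)$ --- which $R$, having both endpoints outside $\Delta_t$, crosses an even number of times, and which it already crosses at $h_{t+1},i_{t+1}$ --- shows that $R$ does not cross the third side $(h_t,i_t)\subseteq (a_t,b_t)$, so the sub-arc of $R$ inside $\Delta_t$ runs from $h_{t+1}$ to $i_{t+1}$ and splits $\Delta_t$ into the new triangle $\Delta_{t+1}\subseteq\Delta_t$ and a quadrilateral contained in $\Delta_t$. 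Consequently the region $\textsc{Area}((a_t,b_t),(a_{t+1},b_{t+1}))$ swept when replacing $(a_t,b_t)$ by $R$ is that quadrilateral together with two ``end caps'' near the $F_a$- and $F_b$-endpoints, and a planarity argument of the same flavor as in the proof of Observation~\ref{obs:wing-no-face} (with a figure analogous to Figure~\ref{fig:wing-induction}) shows the end caps lie inside the two wings comprising $W_t$, while this swept region added to $\textsc{Area}((a,b),(a_t,b_t))$ covers $\textsc{Area}((a,b),(a_{t+1},b_{t+1}))$; the containment follows.

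The main obstacle is making this last step genuinely rigorous rather than reading it off a picture: one must check that the swept region $\textsc{Area}((a_t,b_t),(a_{t+1},b_{t+1}))$ does not spill outside $\Delta_t\cup W_t$ near the two terminal endpoints, and that $(a_t,b_t)$ really does separate $(a,b)$ from $R$ so that the three critical paths are suitably nested. The delicate sub-case is when $R$ shares an endpoint with $(a_t,b_t)$ on $F_a$ or $F_b$: then the two critical paths meet only at that terminal (as in the proof of Lemma~\ref{lem: critical}), and the nesting has to be re-derived from that fact together with the parity argument above. Everything else is routine planar bookkeeping parallel to Observation~\ref{obs:wing-no-face}.
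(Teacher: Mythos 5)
Your proposal matches the paper's proof in its essentials: induction on $t$, a case split on whether $\hat P$ or $\hat Q$ is updated, and the appeal to Observation~\ref{obs:wing-no-face} to control the region swept when $\hat P$ is replaced. The difference is purely organizational, but it is the organization that resolves exactly the two rigor concerns you flag at the end. You try to prove the cumulative containment
\[
\textsc{Area}\bigl((a,b),(a_{t+1},b_{t+1})\bigr)\subseteq\textsc{Area}\bigl((a,b),(a_t,b_t)\bigr)\cup\Delta_t\cup W_t
\]
directly, which forces you to track ``end caps'' near the terminal endpoints and to worry about the sub-case where $R$ shares an endpoint with $(a_t,b_t)$. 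The paper instead factors the argument into two steps that make those issues disappear. First, it shows the \emph{one-step} region $\textsc{Area}((a_t,b_t),(a_{t+1},b_{t+1}))$ lies in $W_t\cup\Delta_t$: because $(a_{t+1},b_{t+1})$ crosses both sides $(h_t,d_t)$ and $(d_t,i_t)$ of the triangle, it crosses both primary paths forming $\hat Q$, so the \emph{entire} path $(a_{t+1},b_{t+1})$ --- ends included --- lies inside $W_t\cup\Delta_t$, as does $(a_t,b_t)$; the area between two curves that both lie in the region, with endpoints on its face-boundary, is then automatically inside the region, with no separate end-cap case analysis and no sensitivity to whether the two critical paths share a terminal. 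Second, it observes that $\textsc{Area}((a,b),(a_{t+1},b_{t+1}))$ is contained in the union of $\textsc{Area}((a,b),(a_t,b_t))$ and the one-step area, by composing the two continuous deformations. Your parity argument showing $R$ does not cross $(h_t,i_t)$ is a correct (and slightly more explicit) version of what the paper leaves implicit, but it is not needed for the containment once you reason about whole-path containment rather than sub-arcs. You should adopt the two-step factorization; it converts your ``routine planar bookkeeping'' into a one-line statement.
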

\begin{proof}
    We first prove that $\textsc{Area}((a_t,b_t),(a_{t+1},b_{t+1}))$ doesn't contain any terminal faces. There are two cases based on if $\hat P$ or $\hat Q$ is updated. When $\hat{Q}$ is updated, we have that $(a_t,b_t)=(a_{t+1},b_{t+1})$ so there is nothing to prove. When $\hat P$ is updated, we have that $(a_{t+1},b_{t+1})$ crosses through the two sides, $(h_t,d_t)$ and $(d_t,i_t)$, of the triangle. In particular, this means that $(a_{t+1},b_{t+1})$ crosses $(c_t,g_t)$ and $(e_t,f_t)$, so the entirety of $(a_{t+1},b_{t+1})$ lies inside $W_t\cup\Delta_t$. Since $(a_t,b_t)$ also lies in $W_t$, we have that $\textsc{Area}((a_t,b_t),(a_{t+1},b_{t+1}))$ also does. But since $W_t$ and $\Delta_t$ don't contain any terminal faces by Observation \ref{obs:wing-no-face}, this implies the claim.

    Now, we complete the proof by induction. The base case follows by the above discussion, since $(a,b)=(a_0,b_0)$ by definition. Suppose that we have $\textsc{Area}((a,b),(a_t,b_t))$ contains no terminal faces. We proved above that $\textsc{Area}((a_t,b_t),(a_{t+1},b_{t+1}))$ also contains no terminal faces. Recall that the area between two paths can be viewed as the minimal region which is passed over in a continuous deformation between the two paths. Since such a continuous deformation between $(a,b)$ and $(a_t,b_t)$ can be composed together with one between $(a_t,b_t)$ and $(a_{t+1},b_{t+1})$ to form one between $(a,b)$ and $(a_{t+1},b_{t+1})$, we have that $\textsc{Area}((a,b),(a_{t+1},b_{t+1}))$ is a subset of the union of $\textsc{Area}((a,b),(a_{t},b_{t}))$ and $\textsc{Area}((a_t,b_t),(a_{t+1},b_{t+1}))$. Since the latter two areas don't contain any faces, neither does the former area, completing the proof. See Figure \ref{fig:casework} for an illustration of some examples.
    \begin{figure}[h]
    \centering
    \subfigure
    {\scalebox{0.33}{\includegraphics{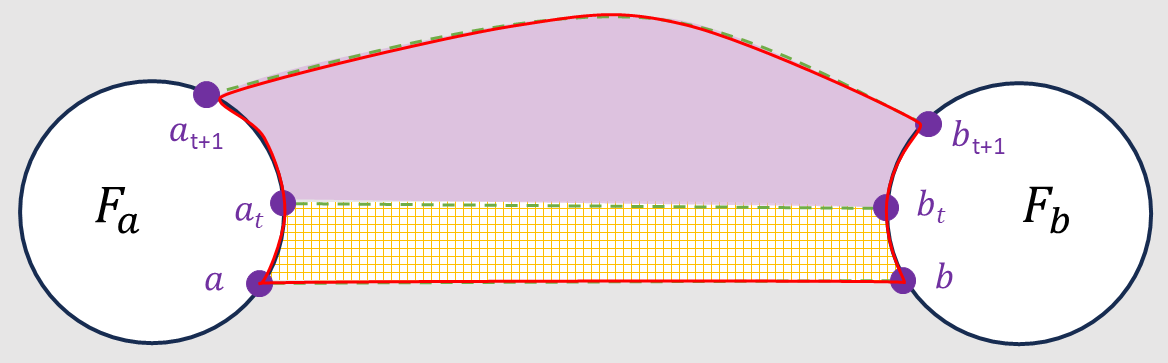}}}
    \hspace{1.0cm}
    \subfigure
    {\scalebox{0.33}{\includegraphics{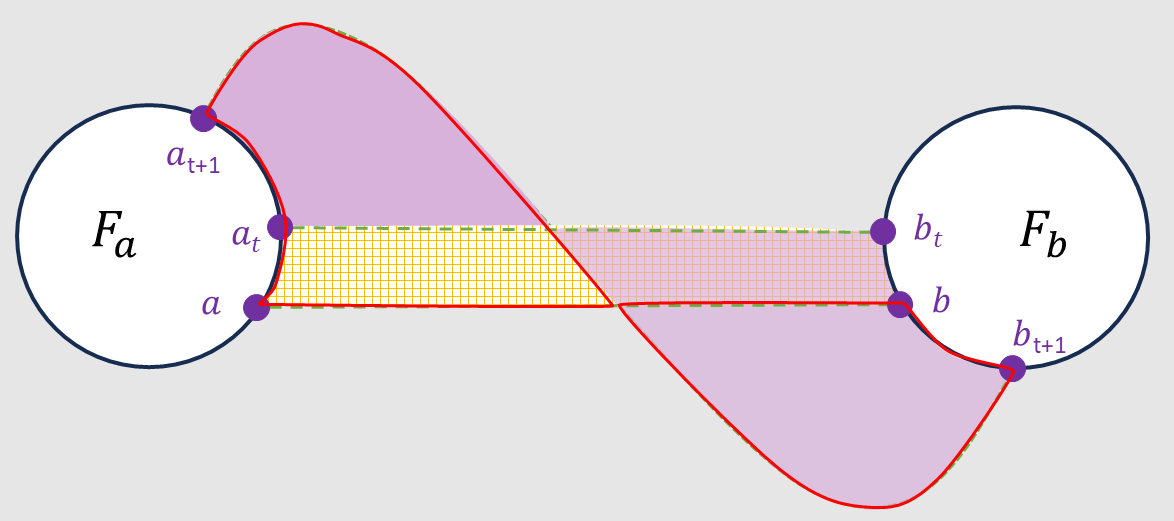}}}
    \hspace{1.0cm}
    \subfigure
    {\scalebox{0.33}{\includegraphics{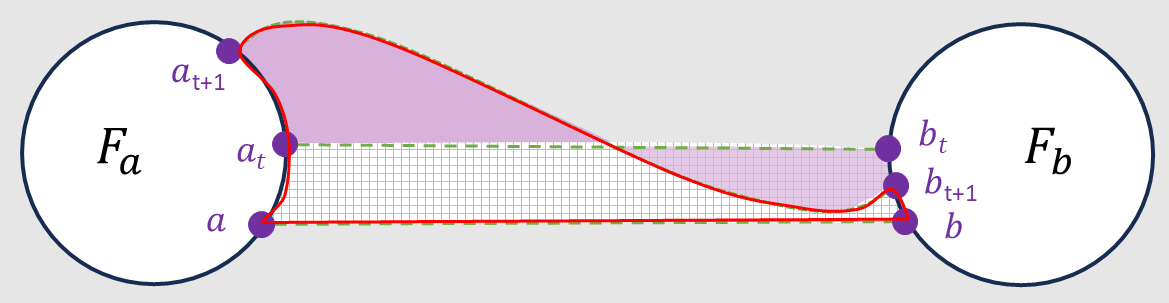}}}
    \caption{An illustration of examples of $\textsc{Area}((a,b),(a_{t},b_{t}))$ in orange, $\textsc{Area}((a_t,b_t),(a_{t+1},b_{t+1}))$ in pink, and $\textsc{Area}((a,b),(a_{t+1},b_{t+1}))$ outlined in red. In each of the three examples, we can see that $\textsc{Area}((a,b),(a_{t+1},b_{t+1}))$ is a subset of $\textsc{Area}((a,b),(a_{t},b_{t}))\cup\textsc{Area}((a_t,b_t),(a_{t+1},b_{t+1}))$. \label{fig:casework}}
    \end{figure}      
\end{proof}

\begin{claim}\label{cl:feng-ding}
    Let $b'$ denote the terminal on $F_b$ that lies clockwise right next to $b$ let $S$ denote the critical path between $a$ and $b'$ (see left one of Figure \ref{fig:feng-ding}). Then $\Delta_t$ never crosses $S$.
\end{claim}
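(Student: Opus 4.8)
The plan is to prove Claim~\ref{cl:feng-ding} by induction on the iteration index $t$, showing that the invariant ``$\Delta_t$ does not cross $S$'' is preserved by each of the two update cases in Step~1. First I would set up the base case: when $t=0$, the pair $(\hat P,\hat Q)$ is the original canonical bad pair, and $\Delta_0=\textsc{Area}(h_0,d_0,i_0)$ is contained in the region enclosed by the two critical paths out of $a$ towards $F_b$ (since $\hat Q$ is a canonical path, its two forming primary subpaths live between such critical paths, and $\hat P$ enters and leaves $\Delta_0$ through those subpaths). Now $S$, the critical path from $a$ to $b'$ (the clockwise neighbor of $b$ on $F_b$), is one of the paths bounding the relevant ``wedge'' out of $a$, or lies entirely outside it; either way $S$ can only touch $\Delta_0$ along its boundary, not cross through its interior, because distinct critical paths from the same terminal $a$ meet only at $a$ and $a\notin\Delta_0$. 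This uses Property~\ref{prop: G3} (critical paths cross at most once, and those sharing an endpoint only at that endpoint) and Observation~\ref{obs:wing-no-face} to know $\Delta_0$ contains no terminal faces, so it sits cleanly inside the wedge governed by the critical paths from $a$.

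For the inductive step I would handle the two cases separately. In Case~1, $\hat P$ is replaced by a critical path $R$ crossing $(h_t,d_t)$ and $(d_t,i_t)$; as already observed in the proof of Observation~\ref{obs:wing-no-face}, $\Delta_{t+1}\subseteq\Delta_t$, so if $\Delta_t$ did not cross $S$ then neither does $\Delta_{t+1}$ — this case is essentially immediate. In Case~2, $\hat Q$ is replaced by the $1$-bend path formed by $(c,g)$ and the new primary path $R$ from $F_b$ to $F_a$. Here $(a_t,b_t)=(a_{t+1},b_{t+1})$ but the bend $d$ and crossings $h,i$ move, and $\Delta_{t+1}=\textsc{Area}(h_{t+1},d_{t+1},i_{t+1})$; I would argue, as in the figures for Observation~\ref{obs:wing-no-face}, that $\Delta_{t+1}\subseteq \Delta_t\cup W_t\cup\textsc{Area}(R,(a_t,b_t))$, and then check that each of these three pieces avoids crossing $S$. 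For $\Delta_t$ this is the induction hypothesis; for $W_t$ and $\textsc{Area}(R,(a_t,b_t))$ I would use that $S$ is a critical path from $a$ to $b'$ with $b'$ clockwise-adjacent to $b$ on $F_b$, so $S$ lies ``just outside'' the governed segment containing $b$, and the primary paths $(c,f),(e,g),R$ all go to terminals strictly inside that governed segment and hence cannot cross $S$ (two critical paths from the same face cross at most once, and the combinatorial/clockwise structure forces this single crossing, if any, to be outside the region in question).

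The main obstacle I expect is Case~2: one must rule out that replacing $\hat Q$'s trailing primary path with $R$ pushes the triangle $\Delta_{t+1}$ across $S$. The delicate point is that $R$ is a genuinely new primary path (from $F_b$ to $F_a$) introduced by the minimization process, not something present in the original canonical family, so I cannot directly invoke ``canonical paths don't cross $S$''. Instead I would lean on the fact that $R$ forms a \emph{safe} $1$-bend path with $(a,b)$ (so $\textsc{Area}(R,(a,b))$ has no terminal faces, by the Case~2 hypothesis), combined with the clockwise ordering of terminals on $F_b$: since $b'$ is immediately clockwise of $b$, the critical path $S=(a,b')$ separates the region ``beyond $b$'' from everything the morphing stays within, and a crossing of $\Delta_{t+1}$ with $S$ would force either $R$ or $(a,b)$ to cross $S$ twice — contradicting Property~\ref{prop: G3} (at most one crossing between two critical paths) once we observe $(a,b)$ and $S$ share the endpoint $a$ and hence cannot cross at all. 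Making this separation argument precise — identifying exactly which side of $S$ each relevant point ($h_{t+1},d_{t+1},i_{t+1}$) lies on and why the boundary of $\Delta_{t+1}$ cannot leave that side — is the part that needs the careful figure-driven case analysis, mirroring the style of the preceding observations in this subsection.
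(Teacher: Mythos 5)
Your approach has a genuine gap, and the missing ingredient is exactly the central idea of the paper's proof: the \emph{face-containment} contradiction.

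You propose a direct induction on $t$, arguing that neither $\Delta_t$, nor $W_t$, nor $\textsc{Area}(R,(a_t,b_t))$ crosses $S$, on the grounds that the primary paths forming $W_t$ and the new path $R$ ``go to terminals strictly inside the governed segment and hence cannot cross $S$.'' This is false. The paper's own proof explicitly allows the forming paths to cross $S$ --- it considers the case where $(c_T,d_T)$ crosses $S$ and where $(e_T,d_T)$ crosses $S$, and only deduces that the \emph{wing portions} $(d_T,f_T)$ and $(d_T,g_T)$ do not cross $S$ (via the at-most-one-crossing rule applied to $(c_T,f_T)$ and $(e_T,g_T)$). Your claim that a crossing of $\Delta_{t+1}$ with $S$ ``would force either $R$ or $(a,b)$ to cross $S$ twice'' is likewise unsound: the boundary of the triangle consists of three pieces coming from three \emph{different} paths, so $S$ can enter and leave the triangle by crossing two distinct paths once each, with no double-crossing of any single critical path. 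Property~\ref{prop: G3} alone cannot close this.

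What is actually doing the work in the paper is a ``first violation'' argument organized around faces rather than crossings. The key observation you never use is that $(a,b)$ and $S=(a,b')$ are \emph{non-equivalent} critical paths, so $\textsc{Area}(S,(a,b))$ must contain a terminal face $F$. The proof then assumes for contradiction that some $\Delta_T$ is the first to cross $S$; minimality forces $S$ to cross both $(h_T,d_T)$ and $(d_T,i_T)$, which (by the single-crossing rule on the two critical paths through $d_T$) means the wing path $f_T\to d_T\to g_T$ stays on one side of $S$. This makes the region between the wing path and $(a,b)$ contain $\textsc{Area}(S,(a,b))$ and hence contain the face $F$ --- but that region is covered by $W_T\cup\textsc{Area}((a_T,b_T),(a,b))$, which is known to be face-free by \Cref{obs:wing-no-face} and the following observation. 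That is the contradiction. Without this face-containment step, the purely combinatorial crossing bookkeeping in your sketch does not yield a contradiction, and in fact your intermediate claims about which paths can cross $S$ are not true.
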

\begin{figure}[h]
\centering
\subfigure
{\scalebox{0.22}{\includegraphics{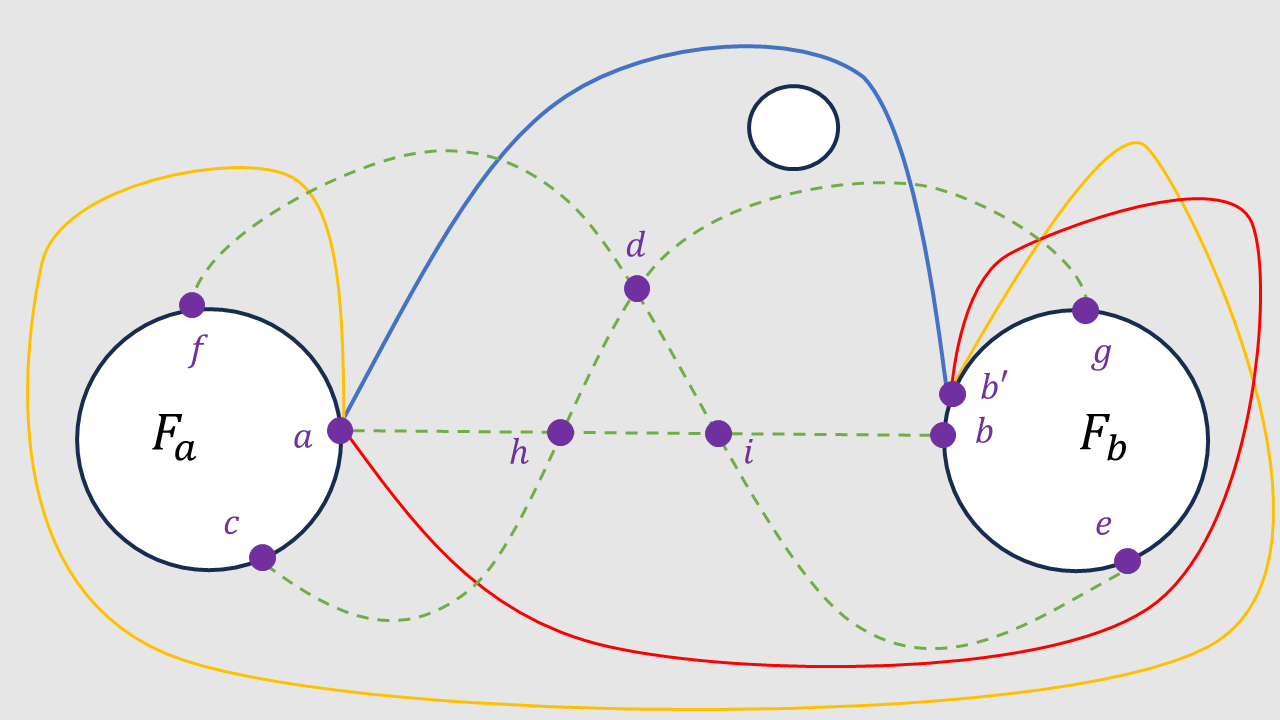}}}
\hspace{1.0cm}
\subfigure
{\scalebox{0.22}{\includegraphics{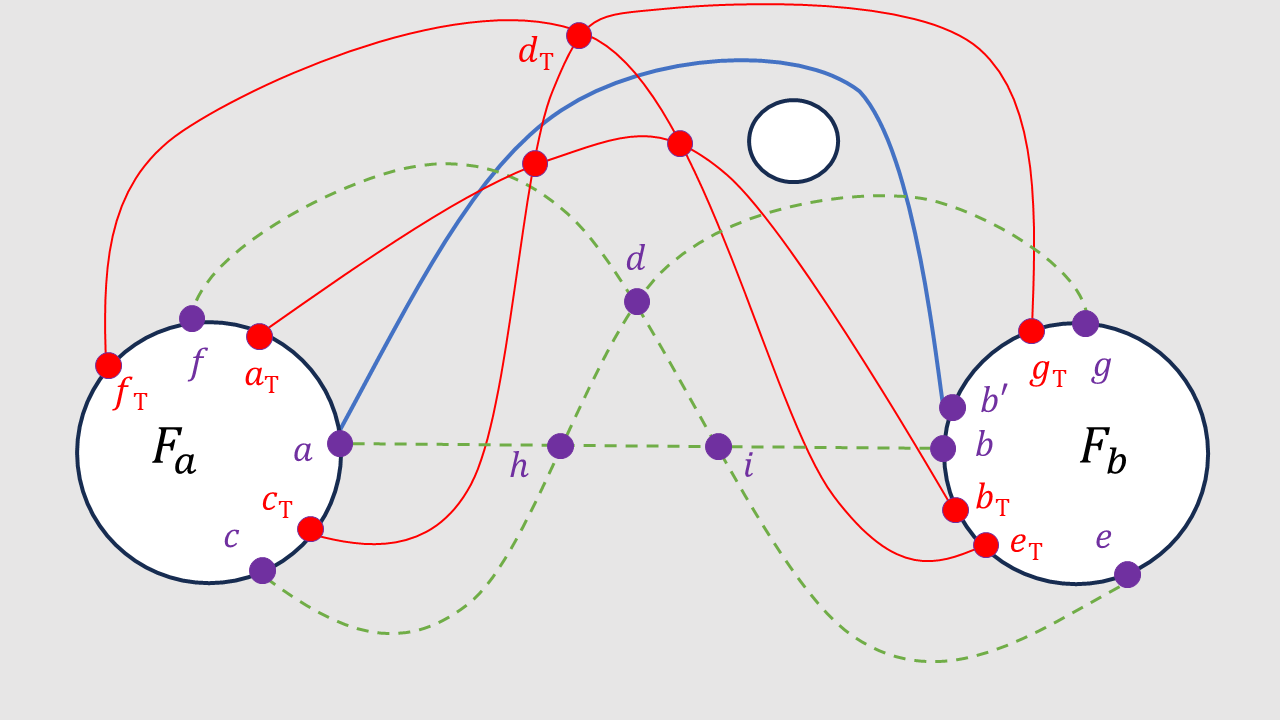}}}
\caption{In the left figure, we assert that the critical path $(a,b')$ must be the blue path. The other possibilities are illustrated by the red and orange drawings of $(a,b')$. The red path is not possible because it crosses $(c,g)$ twice and the yellow path is actually equivalent to the blue path. The right figure is for reference in the proof below.}\label{fig:feng-ding}
\end{figure} 
\begin{proof}
    First, observe that the region between $R=(a,b')$ and $(a,b)$ must contain a face $F$ because they are a pair of not equivalent critical paths. Next, suppose that some triangle crosses $S$. Consider the first such triangle; let's say it occurred at iteration $T$. Let $a_T,b_T,\ldots,i_T$ denote the analogous nodes induced by this triangle, such that $\textsc{Area}(h_T,d_T,i_T)$ is the triangle. Note that for the first time a triangle crosses $S$, we must have that $S$ crosses $(h_T,d_T)$ and $(d_T,i_T)$; otherwise the triangle would have crossed $S$ at iteration $T-1$ as well (see {Figure \ref{fig:triangle_crossing}, and the corresponding caption}).
\begin{figure}[h]
\centering
\subfigure
{\scalebox{0.22}{\includegraphics{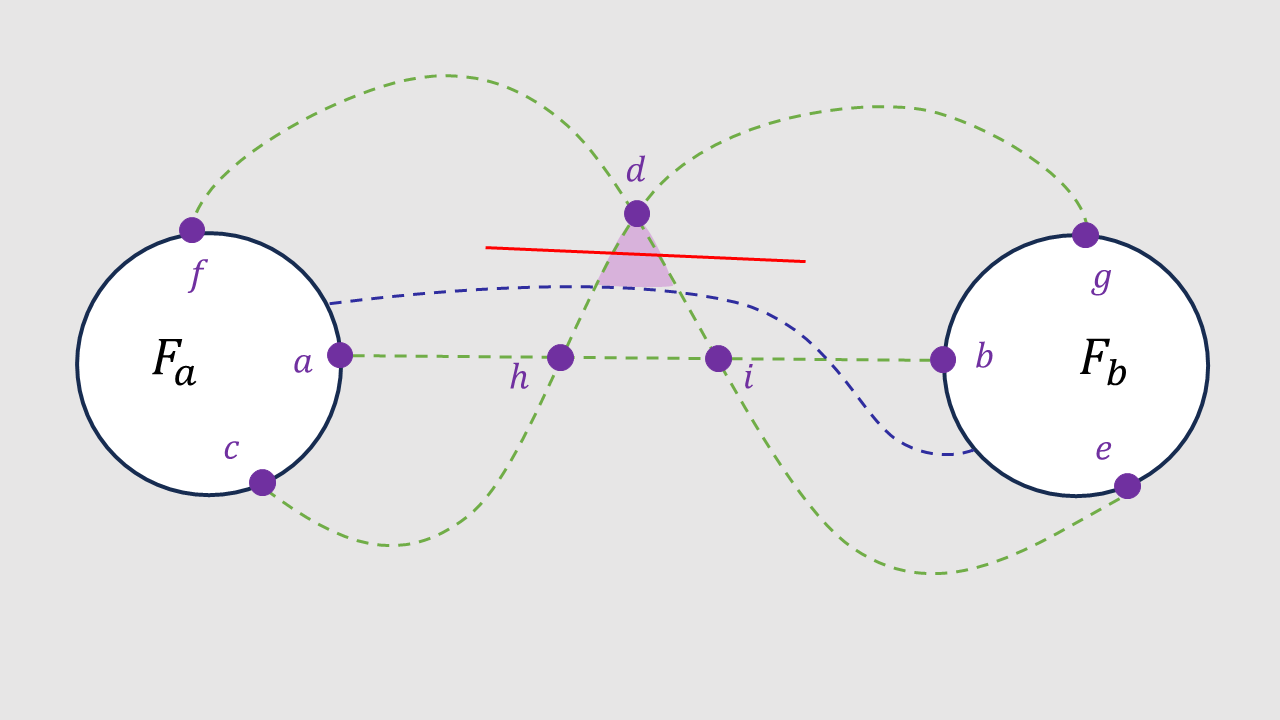}}}
\hspace{1.0cm}
\subfigure
{\scalebox{0.22}{\includegraphics{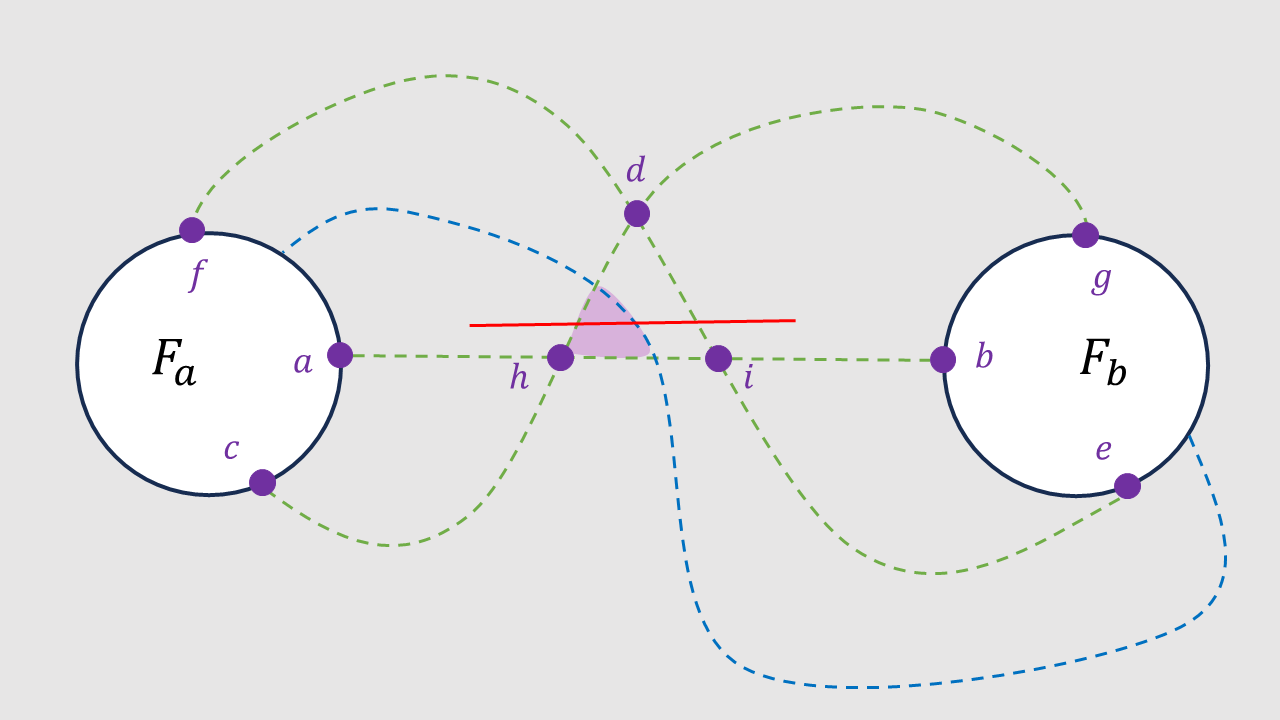}}}
\hspace{1.0cm}
\subfigure
{\scalebox{0.22}{\includegraphics{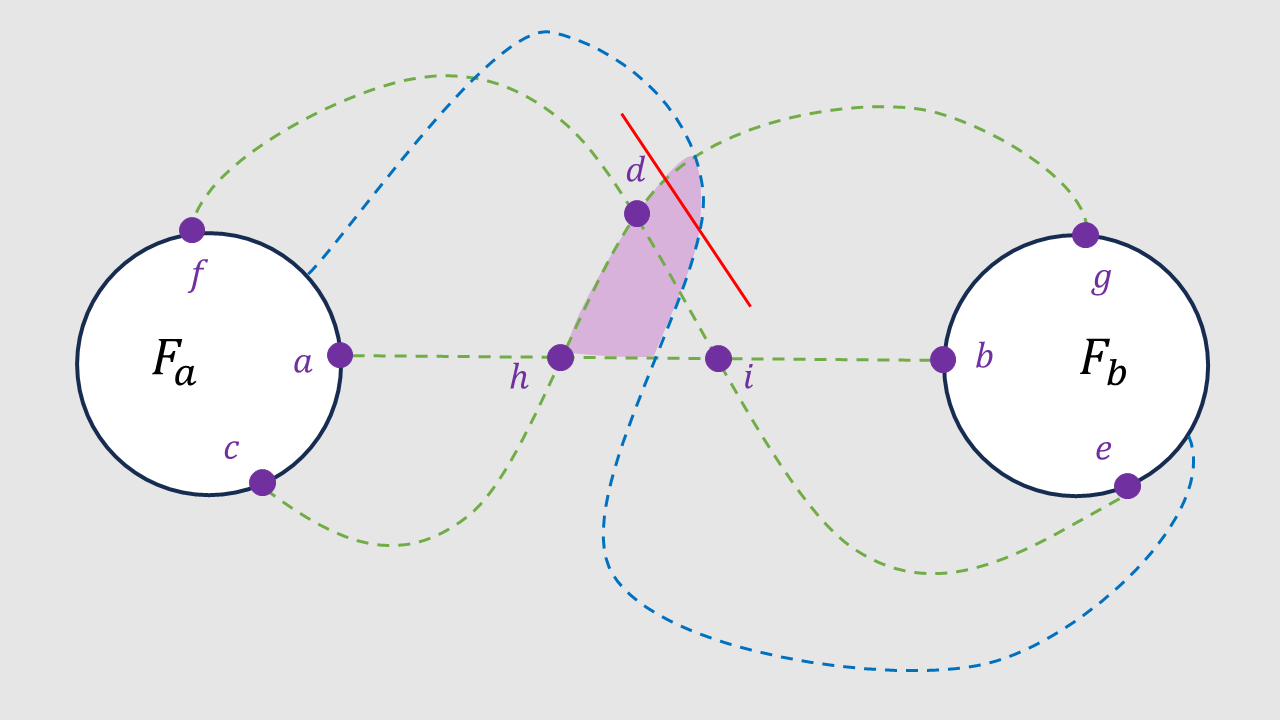}}}
\caption{In each of the figures, $(h,d,i)$ represents $\Delta_{T-1}$, the pink area represents $\Delta_{T}$, and the red curve represents the path $S$. In the top two figures, any path $S$ which crosses $\Delta_T$ twice must also cross $\Delta_{T-1}$ twice. In the bottom figure, a path $S$ which crosses $\Delta_T$ twice but doesn't cross $\Delta_{T-1}$ twice must look like the red path. In particular, note that the red path indeed crosses $(h_T,d_T)$ and $(d_T,i_T)$, as claimed in the proof.}\label{fig:triangle_crossing}
\end{figure} 

    We claim that the area $A$ between $f_T\to d_T\to g_T$ and $(a,b)$ contains the face $F$. The proof of the claim is illustrated in the right side of Figure \ref{fig:feng-ding}, and explained below. Since $(c_T,d_T)$ crosses $S$, we know that $(d_T,g_T)$ cannot cross $S$ again; otherwise, two critical paths would cross twice. Similarly, $(e_t,d_T)$ crosses $S$ so $(d_T,f_T)$ cannot cross $S$ again. Thus, $f_T\to d_T\to g_T$ doesn't cross $S$. Next, observe that $g_T$ is further than $b'$ from $b$ in the clockwise direction on $F_b$. 
    Since we have shown that $f_T\to d_T\to g_T$ doesn't cross $S$, the area between $f_T\to d_T\to g_T$ and $(a,b)$ must contain the area between $S$ and $(a,b)$. Since the latter area contains a face, the former area must as well.
    
    Note that the difference $A-\textsc{Area}((a_T,b_T),(a,b))$ is a subset of $W_T$. But from Observation \ref{obs:wing-no-face}, $W_T$ cannot contain a face, a contradiction. Hence, such a triangle $\textsc{Area}(h_T,d_T,i_T)$ cannot exist.
\end{proof}

\begin{observation}
    Let $U=\textsc{Area}((a,b),(a,b'))$. We have $\Delta_t\subseteq U$ for each iteration of Step 1.
\end{observation}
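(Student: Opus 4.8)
The plan is to pin down the triangle $\Delta_t=\textsc{Area}(h_t,d_t,i_t)$ relative to the simple closed curve $\partial U$, which is the concatenation of the critical path $P=(a,b)$, the critical path $S=(a,b')$, and the single $F_b$-edge between $b$ and $b'$. I would first show that $\Delta_t$ is disjoint from this curve, and then argue that it lies in the bounded component $U$ rather than the unbounded one.

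Disjointness of $\Delta_t$ from $S$ is exactly Claim~\ref{cl:feng-ding}. Disjointness from the edge $(b,b')$ follows from the boundary structure of $\Delta_t$: each of its three sides is a subpath of $\hat P$ or of $\hat Q$, and these meet the boundary cycle of $F_b$ only at their terminal endpoints; by the nesting exhibited in the proof of Claim~\ref{cl:feng-ding} (the $F_b$-endpoints of $\hat P$ and of the paths forming $\hat Q$ lie clockwise at or beyond $b'$ as seen from $b$), none of those endpoints lies strictly between $b$ and $b'$, so $\bar\Delta_t$ cannot meet $(b,b')$. The remaining ingredient --- and the place I expect the real work to be --- is disjointness from $P=(a,b)$ together with the ``inside versus outside'' determination. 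Here I would lean on the no-terminal-face invariants already in hand: $\textsc{Area}((a,b),(a_t,b_t))$, $W_t$, and $\Delta_t$ all contain no terminal face (Observation~\ref{obs:wing-no-face} and the preceding observation), while $U$ \emph{does} contain a face $F$, as recorded in the proof of Claim~\ref{cl:feng-ding}. Morphing $(a_t,b_t)$ onto $P$ and then sliding $\hat Q$ across $\Delta_t\cup W_t$ is a homotopy (relative to the terminal faces) that never sweeps over $F$; combined with Claim~\ref{cl:feng-ding}, this forces $\Delta_t$ --- which borders $\hat P$ along $(h_t,i_t)$ and opens toward the bend $d_t$ --- to lie entirely on the $(a,b)$-side of $S$, i.e.\ inside $U$, and in particular never to cross $P$.

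An alternative, possibly cleaner, route is induction on the iteration index $t$. The base case $t=0$ is the statement at the start of the process and can be read off from the definitions of the canonical path $Q$ and of $U$ (or from the topological argument above applied at $t=0$). For the step: in Case~1 we only replace $\hat P$ by a path crossing the current triangle, so $\Delta_{t+1}\subseteq\Delta_t\subseteq U$; in Case~2 we keep $(a_t,b_t)$ fixed and, exactly as in the proof of Observation~\ref{obs:wing-no-face}, $\Delta_{t+1}\cup W_{t+1}\subseteq \Delta_t\cup W_t\cup\textsc{Area}(R,(a_t,b_t))$, and the no-face property of $\textsc{Area}(R,(a_t,b_t))$ together with Claim~\ref{cl:feng-ding} keep this union inside $U$. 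Either way the crux is the same: the reductions to Claim~\ref{cl:feng-ding} and to the no-terminal-face bookkeeping are routine, and the genuine obstacle is the purely topological step certifying that $\Delta_t$ sits on the inner side of the region bounded by $(a,b)$ and $(a,b')$ rather than escaping to the outer side.
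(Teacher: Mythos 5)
Your second route (induction on $t$) matches the paper's overall structure — the paper does argue by induction, with the $\hat P$-update case being trivial and the $\hat Q$-update case carrying the content — but the inductive step you give for Case~2 has a real gap. You write that $\Delta_{t+1}\cup W_{t+1}\subseteq \Delta_t\cup W_t\cup\textsc{Area}(R,(a_t,b_t))$ and then conclude that "the no-face property of $\textsc{Area}(R,(a_t,b_t))$ together with Claim~\ref{cl:feng-ding} keep this union inside $U$." That is not a valid inference: the inductive hypothesis only gives $\Delta_t\subseteq U$, not $W_t\subseteq U$ or $\textsc{Area}(R,(a_t,b_t))\subseteq U$, and a region containing no terminal face is certainly not thereby confined to $U$ — there are plenty of face-free regions outside $U$. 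Claim~\ref{cl:feng-ding} controls where $\Delta_t$ is relative to $S$, but says nothing directly about the wings $W_t$ or about $\textsc{Area}(R,(a_t,b_t))$, so the asserted containment of the whole union in $U$ is unjustified and is precisely the step that needs to be proved.

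The paper avoids this by reasoning directly about the boundary of the \emph{new} triangle rather than by sandwiching it inside a larger union. Concretely: the new vertices $h_{t+1}=h_t$ and $i_{t+1}$ (the crossing of $(h_t,i_t)$ with $R$) already lie in $\Delta_t\subseteq U$; the two new sides $(h_{t+1},d_{t+1})$ and $(d_{t+1},i_{t+1})$ are subpaths of $(c_t,g_t)$ and of $R$ respectively, so by the "critical paths cross at most once" rule (plus \Cref{obs:orientations}) they cannot cross $R$ or the current $\hat P$ again; hence the new triangle's sides are trapped between $R$ and $\hat P$, and together with $\Delta_{t+1}$ not crossing $S$ (Claim~\ref{cl:feng-ding}) this keeps $\Delta_{t+1}$ inside $U$. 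Your first, topological route is also circling the right objects (disjointness from $S$ via Claim~\ref{cl:feng-ding}, no-face bookkeeping to rule out escape across $P$), but as you yourself note it defers "the genuine obstacle" — the inside-versus-outside determination — and that deferred step is exactly what the side-by-side argument above supplies. To repair your Case~2, replace the appeal to the union's no-face property with the direct analysis of which paths the new triangle's two non-$\hat P$ sides may cross.
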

\begin{proof}
We prove by induction on $t$. The first triangle clearly lies inside $U$. For the induction step, there are two cases based on whether $\hat{P}$ or $\hat{Q}$ are modified. If $\hat{P}$ is modified, then the region enclosed by the triangle is a subset of that of the old triangle, so it must still lie inside $U$ by induction. If $\hat{Q}$ is modified, let $R$ denote the path which forms a safe 1-bend path with $(a,b)$ causing the update to $\hat{Q}$. In the new triangle, the three endpoints are $h_t$, the crossing between $(h_t,i_t)$ and $R$, and the crossing between $(h_t,g_t)$ and $R$. Label the endpoints of the new triangle $h_{t+1}$, $i_{t+1}$, and $d_{t+1}$ so that $h_{t+1}=h_t$,  The former two points lie inside the previous triangle, so they already lie in $U$. Since the triangle never crosses $R$, $(h_{t+1},d_{t+1})$ and $(d_{t+1},i_{t+1})$ can't cross $R$. We claim that $(h_{t+1},d_{t+1})$ and $(d_{t+1},i_{t+1})$ also can't cross $(a,b)$. If they did, Observation \ref{obs:orientations} would mean there are two critical paths which cross twice (see {Figure \ref{fig:stay_in_region}}). Since $R$ and $(a,b)$ are the boundary of the region $U$ and $(h_{t+1},d_{t+1})$ and $(d_{t+1},i_{t+1})$ never cross the boundary, each side of $\Delta_{t+1}$ lies inside $U$. This implies that  $\Delta_{t+1}$ lies inside $U$. 
\end{proof}

\begin{figure}[h]
\centering
\subfigure
{\scalebox{0.22}{\includegraphics{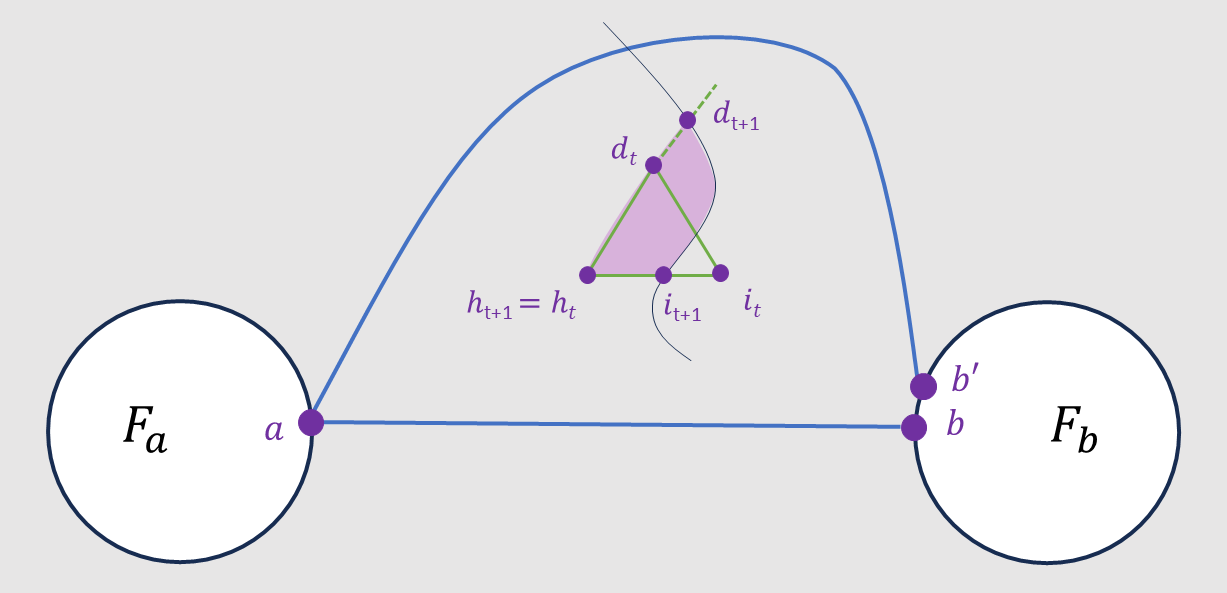}}}
\hspace{1.0cm}
\subfigure
{\scalebox{0.22}{\includegraphics{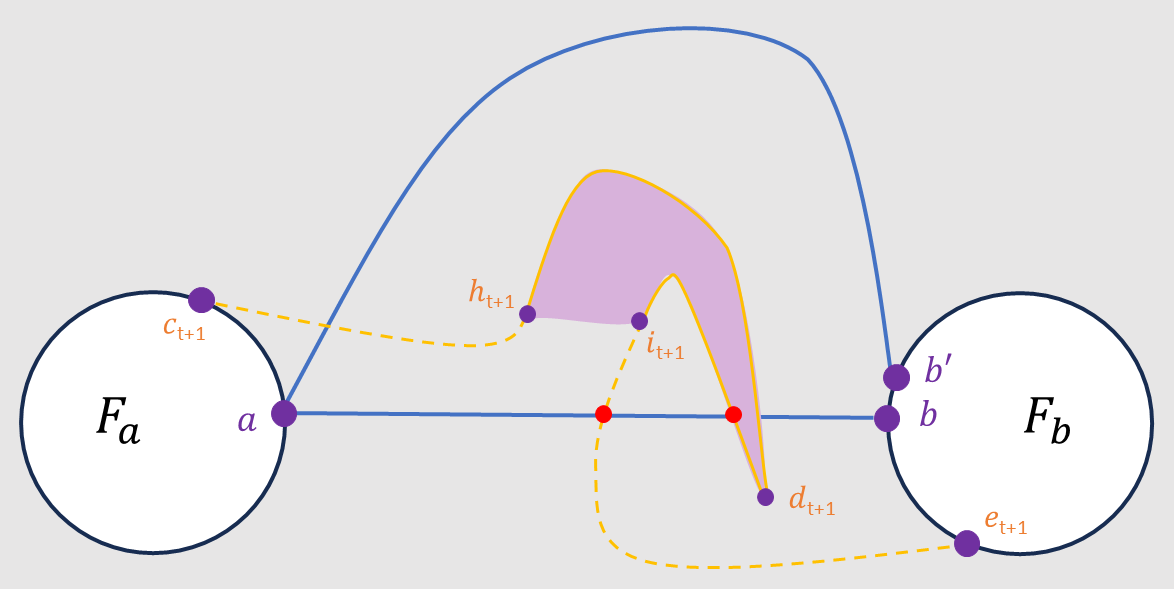}}}
\caption{The right figure illustrates what actually happens: the triangle $(h_{t+1},d_{t+1},i_{t+1})$ stays in the region $U$. The left figure illustrates what cannot happen: the triangle $(h_{t+1},d_{t+1},i_{t+1})$ crossing through $(a,b)$. This is not possible because it forces the subpath $(e_{t+1},d_{t+1})$ to cross $(a,b)$ twice, indicated by the red crossing nodes, a contradiction.\label{fig:stay_in_region}}
\end{figure} 

\begin{claim}\label{cl:step1-convergence}
    Step 1 converges efficiently to a minimal bad pair $(P,Q)$.
\end{claim}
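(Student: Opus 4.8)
The plan is the following. When Step~1 halts, by construction neither Case~1 nor Case~2 applies to the current pair $(\hat P,\hat Q)$, which is exactly the definition of a minimal bad pair; so the content of the claim is (i) that $(\hat P,\hat Q)$ is a genuine bad pair at \emph{every} iteration, and (ii) that Step~1 performs only $\poly(fk)$ iterations, each of which is trivially polynomial time (scan all critical paths for a Case~1 witness and all primary paths for a Case~2 witness). Part~(i) is a bookkeeping check along the case descriptions: $\hat P$ is always a critical path between the current faces $F_a,F_b$ (Case~1 replaces it by such a critical path, Case~2 leaves it fixed); $\hat Q$ is always a $1$-bend path between a terminal of $F_a$ and a terminal of $F_b$, and it is \emph{safe} because the area between its two forming primary paths is exactly $W_t$, which contains no terminal face by \Cref{obs:wing-no-face}; and in each case one checks that $\hat P$ and $\hat Q$ still cross exactly twice with four distinct endpoints. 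So everything reduces to part~(ii).

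For part~(ii), observe first that during Step~1 the graph $\hat H$ is fixed, and the state $(\hat P,\hat Q)$ ranges over a set of size $\poly(fk)$: there are $O(fk)$ critical paths, hence $O(fk)$ choices for $\hat P$, while $\hat Q$ is determined by the ordered pair of primary paths forming it, giving $O(f^2k^2)$ choices. Hence it suffices to produce an integer potential $\Phi_t$, bounded by $\poly(fk)$, that changes strictly monotonically at every iteration. I would build $\Phi_t$ out of two ingredients already isolated by the preceding observations. In a Case~1 step the bend $d_t$ of $\hat Q$ is unchanged and the replacement path $R$ crosses both sides $(h_t,d_t)$ and $(d_t,i_t)$ of the triangle, so $\Delta_{t+1}\subsetneq\Delta_t$ and the emulator vertex $h_t$ (a crossing of two critical paths) is cut off; thus $|V(\hat H)\cap\Delta_t|$ strictly decreases under Case~1 and never increases. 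In a Case~2 step $\hat P$ is unchanged while $\hat Q$ ``advances'': the new $\hat Q$ replaces one of its forming primary paths by a primary path $R$ from $F_b$ to $F_a$ that crosses the segment $(h_t,i_t)$ of $\hat P$ between the two bad crossings, so the old crossing vertex $i_t$, which lay on $\hat Q_t$, becomes interior to the region swept as $\hat Q$ has evolved from its initial position. Taking $\Phi_t = C\cdot\big|V(\hat H)\cap\textsc{Area}(\hat Q_0,\hat Q_t)\big| - \big|V(\hat H)\cap\Delta_t\big|$ with $C=|V(\hat H)|+1$, a Case~1 step strictly increases $\Phi_t$ by fixing the first term and decreasing the subtracted term, and a Case~2 step strictly increases $\Phi_t$ because the first term grows by at least $C$ while the second term moves by less than $C$; since $\Phi_t$ is a bounded integer, the process lasts $\poly(fk)$ iterations.

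The main obstacle is making the Case~2 ingredient rigorous. Unlike Case~1, a Case~2 step does \emph{not} shrink the triangle: the best containment available is $\Delta_{t+1}\cup W_{t+1}\subseteq\Delta_t\cup W_t\cup\textsc{Area}\big(R,(a_t,b_t)\big)$ (see \Cref{fig:wing-induction}), so $\Delta_t$ can move ``outward'' and previously excluded vertices can re-enter it, which is exactly why the progress must be charged to $\hat Q$ rather than to a naive vertex count. Concretely, one must show that the evolution of $\hat Q$ is ``monotone'', i.e. $\textsc{Area}(\hat Q_0,\hat Q_t)$ is nondecreasing, so that the swept region genuinely absorbs $i_t$ and never gives back a vertex; equivalently, that once $\hat Q$ has been updated using a primary path $R$, no later iteration uses $R$ again, because for all subsequent $s$ none of the regions $\Delta_s$, $W_s$, $\textsc{Area}((a,b),(a_s,b_s))$ ever reaches back across $R$. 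This confinement is precisely what \Cref{obs:wing-no-face}, \Cref{cl:feng-ding}, and the containment $\Delta_s\subseteq U$ established above are designed to supply; assembling them into a clean monotonicity statement — and similarly checking that each critical path serves as $\hat P$ at most once, so that the Case~1 count cannot be reset — is where the care is needed, while the remainder of the argument is routine.
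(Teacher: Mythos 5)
Your proposal reaches the same conclusion but via a genuinely different mechanism than the paper. The paper's proof is a proof by contradiction about repeated states: since there are only polynomially many triangles inside $U$, non-termination forces a triangle to repeat, say $\Delta_{t_1}=\Delta_{t_2}$; the chain of areas $\textsc{Area}((h_t,i_t),(h_{t+1},i_{t+1}))$ for $t_1\le t<t_2$ then traces out a closed loop confined to $U$, which must cross back across $(a_{t_1},b_{t_1})$, and by the argument of \Cref{cl:feng-ding} together with \Cref{obs:orientations} that forced re-crossing produces two critical paths crossing twice, a contradiction. Your proposal instead tries to exhibit a bounded, strictly monotone integer potential $\Phi_t$. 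These are not the same argument, and the difference matters for where the difficulty lands.

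The difficulty lands exactly where you flag it. For your potential to make sense you need $\textsc{Area}(\hat Q_0,\hat Q_t)$ to be monotone non-decreasing in $t$ (equivalently, that a Case~2 primary path $R$ is never revisited, and more generally that $\hat Q$ never ``backtracks''). This is not a routine bookkeeping fact: it is a global topological statement about the entire trajectory of $(\hat P,\hat Q)$, and it is essentially equivalent in difficulty to the non-repetition of states that the paper proves directly. Notice that the paper does \emph{not} establish any such monotonicity of the $\hat Q$ history — and indeed \Cref{obs:wing-no-face}, \Cref{cl:feng-ding} and the containment $\Delta_t\subseteq U$ by themselves do not forbid $\hat Q$ from oscillating; only the loop-closing contradiction rules this out. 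So the ``assembling'' step you defer is not a matter of care, it is the core of the argument, and as written the proposal does not prove it. A secondary soft spot: your Case~1 strict decrease of $|V(\hat H)\cap\Delta_t|$ relies on $h_t$ being a vertex of $\hat H$ that lies in $\Delta_t$ but not in $\Delta_{t+1}$; this works with closed triangles, but you should state the convention, because the whole potential hinges on integrality and strictness. Finally, the claim that each critical path serves as $\hat P$ at most once is also stated but not proved, and is again of the same global flavor.

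To summarize concretely: your Case~1 analysis is sound, your state-space bound is sound, and your framing (potential function) is a legitimate alternative route. But the two items you defer — monotonicity of the $\hat Q$ sweep, and no-revisit for $\hat P$ — are precisely the content that makes the claim true, and they are not consequences of the cited observations. The paper avoids having to prove either by showing directly that a repeated triangle is impossible; if you want to complete your route, you would need to reproduce essentially that same loop-and-recross contradiction to establish the monotonicity, at which point the potential function becomes unnecessary scaffolding.
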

\begin{proof}
    It suffices to prove Step 1 converges, since the $(\hat P,\hat Q)$ pair at convergence must be a minimal bad pair. Suppose the process didn't converge. Since 
    there are polynomially many triangles in $U$, triangles must repeat. Let $\Delta_{t_1}=\Delta_{t_2}$ be a pair of repeated triangles. Consider the union of areas $A_t:=\textsc{Area}((h_t,i_t),(h_{t+1},i_{t+1}))$ for $t_1\le t<t_2$. The border of this area starting at $h_{t_1}$ forms a loop which (1) starts and ends at $h_{t_1}$ and (2) stays in the region $U$. In particular, this implies that there must exist some iteration $t$ where $A_t$ crosses back through $(a_{t_1},b_{t_1})$ (see {Figure \ref{fig:loop}}). Let $t^*$ denote the first iteration between $t_1$ and $t_2$ where $A_{t^*}$ crosses back through $(a_{t_1},i_{t_1})$. By similar arguments in Claim \ref{cl:feng-ding} and {Figure \ref{fig:triangle_crossing}}, both $(h_{t^*},d_{t^*})$ and $(i_{t^*},d_{t^*})$ cross $(a_{t_1},b_{t_1})$. But Observation \ref{obs:orientations} implies such an $A_t$ cannot exist, since there would have to be two critical paths which cross each other twice (see discussion in the caption of Figure \ref{fig:loop}).
\end{proof}

\begin{figure}[h]
\centering
\subfigure
{\scalebox{0.22}{\includegraphics{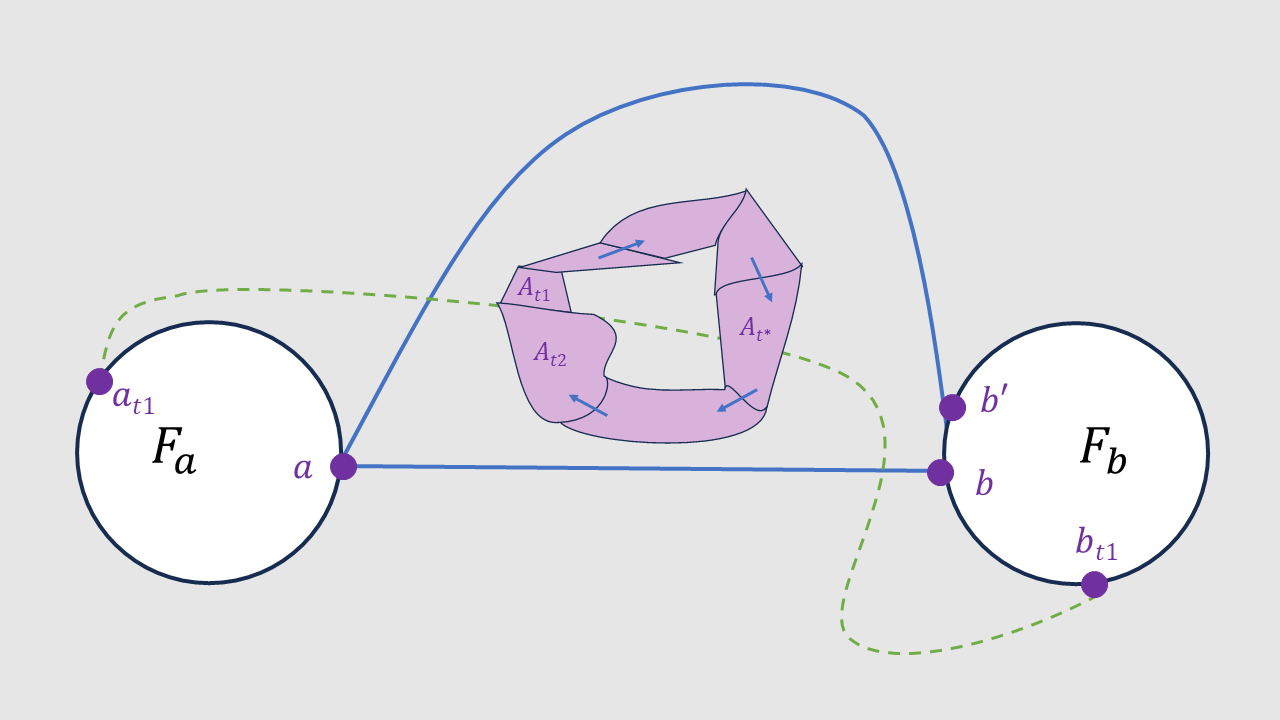}}}
\hspace{1.0cm}
\subfigure
{\scalebox{0.22}{\includegraphics{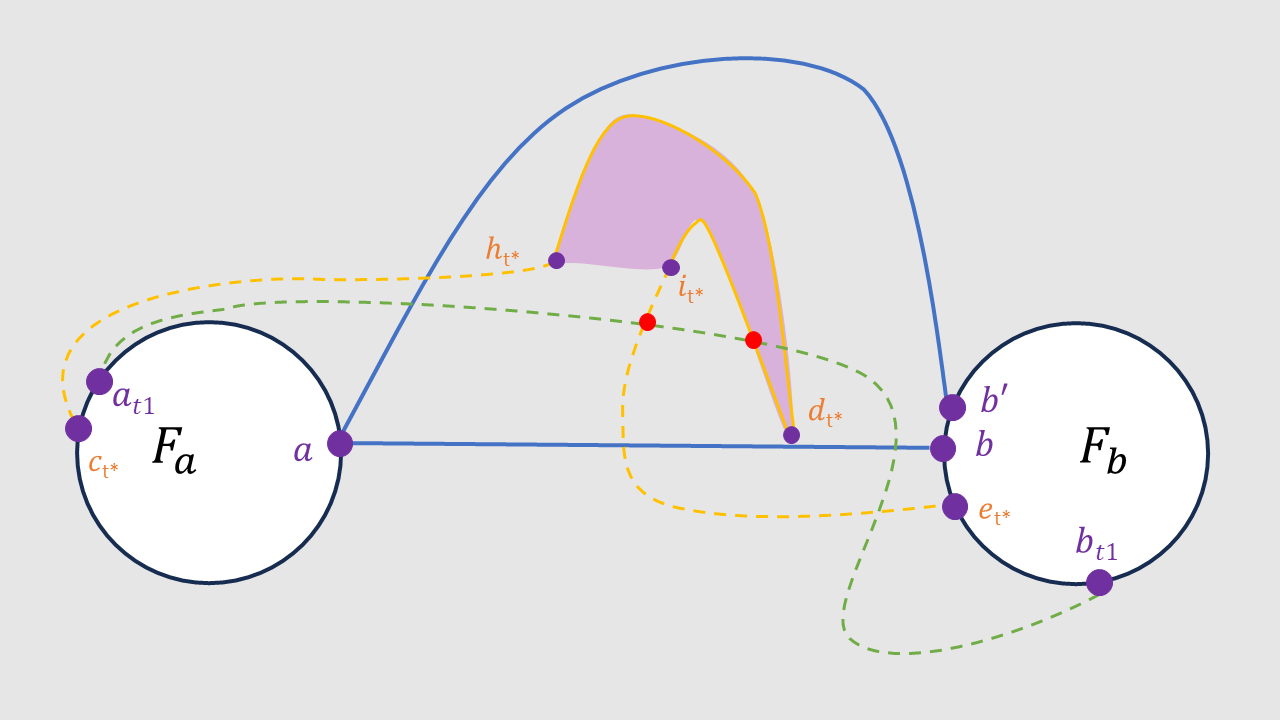}}}
\caption{The left figure illustrates the sequence of areas $A_t$. In particular, there must exist some $A_{t*}$ which crosses back through $(a_{t_1},b_{t_1})$. We claim that such an $A_{t*}$ cannot exist. As illustrated in the right figure, such an $A_{t*}$ would imply that the corresponding $(e_{t*},d_{t*})$ would cross with $(a_{t_1},b_{t_1})$ twice, indicated by the red crossing nodes, a contradiction. \label{fig:loop}}
\end{figure}

\paragraph{Step 2: Rerouting the minimal bad pair.}
First, we re-define some notation. Let $(a,b)$ denote the critical path $P$ and let $(c,g)$ and $(e,f)$ denote the critical paths which form $Q$, with $(c,g)$ and $(e,f)$ crossing at $d$ and $c$ lying on the same face as $a$. Let $h$ and $i$ denote the two crossings between $P$ and $Q$, where $h$ is closer to $a$ and $i$ is closer to $b$. From now on, we will refer to paths $P$ and $Q$ as $(a,b)$ and $c\to d\to e$, so that the notations $P$ and $Q$ can be used for arbitrary pairs of paths. For simplicity, we will denote $\textsc{Area}(h,i,d)$ as \emph{the triangle}. 

In order to fix the bad pair, we reroute $(a,b)$ to follow the path $(a,b)$ until it hits the triangle at $h$. When the path reaches $h$, we reroute it above the top angle of the triangle by first going along (but above) the $(h,d)$ subpath and then going along (but above) the $(d,i)$ subpath. When the rerouting hits $i$, we then follow the $(i,b)$ subpath of the original path $(a,b)$ until we reach $b$. This rerouting is illustrated in {Figure \ref{fig:2a}.} Clearly, the rerouted version of $(a,b)$ no longer crosses the $1$-bend path $c\to d\to e$ so we have fixed the bad pair. 

\begin{figure}[h]
    \centering
    \includegraphics[width=0.7\linewidth]{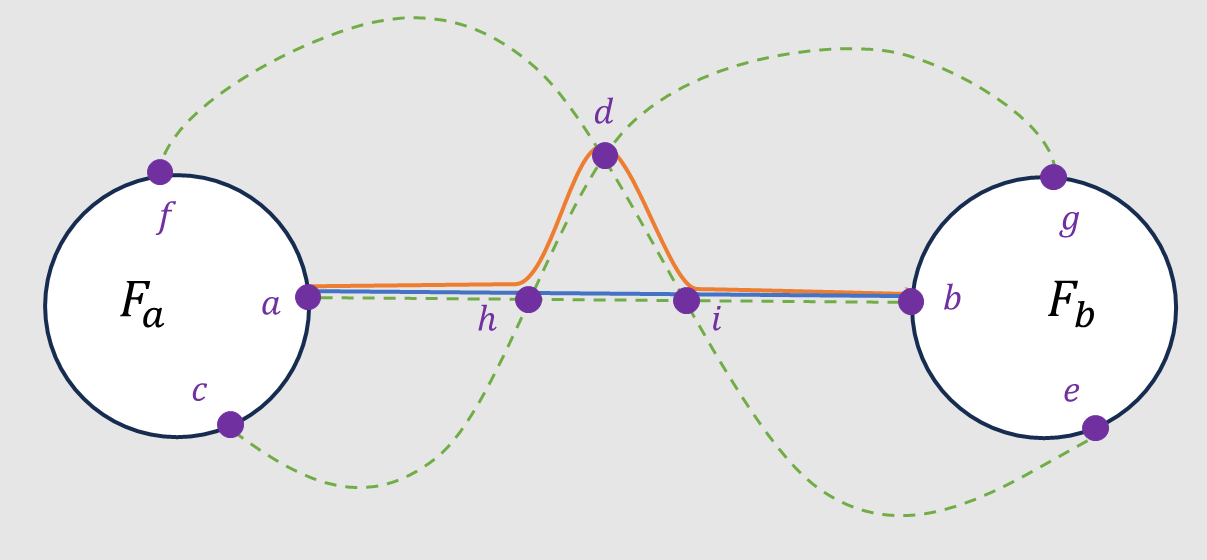}
    \caption{Illustrates the rerouting of $(a,b)$ over the triangle $(h,i,d)$.}
    \label{fig:2a}
\end{figure}

This completes the description of an iteration, and also completes the description of the algorithm. 

\subsubsection{Analysis of the algorithm}

\begin{observation}\label{obs:prop123}
    After one iteration, Properties \ref{prop: f-face instance}, \ref{prop: canonical intersection}, and \ref{prop: region} continue to hold.
\end{observation}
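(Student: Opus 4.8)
The plan is to note that a single iteration changes $\hat H$ only through Step~2, which reroutes the image $\rho_{(a,b)}$ of the single critical path $P=(a,b)$: it replaces the sub-arc of $\rho_{(a,b)}$ between the two crossings $h,i$ --- which is exactly the side $P[h,i]$ of the triangle $\Delta:=\textsc{Area}(h,i,d)$ --- by an arc hugging the other two sides $(h,d),(d,i)$ from outside, and every canonical path traversing a subpath of $(a,b)$ inherits this change; all other critical-path images, and all face boundaries, are untouched. The single fact I would lean on throughout is that the region swept by this rerouting is contained in $\Delta$ together with an arbitrarily thin collar along $(h,d)\cup(d,i)$, and $\Delta$ contains no terminal face by \Cref{obs:wing-no-face}; since a terminal lies on the boundary cycle of its own terminal face, $\Delta$ also contains no terminal, and for a thin enough collar neither does the swept region. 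Given this, Property~\ref{prop: f-face instance} is immediate: the rerouting is a continuous deformation of $\rho_{(a,b)}$ that scans over no terminal face and moves no terminal, so $\hat H$ stays an $f$-face instance with each $T_r$ in the same cyclic order on $F_r$; one also checks the rerouted $\rho_{(a,b)}$ is still simple, since its four arcs $(a,h),(h,d),(d,i),(i,b)$ lie on critical paths that pairwise cross at most once and only at points interior to none of the four arcs.

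For Property~\ref{prop: canonical intersection}, the rerouted $\rho_{(a,b)}$ still joins $a$ and $b$ and no pair of critical paths avoiding $(a,b)$ is affected, so I would fix another critical path $R$ with endpoints disjoint from $\{a,b\}$ and show the new $\rho_{(a,b)}$ crosses $\rho_R$ iff $(a,b)$ and $R$ cross in $G$. The first step is a parity argument: the old and new versions of $\rho_{(a,b)}$ differ by the Jordan curve $P[h,i]\cup(h,d)\cup(d,i)$ (thickened by the collar), and $R$ has both endpoints outside the region it bounds, so $\rho_R$ crosses that boundary an even number of times and the parity of $|\rho_R\cap\rho_{(a,b)}|$ is unchanged --- and by the pre-iteration Property~\ref{prop: canonical intersection} this parity already equals $1$ if $(a,b)$ and $R$ cross in $G$ and $0$ otherwise. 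The second step is an upper bound, which is where the minimality of the bad pair enters: the new $\rho_{(a,b)}$ is built from arcs of the three critical paths $(a,b),(c,g),(e,f)$, so $\rho_R$ meets it at most once on $(a,h)\cup(i,b)$ and --- by the minimality clause forbidding a critical path between $F_a$ and $F_b$ from crossing both $(h,d)$ and $(d,i)$ --- at most once on $(h,d)\cup(d,i)$, hence at most twice in total. Combining: if $(a,b)\cap R=\emptyset$ in $G$ then the old curves were disjoint, so $\rho_R$ avoids $(a,h),(i,b),P[h,i]$, all its crossings with the new $\rho_{(a,b)}$ lie on $(h,d)\cup(d,i)$, of which there is at most one, and even parity forces this to be $0$; and if $(a,b),R$ cross once in $G$ the new count is odd and at most $2$, hence exactly $1$.

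For Property~\ref{prop: region}, I would fix terminals $t_1,t_2$ on different faces, let $\sigma$ be the (fixed) $t_1$-$t_2$ shortest path of $G$, and let $\gamma,\gamma'$ be the canonical $t_1$-$t_2$ paths before and after the iteration. If $\gamma$ does not traverse the $h$-$i$ portion of $(a,b)$ then $\gamma'=\gamma$ and there is nothing to do; otherwise $\gamma'$ is $\gamma$ with a piece pushed across a sub-region of $\Delta$, so $\gamma$ and $\gamma'$ agree outside $\Delta$ and the region they enclose is contained in $\Delta$. By the composition property of enclosed regions (already used in the excerpt), the region enclosed by $\gamma'$ and $\sigma$ lies in the union of the region enclosed by $\gamma$ and $\sigma$ with $\Delta$; the former contains no face and no terminal by the pre-iteration Property~\ref{prop: region}, and $\Delta$ contains neither by \Cref{obs:wing-no-face} and the remark above, so the new region contains neither as well.

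I expect Property~\ref{prop: canonical intersection} to be the main obstacle: parity is essentially free once the homotopy lives inside $\Delta$, but turning it into an exact count of $0$ or $1$ is precisely what the minimality of the bad pair (\Cref{cl:step1-convergence}) and the $1$-bend orientation structure (\Cref{obs:orientations}) are for. The loose end I would be most careful about is confirming that only critical paths between $F_a$ and $F_b$ can cross both $(h,d)$ and $(d,i)$ (so that the minimality clause actually applies), equivalently ruling out a critical path that wraps around the apex $d$; I would settle this by the same area/orientation reasoning as in \Cref{cl:feng-ding}, where such a configuration would force two critical paths to cross twice or place a terminal face inside a region that \Cref{obs:wing-no-face} declares empty. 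Properties~\ref{prop: f-face instance} and \ref{prop: region}, by contrast, follow quickly once the ``swept region $\subseteq\Delta$, terminal-free'' observation is in hand.
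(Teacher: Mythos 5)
Your proof follows essentially the same route as the paper. For Property~\ref{prop: f-face instance} the paper simply observes that no face or terminal location is modified; your version is more careful (you also verify simplicity of the rerouted $\rho_{(a,b)}$), but the content is the same. For Property~\ref{prop: region} you and the paper both argue that the symmetric difference between the old and new enclosed region lies inside $\Delta=\textsc{Area}(h,d,i)$, which is terminal- and face-free by \Cref{obs:wing-no-face}. For Property~\ref{prop: canonical intersection} the paper phrases the key step as ``$\rho_R$ crosses $(h,i)$ iff $\rho_R$ crosses $(h,d)$ or $(d,i)$,'' derived from (i) the even parity of crossings with $\partial\Delta$ and (ii) the minimality clause that no critical path crosses both $(h,d)$ and $(d,i)$; you unpack the same two ingredients into a parity-preservation step plus an at-most-two upper bound and then combine, which is a valid reorganization of the identical argument. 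One thing worth noting: the loose end you flag --- that the minimality clause as literally stated only forbids critical paths \emph{between $F_a$ and $F_b$} from crossing both $(h,d)$ and $(d,i)$, whereas the parity/upper-bound argument needs this for \emph{every} critical path $R$ with endpoints disjoint from $\{a,b\}$ --- is a genuine subtlety that the paper's own proof also glosses over (it asserts ``no path can cross both $(d,h)$ and $(d,i)$'' without restriction). Your proposed resolution via the area/orientation reasoning of \Cref{cl:feng-ding} is the right place to look, so raising it explicitly is a small improvement rather than a gap in your proposal.
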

\begin{proof}
Property \ref{prop: f-face instance} continues to hold, as we have not modified the locations of any faces or terminals.
%
    For Property \ref{prop: canonical intersection}, consider any pair of critical paths $P$ and $Q$. If neither $P$ nor $Q$ are the $(a,b)$ path, the claim is trivial since they are not rerouted. If $P=(a,b)$, then we claim $Q$ crosses $P$ after the rerouting if and if $Q$ crossed $P$ before. To see this, observe that it suffices to show that $Q$ crosses $(h,i)$ if and only if $Q$ crosses either $(d,h)$ or $(d,i)$. But this follows by minimality of the triangle since no path can cross both $(d,h)$ and $(d,i)$, and any path must cross the boundary of the triangle an even number of times. 
    
    Next, we show that Property \ref{prop: region} continues to hold. For any two terminals $t_1$ and $t_2$, let $A$ denote the region between the $t_1$-$t_2$ canonical path and the $t_1$-$t_2$ shortest path before the iteration and let $A'$ denote the region after the iteration. Observe that any canonical path is only modified around an area which is a subset of $\textsc{Area}(h,i,d)$ which doesn't contain any terminal faces or terminals (Observation \ref{obs:wing-no-face}). Consequently, the difference between $A$ and $A'$ doesn't enclose any terminal or face. Since $A$ didn't contain any face or any other terminals, this implies inductively that $A'$ also doesn't contain any face or any other terminals. Consequently, Property \ref{prop: region} also continues to hold. 
\end{proof}

Next, we will show that Property \ref{prop: intersect iff} also holds for the final graph. We start by showing that the number of bad pairs decreases in each iteration in following arguments.

\begin{observation}\label{obs:same-faces}
If two safe 1-bend paths cross twice, their endpoints  lie on the same two terminal faces.
\end{observation}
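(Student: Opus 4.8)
The plan is a contradiction argument based on the ``lens'' (bigon) cut out by two crossings, exploiting that each safe $1$-bend path is a concatenation of two \emph{shortest-path} (critical) pieces and lies on the boundary of a \emph{face-free} region. Setup: write $Q_i$ ($i=1,2$) as a $1$-bend path from $a_i\in F_{a_i}$ to $b_i\in F_{b_i}$ with $F_{a_i}$ of lower face index, bend $d_i$, and pieces $Q_i[a_i,d_i]\subseteq P^i_1$ and $Q_i[d_i,b_i]\subseteq P^i_2$, where $P^i_1$ (resp. $P^i_2$) is a primary path to $F_{b_i}$ (resp. to $F_{a_i}$); safety says $S_i:=\textsc{Area}(P^i_1,P^i_2)$ is face-free. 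Since $Q_i\subseteq P^i_1\cup P^i_2\subseteq\partial S_i$, the region $S_i$ lies on a single side of $Q_i$, and $S_i$ meets terminal faces only along one arc of $\partial F_{a_i}$ and one of $\partial F_{b_i}$. Suppose for contradiction that $\{F_{a_1},F_{b_1}\}\neq\{F_{a_2},F_{b_2}\}$ yet $Q_1$ and $Q_2$ cross at least twice.

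Choose two crossings $x,y$ consecutive along $Q_1$; then $\alpha_1:=Q_1[x,y]$ meets $Q_2$ only in $\{x,y\}$, so $\Gamma:=\alpha_1\cup\alpha_2$ (with $\alpha_2:=Q_2[x,y]$) is a Jordan curve bounding a disk $D$. A transversality check at $x$ and $y$ shows the four outgoing sub-arcs of $Q_1$ and $Q_2$ immediately leave $\overline{D}$ and, touching $\Gamma$ only at $x$ or $y$, stay outside it; hence $a_1,b_1,a_2,b_2\notin\overline{D}$, and since terminal faces are internally disjoint from critical paths, all of $F_{a_1},F_{b_1},F_{a_2},F_{b_2}$ lie entirely outside $\overline{D}$. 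Next, $x$ and $y$ cannot both be crossings of the same pair $(P^1_\mu,P^2_\nu)$ of critical paths, because two critical paths cross at most once (Property~\ref{prop: G3}, or, if they share a terminal, by the remark inside the proof of \Cref{lem: critical}); so, relabelling if needed, $\alpha_1$ contains the bend $d_1$ in its interior, and moreover $\alpha_1,\alpha_2$ cannot both be single critical-path sub-paths, since each would then be the \emph{unique} shortest path between $x$ and $y$, forcing $\alpha_1=\alpha_2$ and precluding a bigon.

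It remains --- and this is the crux --- to derive a contradiction from $\{F_{a_1},F_{b_1}\}\neq\{F_{a_2},F_{b_2}\}$. On one hand I would show $D\subseteq S_1$ or $D\subseteq S_2$: since $\alpha_1\subseteq\partial S_1$ with $S_1$ on one side (and likewise for $\alpha_2$), a short case analysis on which side of $\alpha_1$ (resp. $\alpha_2$) the disk $D$ lies on --- discarding the ``outside'' options using that the four endpoint faces lie outside $\overline{D}$ and using \Cref{obs:orientations} for the piece carrying the bend --- pins $D$ inside $S_1\cap S_2$, so $D$ is face-free. On the other hand I would show that when the two face-pairs differ the two crossings force a terminal face \emph{into} $D$: along $\partial D=\alpha_1\cup\alpha_2$ the arc $\alpha_1$ ``reaches toward'' $F_{a_1},F_{b_1}$ while $\alpha_2$ ``reaches toward'' $F_{a_2},F_{b_2}$, and since these pairs are distinct and all four faces lie outside $\overline{D}$, the disk $D$ must wrap around at least one terminal face; this would be formalized by tracking, for each of the two arcs, the set of terminal faces it separates from $D$, exactly in the spirit of \Cref{obs:wing-no-face} and \Cref{cl:feng-ding}, where a face trapped between a morphed path and a reference path is shown to violate a face-freeness invariant. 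The contradiction between ``$D$ face-free'' and ``$D$ contains a terminal face'' finishes the proof, and the delicate point is precisely this last ``wrapping'' step, which I expect to require a figure-driven case analysis according to which pairs of critical-path pieces carry $x$ and $y$.
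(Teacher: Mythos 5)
Your proposal has a genuine gap at the step you yourself flag as ``the crux''. The bigon argument sets up $D$ cleanly, and the claims that the four endpoints and their terminal faces lie outside $\overline D$ are reasonable, but the contradiction you aim for --- that $D$ is simultaneously face-free (via $D\subseteq S_1\cap S_2$) and must contain a terminal face (the ``wrapping'' claim) --- is not established. Topologically, nothing forces a bigon between two paths that connect different pairs of faces to enclose a face: the four endpoint faces can all sit outside $\overline D$ with neither arc trapping anything. The hand-waved phrase ``$\alpha_1$ reaches toward $F_{a_1},F_{b_1}$ while $\alpha_2$ reaches toward $F_{a_2},F_{b_2}$'' does not supply the needed topological obstruction; you would have to explain why the distinctness of the face pairs forces a winding, and it is far from clear that it does. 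In addition, the intermediate claim $D\subseteq S_1\cap S_2$ is also only sketched (``a short case analysis \dots pins $D$ inside $S_1\cap S_2$''), and which side of $\alpha_i$ the bigon lies on actually depends on the configuration, so this too needs verification.

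The paper avoids the bigon entirely and instead zooms into the individual critical paths that make up the $1$-bend paths. Choosing an endpoint $v$ of one $1$-bend path $P$ with $F_v\neq F_x,F_y$ (possible once there are at least three faces among the endpoints), and using that the two ``wing'' areas $\textsc{Area}(u,w,u')\cup\textsc{Area}(v,w,v')$ and $\textsc{Area}(x,z,x')\cup\textsc{Area}(y,z,y')$ are face-free, one of $(x,y')$ or $(y,x')$ is forced to cross $P$ twice; then, since $x,y,x',y'\notin\textsc{Area}(w,v,v')$, the critical path $(x,y')$ must exit through $(w,v')$, so it crosses the single critical path $(u,v')$ twice --- a direct contradiction with Property~\ref{prop: G3}. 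The decisive move is to find two \emph{critical paths} that cross twice rather than to argue topologically about the lens between the two $1$-bend paths; you would need to supply a concrete version of that ``wrapping'' argument, and I do not see how to do so along the route you describe.
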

\begin{proof}
    Let $P$ be a safe 1-bend path from $u$ to $v$ with bend $w$ and let $Q$ be a safe 1-bend path from $x$ to $y$ with bend $z$. Let $(u, v')$ and $(v, u')$ denote the critical paths forming $P$ and let $(x, y')$ and $(y, x')$ denote the critical paths forming $Q$. Assume for contradiction that the endpoint terminals of $P$ and $Q$ lie on at least three terminal faces. Then there must exist at least one endpoint of $P$ which lies on a different face than the other endpoints. Without loss of generality, assume $F_v\neq F_x,F_y$.

    Since $P$ and $Q$ cross twice and neither $\textsc{Area}(u,w,u')\cup \textsc{Area}(v,w,v')$ nor $\textsc{Area}(x,z,x')\cup \textsc{Area}(y,z,y')$ contains any terminal faces, $P$ crosses either $(x, y')$ or $(y, x')$ twice as well. Without loss of generality, let us say that $(u,w)$ and $(w,v)$ each crosses $(x,y')$. Since $F_v\neq F_x,F_y$ and $\textsc{Area}(u,w,u')\cup \textsc{Area}(v,w,v')$ doesn't contain any other terminal faces, we know that the terminals $x,y,x',y'$ don't lie in $\textsc{Area}(w,v,v')$. Combined with the fact that $(x,y')$ crosses $(w,v)$, this implies that $(x,y')$ also crosses $(w,v')$. But this means that $(x,y')$ crosses $(u,v')$ twice ($(u,w)$ and $(w,v')$), a contradiction. See Figure \ref{fig:two-face-reduction}.
    \begin{figure}[h]
        \centering
        \includegraphics[width=0.6\linewidth]{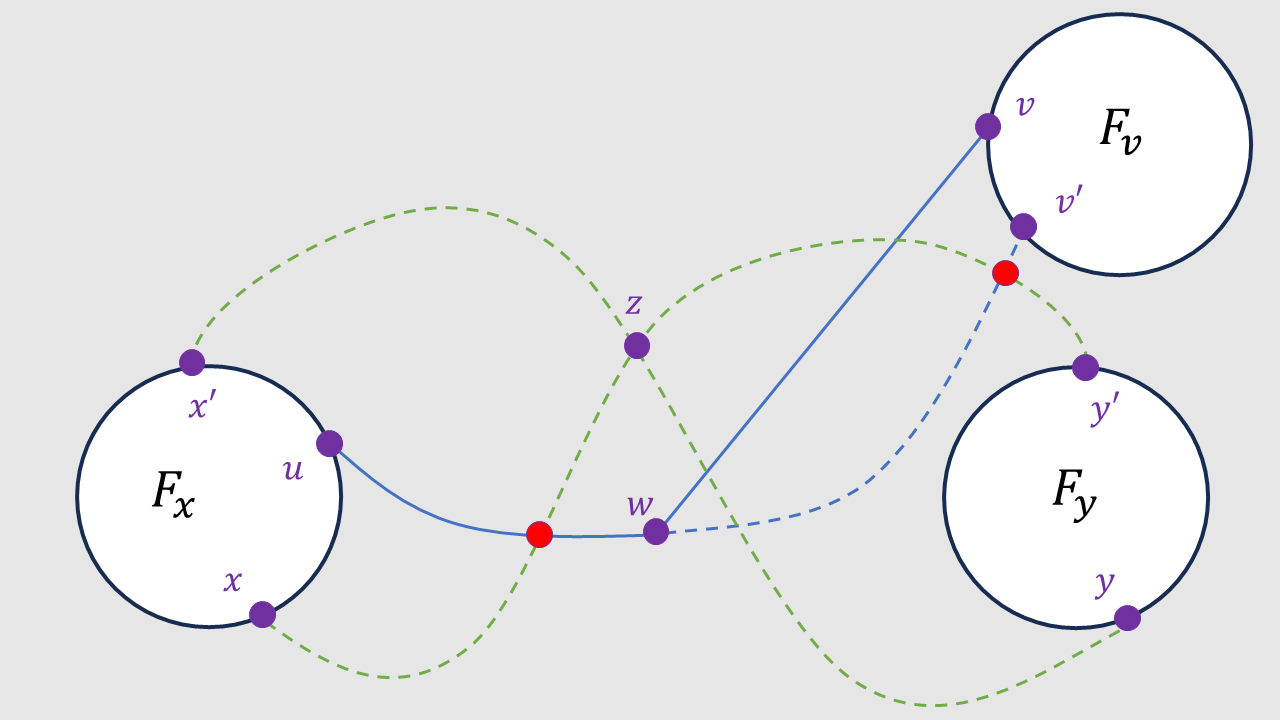}
        \caption{An illustration of the proof of Observation \ref{obs:same-faces}. The two crossings which must occur between $(u,v')$ and $(x,y')$ are identified in red.}
        \label{fig:two-face-reduction}
    \end{figure}
\end{proof}

\begin{claim}\label{cl:canonical-path-crossing-triangle}
Before the path $(a,b)$ gets rerouted, any safe $1$-bend path $u\to w\to v$ crosses $(h,i)$ at least as many times as it crosses $(i,d)$ and $(h,d)$ combined.
\end{claim}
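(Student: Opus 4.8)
The plan is to reduce this to a purely topological counting statement about how a curve can enter and leave the "wings plus triangle" region associated with the 1-bend path $c\to d\to e$ forming $Q$, and then use the structural facts already established about critical paths (any two critical paths sharing an endpoint cross only at that endpoint; any two critical paths cross at most once). Recall the relevant region $W\cup\Delta$ from Observation \ref{obs:wing-no-face}, where $W=\textsc{Area}(c,d,f)\cup\textsc{Area}(g,d,e)$ is the pair of "wings" and $\Delta=\textsc{Area}(h,i,d)$ is the triangle. The key geometric observation is that the segments $(h,i)$ and $(i,d),(h,d)$ together form (essentially) the boundary of $\Delta$, while the two wing segments $(c,d)\to(d,f)$ and $(g,d)\to(d,e)$ bound $W$; and $(h,i)$ is a subpath of $Q$'s own critical paths, whereas the triangle sides $(h,d)$ and $(d,i)$ are subpaths of $(c,g)$ and $(e,f)$ respectively.

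First I would set up the following bookkeeping. For a safe 1-bend path $R=u\to w\to v$, I track the crossings of $R$ with the closed curve $\partial\Delta$ (the triangle boundary, three sides $(h,d)$, $(d,i)$, $(i,h)$) and separately with $\partial W$. Because $R$ is a simple curve with endpoints $u,v$ that may or may not lie inside $\Delta\cup W$ (by Observation \ref{obs:wing-no-face} there are no terminal faces inside, but the endpoint terminals could in principle be on the boundary faces), and because $\Delta\cup W$ is a disc (it contains no terminal faces, hence no holes), I get a parity/counting constraint: each maximal sub-arc of $R$ that lies inside $\Delta$ must enter and exit through $\partial\Delta$, and the only way to enter $\Delta$ is through one of its three sides. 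So I want to show that every entrance/exit through a triangle side $(h,d)$ or $(d,i)$ is "charged" to an entrance/exit through $(h,i)$, i.e. a sub-arc of $R$ that touches $(h,d)\cup(d,i)$ cannot enter and leave the triangle using only those two sides. This is exactly the minimality-of-the-triangle reasoning: a sub-arc entering through $(h,d)$ and leaving through $(d,i)$ would have to cross near the apex $d$, but $(h,d)\subseteq(c,g)$ and $(d,i)\subseteq(e,f)$, and if $R$ is formed by critical paths $(u,v')$ and $(v,u')$, having one of these critical paths cross both $(c,g)$ and $(e,f)$ close to $d$ forces — via the wing structure and Observation \ref{obs:orientations} — a second crossing with one of $(c,g),(e,f)$, contradicting the single-crossing property of critical paths. (This is the same argument pattern used in Claim \ref{cl:feng-ding} and Figure \ref{fig:triangle_crossing}.)

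The key steps in order: (1) argue $\Delta\cup W$ is a topological disc (no terminal faces inside, by Observation \ref{obs:wing-no-face}), so Jordan-curve counting applies to sub-arcs of $R$; (2) show no sub-arc of $R$ enters $\Delta$ through $(h,d)$ and exits through $(d,i)$ (or re-enters through the same side), using the fact that $R$'s two constituent critical paths each cross $(c,g)$ at most once and $(e,f)$ at most once, together with the orientation constraint from Observation \ref{obs:orientations} that pins down on which side $R$'s second arc exits; (3) show likewise that a sub-arc cannot enter $\Delta$ through $(h,d)$ or $(d,i)$ and leave by "passing into $W$" without this being accounted for — here I use that the common boundary of $\Delta$ and $W$ is contained in $(c,g)\cup(e,f)$, so such a transition still corresponds to a crossing of $(c,g)$ or $(e,f)$, and again the single-crossing property bounds how often this can happen relative to crossings of $(h,i)$; (4) conclude that the number of times $R$ crosses $(h,d)\cup(d,i)$ is at most the number of times it crosses $(h,i)$. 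I would organize (2) and (3) by a short case analysis on which of the two crossings between $R$ and $Q$ (recall $R$ and $(a,b)$ — no: here $R$ is an arbitrary safe 1-bend path and we count against the fixed triangle of $(P,Q)$) sits where, exactly mirroring Figure \ref{fig:triangle_crossing}.

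The main obstacle I expect is step (3): cleanly ruling out "leakage" between the triangle $\Delta$ and the wings $W$ without double-counting, since a single arc of $R$ could in principle weave $\Delta \to W \to \Delta$. The clean way to handle this is to not separate $\Delta$ from $W$ at all, but to count crossings of $R$ with the full boundary of $\Delta\cup W$ and then observe that this boundary decomposes as $(h,i)$ (a sub-path of $Q$'s critical paths, "the bottom") together with the arc $c\to d\to f$ extended and $g\to d\to e$ — and that every crossing of $R$ with $(h,d)$ or $(d,i)$ is an interior crossing of $\Delta\cup W$, hence by disc-connectivity must be balanced against a boundary crossing, and the only available balancing boundary arc reachable without violating the single-crossing property of critical paths is $(h,i)$. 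Making "reachable without violating single-crossing" precise is where Observation \ref{obs:orientations} and the pairing structure of primary/secondary paths do the real work, and I would isolate that into a sub-claim before assembling the final count.
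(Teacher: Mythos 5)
Your sub-arc/charging reframing is a reasonable high-level alternative to the paper's crossing-count case analysis, but as written it has several genuine gaps, and one statement is simply wrong.

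\textbf{A factual slip.} You write that ``$(h,i)$ is a subpath of $Q$'s own critical paths, whereas the triangle sides $(h,d)$ and $(d,i)$ are subpaths of $(c,g)$ and $(e,f)$ respectively.'' This is backwards: $(c,g)$ and $(e,f)$ \emph{are} $Q$'s critical paths, and $(h,d)\subset(c,g)$, $(d,i)\subset(e,f)$; the third side $(h,i)$ is a subpath of $P=(a,b)$, not of $Q$. The confusion does not scuttle the counting scheme, but it matters when deciding which structural hypotheses can be brought to bear on each side.

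\textbf{The key sub-claim is misjustified.} Everything in your argument hinges on showing that no maximal sub-arc of $R$ inside $\Delta$ both enters and exits through $(h,d)\cup(d,i)$. You attribute this to the ``single-crossing property of critical paths,'' but that property does not forbid a single critical path from crossing $(c,g)$ once and $(e,f)$ once near $d$ — these are two different paths. What actually rules out this configuration is the first condition in the definition of a minimal bad pair: no critical path between $F_a$ and $F_b$ crosses both $(h,d)$ and $(d,i)$. To invoke that, you first need to know that the constituent critical paths of $R$ run between $F_a$ and $F_b$, which is exactly where the paper uses Observation~\ref{obs:same-faces}: since a bad sub-arc crosses $Q$ twice (once on each of $(c,g)$ and $(e,f)$), the endpoints of $R$ must lie on the same two faces as those of $Q$. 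You never invoke this observation, so your charging argument is not anchored to the minimality hypothesis it silently relies on.

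\textbf{The bend case is not covered.} Even after fixing the above, your argument only treats sub-arcs entirely contained in one of $R$'s two constituent critical paths. A bad sub-arc could straddle the bend $w$ of $R$ (one leg crossing $(h,d)$, the other crossing $(d,i)$), in which case no single critical path violates either single-crossing or minimality, and your contradiction evaporates. This is the scenario the paper handles with the orientation argument (Observation~\ref{obs:orientations} together with the Figure~\ref{fig:4a} cases and the reasoning from Claim~\ref{cl:step1-convergence}): two safe 1-bend paths between the same pair of faces must have compatible orientations, and a straddling sub-arc would force one of them to have the wrong orientation. Your outline gestures at ``the wing structure and Observation~\ref{obs:orientations}'' but does not actually deploy it for this case, and the analogy you draw to Claim~\ref{cl:feng-ding} and Figure~\ref{fig:triangle_crossing} doesn't carry over because those concern specifically chosen critical paths such as $(a,b')$, not an arbitrary safe 1-bend path. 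You also flag the $\Delta\leftrightarrow W$ ``leakage'' issue and propose counting against $\partial(\Delta\cup W)$, but $(h,d)$ and $(d,i)$ lie on the \emph{interior} shared boundary between $\Delta$ and $W$, so they would not appear in a count of $R$ against $\partial(\Delta\cup W)$ at all — that reformulation erases exactly the quantity the claim is about.

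Compared with the paper, which just enumerates $0/2/4$ total crossings and uses $|\mathcal{B}|$-free case analysis plus Observations~\ref{obs:same-faces} and~\ref{obs:orientations}, your approach would be shorter if the sub-claim held cleanly, but the sub-claim is in fact the entire content of the paper's Figure~\ref{fig:4a} analysis, and it needs the two observations and the minimality hypothesis made explicit to go through.
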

\begin{proof}
    Recall that $a\to b$ is the primary direction of the $(a,b)$ critical path. The safe 1-bend path can cross the triangle either 0, 2, or 4 times; it is clear that 6 times is never possible, or else there would necessarily exists two critical paths which cross twice. If the path crosses the triangle 0 times, the claim is trivial. If the path crosses the triangle 4 times, observe that the path cannot cross $(h,d)$ and $(i,d)$ a total of 3 or 4 times since that would contradict the minimality of the triangle as at least one primary path forming the 1-bend path would need to cross both $(h,d)$ and $(i,d)$. But if the path crosses $(h,d)$ and $(i,d)$ twice or less, it would cross $(h,i)$ twice or more, and the desired claim holds. 
\begin{figure}[h]
\centering
\subfigure
{\scalebox{0.22}{\includegraphics{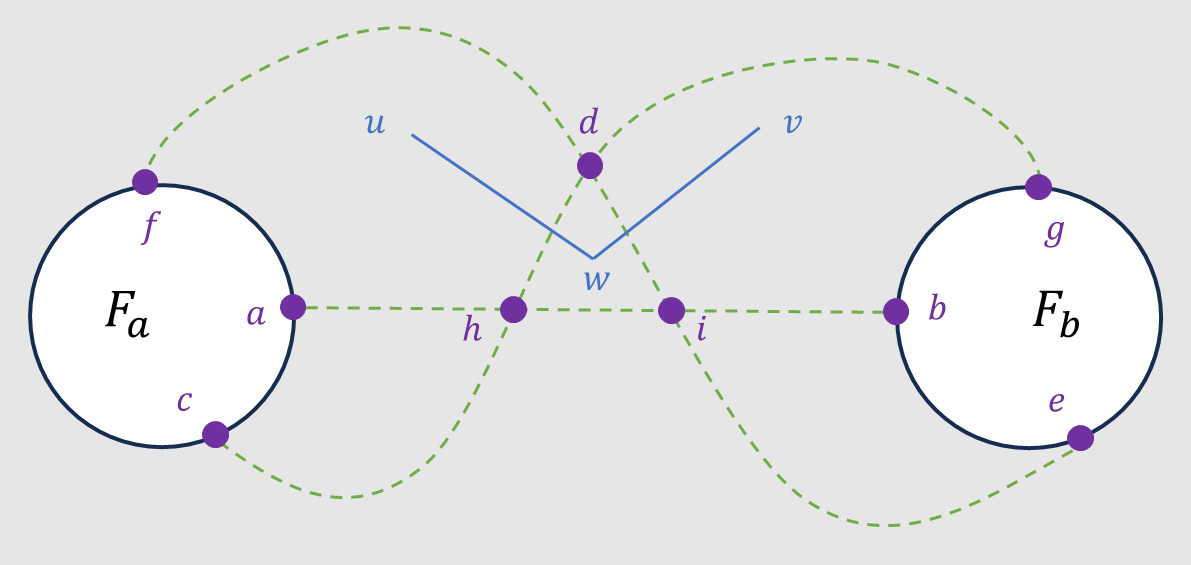}}}
\hspace{1.0cm}
\subfigure
{\scalebox{0.22}{\includegraphics{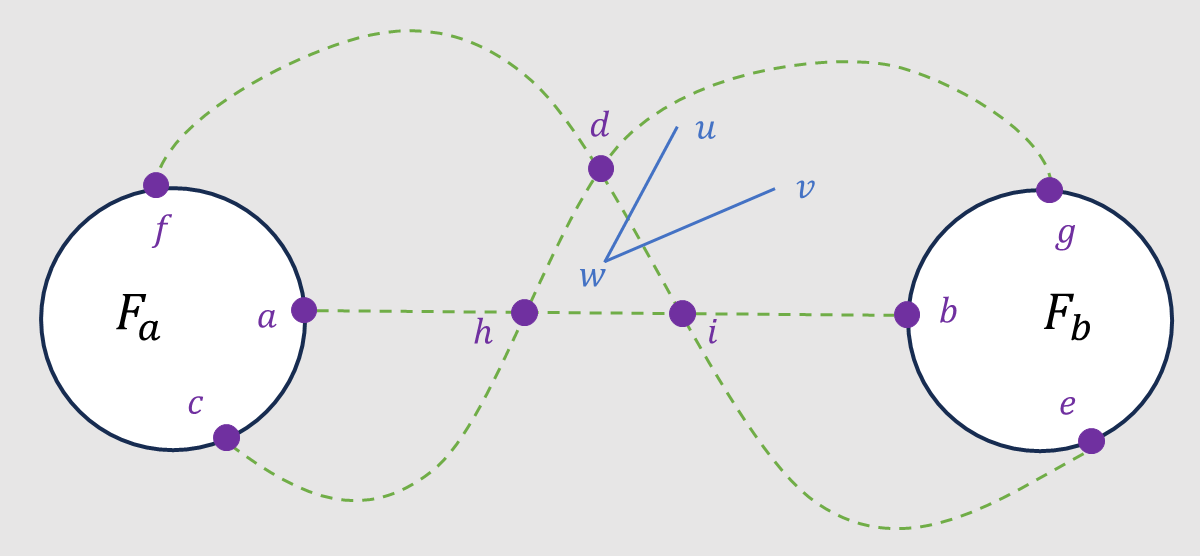}}}
\hspace{1.0cm}
\subfigure
{\scalebox{0.22}{\includegraphics{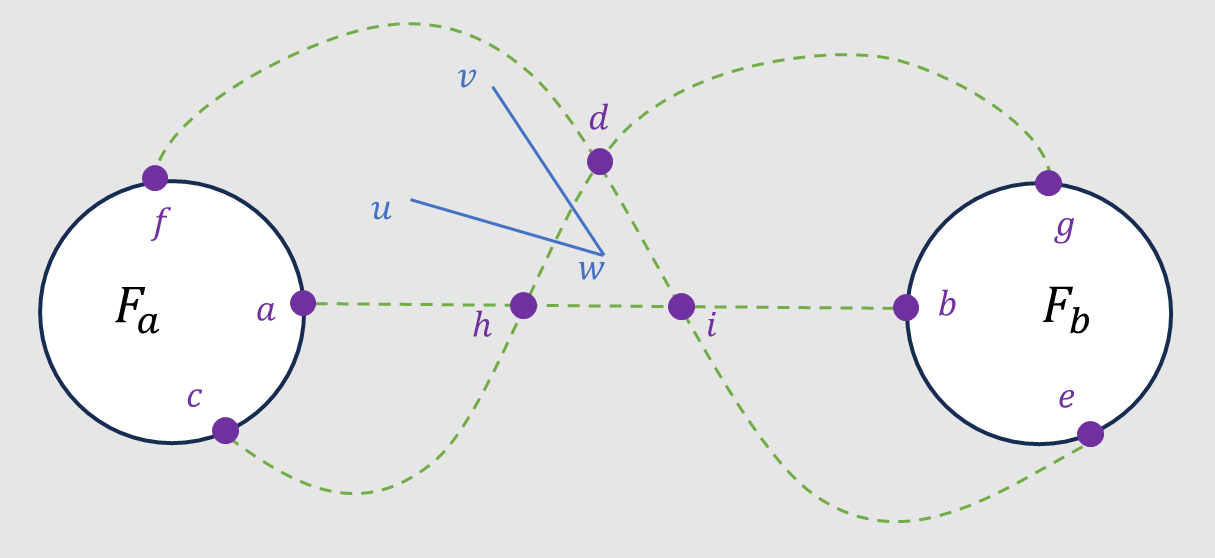}}}
\hspace{1.0cm}
\subfigure
{\scalebox{0.22}{\includegraphics{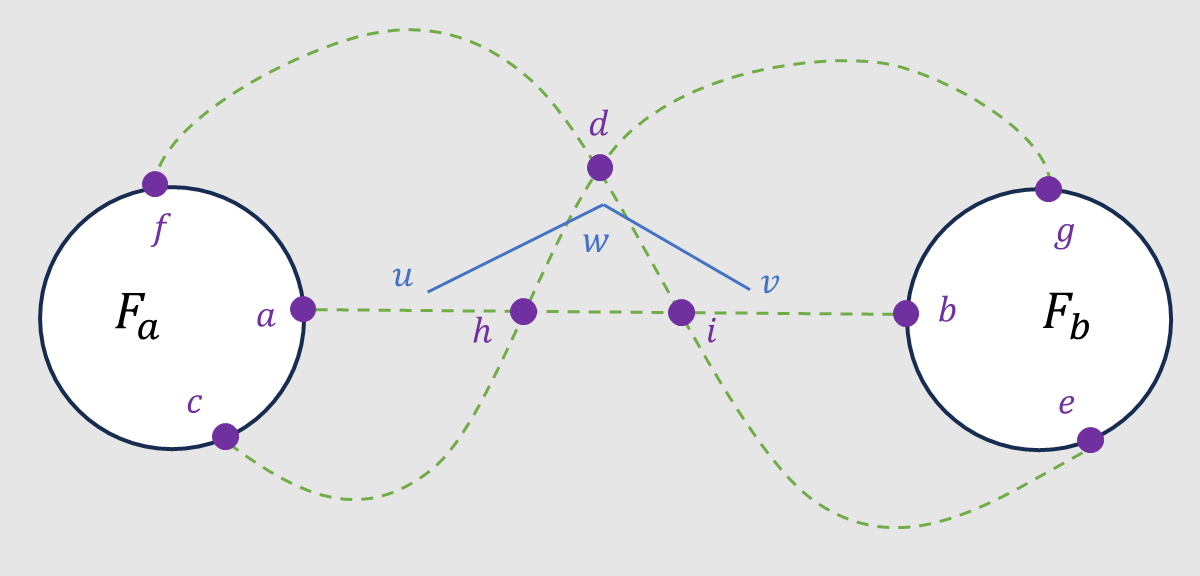}}}
\caption{In all subfigures, $(u,v)$ and $(v,w)$ are represent by blue lines. The former three cases are not possible by a similar argument as in the caption of \ref{fig:triangle_crossing} via Observation \ref{obs:orientations}. The final case is not possible by minimality of the triangle.}\label{fig:4a}
\end{figure}     
    Finally, consider a safe 1-bend path which creates two crossings with the triangle. By Observation \ref{obs:same-faces}, we may assume without loss of generality that $F_a=F_u$ and $F_b=F_v$. We show that any safe 1-bend path crossing either $(h,d)$ twice, $(i,d)$ twice, or $(h,d)$ and $(i,d)$ contradicts the structure of the 1-bend paths in Observation \ref{obs:orientations}, or the minimality of the $(P,Q)$ bad pair. This is illustrated for the four different cases in the {Figure \ref{fig:4a}}. In the first case, the existence of such a 1-bend path contradicts the minimality of the $(P,Q)$ bad pair. In the remaining three cases, the $c\to d\to e$ and $u\to w\to v$ canonical paths are oriented differently while being between the same two faces, implying that at least one of them contradicts the Observation \ref{obs:orientations} by a similar argument as the proof of Claim \ref{cl:step1-convergence}. Thus, the only possibilities which remain for a 1-bend path to cross the triangle twice are: $(h,d)$ and $(h,i)$, $(h,i)$ and $(h,i)$, or $(h,i)$ and $(i,d)$, satisfying the claim.
\end{proof}

Using the above claims and observations, we show that the number of bad pairs decreases.

\begin{figure}[h]
\centering
\subfigure
{\scalebox{0.22}{\includegraphics{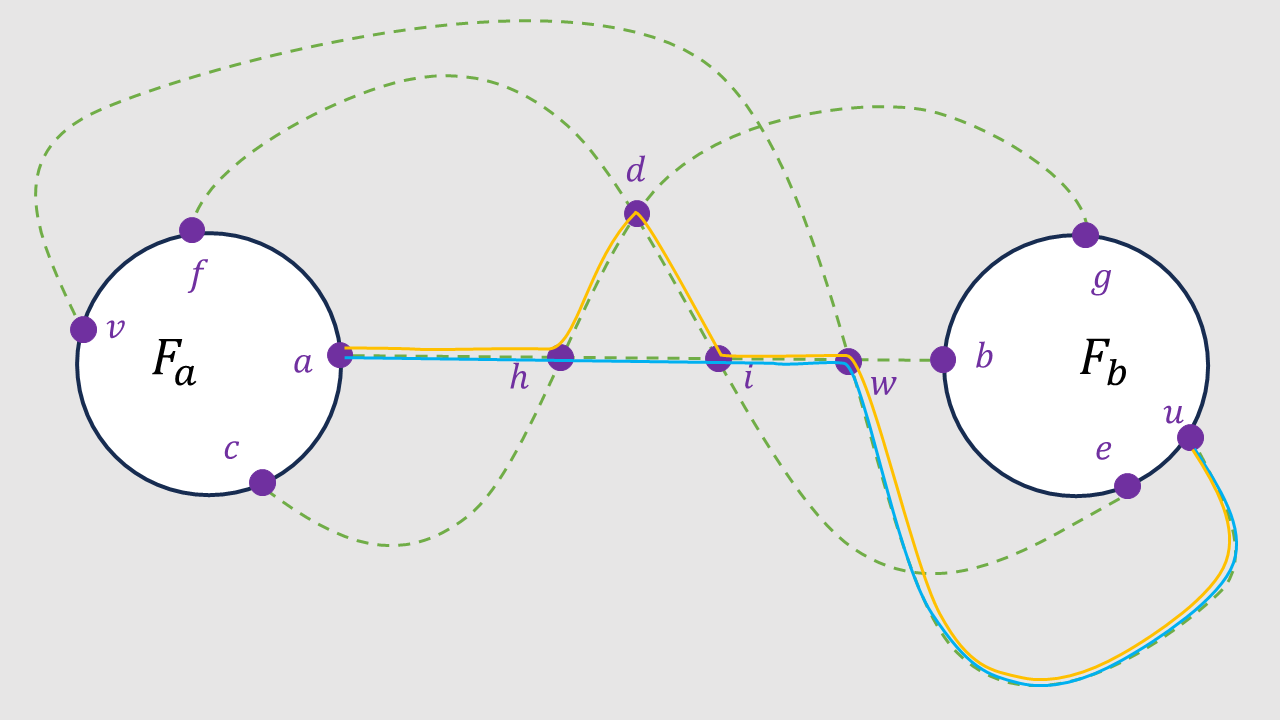}}}
\hspace{1.0cm}
\subfigure
{\scalebox{0.22}{\includegraphics{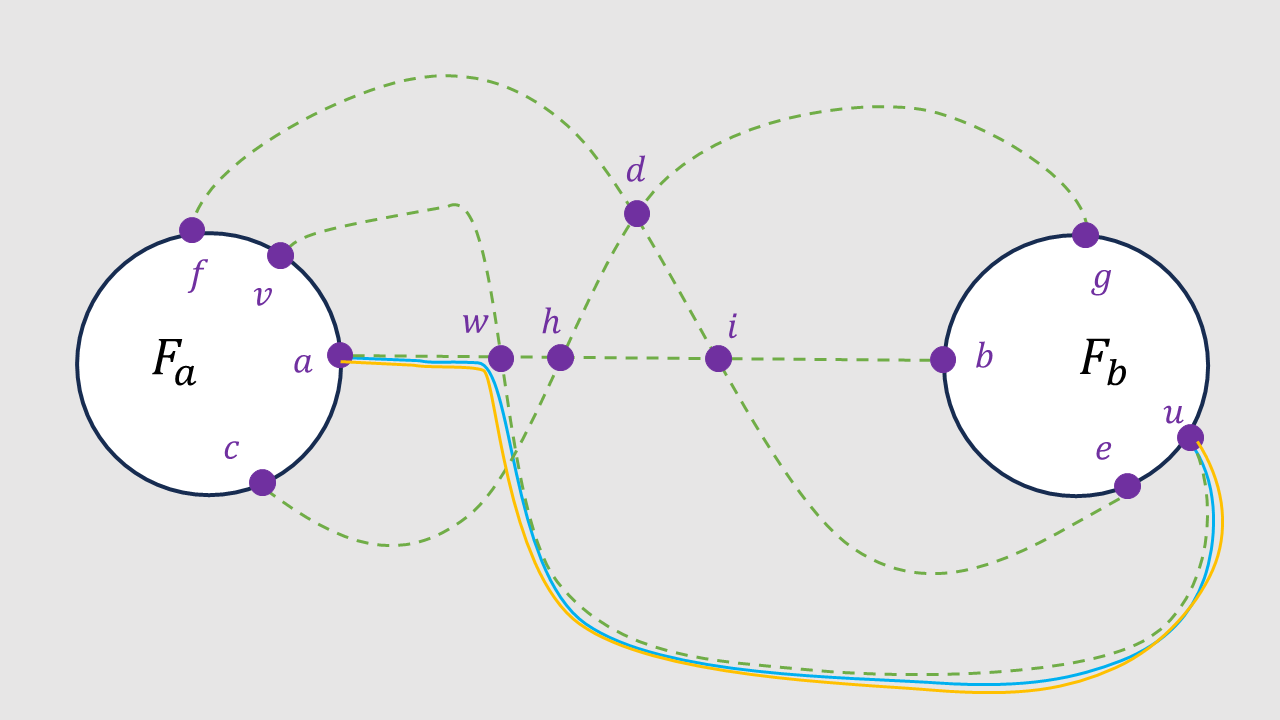}}}
\caption{In both figures, the blue path represents $Q$ before the rerouting and the orange path represents $Q$ after the rerouting.}\label{fig:13}
\end{figure} 

\begin{observation}\label{obs:bad-pairs-decrease}
    The number of bad pairs $(P,Q)$ decreases after each iteration.
\end{observation}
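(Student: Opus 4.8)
The plan is to track how the set of bad pairs changes when Step~2 reroutes the single critical path $P=(a,b)$ over the triangle $\textsc{Area}(h,i,d)$, replacing its subpath $(h,i)$ by a curve that hugs $(h,d)\cup(d,i)$ from just outside the triangle. Since $\textsc{Area}(h,i,d)$ contains no terminal face or terminal (Observation~\ref{obs:wing-no-face}), the new drawing of $P$ is homotopic to the old one with the terminal faces viewed as holes, and Properties~\ref{prop: f-face instance}--\ref{prop: region} are preserved (Observation~\ref{obs:prop123}). In particular each critical path keeps its identity, and each safe $1$-bend path that used a subpath of $P$ as one of its two constituent primary paths is carried to another path whose $1$-bend orientation (Observation~\ref{obs:orientations}) is preserved and whose enclosed area changes only inside the triangle and a thin lune around $(h,d)\cup(d,i)$, hence still contains no terminal face; so it remains a safe $1$-bend path. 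Thus critical paths and safe $1$-bend paths correspond canonically before and after the iteration, and comparing numbers of bad pairs is meaningful. The slot $(P,Q)$ itself goes from bad to non-bad: $P$ and $Q$ cross exactly twice before (minimality), while by construction the rerouted $P$ does not cross $Q$ (Figure~\ref{fig:2a}), and $Q$ is untouched and stays a safe $1$-bend path; so the count drops by at least one from this slot.

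It remains to show that Step~2 creates no new bad pair, which is the heart of the proof. Since $P$ is the only path whose drawing changes, a newly bad slot must be of the form $(P,S)$ for an unaffected safe $1$-bend path $S$, or $(R,S')$ where $S'$ is the image of a safe $1$-bend path $S$ that used a subpath of $P$. For the first type I would show that the rerouting does not increase the number of crossings of $P$ with $S$: old and new $P$ agree off the replaced segment, and by Claim~\ref{cl:canonical-path-crossing-triangle} the path $S$ crosses the triangle side $(h,i)$ at least as often as it crosses $(h,d)$ and $(d,i)$ combined; since the hug is a thin deformation of $(h,d)\cup(d,i)$ through a region with no terminal faces, a routine count -- after ruling out, via Observation~\ref{obs:same-faces}, Observation~\ref{obs:orientations} and the minimality of $(P,Q)$, exactly the case analysis behind Figure~\ref{fig:4a}, the possibility that $S$ slips into the lune and back out through the hug without meeting a triangle side -- shows $S$ meets the hug at most as often as it meets $(h,d)\cup(d,i)$, hence at most as often as it met $(h,i)$. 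As $P$ is a single critical path and $S$ consists of two, ``cross more than once'' means ``cross exactly twice'', so if $(P,S)$ is bad afterwards it was already bad before. For the second type, $S'$ differs from $S$ only along the rerouted segment of $P$, so the same hug-versus-$(h,i)$ comparison bounds the crossings of any critical path $R$ with $S'$ by those with $S$ (see Figure~\ref{fig:13}), and the same parity argument shows $(R,S')$ was already bad. Together with the first paragraph, the total number of bad pairs strictly decreases.

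The step I expect to be the main obstacle is the crossing-count invariant of the middle paragraph, and in particular the case where the safe $1$-bend path itself uses $P$: one has to re-examine how the constituent primary subpaths of $S$ get rerouted and argue that this cannot manufacture an extra transversal crossing with a third critical path, which again reduces to the minimality of the bad pair and to the fact that $1$-bend paths between a fixed pair of faces all carry the same orientation (Observation~\ref{obs:orientations}). I would keep this modular by invoking Claim~\ref{cl:canonical-path-crossing-triangle} (and the template of its proof) on the relevant sub-curves rather than redoing the topological bookkeeping from scratch.
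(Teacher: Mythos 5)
Your case decomposition (both slots unchanged; $P=(a,b)$ with some unaffected $S$; $Q$ formed using $(a,b)$) matches the paper's, and for the second case you invoke Claim~\ref{cl:canonical-path-crossing-triangle} exactly as the paper does. But the third case, which you flag as the main obstacle, has a real gap that you gesture at without closing.

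The paper's handling of that case hinges on a specific fact you never state: \emph{the bend $w$ of $Q$ cannot lie on $(h,i)$}. This follows directly from the second bullet of minimality of $(P,Q)$ --- if $w\in(h,i)$, then the other primary path $(u,v)$ forming $Q$ would be a primary path from $F_b$ to $F_a$ that crosses $(h,i)$ and forms a safe $1$-bend path with $(a,b)$, contradicting minimality. You write ``$S'$ differs from $S$ only along the rerouted segment of $P$,'' but that assertion presupposes $w\notin(h,i)$. If $w$ were in the interior of $(h,i)$, the rerouting of $(a,b)$ would move or even destroy the crossing with $(u,v)$, so the new $1$-bend path would not simply be $S$ with a local detour; its constituent subpaths would change globally. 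Once $w\notin(h,i)$ is in hand, the paper's analysis is a short two-case check (bend on $(a,h)$: $Q$ unchanged; bend on $(i,b)$: $Q$ changes only in the $(h,i)$ segment), and the crossing count with any critical path $P'$ is controlled not by Claim~\ref{cl:canonical-path-crossing-triangle} but by the \emph{first} bullet of minimality, which forces $P'$ to cross $(h,i)$ exactly as many times as $(h,d)$ and $(d,i)$ combined. So both the missing bend-location fact and the choice of which minimality condition to apply differ from your sketch; as written, your plan to re-invoke Claim~\ref{cl:canonical-path-crossing-triangle} ``on the relevant sub-curves'' does not address what actually makes the third case go through.
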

\begin{proof}
    The $(a,b)$ critical path no longer crosses $c\to d\to e$, so we have fixed a bad pair. Consider any other pair of paths $P$ and $Q$, where $P$ is a critical path and $Q$ is safe 1-bend path. If neither $P$ nor $Q$ involves the $(a,b)$ path, then the claim is trivial as the paths $P$ and $Q$ are not changed in the rerouting. 
    If $P$ involves $(a,b)$, then $P=(a,b)$. To see that $P$ crosses $Q$ fewer times after the rerouting than before, it suffices to show that $Q$ crosses $(d,i)$ or $(d,h)$ at most as many times as it crosses $(h,i)$, which follows from Claim \ref{cl:canonical-path-crossing-triangle}, so $P$ and $Q$ cannot form a new bad pair after the rerouting.
    Finally, suppose $Q$ involves $(a,b)$. Let $(u,v)$ be the other primary path that forms $Q$, and let $w$ be the bend of $Q$. This case is illustrated in Figure \ref{fig:13}. Since $(a,b)$ is primary, $(a,b)$ and $c\to d\to e$ are a minimal pair, so no primary path $P'$ from $F_b$ to $F_a$ which crosses the bottom of the triangle can form a safe 1-bend path with $(a,b)$. In particular, this implies that the bend $w$ cannot lie on $(h,i)$, so it must either lie on $(a,h)$ or $(i,b)$. If $w$ is on $(a,h)$, then $Q$ is not changed after the rerouting since the $(a,h)$ subpath is not changed in the rerouting and $(u,w)$ is not rerouted (see right of Figure \ref{fig:13}). If $w$ is on $(i,b)$, then the rerouting of $Q$ only changes the $(h,i)$ subpath to go above the triangle along $(h,d)$ and $(d,i)$ (see left of Figure \ref{fig:13}). In this case, observe that $P$ always crosses $(h,i)$ the same number of times as $(h,d)$ and $(d,i)$ combined due to the minimality of the triangle, so the number of crossings between $P$ and $Q$ cannot increase so $P$ and $Q$ will not form a new bad pair after the rerouting.
\end{proof}

It the end, we prove the following corollary which gives bounds on the size of $H^*$ and $H$.

\begin{corollary}
\label{clm: size}
$|V(H^*)|\le O(f^2k^2)$, and therefore $|V(H)|=O(f^2k^2)$.
\end{corollary}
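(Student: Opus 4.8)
I would bound the terminal and non-terminal vertices of $H^*$ separately, and then account for the one-face emulators glued on to form $H$.

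First, recall the structure of the final graph $H^* = \hat H$: it is the union of the drawings $\rho_P$ of all critical paths $P$ of $G$, with degree-$2$ vertices suppressed, subsequently modified only by the rerouting iterations. By Property~\ref{prop: canonical intersection}, which is preserved throughout the iterative process (apply \Cref{obs:prop123} inductively), each critical path is drawn as a simple curve and any two distinct critical paths cross at most once. Hence, after suppression, every non-terminal vertex of $H^*$ is the (unique) crossing of exactly two critical paths, and distinct pairs of critical paths give rise to distinct such vertices; so the number of non-terminal vertices of $H^*$ is at most $\binom{N}{2}$, where $N$ denotes the number of critical paths of $G$.

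Next I would bound $N$. Each critical path connects two terminals and is, by definition, critical from at least one of these two endpoints; and by \Cref{lem: critical} every terminal is the ``critical-from'' endpoint of only $O(f)$ critical paths. Summing over the $k$ terminals gives $N = O(fk)$, so the number of non-terminal vertices of $H^*$ is $\binom{O(fk)}{2} = O(f^2k^2)$. Adding the $k$ terminals yields $|V(H^*)| = O(f^2k^2)$.

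Finally, $H$ is obtained by placing, for each $1 \le r \le f$, the one-face emulator $H_r$ (which has $O(|T_r|^2)$ vertices, by the construction of \cite{ChangO20, goranci2020improved}) inside the face $F^*_r$ of $H^*$ and identifying the copies of each terminal of $T_r$. Since, by Property~\ref{prop: G2}, $\sum_{r=1}^{f} |T_r| = k$, we have $\sum_r |V(H_r)| = O\!\left(\sum_r |T_r|^2\right) \le O\!\left(\left(\sum_r |T_r|\right)^{2}\right) = O(k^2)$, so gluing contributes $O(k^2)$ additional vertices. Therefore $|V(H)| \le |V(H^*)| + \sum_r |V(H_r)| = O(f^2k^2) + O(k^2) = O(f^2k^2)$. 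The only step needing genuine care is the invariant invoked in the first paragraph — that the rerouting never makes a pair of critical paths cross more than once — but this is exactly what the preservation of Property~\ref{prop: canonical intersection} (\Cref{obs:prop123}) supplies; everything else is a routine count.
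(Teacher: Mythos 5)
Your proof is correct and takes essentially the same route as the paper: bound the number of critical paths by $O(fk)$ via \Cref{lem: critical}, note that non-terminal vertices of $H^*$ are pairwise crossings (at most one per pair by Property~\ref{prop: canonical intersection}), and absorb the $O(k^2)$ cost of the glued one-face emulators. Your version is slightly more explicit about invoking \Cref{obs:prop123} to justify that the crossing bound survives the rerouting iterations and about adding the $k$ terminal vertices, but these are fill-ins of details the paper leaves implicit rather than a different argument.
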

\begin{proof}
All non-terminal vertices in $H^*$ are crossings between pairs of critical paths, so it suffices to count the number of crossings.
We have shown that the number of critical paths from each terminal $t$ is $O(f)$, so the total number of critical paths is $O(fk)$. Since two critical paths have at most one crossing, the total number of crossings is $O(f^2k^2)$. 

Recall that $H$ is obtained by sticking graphs $H^*,H_1,\ldots,H_f$ together.
From the construction, for $1\le r\le f$, $|V(H_r)|=O(|T_r|^2)$. Therefore,
\[|V(H)|\le |V(H^*)|+\sum_{1\le r\le f}O(|T_r|^2)=O(f^2k^2)+ O\bigg(\sum_{1\le r\le f}|T_r|\bigg)^2=O(f^2k^2)+O(k^2)=O(f^2k^2).\qedhere\]
\end{proof}

\section{The $f$-Face Case: Setting the Edge Weights}
\label{sec: edge weight}

As a last step, we need to set the edge weights in $H^*$ to preserve the distances between all interface terminal pairs. This is done in the following lemma, which, combined with the properties of graphs $H_1,\ldots,H_f$ and our construction of $H^*$, immediately implies that $H$ preserves the distances between all pairs of terminals in $G$.

\begin{lemma}
\label{lem: f face weight}
There is an edge weight function $w: E(H^*)\to \mathbb{R^+}$, such that
\begin{itemize}
    \item for each pair $t,t'$ of terminals lying on different faces, $\dist_{H^*}(t,t')= \dist_G(t,t')$; and
    \item for each pair $t,t'$ of terminals lying on the same face, $\dist_{H^*}(t,t')\ge \dist_G(t,t')$.
\end{itemize}
\end{lemma}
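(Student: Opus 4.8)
The plan is to cast the choice of edge weights as a linear program and prove that it is feasible. Introduce a nonnegative variable $w(e)$ for every $e\in E(H^*)$. For every pair $t,t'$ of terminals lying on distinct faces impose the \emph{canonical-path constraint} $\sum_{e\in\gamma_{t,t'}}w(e)\le\dist_G(t,t')$, where $\gamma_{t,t'}$ is the canonical path between $t$ and $t'$; and for every pair $t,t'$ of terminals (whether or not on the same face) and every simple $t$-$t'$ path $\pi$ in $H^*$ impose the \emph{lower-bound constraint} $\sum_{e\in\pi}w(e)\ge\dist_G(t,t')$. Any feasible $w$ proves the lemma: applying the lower-bound constraint to $\pi=\gamma_{t,t'}$ together with the canonical-path constraint forces $\dist_{H^*}(t,t')=\dist_G(t,t')$ for every inter-face pair, while the lower-bound constraints directly give $\dist_{H^*}(t,t')\ge\dist_G(t,t')$ for same-face pairs. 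So everything reduces to proving feasibility.

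By LP duality (Farkas' lemma), infeasibility would be certified by nonnegative multipliers: a weight $\alpha_{t,t'}\ge 0$ on the canonical path of each inter-face pair $\{t,t'\}$ and a weight $\beta_\pi\ge 0$ on each simple terminal-to-terminal path $\pi$ of $H^*$, satisfying the \emph{edge-domination} condition $\sum_{\{t,t'\}:\,e\in\gamma_{t,t'}}\alpha_{t,t'}\ge\sum_{\pi:\,e\in\pi}\beta_\pi$ for every edge $e$ of $H^*$, together with the \emph{demand-surplus} condition $\sum_{\pi}\beta_\pi\cdot\dist_G(\pi)>\sum_{\{t,t'\}}\alpha_{t,t'}\cdot\dist_G(t,t')$, where $\dist_G(\pi)$ denotes the $G$-distance between the two endpoints of $\pi$. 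Interpreting the multipliers as flows, the certificate is a flow routed along canonical paths that dominates, edge by edge, a second flow routed along arbitrary paths of $H^*$, yet the second flow carries strictly more total ``$\dist_G$-weighted demand''. It remains to show that no such pair of flows can exist.

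To do so I would first transfer the certificate back to the input graph $G$. Each edge of $H^*$ is a copy of a subpath of some critical path of $G$; each canonical path of $H^*$ is the copy of a $1$-bend concatenation of two shortest-path subpaths of $G$; and the central structural guarantee of the construction --- that two canonical paths of $H^*$ cross if and only if the corresponding shortest paths of $G$ cross, together with the region- and disjointness-properties established in \Cref{sec: main} --- makes it possible to replace every path appearing in the certificate by its corresponding walk in $G$ without disturbing the edge-domination condition or any of the $\dist_G$-demands. On the $G$-side the demand attached to each routing walk is exactly the true distance between its endpoints. I would then normalize the certificate --- decompose the excess of the dominating flow into terminal-to-terminal paths, discard cycles, and assume each routing walk is a concatenation of geodesic pieces --- and analyze a short list of \emph{path-morphing operations}: sliding the bend of a $1$-bend walk, or exchanging a subpath for a homotopic one in the planar embedding (i.e.\ one whose enclosed area with the original contains no terminal face). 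The goal is to verify that each such move keeps the two flows edge-wise comparable and leaves the demand sums unchanged --- that the operations are \emph{flow-equivalent} --- so that iterating them collapses the common support of the two flows onto a simple, grid-like network, on which the electrical Wye-Delta reductions (series, parallel, and star-triangle), which preserve precisely the feasibility of the flow problem at hand, shrink the network down to a trivial one. There edge-domination makes any demand surplus plainly impossible, which is the sought contradiction.

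The main obstacle is the morphing-and-reduction step above. One must isolate the right family of path-morphing operations and show that each of them simultaneously preserves planarity, the edge-domination condition, and the total demand --- this is exactly where the delicate crossing-structure properties of $H^*$ built in \Cref{sec: main} are needed --- and then argue that morphing followed by Wye-Delta reduction terminates at a base network simple enough to settle by inspection. Establishing that these reductions are genuinely flow-equivalent, so that they neither create nor destroy a violating flow, is the crux of the argument, as anticipated in the introduction.
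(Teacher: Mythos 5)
Your LP formulation and the Farkas dual are exactly the paper's: the certificate of infeasibility is a canonical-path flow $F$ that edge-wise dominates an arbitrary terminal flow $F'$ yet has strictly smaller $\dist_G$-weighted cost (Claim~\ref{clm: feasibility or flow}). You also correctly identify that the crux is transferring this certificate from $H^*$ back to $G$, and that Wye-Delta transformations will enter. So the skeleton matches. But there are two places where your plan diverges in ways that would cause trouble.

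First, you end your argument by ``Wye-Delta reductions \ldots shrink the network down to a trivial one,'' where ``edge-domination makes any demand surplus plainly impossible.'' This extra reduction step is not needed and signals that you may be missing a much simpler observation: once the two flows live in $G$ with the dominating flow routed along \emph{shortest} paths, the contradiction is immediate, with no network reduction at all. One simply writes $\cost(\hat F')\le\sum_e\ell_e\hat F'_e\le\sum_e\ell_e\hat F_e=\cost(\hat F)<\cost(\hat F')$ (using, in order, that every path has length at least the distance between its endpoints, edge-domination, and that $\hat F$ uses shortest paths so its cost equals $\sum_e\ell_e\hat F_e$). The Wye-Delta machinery in the paper is not used for a terminal reduction; it is used \emph{locally, during the morphing}, to certify that each elementary morphing step preserves routability of the $F'$-demands within the capacities set by the dominating flow.

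Second, your transfer step (``replace every path appearing in the certificate by its corresponding walk in $G$ without disturbing the edge-domination condition'') glosses over the two genuine obstacles. (a) Distinct canonical paths of $H^*$ sharing a terminal can share a subpath in $H^*$ while their corresponding shortest paths in $G$ are internally disjoint; this cannot be handled by a naive substitution. The paper inserts a preliminary \emph{graph-splitting} phase based on an interval-packing lemma (\Cref{interval-packing}) that splits the flows and makes all canonical paths edge-disjoint first, and it has to re-route the $F'$-flow segments across the split carefully. (b) Even after splitting, you cannot maintain literal edge-wise domination throughout the morph; the right invariant is the weaker statement that the demands $\{F'_{s,t}\}$ remain \emph{routable} in the current (morphed) graph under capacities $F_{s,t}$ on each canonical path $\gamma_{s,t}$, and this is precisely what the local Wye-Delta equivalence over a single ``closest crossing'' inside a minimal \textsc{M-Area} certifies. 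Your proposal of sliding bends and exchanging homotopic subpaths is in the right spirit, but to actually carry it through you need the paper's notion of minimal \textsc{M-Area} and closest crossing to isolate a local three-path configuration on which the Wye-Delta calculation applies, together with the crossing-pattern invariants (\Cref{obs:flow-path-cross-iff-cross}, \Cref{obs:flow-shortest-no-face}) that guarantee the process terminates.
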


The remainder of this section is dedicated to the proof of \Cref{lem: f face weight}.

\subsection{The framework}
\label{subsec: weight overview}

Instead of giving clean formula for directing computing edge weights as in \cite{ChangO20}, here we adopt a different approach: we characterize the required properties of such an edge weight function by an LP, and then show its existence by proving the feasibility of the LP. 
Such an approach is recently used in \cite{chen2025path}, and we believe it will find more applications in distance-based graph problems.

\subsubsection*{Step 1. Computing edge weights by an LP}

We now describe the LP. The variables are $\set{x_e\mid e\in E(H^*)}$, where each $x_e$ represents the length we give to the edge $e$ in $E(H^*)$. There is no objective function, as we only care about its feasibility.
\begin{eqnarray*}
	\mbox{(LP-$H^*$)}	\quad
&\sum_{e\in E(Q)}x_e\geq \dist_G(t,t') &\forall t,t', \forall \text{ simple path }Q\text{  from }t \text{ to  }t'\\
&\sum_{e\in E(Q)}x_e\leq \dist_G(t,t') &\forall t,t', \forall \text{ canonical path }Q\text{ from }t \text{ to to }t'\\
	&x_e\geq 0&\forall e\in E(H^*)
\end{eqnarray*}

It is easy to verify that \Cref{lem: f face weight} is true iff (LP-$H^*$) is feasible.

\subsubsection*{Step 2. Reducing the feasibility of the LP to the non-existence of certain flows}

In order to show that (LP-$H^*$) is feasible, we use Farkas' Lemma, which takes the dual of (LP-$H^*$) and convert the distance-based problem into a flow-based problem.
For preparation, we introduce some notion on flows.

\newcommand{\pset}{{\mathscr{P}}}
\newcommand{\cost}{\mathsf{cost}}

\paragraph{Terminal flows.} 
Let $L$ be a graph with $T\subseteq V(L)$, and let $\mathcal{P}$ be the collection of paths in $L$ with both endpoints lying in the terminal set $T$. A \emph{terminal flow} $F: \mathcal{P}\to \mathbb{R}^+$ assigns each path $P\in\mathcal{P}$ with a flow value $F(P)$. 
For each edge $e\in E(L)$, we define $F_e=\sum_{P\in \mathcal{P}: e\in E(P)}F(P)$ as the total amount of flow of $F$ sent through edge $e$. We say that a flow $F$ \emph{dominates} another flow $F'$ if for each edge $e\in E(L)$, $F_e\ge F'_e$. 
For each pair $t,t'\in T$, we define $F_{t,t'}=\sum_{P \text{ connects } t\text{ to }t'}F(P)$ as the total amount of flow in $F$ sent between $t,t'$. 
We define the \emph{cost} of $F$ as $\cost(F)=\sum_{t,t'\in T}F_{t,t'}\cdot \dist_G(t,t')$.
Note that, in the remainder of this section we will consider terminal flows in either $L=H^*$ or $L=G$, but we always use the terminal distances $\dist_G(t,t')$ in $G$ to define their cost.

The following claim is a simple corollary of Farkas' lemma. Its proof is provided in \Cref{apd: Proof of clm: feasibility or flow}.

\begin{claim}
\label{clm: feasibility or flow}
Either \textnormal{(LP-$H^*$)} is feasible, or    
there exist terminal flows $F,F'$ in $H^*$, such that:
    \begin{itemize}
        \item $F$ only assigns non-zero values to canonical paths;
        \item $F$ dominates $F'$; and
        \item $\cost(F)<\cost(F')$.
    \end{itemize}\label{cl:farkas}
\end{claim}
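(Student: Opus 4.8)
The plan is to apply Farkas' lemma directly to \textnormal{(LP-$H^*$)} and read the two flows off the dual certificate. Since $H^*$ is a finite graph, there are only finitely many simple paths and finitely many canonical paths between terminals, so \textnormal{(LP-$H^*$)} is a genuine finite linear program. First I would rewrite it in canonical form $Ax\le b$, $x\ge 0$: the variable vector is $x=(x_e)_{e\in E(H^*)}$; for every pair $t,t'$ and every simple $t$-$t'$ path $Q$ we include the row $-\sum_{e\in E(Q)}x_e\le -\dist_G(t,t')$, and for every pair $t,t'$ and every canonical $t$-$t'$ path $Q$ we include the row $\sum_{e\in E(Q)}x_e\le \dist_G(t,t')$. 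Thus each row of $A$ is the incidence vector of a path, signed according to which of the two constraint families it encodes, and the corresponding entry of $b$ is $\mp\dist_G(t,t')$.

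Now suppose \textnormal{(LP-$H^*$)} is infeasible. By Farkas' lemma there is a vector $y\ge 0$, indexed by the rows of $A$, with $A^\top y\ge 0$ and $b^\top y<0$. Split $y$ into its restriction $y^{-}$ to the rows coming from simple paths and its restriction $y^{+}$ to the rows coming from canonical paths; both are nonnegative. Define a terminal flow $F'$ in $H^*$ by $F'(Q)=y^{-}_Q$ for each simple path $Q$ between terminals (and $0$ on all other paths), and a terminal flow $F$ in $H^*$ by $F(Q)=y^{+}_Q$ for each canonical path $Q$ (and $0$ on all other paths). By construction, $F$ assigns non-zero values only to canonical paths.

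It remains to translate the two Farkas inequalities. For a fixed edge $e\in E(H^*)$, the $e$-th coordinate of $A^\top y$ equals $\sum_{Q\text{ canonical},\, e\in E(Q)}y^{+}_Q-\sum_{Q\text{ simple},\, e\in E(Q)}y^{-}_Q=F_e-F'_e$, so $A^\top y\ge 0$ says exactly $F_e\ge F'_e$ for every edge, i.e.\ $F$ dominates $F'$. Similarly, $b^\top y=\sum_{Q\text{ canonical}}y^{+}_Q\cdot\dist_G(t_Q,t'_Q)-\sum_{Q\text{ simple}}y^{-}_Q\cdot\dist_G(t_Q,t'_Q)$; grouping the contributions of all paths joining a given pair $t,t'$, this is precisely $\cost(F)-\cost(F')$, so $b^\top y<0$ gives $\cost(F)<\cost(F')$. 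This establishes the claim.

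The argument is essentially bookkeeping; the only point requiring care is the sign convention linking the two constraint families of \textnormal{(LP-$H^*$)} to the two flows --- the simple-path (lower-bound) constraints produce the dominated flow $F'$, while the canonical-path (upper-bound) constraints produce the dominating, cost-deficient flow $F$ --- together with the observation that the edge-indexed inequality $A^\top y\ge 0$ is, coordinate by coordinate, exactly the domination relation $F_e\ge F'_e$.
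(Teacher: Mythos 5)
Your proposal is correct and follows essentially the same route as the paper's own proof: write (LP-$H^*$) in the form $Ax\le b$, $x\ge 0$, apply Farkas' lemma, and read $F$ off the canonical-path (upper-bound) rows and $F'$ off the simple-path (lower-bound) rows of the dual certificate $y$, with $A^\top y\ge 0$ giving edgewise domination and $b^\top y<0$ giving $\cost(F)<\cost(F')$. The sign bookkeeping matches the paper's computation exactly, so nothing further is needed.
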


\subsubsection*{Step 3. Proving the non-existence of certain flows}

From now on, we focus on proving that the terminal flows $F,F^\prime$ satisfying the conditions in Claim \ref{cl:farkas} cannot exist. Note that this implies the feasibility of (LP-$H^*$) (by \Cref{clm: feasibility or flow}) and then \Cref{lem: f face weight}.

We will assume for contradiction that such terminal flows exist, and eventually deduce a contradiction from them.
Note that, if, instead of having terminal flows $F,F'$ described in \Cref{clm: feasibility or flow}, we have terminal flows $\hat F,\hat F'$ in $G$, such that
\begin{properties}{P}
    \item $\hat F$ only assigns non-zero values to the shortest paths connecting terminals on different faces;\label{prop: shortest}
    \item $\hat F$ dominates $\hat F'$; and \label{prop: dominates}
    \item $\cost(\hat F)<\cost(\hat F')$, \label{prop: cost}
\end{properties}
then we get a contradiction easily. This is because, if we denote, for each edge $e\in E(G)$, by $\ell_e$ its length, then for each path $P$ in $G$ connecting a pair $t,t'$ of terminals, $\sum_{e\in E(P)}\ell_e\ge \dist_G(t,t')$, so
\[
\begin{split}
\cost(\hat F') =\sum_{t,t'\in T}\hat F'_{t,t'}\cdot \dist_G(t,t') & =\sum_{t,t'\in T}  \sum_{P \text{ connects } t\text{ to }t'}\hat F'(P)\cdot\dist_G(t,t')  \\
& \le \sum_{t,t'\in T}  \sum_{P \text{ connects } t\text{ to }t'}\hat F'(P)\cdot\sum_{e\in E(P)}\ell_e  \\
& =\sum_{e\in E(G)} \ell_e \cdot \sum_{P: e\in E(P)}\hat F'(P)  \\
& =\sum_{e\in E(G)} \ell_e \cdot \hat F'_e  \\
& \le \sum_{e\in E(G)} \ell_e \cdot \hat F_e   \quad\quad\quad\quad\quad\quad\quad\quad\quad\text{($\hat F$ dominates $\hat F'$)}
\\
& =\sum_{e\in E(G)} \ell_e \cdot \sum_{P: e\in E(P)}\hat F(P)  \\ 
& = \sum_{t,t'\in T}  \sum_{P \text{ connects } t\text{ to }t'}\hat F(P)\cdot\sum_{e\in E(P)}\ell_e  \\
& =\sum_{t,t'\in T}  \sum_{P \text{ connects } t\text{ to }t'}\hat F(P)\cdot\dist_G(t,t')  \\
& =\sum_{t,t'\in T}\hat F_{t,t'}\cdot \dist_G(t,t')
= \cost(\hat F) < \cost(\hat F'),
\end{split}
\]
where the last but one line used Property \ref{prop: shortest} that flow $\hat F$ only assigns non-zero values to the shortest paths connecting terminals on different faces (and so $\dist_G(t,t')=\sum_{e\in E(P)}\ell_e$ as $P$ is the shortest path connecting $t$ to $t'$).

Therefore, it remains to convert the given terminal flows $F,F'$ in $H^*$ into terminal flows in $G$ with Properties \ref{prop: shortest}-\ref{prop: cost}.
We first construct a flow $\hat F^*$ in $G$ as follows. For each pair $t,t'$ of terminals, we send $F_{t,t'}$ units of flow between $t,t'$ along the shortest path in $G$ connecting $t$ and $t'$. Clearly, this flow $\hat F^*$ satisfies Property \ref{prop: shortest}, and Property \ref{prop: cost} as by definition $\cost(F)=\cost(\hat F^*)$.
We need the following claim, whose proof 
is provided later.

\begin{claim}
\label{clm: routable}
There exists a flow $\hat F'$ in $G$, s.t. $\hat F'_{t,t'}=F'_{t,t'}$ for all $t,t'\in T$, and $\hat F^*$ dominates $\hat F'$.
\end{claim}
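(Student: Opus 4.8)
The goal is to convert the flow $F'$, which lives in $H^*$ and assigns flow only to simple paths between terminals, into a flow $\hat F'$ in $G$ that routes the same amount $F'_{t,t'}$ between each terminal pair, while being dominated by $\hat F^*$ (the flow in $G$ that sends $F_{t,t'}$ units along the $G$-shortest path between $t$ and $t'$). The natural strategy is to process the flow $F'$ one path at a time: for each simple path $Q$ in $H^*$ between terminals $t,t'$ carrying $F'(Q)$ units, we need to exhibit a path (or convex combination of paths) $\widetilde Q$ in $G$ between $t$ and $t'$ such that, when all these are combined, the total usage on every edge of $G$ stays below $\hat F^*_e$. The key leverage is that every edge of $H^*$ corresponds to a subpath of some critical path of $G$, and critical paths are genuine shortest paths in $G$; moreover, canonical paths in $H^*$ are concatenations of two critical subpaths, and $F$ is supported on canonical paths.

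\textbf{Step 1: A ``pushforward'' map from $H^*$-paths to $G$-paths.} I would first define, for each edge $e=(u,v)\in E(H^*)$, the corresponding subpath $\mathsf{img}(e)$ of $G$ it was created from (this is exactly the segment of a critical path between two consecutive crossings, recorded during the construction in \Cref{sec: main}). For a path $Q$ in $H^*$, let $\mathsf{img}(Q)$ be the concatenation of $\mathsf{img}(e)$ over $e\in E(Q)$; this is a walk in $G$ from $t$ to $t'$, which we reduce to a simple path by removing cycles. Setting $\hat F'(\mathsf{img}(Q))$ to accumulate $F'(Q)$ defines a candidate flow $\hat F'$ in $G$ with $\hat F'_{t,t'}=F'_{t,t'}$ automatically (cycle removal only decreases edge usage, which is fine for domination).

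\textbf{Step 2: Domination via the same pushforward applied to $F$.} The crucial observation is that applying the \emph{same} pushforward to the flow $F$ yields a flow $\widehat{F\text{-img}}$ in $G$ with $\widehat{F\text{-img}}_e \le \hat F^*_e$ for every $e\in E(G)$ --- indeed, $F$ is supported on canonical paths, each canonical path $\gamma_{t,t'}$ is a $1$-bend path whose image in $G$ is contained in the union of two critical shortest paths, and by Property~\ref{prop: region} together with the uniqueness of shortest paths in $G$, the image of $\gamma_{t,t'}$ is \emph{exactly} the $G$-shortest path between $t$ and $t'$; hence the $F$-pushforward coincides with $\hat F^*$ edge by edge (before any cycle removal, which is unnecessary here). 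Now since $F$ dominates $F'$ on every edge of $H^*$, the pushforward is monotone: $\hat F'_e \le \widehat{F\text{-img}}_e = \hat F^*_e$ for all $e$ (cycle removal in Step~1 only helps). This is exactly the required domination.

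\textbf{Main obstacle.} The delicate point is Step~2's claim that the image of a canonical path $\gamma_{t,t'}$ in $G$ is precisely the $G$-shortest path between $t$ and $t'$ --- equivalently, that concatenating the two critical subpaths forming $\gamma_{t,t'}$ and then deleting cycles recovers exactly $\pi_{t,t'}$. This needs Property~\ref{prop: region} (the region between the canonical path and the shortest path encloses no face and no terminal, so they are homotopic), plus the fact established in the construction that critical paths really are shortest paths in $G$ whose pairwise intersections are subpaths; I expect to spend most of the effort making this identification rigorous and checking that the ``image then simplify'' operation on an arbitrary simple path $Q$ of $H^*$ does not blow up edge usage beyond what $F$ already charges. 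A careful accounting --- charging each unit of $\hat F'$ on an edge $e\in E(G)$ to the corresponding unit of $F'$ on the $H^*$-edge whose image contains $e$, and then invoking $F$-dominates-$F'$ --- closes the argument.
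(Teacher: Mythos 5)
There is a genuine gap in Step~2, and it is precisely the difficulty that makes this claim nontrivial. Your pushforward of $F$ does \emph{not} coincide with $\hat F^*$. The image in $G$ of a canonical path $\gamma_{t,t'}$ is the concatenation of two critical (primary) shortest subpaths meeting at the bend $p$; its length is $\dist_G(t,p)+\dist_G(p,t')$, which is in general strictly larger than $\dist_G(t,t')$, and it uses a completely different edge set than the $G$-shortest $t$-$t'$ path $\pi_{t,t'}$. Property~\ref{prop: region} only tells you that the canonical path and $\pi_{t,t'}$ are \emph{homotopic} (their enclosed region contains no terminal face); homotopy does not give equality of edge sets, and ``uniqueness of shortest paths in $G$'' is irrelevant because the image of $\gamma_{t,t'}$ is not a shortest $t$-$t'$ path to begin with. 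Once this identification fails, there is no reason for $\widehat{F\text{-img}}_e\le\hat F^*_e$: the two flows route the same demands along different routes, so edge loads can go either way.

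The paper spends most of \Cref{sec: edge weight} dealing with exactly this. It does not try to push $F'$ forward to $G$ in one shot. Instead, it first performs a graph-splitting phase to make canonical paths edge-disjoint, and then gradually \emph{morphs} the structure carrying the flow from the canonical-path picture over to the $G$-shortest-path picture, one small move at a time (sweeping a canonical path over a single crossing inside a minimal \textsc{M-Area}). Each such elementary move changes the local topology of a single triangle, and flow-equivalence across the move is established via a Wye-Delta transformation. That machinery is what lets you conclude that the demands $\{F'_{s,t}\}$ remain routable all the way to the final configuration, which is the shortest-path flow $\hat F^*$ in $G$. Your plan skips that entire argument by asserting that the image of the canonical flow already \emph{is} $\hat F^*$, which is false.

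A secondary issue: your $\mathsf{img}(\cdot)$ map on edges of $H^*$ is not well-defined in the way you assume. After the rerouting step in the construction of $H^*$, the curve $\rho_P$ may cross the other critical paths in a \emph{different order} than $P$ does in $G$ (Property~\ref{prop: canonical intersection} only preserves the crossing set, not the order). So ``the segment of a critical path between two consecutive crossings'' on $\rho_P$ does not correspond to a clean subpath of $P$ in $G$, and concatenating such images edge-by-edge can backtrack.
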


Let $\hat F'$ be the flow given by \Cref{clm: routable}.
Since $\hat F'_{t,t'}=F'_{t,t'}$ for every pair $t,t'\in T$, 
\[\cost(\hat F')=\sum_{t,t'}\hat F'_{t,t'}\cdot \dist_G(t,t')=\sum_{t,t'} F'_{t,t'}\cdot \dist_G(t,t')=\cost(F')<\cost(F)=\cost(\hat F^*).\]
Since $\hat F^*$ dominates $\hat F'$, they satisfy Properties \ref{prop: shortest}-\ref{prop: cost}. From the above discussion, the existence of such a pair $(\hat F^*,\hat F')$ of flows in $G$ causes a contradiction, and therefore implies \Cref{lem: f face weight}.


The remainder of this section is dedicated to the proof of \Cref{clm: routable}. To construct the flow $\hat{F}'$ in $G$ satisfying Properties \ref{prop: shortest}-\ref{prop: cost}, we will start with the original flows $F, F'$ in $H^*$ and morph them into flows $\hat{F}, \hat{F}'$ in $G$. Before going into details, we first explain our strategy at a high level. Imagine that we draw $H^*$ and $G$ on the same plane, so that the image of terminal vertices in $T$ are the same in $H^*$ and $G$. Then, we will go over all pairs of terminals $t, t'\in T$ and try to gradually change the drawing of the canonical path $\gamma_{t, t'}$ in $H^*$ to the drawing of their shortest path in $G$, and in the meantime we will update flows $F, F'$ in $H^*$ while maintaining Properties \ref{prop: shortest}-\ref{prop: cost}.
Eventually, $H^*$ will become the subgraph of $G$ induced by all critical paths, and then its flow $F'$ will automatically become a flow in $G$, which is the $\hat F'$ we want.

We now present the details. Our morphing process consists of two phases: graph splitting and flow morphing.

\subsection{Phase 1 in proving \Cref{clm: routable}: graph splitting}

For a pair $\gamma,\gamma'$ of canonical paths, if their endpoints are $4$ distinct terminals, then they are either vertex-disjoint or crossing at a single point. However, if $\gamma,\gamma'$ share an endpoint $t$, then it is possible that they share a subpath from $t$. Recall that our goal is to morph canonical paths $\gamma,\gamma'$ to their corresponding shortest paths $P,P'$ in $G$, and $P,P'$ may well be internally disjoint -- sharing only the endpoint $t$.
The goal of this preparation phase is to split the flows $F,F'$ and the drawing of all canonical paths in $H^*$ such that all canonical paths are edge-disjoint. 

We first prove the following lemma, whose purpose will be clear shortly.

\begin{lemma}\label{interval-packing}
    Consider two multi-sets of intervals of positive integers $\aset = \{[1, a_1], [1, a_2], \ldots, [1, a_k]\}$ and $\bset = \{[b_1, c_1], [b_2, c_2], \ldots, [b_l, c_l]\}$, such that for any integer $x$, the number of intervals in $\aset$ that contains $x$ is at least the number of intervals in $\bset$ that contains $x$. Then, there is a $\bset$-partition $\bset = \bset_1\cup \bset_2\cup \cdots \cup \bset_k$, such that for each $1\leq i\leq k$:
    \begin{itemize}
        \item all intervals in $\bset_i$ are disjoint; and
        \item the union of intervals in $\bset_i$ is contained in $[1, a_i]$.
    \end{itemize}
\end{lemma}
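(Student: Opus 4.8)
The plan is to prove this by induction on $k$, the number of "left-anchored" intervals in $\aset$, processing the intervals of $\bset$ greedily from shortest right-endpoint to longest. First I would sort $\aset$ so that $a_1 \le a_2 \le \cdots \le a_k$, and I claim it suffices to find a single sub-multiset $\bset_1 \subseteq \bset$ whose intervals are pairwise disjoint, whose union lies in $[1,a_1]$, and such that the remaining multiset $\bset \setminus \bset_1$ together with $\aset \setminus \{[1,a_1]\} = \{[1,a_2],\ldots,[1,a_k]\}$ still satisfies the covering hypothesis (for every integer $x$, the count in $\aset \setminus \{[1,a_1]\}$ covering $x$ is at least the count in $\bset\setminus\bset_1$ covering $x$). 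Given such a $\bset_1$, the lemma follows by applying the induction hypothesis to $\aset\setminus\{[1,a_1]\}$ and $\bset\setminus\bset_1$. The base case $k=0$ is trivial: the hypothesis forces $\bset$ to contain no intervals (any integer covered by $\bset$ would be covered $0$ times by $\aset$), so there is nothing to partition.

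For the inductive step, I would construct $\bset_1$ greedily. Consider all intervals of $\bset$ that intersect $[1,a_1]$. First note every such interval $[b_j,c_j]$ with $b_j \le a_1$ must actually satisfy $c_j \le a_1$: otherwise the integer $a_1$ (or $a_1+1$, whichever lies in the interval and just outside $[1,a_1]$ — more carefully, some integer $x$ with $x > a_1$ lies in $[b_j,c_j]$, but $x$ is covered by no interval of $\aset$ since all of $a_2,\ldots,a_k \ge a_1$... ) — wait, this is the crux and needs care, so let me restate. The key observation is: any integer $x > a_1$ is covered in $\aset$ by exactly those $[1,a_i]$ with $a_i \ge x$, and the covering hypothesis still constrains $\bset$ at such $x$. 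The greedy selection for $\bset_1$: among intervals of $\bset$ contained in $[1,a_1]$, repeatedly pick the one with smallest right endpoint, remove all intervals overlapping it, and recurse — this is the classical interval-scheduling greedy, and it produces a maximum-size set of pairwise-disjoint intervals within $[1,a_1]$.

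The main obstacle, and the place I would spend the most effort, is verifying that after removing $\bset_1$ the covering hypothesis is restored for $\aset \setminus \{[1,a_1]\}$. The statement to prove is: for every integer $x$, $\#\{i \ge 2 : a_i \ge x\} \ge \#\{[b_j,c_j] \in \bset \setminus \bset_1 : b_j \le x \le c_j\}$. For $x > a_1$ this is immediate since removing $[1,a_1]$ doesn't change the left side (it didn't cover $x$) while the right side only shrinks. The real work is $x \le a_1$: here the left side drops by exactly $1$, so I must show the greedy $\bset_1$ covers $x$ at least once — i.e., the maximum disjoint packing inside $[1,a_1]$ hits every point of $[1,a_1]$ that is covered by $\bset$. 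This should follow because if some point $x \le a_1$ is covered by $\bset$ but missed by the greedy packing $\bset_1$, one can find an interval of $\bset$ containing $x$; I'd argue via an exchange/augmenting argument that the greedy could have included it (or an equivalent one), contradicting maximality — and crucially, any $\bset$-interval containing a point $x \le a_1$ is either fully inside $[1,a_1]$ (handled) or pokes out past $a_1$, but the latter is impossible because its portion beyond $a_1$ would be covered by no $\aset$-interval, violating the original hypothesis at that protruding point. Once this point-covering property of the greedy packing is established, the induction closes and the lemma follows; I would also double-check the multiset bookkeeping (repeated intervals) goes through unchanged, since greedy interval scheduling and all counting arguments are insensitive to multiplicities.
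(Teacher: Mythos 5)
Your high-level plan---peel off one $\bset_1$ for a single $\aset$-interval so the hypothesis survives on the residual, then recurse---is a reasonable outline, but two steps of the inductive construction fail. First, the claim that a $\bset$-interval meeting $[1,a_1]$ cannot protrude past $a_1$ is false under your sort $a_1=\min_i a_i$: a point $x>a_1$ can perfectly well be covered by $[1,a_2],\dots,[1,a_k]$, which are longer, so no contradiction arises at the protruding point; that observation is only correct when $a_1=\max_i a_i$. Second, and independently of min versus max, the smallest-right-endpoint greedy does not produce a valid $\bset_1$: it maximizes the \emph{number} of disjoint intervals, not the set of integers it covers, while what you actually need is that $\bset_1$ cover every $x\le a_1$ at which the $\aset$-count equals the $\bset$-count (those are exactly the points whose margin drops to $-1$ once $[1,a_1]$ is removed). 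A concrete counterexample: $\aset=\{[1,3],[1,5],[1,5]\}$, $\bset=\{[1,2],[2,3],[1,5],[3,5]\}$ satisfies the hypothesis, with equality at $x=2,3,4,5$. Taking $a_1=3$, the candidates inside $[1,3]$ are the overlapping $[1,2],[2,3]$; greedy picks $[1,2]$, so $\bset_1=\{[1,2]\}$, and the residual $\aset'=\{[1,5],[1,5]\}$, $\bset'=\{[2,3],[1,5],[3,5]\}$ already violates the hypothesis at $x=3$ (two versus three). The correct choice was $\bset_1=\{[2,3]\}$. The same greedy fails even with $a_1=\max$: for $\aset=\{[1,6],[1,3]\}$, $\bset=\{[2,3],[1,4],[5,6]\}$, greedy yields $\{[2,3],[5,6]\}$ and the residual $\aset'=\{[1,3]\}$, $\bset'=\{[1,4]\}$ fails at $x=4$, whereas $\bset_1=\{[1,4],[5,6]\}$ works.

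The paper avoids both problems by inducting on $a_1+\cdots+a_k$ and working from the \emph{right} end with $a_1=\max_i a_i$: if no $\bset$-interval has right endpoint $a_1$, shrink $a_1$ by one and recurse; otherwise choose some $[b_1,c_1]\in\bset$ with $c_1=a_1$, add it to $\bset_1$, replace $[1,a_1]$ by $[1,b_1-1]$, and recurse on $\bset\setminus\{[b_1,c_1]\}$. Matching the largest right endpoints is what keeps the coverage margin nonnegative at every integer; a smallest-right-endpoint greedy is not the right primitive here, and the protrusion argument only becomes available once you anchor on the maximum.
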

\begin{proof}
    This is proved by induction on the sum $a_1+a_2+\cdots+a_k$. The base case $a_1+a_2+\cdots+a_k = 1$ is trivial. In general, assume $a_1 = \max_{1\leq i\leq k}\{a_i\}$. If $a_1 > \max_{1\leq j\leq l}\{c_j\}$, then replace $a_1\leftarrow a_1-1$ and apply induction. Otherwise, without loss of generality, assume $c_1 = a_1$. Then, add $[b_1, c_1]$ to $\bset_1$ and apply induction on the instance $\aset' = \{[1, b_1-1], [1, a_2], \ldots, [1, a_k]\}, \bset' = \{[b_2, c_2], \ldots, [b_l, c_l]\}$.
\end{proof}

\paragraph{Flow packing.}
As $F$ may only use canonical paths, it admits a natural splitting: for each pair $s,t$ of terminals, it sends $F_{s,t}$ units of flow along the canonical path $\gamma_{s,t}$.
But $F'$ can use any path. To better understand its structure, we decompose it as follows.
For each $F'$-flow path $P$ (say from $s$ to $t$), we start from $s$ and follow the route of $P$. Since $H^*$ consists of critical paths, $P$ must start along some critical path $\rho$. Let $\rho[s,x]$ be the maximal subpath from $s$ shared by $\rho$ and $P$. We say $\rho[s,x]$ is the first segment of $P$, and denote it by $\beta_{s,x}$.
We then continue at $x$, find the next critical path that $P$ follows, collect another segment $\beta_{x,y}$. We repeat the same operation until we reach the endpoint $t$ of $P$, obtaining a collection $\set{\beta_{x,y}\mid x,y\in P}$ of segments.
We do the same for every other $F'$-flow path. 

For each canonical path $\gamma_{s, t}$ with bend $r$, we will compute a set $\Gamma_{s, r} = \{\beta_{x, y}\mid x, y\in \gamma_{s, r}\}$ where each $\beta_{x, y}$ is a segment obtained by decomposing $F'$, such that
\begin{itemize}
    \item each edge in $\gamma_{s,t}$ is contained in no more than $F_{s,t}$ segments in $\beta_{x, y}$; and
    \item the sets $\set{\Gamma_{s,r}\mid s,t\in T, r\in \gamma_{s,t}}$ partition all the segments obtained from decomposing $F'$.
\end{itemize}

\begin{figure}[h]
\centering
\includegraphics[scale=0.5]{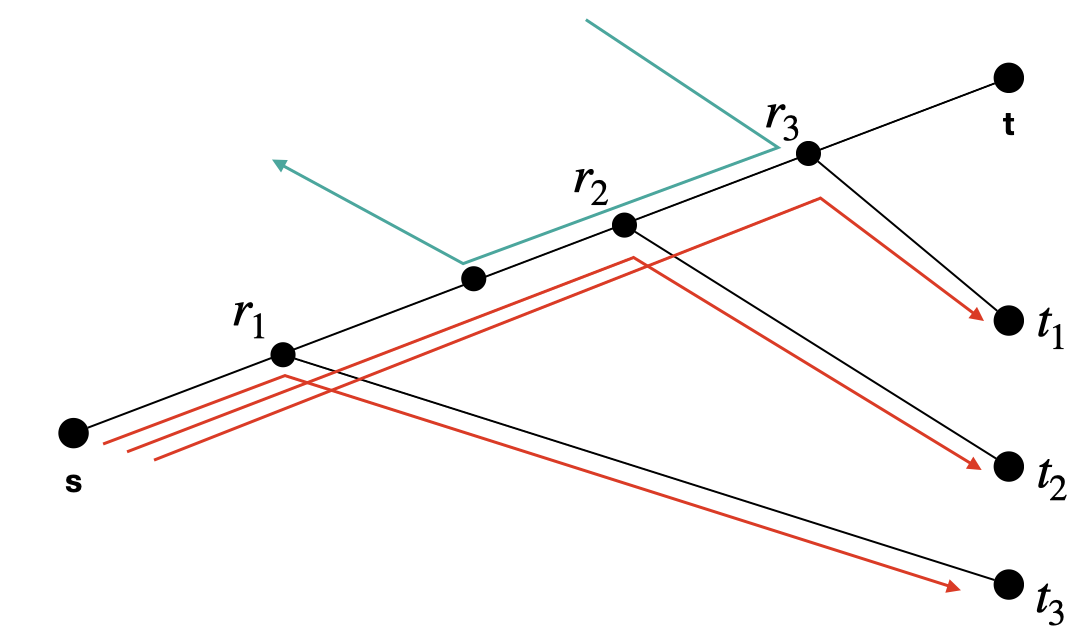}
\caption{\label{fig:flow-pack2} Each canonical path covers a prefix of the critical path $\rho_{s,t}$. Since $F$ dominates $F'$, we can treat each $F$-flow path from $s$ to $t_i$ as an interval $[1, a_i]$, and each $F'$-segment on $\rho_{s, t}$ as an interval $[b_j, c_j]$. Then we can apply \Cref{interval-packing} to assign maximal flow subpaths to canonical paths.}
\end{figure}

Fix any terminal vertex $s$, pick a critical path $\rho_{s, t}$, and consider all canonical path $\gamma_{s, t_1}, \gamma_{s, t_2}, \ldots, \gamma_{s, t_k}$ formed by $\rho_{s, t}$ and some other critical path. So by construction of $H^*$, all vertices $t_1, t_2, \ldots, t_k$ are on the same face as $t$.
Let $r_i$ be the bend of $\gamma_{s, t_i}, \forall 1\leq i\leq k$, and assume $r_1, r_2, \ldots, r_k$ are indexed according to their distances from $s$ in the increasing order. 
Let $\Gamma_s$ be the set of $F'$-segments which are subpaths of $\rho_{s, t}$. By definition of $F$, we send $F_{s, t_i}$ units of flow from $s$ to $r_i$ along $\rho_{s, t}$, and the union of all these flows should dominate the flow of $\Gamma_s$. If we view each flow of $F_{s, t_i}$ from $s$ to $r_i$ as an interval, and each flow in $\Gamma_s$ also as an interval, then by applying \Cref{interval-packing} we can build the sets $\Gamma_{s, r_i}, 1\leq i\leq k$ with the desired properties. See \Cref{fig:flow-pack2} as an example.

\paragraph{Graph splitting.} 
We now describe the iterative process for splitting the canonical paths in $H^*$. Initialize $H\leftarrow H^*$, $\hat{F}\leftarrow F$.
Go over each canonical path $\gamma_{s, t}$ in $H$ with bend $r$ and perform the following operations. See \Cref{fig:flow-split1}.

\begin{figure}[h]
\centering
\includegraphics[scale=0.47]{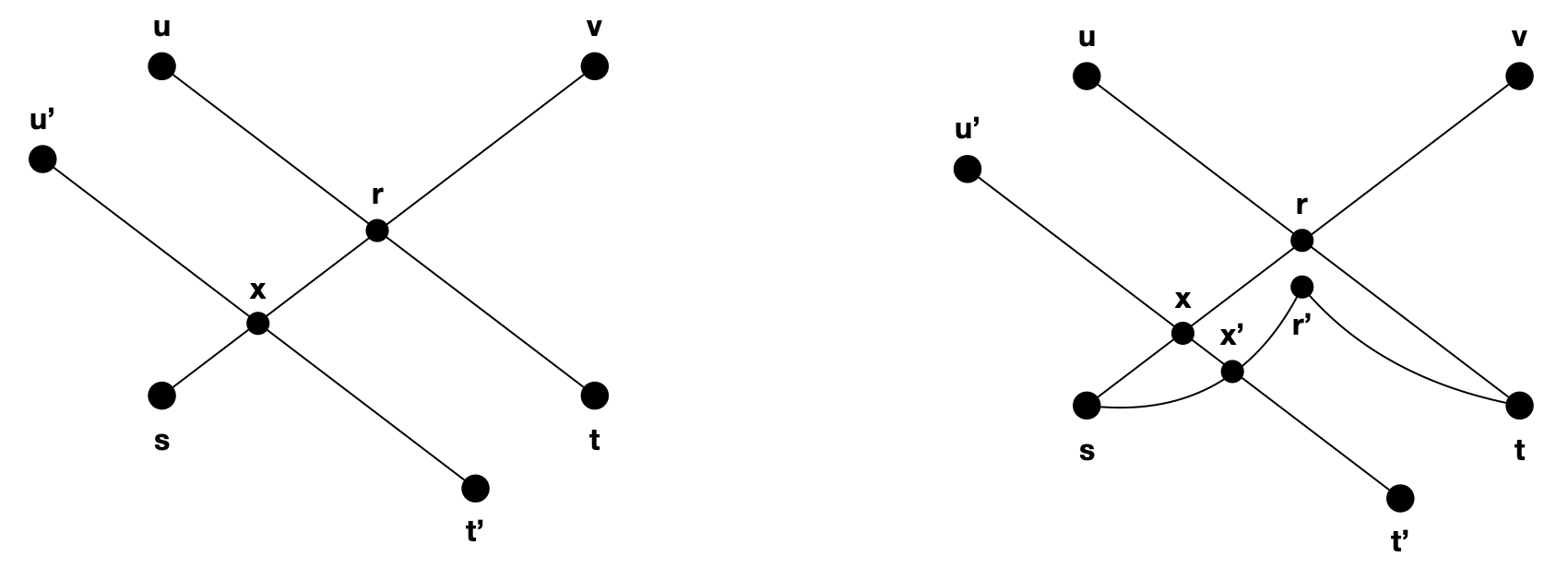}
\caption{An illustration of splitting a canonical path: Before (left) and after (right).\label{fig:flow-split1}}
\end{figure}

\begin{enumerate}[leftmargin=*]
    \item Add a new vertex $r'$ infinitesimally close to $r$ in the region enclosed by $\gamma_{s, t}$ and $\delta_{s, t}$ (the shortest $s$-$t$ path in $G$). For convenience, $r'$ will be called a \emph{copy} of $r$.

    \item Draw two new curves $(s,r')$ and $(t,r')$  infinitesimally close to the original drawing of $(s,r), (t,r)$ respectively. Denote $\gamma'_{s, t} = s\to r'\to t$.

    For any previous canonical path $\gamma_{s, t_1}$ with bend $r_1$ lying between $s, r$ on $sv$, we need to adjust the curve $(s,r')$ such that $(s,r')$ crosses the new curve $(r_1',t_1)$. See \Cref{fig:flow-split3}.
    
    For each path $\rho$ currently in $H$ which crossed $(s,r)$ at a vertex $x$ before this change to $H$, create a new intersection $x'$ between $(s,r')$ and $\rho$; this is done symmetrically for $(t,r')$. Note that $\rho$ might be created in previous iterations and not necessarily in $H^*$.
    
    \item To update the flow $\hat{F}$, originally we sent $\hat{F}_{s, t}$ units of flow from $s$ to $t$ along the path $s\to r \to t$. In the new graph, we reroute this $\hat{F}_{s, t}$ units of flow through the path $\gamma'_{s, t} = s\to r'\to t$.
\end{enumerate}
\begin{figure}[h]
\centering
\subfigure[New crossing between $(s,t')$ and $(r'_1,t'_1)$.]
{\scalebox{0.54}{\includegraphics{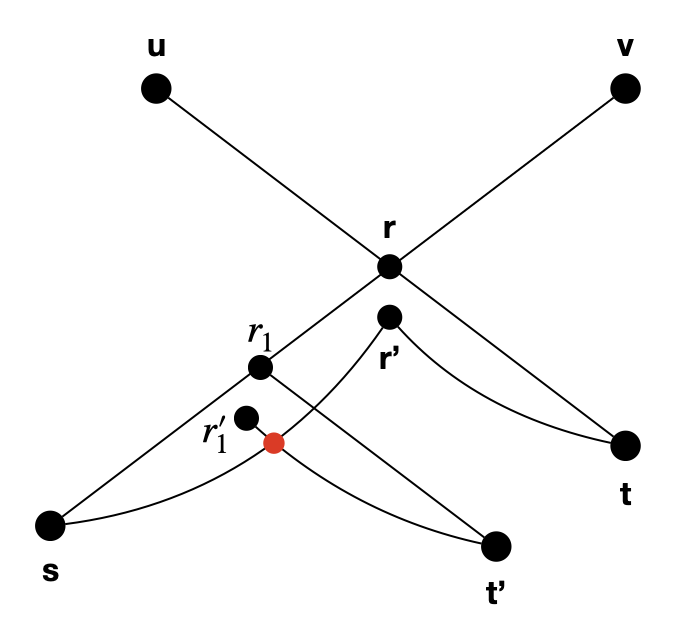}}\label{fig:flow-split3}}
\hspace{0.7cm}
\subfigure[Rerouting $F'$: augment $\alpha$ with path $(w,w')$.]
{\scalebox{0.46}{\includegraphics{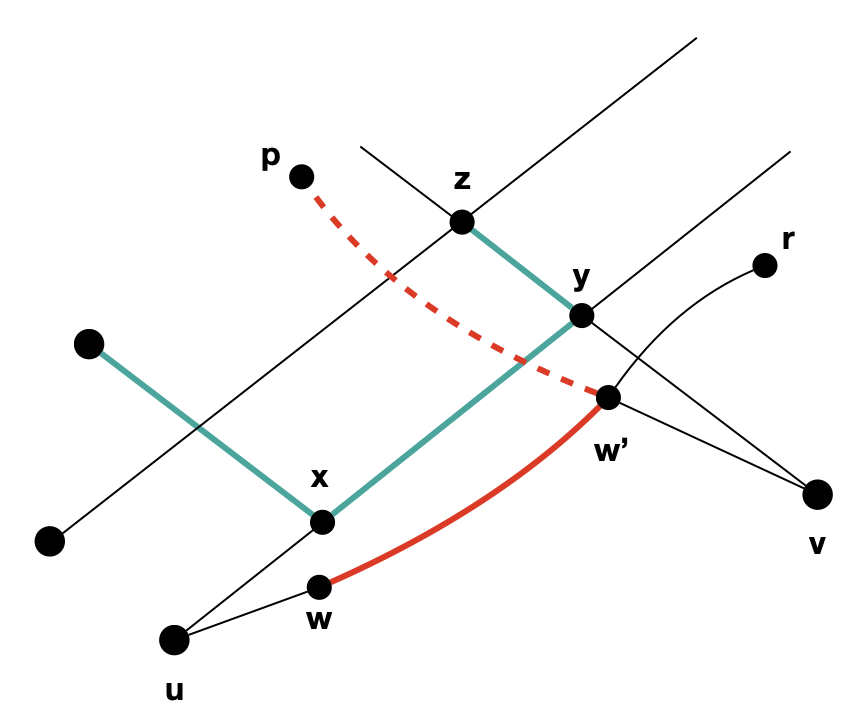}}\label{fig:flow-trace}}
\caption{An illustration of the construction of $H$ and the new $F'$-routing on it.}
\end{figure} 

When all iterations are finished, remove from $H$ all the original critical paths; that is $H\leftarrow H\setminus H^*$. Finally, we need to show that we can reroute $F'$, still satisfying Property \ref{prop: dominates}.

\begin{claim}
The demands $\set{F'_{s,t}}_{s,t\in T}$ can be routed in the current graph $H$ while dominated by $\hat{F}$.
\end{claim}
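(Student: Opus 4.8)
The plan is to prove this by induction over the iterations of the graph-splitting process, maintaining the invariant that the current graph $H$ carries a terminal flow $\hat F'$ with $\hat F'_{s,t}=F'_{s,t}$ for every pair $s,t\in T$ and with $\hat F'$ dominated by the current flow $\hat F$; the base case $H=H^*$, $\hat F=F$, $\hat F'=F'$ holds by the standing assumption that $F$ dominates $F'$. The engine of the induction is the flow-packing data: $F'$ was decomposed into segments $\{\beta_{x,y}\}$, each a maximal sub-path of an $F'$-flow path running along a single critical path, and each segment was assigned to a half-canonical-path $\Gamma_{s,r}$ so that (i) every edge $e$ of a canonical path $\gamma_{s,t}$ carries total segment-flow at most $F_{s,t}$ from $\Gamma_{s,r}$, and (ii) the $\Gamma_{s,r}$'s partition all segments (\Cref{interval-packing}). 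I would strengthen the invariant to record where each segment lives in $\hat F'$: a segment $\beta\in\Gamma_{s,r}$ is routed along a sub-arc of the critical path $\rho\supseteq\gamma_{s,r}$ as long as $\gamma_{s,t}$ has not yet been split, and along the matching sub-arc of the copy curve $(s,r')\subseteq\gamma'_{s,t}$ once it has, while consecutive segments of an $F'$-flow path are joined, at each former junction vertex $v$, by an infinitesimal detour that sits inside an arc of one of the two incident segments, contributing no combinatorially new load.

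For the inductive step, consider the iteration that splits $\gamma_{s,t}$ with bend $r$ and copy $r'$. Here $\hat F$ loses $F_{s,t}$ units on the edges of $\gamma_{s,t}$ and gains $F_{s,t}$ units along $\gamma'_{s,t}=s\to r'\to t$. In parallel I would reroute inside $\hat F'$ exactly the segments in $\Gamma_{s,r}\cup\Gamma_{t,r}$: each such segment, currently a sub-arc of one of the two critical paths forming $\gamma_{s,t}$, is moved to the corresponding sub-arc of $(s,r')$ or $(t,r')$ — this is legitimate because the construction duplicates every crossing of $(s,r)$ (resp.\ $(t,r)$) as a crossing of $(s,r')$ (resp.\ $(t,r')$), so every endpoint of a rerouted segment has its copy available; the junction detours are re-installed through the newly created intersections, which is exactly the $(w,w')$-augmentation depicted in \Cref{fig:flow-trace}. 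The domination check then has three parts: on edges of $\gamma_{s,t}$ vacated by $\hat F$, the $\hat F'$-load only drops, so domination persists there; on a new edge $f$ of $(s,r')$, the only canonical path $\hat F$ routes through $f$ is $\gamma'_{s,t}$ (the copy curves are pairwise edge-disjoint and $f$ is a new curve-edge, not an edge of any surviving critical path), contributing $F_{s,t}$, while property (i) caps the rerouted $\Gamma_{s,r}$-segments through $f$ at $F_{s,t}$ and the detour invariant adds nothing, so $\hat F'_f\le F_{s,t}=\hat F_f$, and symmetrically for $(t,r')$; all other edges are untouched. After the final iteration $H$ is precisely the collection of pairwise edge-disjoint copy curves, $\hat F$ routes $F_{s,t}$ along $\gamma'_{s,t}$, and $\hat F'$ is the desired flow in $H$ with demands $\{F'_{s,t}\}$ dominated by $\hat F$.

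The part I expect to be the main obstacle is making the detour invariant rigorous — that is, verifying that when two consecutive rerouted segments meet near a former junction vertex $v$, the tiny arc bridging their copies genuinely lies within an arc of one of the two segments, so that no edge of any copy curve acquires load beyond what property (i) accounts for. This amounts to pinning down, for every crossing that appears during splitting, the cyclic order in which the copy curves and the surviving critical paths leave that crossing; that order is forced by the fact that each copy vertex $r'$ is inserted infinitesimally into the region enclosed by $\gamma_{s,t}$ and the shortest path $\delta_{s,t}$, a region that is face- and terminal-free by Property~\ref{prop: region} (together with the analogous face-freeness statements such as \Cref{obs:wing-no-face}). Once this local picture is fixed, the bookkeeping above goes through and the claim follows.
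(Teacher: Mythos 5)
Your proposal takes essentially the same route as the paper: both rely on the flow-packing decomposition into segments assigned to the $\Gamma_{s,r}$ sets (via \Cref{interval-packing}) and on the fact that the copy curves inherit the original crossings, so that consecutive segments of an $F'$-flow path can be spliced together there, with property (i) of the flow packing supplying the dominance bound. The paper's proof avoids your iteration-by-iteration invariant and instead directly traces a single unit of $F'$-flow through the \emph{final} graph $H$, hopping from one copy curve $(u,r')$ to the next $(v,p')$ at the newly created crossing $w'$ near the old junction vertex $y$; this sidesteps the ``detour invariant'' bookkeeping you flag as the main obstacle, since the existence of the needed crossings is guaranteed by fiat in Step~2 of the splitting construction rather than derived from a cyclic-order analysis.
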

\begin{proof}
Recall that we have decomposed the flow $F'$ in $H^*$ into segments, now we concatenate them in the updated graph $H$. 
Consider any unit flow in $F'$ from $s$ to $t$.
We will iteratively construct a $s$-$t$ flow path $\alpha$ in $H$ using the flow sets $\Gamma_{s, *}$ constructed before. Throughout, we will maintain
    \begin{itemize}[leftmargin=*]
        \item $u$ as a varying terminal;    
        \item $x, y$ as vertices in $H^*$ which is on a critical path $\rho$ starting from $u$, such that $\beta_{x,y}$ is a segment of the unit $s$-$t$ flow; and        
        \item $r$ as a vertex on $\rho$ with $\beta_{x, y}\in \Gamma_{u, r}$; and $w$ as a vertex on $(s,r')$ in $H$, where $r'$ is a copy of $r$.
    \end{itemize} 

At the beginning, let $u, x\leftarrow s$. In an iteration, assume the flow $s$-$t$ flow travels from the segment $\beta_{x,y}$ to a segment $\beta_{y,z}$.
Assume we have constructed the prefix of $\alpha$ from $s$ to $u$.
    Assume $\beta_{y,z}\in \Gamma_{v, p}$. So $x, y$  lie between $u, r$ and $y, z$ lie between $v, p$. Then, by construction of $H$, $u$ must be connecting to a copy $r'$ of $r$, and $v$ must be connecting to a copy $p'$ of $p$. Therefore, the two paths cross, say at $w'$. We extend $\alpha$ at its current end by the path $(w,w')$.
    If $z = t$, then we can complete $\alpha$ by sending this flow to $t$ through $(w',t)$. Otherwise, we update the variable $u\leftarrow v, x\leftarrow y, y\leftarrow z, r\leftarrow p$, and move on to the next iteration. See \Cref{fig:flow-trace}.
\end{proof}

\subsection{Phase 2 in proving \Cref{clm: routable}: flow morphing}




Recall that in graph $H$ constructed above, all canonical paths are edge disjoint. 
In this subsection, we will gradually morph $H$ to $G$, by morphing each canonical path to their corresponding shortest path, thereby converting flows $\hat F, \hat F'$ in $H$ to flows in $G$, completing the proof of \Cref{clm: routable}.

Throughout the morphing process, $H$ will always be the union of, for each pair $s,t\in T$, a canonical path $\gamma_{s,t}$. We will always let $\hat F$ be the flow that,  for all pairs $s,t\in T$,  sends $F_{s,t}$ units through path $\gamma_{s,t}$. On the other hand, instead of maintaining the other flow $\hat F'$, we will simply ensure that, over the morphing process, the demands $\set{F'_{s,t}}_{s,t\in T}$ is always routable in the graph $H$ where for each pair $s,t\in T$, the edges of $\gamma_{s,t}$ have capacity $F_{s,t}$. Note that eventually this routable property in $G$ immediately implies \Cref{clm: routable}.

\paragraph{Morphing areas (M-Areas).}
For a canonical path $\gamma$ and two points $u,v\in\gamma$, denote by $\gamma_{u,v}$ the subpath of $\gamma$ between $u$ and $v$. For a shortest path $\delta$ in $G$ and two points $u,v\in\delta$, denote by $\delta_{u,v}$ the subpath of $\delta$ between $u,v$.
Let $u$ and $v$ be points lying on both $\gamma$ and $\delta$. We say that \emph{$\gamma,\delta$ form an \textsc{M-Area} between $u,v$}, iff 
\begin{itemize}
    \item $\delta_{u,v}$ and $\gamma_{u,v}$ don't cross internally; and
    \item the area enclosed by $\gamma_{u,v}$ and $\delta_{u,v}$ (i.e., the \textsc{M-Area}) does not contain any terminal faces.
\end{itemize} 
An \textsc{M-Area} is said to be \emph{minimal} if there is no other \textsc{M-Area} which is a strict subset of it. For an \textsc{M-Area}, we will identify it based on the two subpaths $\gamma_{u,v}$ and $\delta_{u,v}$ that enclose it.

\begin{observation}\label{obs:marea-top-bottom}
    Any canonical path crossing a minimal \textsc{M-Area} must cross both $\delta_{u,v}$ and $\gamma_{u,v}$ exactly once. 
\end{observation}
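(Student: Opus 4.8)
The plan is to combine a parity count on how a canonical path meets the boundary of the \textsc{M-Area} with the minimality hypothesis. Write $M$ for the given minimal \textsc{M-Area}, bounded by $\gamma_{u,v}$ (a subpath of the canonical path $\gamma$) and $\delta_{u,v}$ (a subpath of the shortest path $\delta$ in $G$), and let $\gamma'$ be a canonical path having a point in the interior of $M$. Since the endpoints of $\gamma'$ are terminals, they lie on terminal faces; as an \textsc{M-Area} contains no terminal face, these endpoints lie outside $\overline M$. Hence, following $\gamma'$ from an interior point of $M$ in either direction it must leave $\overline M$, so $\gamma'$ crosses the boundary $\partial M=\gamma_{u,v}\cup\delta_{u,v}$ an even number of times and at least twice; moreover, since $\gamma_{u,v}$ and $\delta_{u,v}$ meet only at $u,v$ (they do not cross internally, by the definition of \textsc{M-Area}), every crossing lies on exactly one of the two boundary pieces. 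It thus suffices to show that $\gamma'$ crosses $\gamma_{u,v}$ at most once and crosses $\delta_{u,v}$ at most once: together with ``even and at least two'', this forces exactly one crossing on each piece, which is the assertion.

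For the bound on $\gamma_{u,v}$ I would use that, throughout our construction, any two distinct canonical paths in $H$ cross at most once. This holds for $H^*$ by Property~\ref{prop: intersect iff} (two canonical paths cross iff their corresponding shortest paths in $G$ cross) together with the fact that two shortest paths in $G$ cross at most once (Property~\ref{prop: G3}), and it is preserved by the infinitesimal redrawings of the splitting phase and by each morphing step performed so far. Consequently, if $\gamma'\neq\gamma$ then $\gamma'$ crosses $\gamma$, and hence its subpath $\gamma_{u,v}$, at most once; and if $\gamma'=\gamma$ it crosses its own subpath $\gamma_{u,v}$ zero times.

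For the bound on $\delta_{u,v}$ I would argue by contradiction using minimality. Suppose $\gamma'$ crosses $\delta_{u,v}$ at least twice. Decompose $\gamma'\cap\overline M$ into its excursions --- the maximal subpaths of $\gamma'$ contained in $\overline M$; each excursion has its interior in the open interior of $M$ and its two endpoints on $\partial M$. Since $\gamma'$ has at most one crossing with $\gamma_{u,v}$ in total, a short count shows that some excursion $\gamma'_{p,q}$ has both endpoints $p,q$ on $\delta_{u,v}$ (clear when there is a single excursion, since then both of its endpoints must lie on $\delta_{u,v}$; with two or more excursions it follows by pigeonhole, as at most one excursion can carry the unique $\gamma_{u,v}$-crossing). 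Now $p$ and $q$ lie on both the canonical path $\gamma'$ and the shortest path $\delta$; the subpaths $\gamma'_{p,q}$ and $\delta_{p,q}$ do not cross internally; and the region they enclose is contained in $M$, hence contains no terminal face. So, by the definition of \textsc{M-Area}, $\gamma'$ and $\delta$ form an \textsc{M-Area} between $p$ and $q$; and it is a \emph{strict} subset of $M$, because $\gamma'_{p,q}$ has its interior in the open interior of $M$ and so is not a subpath of $\gamma_{u,v}$ (in particular $\gamma'_{p,q}\neq\gamma_{u,v}$). This contradicts the minimality of $M$, so $\gamma'$ crosses $\delta_{u,v}$ at most once, completing the proof.

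The step I expect to be the main obstacle is the last one: arguing that the \textsc{M-Area} cut off by $\gamma'_{p,q}$ and $\delta_{p,q}$ is genuinely a proper subset of $M$ (and a legitimate \textsc{M-Area} for the pair $(\gamma',\delta)$), which is what makes minimality bite. This hinges on choosing $p,q$ as the endpoints of an excursion so that the connecting arc of $\gamma'$ lies strictly inside $M$, and on the fact (guaranteed after the splitting phase) that distinct canonical paths are edge-disjoint, so $\gamma'_{p,q}$ cannot coincide with a piece of $\gamma_{u,v}$. Once this is in place, the remaining bookkeeping --- the parity count and the pigeonhole over excursions --- is routine.
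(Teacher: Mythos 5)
Your proof is correct and follows the paper's approach in substance: you bound the crossings with $\gamma_{u,v}$ by one via the ``two canonical paths cross at most once'' invariant, and you bound the crossings with $\delta_{u,v}$ by one via minimality (a second crossing would yield a smaller \textsc{M-Area}). You go further than the paper in making explicit the topological parity count (boundary crossings are even and at least two) that upgrades ``at most once on each side'' to ``exactly once on each side,'' and in carefully identifying the offending sub-region via an excursion decomposition rather than just asserting its existence; these are welcome clarifications of steps the paper leaves implicit, not a genuinely different route.
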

\begin{proof}
    A canonical path can't cross $\gamma_{u,v}$ twice since two canonical paths can't cross twice. This holds at the beginning and continues to hold due to Observation \ref{obs:flow-path-cross-iff-cross} later on. A canonical path $\gamma'$ also can't cross $\delta_{u,v}$ twice or else there would be a smaller \textsc{M-Area} formed by $\gamma'$ and $\delta$.
\end{proof}

We now describe the iterative morphing process, which continues iff there are still \textsc{M-Areas}.
In each iteration, consider a minimal \textsc{M-Area} enclosed by $\gamma_{u,v}$ and $\delta_{u,v}$. Let $a$ be the crossing of two canonical paths $\gamma_b$ and $\gamma_c$ which cross in the \textsc{M-Area}, and let $b$ and $c$ be the crossings between the pairs $\gamma_{u,v},\gamma_b$ and $\gamma_{u,v},\gamma_c$, respectively. We say that the crossing is \emph{closest} (to $\gamma_{u,v}$) if no other canonical paths cross either $(a,b)$ or $(a,c)$. We want to morph $\gamma_{u,v}$ to $\delta_{u,v}$ over the \textsc{M-Area}, and we do this iteratively by routing $\gamma_{u,v}$ around closest crossings.

We now describe how to find a closest crossing within the minimal \textsc{M-Area}. While there exist two canonical paths which cross in in the \textsc{M-Area}, we do the following. Start with two flow paths $\gamma_b$ and $\gamma_c$ which have a crossing $a$ in the \textsc{M-Area} and cross $\gamma_{u,v}$ at $b,c$ respectively. To find a closest crossing, repeat the following until convergence:
\begin{itemize}[leftmargin=*]
    \item If there exists a canonical path which crosses $(a,b)$, update $\gamma_c$ to be canonical path which crosses $(a,b)$ closest to $b$. Update $a$ and $c$ to be the new crossings of $\gamma_c$ with $\gamma_b$ and $\gamma_{u,v}$, respectively.
    \item If there exists a canonical path which crosses $(a,c)$, update $\gamma_b$ to be canonical path which crosses $(a,c)$ closest to $c$. Update $a$ and $b$ to be the new crossings of $\gamma_b$ with $\gamma_c$ and $\gamma_{u,v}$, respectively.
\end{itemize}
At each iteration after the first, we claim the area of the triangle $(a,b,c)$ decreases. Indeed, consider $\gamma_b$ and $\gamma_c$ in any later iteration. When we are about to update $\gamma_c$, we know there are no canonical paths crossing the current $(a,c)$ by construction from the previous iteration. Thus, the new path $\gamma_c$ must cross the current $(a,b)$ and $(b,c)$, forming a triangle which is a proper subset of the previous triangle $(a,b,c)$.
Similarly, when we are about to update $\gamma_b$, there are no canonical paths crossing the current $(a,b)$, so $\gamma_b$ crosses the current $(a,c)$ and $(c,b)$, and the area of the triangle decreases again. Since the area of the triangle strictly decreases each iteration and there is a finite number of triangles, the process converges and we find a closest crossing. See Figure \ref{fig:minimal-crossing} for an illustration.

\begin{figure}[h]
\centering
\subfigure
{\scalebox{0.32}{\includegraphics{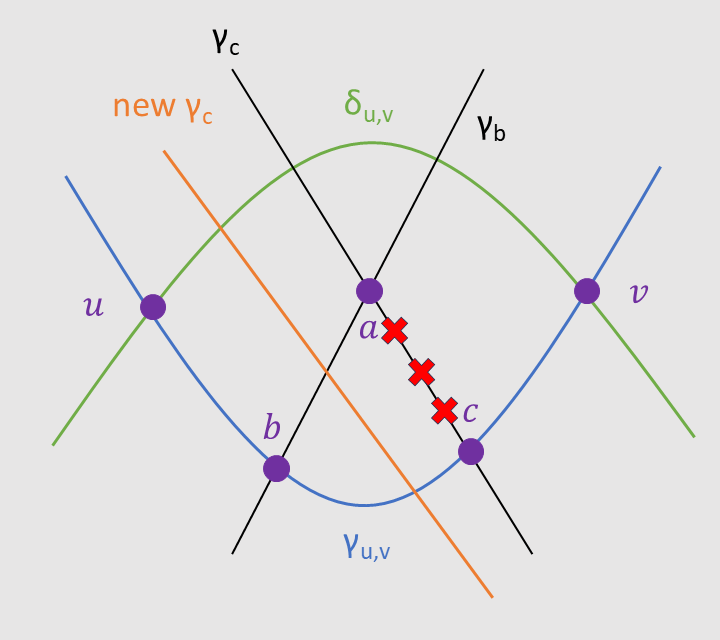}}}
\hspace{1.0cm}
\subfigure
{\scalebox{0.32}{\includegraphics{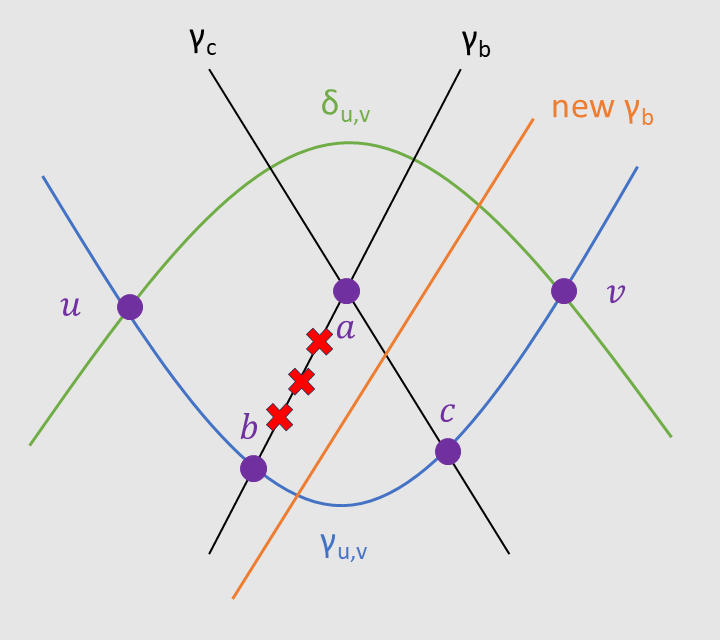}}}
\caption{In the left figure, $(a,b,c)$ is the current triangle with crossing at $a$. The red signs along $(a,c)$ indicate that the new $\gamma_c$ can never cross $(a,c)$. Hence, the new $\gamma_c$ (drawn in orange) must cross $(a,b)$ and $(b,c)$, decreasing the size of the triangle $(a,b,c)$. The right figure is symmetric.}\label{fig:minimal-crossing}
\end{figure} 

Upon convergence, we either have (i) a minimal \textsc{M-Area} such that no two canonical paths cross inside the \textsc{M-Area} or (ii) a closest crossing inside the minimal \textsc{M-Area}. Let $\gamma$ denote the canonical path which $\gamma_{u,v}$ is part of; we now describe the morphing of the canonical path $\gamma$ in the two cases. In case (i), the rerouting of $\gamma$ follows $\gamma$ until reaching $u$. Then it follows along $\delta_{u,v}$ until reaching $\gamma$. Finally, it follows along $\gamma$ until reaching the other terminal. In case (ii), the rerouting of $\gamma$ follows (infinitesimally close to) $\gamma$ until reaching $b$. The rerouting then follows (infinitesimally close to) the subpath $(b,a)$ and then $(a,c)$ until (almost) hitting $\gamma$. Finally, it continues to follow (infinitesimally close to) $\gamma$ until reaching the other terminal. Both of these reroutings are illustrated in {Figure \ref{fig:reroute}}. This completes the description of an iteration.

\begin{figure}[h]
\centering
\subfigure
{\scalebox{0.343}{\includegraphics{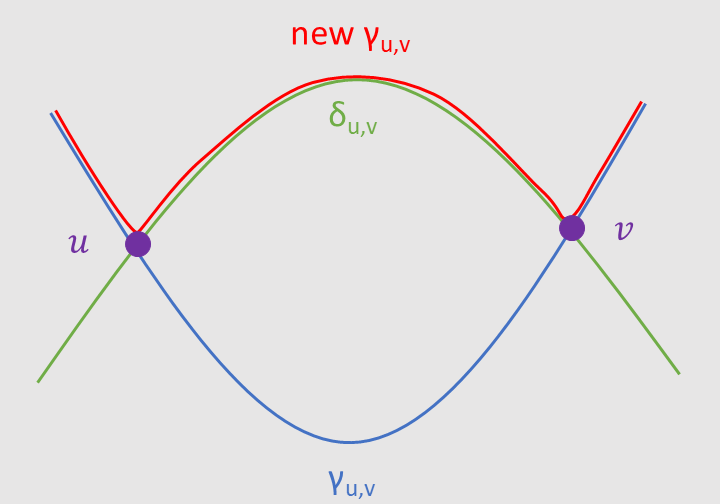}}}
\hspace{1.0cm}
\subfigure
{\scalebox{0.32}{\includegraphics{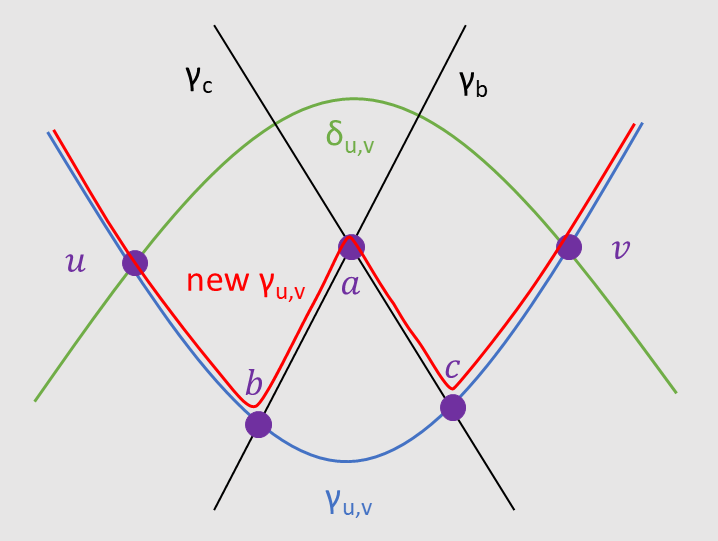}}}
\caption{In both cases, the red path illustrates the morphing of $\gamma_{u,v}$.}\label{fig:reroute}
\end{figure} 

In summary, our morphing process repeats the following while there is a non-empty \textsc{M-Area}:
\begin{itemize}
    \item[1.] Find a minimal \textsc{M-Area}, determined by $\gamma_{u,v}$ and $\delta_{u,v}$.
    \item[2.] Repeat the following while there are two canonical paths crossing in the \textsc{M-Area}.
    \begin{itemize}
        \item Find a closest crossing $a$ between $\gamma_b$ and $\gamma_c$ lying in the \textsc{M-Area}.
        \item Reroute $\gamma_{u,v}$ around the closest crossing $a$ via $(a,b)$ and $(a,c)$ (see Figure \ref{fig:reroute}, right).
        \item Observe that $\gamma_{u,v}$ and $\delta_{u,v}$ still define a minimal \textsc{M-Area}.
    \end{itemize}
    \item[3.] Reroute $\gamma_{u,v}$ to $\delta_{u,v}$ (see Figure \ref{fig:reroute}, left).
\end{itemize}

We first prove some properties of the morphing process. In particular, \Cref{obs:flow-path-cross-iff-cross} implies that Observation \ref{obs:marea-top-bottom} continues to hold throughout the morphing process.
\begin{observation}\label{obs:flow-path-cross-iff-cross}
    At any iteration of the morphing process, two canonical paths cross at most once.
\end{observation}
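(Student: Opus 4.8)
The plan is to prove this as an invariant maintained over the course of the morphing process, by induction on the number of reroutings performed. For the base case, recall that at the start of the morphing process $H$ is exactly the graph produced by the graph-splitting phase (Phase~1). There, two canonical paths with four distinct terminal endpoints cross at most once: this holds in $H^*$ by Property~\ref{prop: intersect iff} combined with Property~\ref{prop: G3} (shortest paths in $G$ cross at most once), and the infinitesimal reroutings performed during splitting preserve the crossing pattern among the copied curves. Two canonical paths sharing a terminal endpoint also cross at most once, again by construction of the splitting step: each such pair either shares a prefix along a primary path and then diverges, or is made to cross exactly once via the prescribed adjustment of the curve $(s,r')$. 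So the claim holds initially.

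For the inductive step, consider a single reroute. Only one canonical path $\gamma$ is modified, namely the one of which $\gamma_{u,v}$ is a subpath, and the modification replaces $\gamma_{u,v}$ by a new arc with the same two endpoints. The crucial point is that this modification is a homotopy of $\gamma$ supported inside a region $D$ which contains no terminal faces — hence no endpoint of any canonical path — and which, after the closest-crossing reduction, is not entered by any canonical path other than the two paths bounding it. In case (ii), $D$ is the triangle $\textsc{Area}(a,b,c)$, whose sides are subpaths of $\gamma_b$, $\gamma_c$, and $\gamma$: any canonical path $\gamma'\notin\{\gamma_b,\gamma_c\}$ that entered the interior of $D$ would have to cross $\partial D$; it cannot cross $(a,b)$ or $(a,c)$ since $a$ is a closest crossing, and it crosses the third side (a subpath of $\gamma$) at most once by the inductive hypothesis, so it meets $\partial D$ at most once, which is impossible since $\gamma'$ has both endpoints outside $D$. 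A symmetric argument shows the same for the rest of $\gamma$ itself, so the rerouted $\gamma$ remains simple. In case (i)/Step~3, $D$ is the minimal \textsc{M-Area}, and after the inner loop the canonical paths meeting $D$ are pairwise non-crossing inside it, each crossing $\gamma_{u,v}$ once and $\delta_{u,v}$ once by Observation~\ref{obs:marea-top-bottom}.

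Given this isolation of $D$, the conclusion follows. For every canonical path $\gamma'$ with no portion interior to $D$, the homotopy of $\gamma$ inside $D$ cannot change the number of transversal crossings of $\gamma$ with $\gamma'$ (the sweep passes over neither an endpoint nor any point of $\gamma'$), so this number stays at most one by the inductive hypothesis. For $\gamma'=\gamma_b$ or $\gamma_c$ in case~(ii), the new arc runs infinitesimally alongside $(b,a)$ and $(a,c)$; one checks directly (see Figure~\ref{fig:reroute}) that this transports the unique crossing of $\gamma$ with $\gamma_b$ (resp.\ $\gamma_c$) from $b$ (resp.\ $c$) to a point near $a$, leaving the count at one. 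In case~(i), a canonical path crossing the \textsc{M-Area} crosses $\gamma_{u,v}$ once and $\delta_{u,v}$ once beforehand, and afterwards — since $\gamma$ now coincides with $\delta_{u,v}$ there and does not meet $\gamma'$ elsewhere by the inductive hypothesis — it crosses the new $\gamma$ exactly once; coincident/overlapping segments shared by $\gamma$ and other canonical paths along shortest paths of $G$ are not transversal crossings and are harmless. In all cases every pair of canonical paths still crosses at most once, completing the induction.

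\textbf{Main obstacle.} The delicate part is the case-(ii) bookkeeping around the crossing point $a$: one must argue carefully both that the closest-crossing condition genuinely isolates the triangle $\textsc{Area}(a,b,c)$ from every other canonical path — so the homotopy is crossing-preserving for all of them — and that hugging $\gamma_b$ and $\gamma_c$ around the corner $a$ transports, rather than duplicates or deletes, the two crossings of $\gamma$ with $\gamma_b$ and $\gamma_c$. A secondary subtlety, which must be handled throughout the rest of the morphing analysis as well, is distinguishing transversal crossings (which the invariant bounds) from the coincident segments along shortest paths of $G$ that accumulate as $\gamma$'s are morphed toward their $\delta$'s.
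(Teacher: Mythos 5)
Your proposal is correct and follows essentially the same route as the paper's proof: induct over reroutings, handle case (i) via Observation~\ref{obs:marea-top-bottom}, and in case (ii) observe that the closest-crossing property plus the inductive hypothesis (a canonical path cannot cross $(b,c)$ twice) force every other canonical path to avoid the interior of the triangle $\textsc{Area}(a,b,c)$, so the homotopy of $\gamma$ is crossing-preserving. The paper's version is terser — it states only that no canonical path crosses $(a,b)$ or $(a,c)$ and none crosses $(b,c)$ twice, and leaves the parity argument and the special treatment of $\gamma_b,\gamma_c$ implicit — whereas you spell out the parity step, verify explicitly that the reroute transports (rather than duplicates or deletes) the single crossings of $\gamma$ with $\gamma_b$ and $\gamma_c$ from $b,c$ to points near $a$, and also justify the base case by tracking what the graph-splitting phase does to Property~\ref{prop: intersect iff}. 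These are all gaps the paper glosses over, and your filling them in is a genuine improvement in rigor rather than a different proof.
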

\begin{proof}
This holds at the beginning by Property \ref{prop: intersect iff}. After an iteration, since only one canonical path $\gamma$ was modified, it suffices to prove that any other flow path $\gamma'$ crosses $\gamma$ the same number of times before and after the iteration. In case (i), Observation \ref{obs:marea-top-bottom} implies that $\gamma'$ crosses $\gamma_{u,v}$ if and only it crosses $\delta_{u,v}$. In case (ii), it suffices to prove that $\gamma'$ crosses $(a,b)$ or $(a,c)$ if and only if it crosses $(b,c)$. This is since no canonical paths cross $(a,b)$ or $(a,c)$ as $a$ is a closest crossing, and also by induction, a canonical path can never cross $(b,c)$ twice before this iteration.
\end{proof}

\begin{observation}\label{obs:flow-shortest-no-face}
    At any iteration of the morphing process, the area between any canonical path and its corresponding shortest path in $G$ contain no terminal faces.
\end{observation}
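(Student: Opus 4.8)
The plan is to prove this as an invariant maintained throughout the morphing process, by induction on the sequence of rerouting steps (each application of the rerouting in the description of an iteration is one such step). The only tool we need is the \emph{subadditivity of areas}: if three curves $\alpha,\beta,\zeta$ between terminals are pairwise homotopic in the plane with the terminal faces removed, then $\textsc{Area}(\alpha,\zeta)\subseteq\textsc{Area}(\alpha,\beta)\cup\textsc{Area}(\beta,\zeta)$, since a continuous deformation from $\alpha$ to $\beta$ followed by one from $\beta$ to $\zeta$ is a deformation from $\alpha$ to $\zeta$; this is exactly the principle already used for the ``$\textsc{Area}((a,b),(a_t,b_t))$'' arguments in Section~\ref{sec: main}. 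In particular, if both $\textsc{Area}(\alpha,\beta)$ and $\textsc{Area}(\beta,\zeta)$ contain no terminal face, then neither does $\textsc{Area}(\alpha,\zeta)$.

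For the base case, right after Phase~1 (graph splitting) $H$ is the union of the split canonical paths $\gamma'_{s,t}$, each drawn infinitesimally close to the original canonical path $\gamma_{s,t}$ of $H^*$ (the copy $r'$ of the bend, and the small detour of the curve $(s,r')$ used to create the needed new crossings, all stay within an infinitesimal neighborhood of the original drawing). Hence $\textsc{Area}(\gamma'_{s,t},\gamma_{s,t})$ contains no terminal face, and since $\textsc{Area}(\gamma_{s,t},\delta_{s,t})$ contains no terminal face by Property~\ref{prop: region} of $H^*$, subadditivity yields that $\textsc{Area}(\gamma'_{s,t},\delta_{s,t})$ contains no terminal face.

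For the inductive step, fix one rerouting step and assume the invariant holds beforehand. Exactly one canonical path $\gamma$, say between terminals $s,t$ on distinct faces, has its drawing altered; all other canonical paths are untouched, so the invariant persists for them. Let $\gamma^{\mathrm{old}},\gamma^{\mathrm{new}}$ denote $\gamma$ before and after the step, and let $\delta$ be the $s$-$t$ shortest path in $G$. By subadditivity,
\[
\textsc{Area}(\gamma^{\mathrm{new}},\delta)\ \subseteq\ \textsc{Area}(\gamma^{\mathrm{new}},\gamma^{\mathrm{old}})\ \cup\ \textsc{Area}(\gamma^{\mathrm{old}},\delta),
\]
and the second region has no terminal face by the induction hypothesis, so it suffices to bound $\textsc{Area}(\gamma^{\mathrm{new}},\gamma^{\mathrm{old}})$. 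This region is precisely the area swept as the subpath $\gamma_{u,v}$ is pushed to its new position, and that sweeping takes place inside the current minimal \textsc{M-Area} enclosed by $\gamma_{u,v}$ and $\delta_{u,v}$. In case~(i) the swept region is exactly that \textsc{M-Area}, which contains no terminal face by definition of an \textsc{M-Area}. In case~(ii) the swept region is (an infinitesimal thickening of) the triangle $\textsc{Area}(a,b,c)$, and I will argue it lies inside the \textsc{M-Area}: its side $(b,c)$ lies on $\gamma_{u,v}$, and by Observation~\ref{obs:marea-top-bottom} each of the canonical paths $\gamma_b,\gamma_c$ meets the boundary of the \textsc{M-Area} exactly once along $\gamma_{u,v}$ (at $b$, resp.\ $c$) and exactly once along $\delta_{u,v}$, so the subpaths $(b,a)\subseteq\gamma_b$ and $(a,c)\subseteq\gamma_c$, which run from $\gamma_{u,v}$ to the interior crossing $a$, cannot have exited the \textsc{M-Area} yet; hence the triangle they bound with $(b,c)$ lies in the \textsc{M-Area} and contains no terminal face. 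In either case $\textsc{Area}(\gamma^{\mathrm{new}},\gamma^{\mathrm{old}})$ has no terminal face, which closes the induction.

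The main obstacle I anticipate is the geometric bookkeeping in case~(ii): making rigorous that the closest-crossing triangle $\textsc{Area}(a,b,c)$ genuinely sits inside the minimal \textsc{M-Area}, i.e.\ that neither $(b,a)$ nor $(a,c)$ escapes through $\delta_{u,v}$ before reaching $a$ — this is exactly where Observation~\ref{obs:marea-top-bottom} must be invoked with care. A secondary, more routine point is handling the pervasive ``infinitesimally close'' perturbations so that the subadditivity-of-areas principle applies cleanly, including checking that the base case survives the small detours introduced during graph splitting.
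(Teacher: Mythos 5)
Your proof is correct and follows essentially the same approach as the paper: induct over rerouting steps, using that the swept region between the old and new drawings of the modified canonical path is contained in the current minimal \textsc{M-Area}, which by definition contains no terminal face. The paper's version is terser — it asserts the base case and the ``change is a subset of the M-Area'' claim without further justification — whereas you usefully fill in both: the base case via subadditivity with Property~\ref{prop: region} of $H^*$, and the case~(ii) containment via Observation~\ref{obs:marea-top-bottom}.
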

\begin{proof}
This holds at the start of the morphing process. After each iteration, the change of the area between any canonical path and its corresponding shortest path is a subset of the \textsc{M-Area} in this iteration. Since an \textsc{M-Area} doesn't contain any terminal faces, the area still contains no terminal faces.
\end{proof}

We now show that the process converges.

\begin{claim}
    The iterative morphing process converges, and eventually every canonical path is morphed to the corresponding shortest path in $G$.
\end{claim}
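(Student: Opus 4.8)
The plan is to run a potential argument. Fix once and for all the planar arrangement $\Lambda$ obtained by drawing, inside $G$, all critical paths together with all shortest paths connecting terminal pairs; it has finitely many vertices, edges, and faces. By an easy induction over the morphing process, every canonical path that ever appears lies on the $1$-skeleton of $\Lambda$: this is true of the initial canonical paths (concatenations of critical subpaths), and it is preserved both by the step-$2$ reroutings (which follow subpaths of other canonical paths around a closest crossing) and by the step-$3$ morphing (which substitutes a subpath of a shortest path of $G$); the ``infinitesimally close'' offsets used when drawing reroutings are only a general-position device and do not change this combinatorial statement. Consequently, for every pair $s,t$ the region enclosed between $\gamma_{s,t}$ and $\delta_{s,t}$ (the shortest $s$-$t$ path in $G$) is a union of faces of $\Lambda$; define $\Phi$ to be the number of faces of $\Lambda$ enclosed between $\gamma_{s,t}$ and $\delta_{s,t}$, summed over all terminal pairs $(s,t)$. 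Then $\Phi$ is a finite non-negative integer at the outset.

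Next I would show that $\Phi$ strictly decreases at every elementary rerouting. Each such rerouting modifies the drawing of only a single canonical path $\gamma=\gamma_{s,t}$, so by \Cref{obs:flow-path-cross-iff-cross} the crossing pattern among canonical paths is unchanged, no other pair's $\gamma$-$\delta$ region moves, and only the $(s,t)$-term of $\Phi$ can change. In a step-$3$ move we replace $\gamma_{u,v}$ by $\delta_{u,v}$, where the minimal \textsc{M-Area} between them is nonempty, encloses no terminal face (hence is a nonempty union of faces of $\Lambda$), and is exactly the region swept by the morphing; this region lies inside the $\gamma_{s,t}$-$\delta_{s,t}$ region and is carved out of it, so $\Phi$ drops by at least one. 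In a step-$2$ move we reroute $\gamma_{u,v}$ over a closest crossing $a$, around a triangle $(b,a,c)$ that lies inside the current \textsc{M-Area} (using \Cref{obs:marea-top-bottom} and the fact that $a$ is a closest crossing, so the arcs $(b,a)$ and $(a,c)$ stay inside the \textsc{M-Area}); this again carves a nonempty union of $\Lambda$-faces out of the $\gamma_{s,t}$-$\delta_{s,t}$ region, so $\Phi$ strictly decreases here too. Moreover, each step-$2$ move is preceded by a finite search for a closest crossing, which terminates by the triangle-area-shrinking argument already given in the text. Hence only finitely many elementary reroutings occur in total, and the process converges.

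It remains to identify the limit. When the process halts there is, by definition, no \textsc{M-Area}. Suppose toward a contradiction that some $\gamma_{s,t}\ne\delta_{s,t}$. They share the endpoints $s,t$, so we may pick two points $u,v$ lying on both curves that are consecutive along $\gamma_{s,t}$ among the common points and satisfy $\gamma_{u,v}\ne\delta_{u,v}$; then $\gamma_{u,v}$ and $\delta_{u,v}$ do not cross internally, and by \Cref{obs:flow-shortest-no-face} the region they enclose contains no terminal face. Thus $\gamma_{s,t}$ and $\delta_{s,t}$ form a nonempty \textsc{M-Area} between $u,v$, contradicting that the process has halted. Therefore at termination $\gamma_{s,t}=\delta_{s,t}$ for every pair $s,t$, i.e., every canonical path has been morphed to its corresponding shortest path in $G$.

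I expect the principal obstacle to be the monotonicity accounting for $\Phi$: one must verify carefully that a step-$2$ rerouting genuinely removes faces from the $\gamma$-$\delta$ region rather than merely shuffling them around (this is exactly where minimality of the \textsc{M-Area}, the ``closest crossing'' condition, and \Cref{obs:marea-top-bottom} are essential), and that every rerouted curve stays on the $1$-skeleton of the fixed arrangement $\Lambda$ despite the infinitesimal-offset drawing convention. Once those points are nailed down, convergence and the identification of the fixed point follow as sketched above.
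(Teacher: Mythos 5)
Your argument takes a genuinely different route from the paper for the ``converges'' half of the claim, while the ``limit is the right thing'' half is essentially the same in spirit. The paper uses a two-level potential: an inner potential (the number of canonical-path crossings lying inside the current minimal \textsc{M-Area}, which drops by one each time a closest crossing is rerouted over) and an outer potential (the pair $\Psi_1 =$ total number of crossings between canonical paths and shortest paths, $\Psi_2=$ number of canonical paths not yet fully morphed, at least one of which drops when a minimal \textsc{M-Area} is fully eliminated). You instead propose a single global potential $\Phi$, the number of faces of a fixed arrangement $\Lambda$ enclosed between each $\gamma_{s,t}$ and $\delta_{s,t}$. Your fixed-point identification is also a touch slicker: you pick $u,v$ consecutive along $\gamma$ among common points, rather than the paper's recursive refinement of crossings consecutive along $\delta$; both produce a nonempty \textsc{M-Area} with no terminal face via Observation~\ref{obs:flow-shortest-no-face}, so both work.

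The concern with the $\Phi$-based route, which you yourself flag, is real and I do not think the sketch as given closes it. Phase~1 (graph splitting) perturbs canonical paths into parallel copies that live ``infinitesimally close to'' but not literally on the $1$-skeleton of $\Lambda$, and in doing so deliberately \emph{creates new crossings} between canonical paths near shared endpoints --- crossings that do not correspond to vertices of $\Lambda$ and enclose no $\Lambda$-face. If the closest crossing $a$ chosen in a step-2 move is one of these degenerate Phase-1 crossings, the triangle $(b,a,c)$ has zero $\Lambda$-area, and your potential $\Phi$ does not strictly decrease, so the convergence argument stalls. The paper's inner potential sidesteps this entirely because it counts crossings rather than area, and a degenerate crossing still counts as one crossing that gets pushed out. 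To salvage your approach you would need either to refine $\Lambda$ to include the post-splitting arrangement (at which point it is no longer fixed once and for all in the clean way your opening paragraph needs), or to pair $\Phi$ with a secondary tie-breaking potential that strictly decreases when only degenerate crossings are removed --- at which point you have essentially reconstructed the paper's two-level scheme. The paper's crossing-count potentials are more elementary and robust to exactly this degeneracy, which is likely why they chose that route.
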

\begin{proof}
    First, we prove that we successfully morph $\gamma_{u,v}$ to $\delta_{u,v}$ for each minimal \textsc{M-Area} we find. Consider the number of canonical path crossings in the \textsc{M-Area} as the potential function. After each iteration of morphing the canonical path $\gamma$ around a closest crossing, this number decreases by 1. Since the total number of such points is finite, the process converges. Next, we prove that the entire process converges. Consider $\Psi_1$ defined as the total number of crossings between canonical paths and shortest paths and $\Psi_2$ defined as the total number of canonical paths which haven't been morphed to their corresponding shortest paths. When we morph $\gamma_{u,v}$ to $\delta_{u,v}$, either $\Psi_1$ decreases by at least 1 when at least one of $u,v$ is a non-terminal or $\Psi_2$ decreases by 1 when both $u,v$ are terminals and we finish morphing the canonical path between $u$ and $v$. Since $\Psi_1$ and $\Psi_2$ are both finite at the start, the process converges.

    Upon convergence, we know that there are no more \textsc{M-Area}s. We want to show that this implies that all canonical paths are morphed to their corresponding shortest paths. Suppose for contradiction this isn't the case. Let $\gamma$ be a canonical path which hasn't been morphed to its corresponding shortest path $\delta$. Let $x$ and $y$ denote two crossings between $\delta$ and $\gamma$ which are consecutive on $\delta$. If they are also consecutive on $\gamma$, then we are done. Otherwise, let $x'$ and $y'$ be two crossings between $\gamma_{x,y}$ and $\delta$ which are consecutive on $\delta$ and $x'\neq x$ or $y'\neq y$. Observe that the number of crossings between $\gamma_{x',y'}$ and $\delta$ is strictly less than the number of intersections between $\gamma_{x,y}$ and $\delta$ because $\gamma_{x',y'}$ is a strict subpath of $\gamma_{x,y}$ and in particular, doesn't include at least one of the intersections $x$ or $y$. Hence, recursively applying this process gives a pair of points $x,y$ such that $\delta_{x,y}$ and $\gamma_{x,y}$ don't intersect. Since the area surrounded by two canonical subpaths is a subset of the area between canonical paths, which from Observation \ref{obs:flow-shortest-no-face} contains no terminal faces, this is an \textsc{M-Area}. 
\end{proof}


The next claim shows the routability of demands $\set{F'_{s,t}}_{s,t\in T}$, and completes the proof of \Cref{clm: routable}.

\begin{claim}
After morphing a canonical path over a crossing, the demands $\set{F'_{s,t}}_{s,t\in T}$ is still routable in the updated graph $H$ where for each pair $s,t$, edges in the canonical $\gamma_{s,t}$ have capacity $F_{s,t}$.
\end{claim}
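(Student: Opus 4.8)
The plan is to prove the claim inductively over the morphing iterations, carrying along a feasible routing $\mathcal{R}$ of the demands $\set{F'_{s,t}}_{s,t\in T}$ in the current graph $H$ under the stated capacities (reading the capacity of an edge as the sum of $F_{s,t}$ over the canonical paths $\gamma_{s,t}$ passing through it, i.e.\ the load $\hat F_e$; equivalently, one may keep $H$ as a multigraph in which each canonical path owns its own parallel copies of the edges it traverses). The base case is the routing delivered by Phase~1. For the inductive step I would first pin down exactly what a single ``morph over a crossing'' changes. Writing $\gamma=\gamma_{p,q}$ for the canonical path being morphed, and $\gamma_b,\gamma_c$ for the two canonical paths crossing $\gamma_{u,v}$ at $b,c$ and crossing each other at the closest crossing $a$, the step replaces the subpath $\gamma_{b,c}$ of $\gamma$ by a new subpath $N$ drawn infinitesimally close to $\gamma_b[b,a]$ and then $\gamma_c[a,c]$. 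In the updated graph each edge of $N$ is a brand-new edge lying on $\gamma_{p,q}$ alone, hence of capacity $F_{p,q}$; each edge of $\gamma_{b,c}$ loses the contribution of $\gamma_{p,q}$, so its capacity drops by exactly $F_{p,q}$ (possibly to $0$, deleting the edge); every other edge and capacity is unchanged.

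The first substantive step is to show the ``detour region'' $R_0$ bounded by $\gamma_{b,c}$, $\gamma_b[b,a]$ and $\gamma_c[a,c]$ is combinatorially trivial. It lies inside the minimal \textsc{M-Area}, so it contains no terminal face (Observation~\ref{obs:flow-shortest-no-face}); because $a$ is a \emph{closest} crossing, no canonical path crosses $\gamma_b[b,a]$ or $\gamma_c[a,c]$; and distinct canonical paths cross at most once (Observation~\ref{obs:flow-path-cross-iff-cross}). I would conclude that no canonical path other than $\gamma$ meets the relative interior of $\gamma_{b,c}$: such a path would have to enter $R_0$ across $\gamma_{b,c}$, could not leave (a second crossing of $\gamma_{b,c}$ is forbidden, and so is a crossing of $\gamma_b[b,a]\cup\gamma_c[a,c]$), and would therefore be trapped in a region free of terminal faces, contradicting that both its endpoints are terminals. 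Hence the interior of $\gamma_{b,c}$ is a degree-two path, and flow conservation forces the $F'$-load of $\mathcal{R}$ to be constant, say $\lambda$, along all of $\gamma_{b,c}$; feasibility gives $\lambda\le F_{p,q}$.

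The second step is the rerouting: obtain $\mathcal{R}'$ from $\mathcal{R}$ by diverting those $\lambda$ units of flow off $\gamma_{b,c}$ and onto $N$ (which joins the same endpoints $b$ and $c$), leaving all other flow in place. The feasibility check is then one line: each edge of $N$ now carries $\lambda\le F_{p,q}$, its capacity; each edge of $\gamma_{b,c}$ now carries its previous load minus $\lambda$, and since its capacity decreased by $F_{p,q}\ge\lambda$ it stays within the new capacity; every other edge is unchanged. Thus $\mathcal{R}'$ feasibly routes $\set{F'_{s,t}}_{s,t\in T}$ in the updated graph, and the induction closes. Combined with the earlier claims and observations --- in particular the convergence of the morphing process --- this completes the proof of \Cref{clm: routable}.

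I expect the only real obstacle to be the first step: cleanly establishing that the relative interior of $\gamma_{b,c}$ is untouched by every other canonical path, since this is exactly what licenses treating the $F'$-flow on $\gamma_{b,c}$ as a single $b$-$c$ flow that can be diverted wholesale onto $N$ --- after that, everything is bookkeeping. The argument sketched above uses only the closest-crossing property, the face-freeness of the \textsc{M-Area}, and the ``cross at most once'' invariant, but one must be careful because earlier steps may have made distinct canonical paths share portions of their drawings, so ``degree-two interior'' has to be read in the multigraph sense (or derived from transversality of canonical crossings inherited from the disc/strip redrawing in \Cref{sec: main}). A secondary point to verify is the analogous routability for the terminal move that morphs $\gamma_{u,v}$ directly onto $\delta_{u,v}$ when the \textsc{M-Area} carries no internal canonical crossing: there canonical paths may still pass through the \textsc{M-Area}, so the clean $b$-$c$ structure is lost and one instead pushes flow across the \textsc{M-Area} one strip at a time along those through-paths.
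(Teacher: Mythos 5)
Your first step --- that the relative interior of $\gamma_{b,c}$ is crossed by no canonical path other than $\gamma$, so the $F'$-load on it is a constant $\lambda\le F_{p,q}$ --- is correct, and it is essentially the same triangle-parity argument the paper uses to verify Observation~\ref{obs:flow-path-cross-iff-cross}. The gap is in the rerouting step, which misreads what the morph does to the abstract graph. The rerouted $\gamma$ is pushed \emph{past} the closest crossing $a$: that is exactly what makes $a$ leave the \textsc{M-Area} and the potential decrease. Consequently the vertices $b$ and $c$ (the crossings of $\gamma$ with $\gamma_b$ and $\gamma_c$) cease to exist --- they become degree-two points of $\gamma_b,\gamma_c$ --- and are replaced by new crossings $b',c'$ on the \emph{far} side of $a$, on the sub-arcs of $\gamma_b,\gamma_c$ pointing toward $\delta_{u,v}$. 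So $N$ does not ``join the same endpoints $b$ and $c$''; it joins $c'$ and $b'$, and the local incidence pattern flips: before, $\gamma$'s stub and $\gamma_b$'s lower stub meet at $b$ while $\gamma_b$'s and $\gamma_c$'s upper stubs meet at $a$; after, $\gamma_b$'s and $\gamma_c$'s lower stubs meet at $a$ while $\gamma$ now meets the upper stubs at $b',c'$.

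Because of this flip, ``leaving all other flow in place'' is not an option. A flow path of $F'$ that previously switched between $\gamma$ and $\gamma_b$ at $b$ --- for instance entering on $\gamma$'s left stub and leaving on $\gamma_b$'s lower stub, touching $\gamma_{b,c}$ not at all, so $\lambda$ does not even see it --- has no route through the point $b$ in the new graph; it must now be threaded through $a$ and/or $b'$, consuming capacity on $\gamma_b[a,b']$ and possibly on $N$ itself, and this competes with the $\lambda$ units you diverted onto $N$. Your one-line feasibility check never confronts this. This is precisely why the paper does not attempt a direct rerouting and instead proves that the old and new local three-path gadgets (which are genuinely non-isomorphic as capacitated graphs) are flow-equivalent, via Delta--Wye and Wye--Delta transformations justified by the Okamura--Seymour flow-cut gap. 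The Wye--Delta machinery is not extra decoration; it is what handles the change in which stubs are glued to which triangle corner. (Your secondary worry about case~(i) is actually benign: when there is no internal crossing, morphing $\gamma_{u,v}$ to $\delta_{u,v}$ produces an isomorphic abstract graph, since by Observation~\ref{obs:marea-top-bottom} every through-path crosses each of $\gamma_{u,v}$ and $\delta_{u,v}$ exactly once and the through-paths do not cross each other inside the \textsc{M-Area}.)
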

\begin{proof}
The routability holds at the beginning. It suffices to show that it holds after each iteration.
In case (i), where there are no crossings in the \textsc{M-Area}, the graph defined by the flow $\hat{F}$ is exactly the same as before, so the claim holds by induction. In case (ii), we show that the updated graph and the old graph  are flow-equivalent, via Wye-Delta transformations.
    \begin{observation}[Wye-Delta Transformation]
        The following operations on a capacitated instance $(H,T)$ with edge capacities $c(\cdot)$ preserve flow equivalence (i.e., a demand on terminals $T$ is routable in $H$ iff it is routable in the new graph $H$ after the operation)
        \begin{itemize}
            \item Wye-Delta transformation: let $x$ be a degree-three non-terminal with neighbors $u,v,w$. Remove $x$ along with its incident edges, and add edges $(u,v)$, $(u,w)$, $(v,w)$ with respective capacities $[c(x,u)+c(x,v)-c(x,w)]/2$, $[c(x,u)+c(x,w)-c(x,v)]/2$, and $[c(x,v)+c(x,w)-c(x,u)]/2$.
            \item Delta-Wye transformation: let $u,v,w$ denote endpoints of a triangle. Remove the edges of the triangle and add a new non-terminal $x$ with new edges $(x,u)$, $(x,v)$, and $(x,w)$ and respective capacities $c(u,v)+c(u,w)$, $c(v,u)+c(v,w)$, and $c(w,u)+c(w,v)$.
        \end{itemize}
    \end{observation}
\begin{figure}[h]
    \centering
    \includegraphics[width=0.8\linewidth]{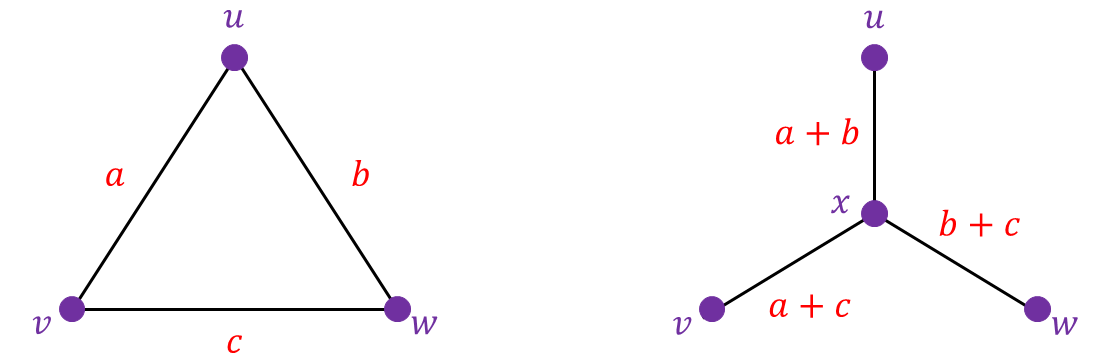}
    \caption{An illustration of the Delta-Wye transformation (applying on the left graph to obtain the right graph), where $a,b,c$ represent the edge capacities of $(u,v)$, $(u,w)$, and $(v,w)$, respectively. Applying Wye-Delta transformation on the right graph gives us the left graph.}
    \label{fig:enter-label}
\end{figure}
    \begin{proof}
        Let $G$ denote the original graph and $H$ denote the graph after the Wye-Delta/Delta-Wye transformation. Let $\Delta$ and $Y$ denote the induced subgraphs of $G$ and $H$ on $\{u,v,w\}$ and $\{u,v,w,x\}$, respectively. First, we reduce showing terminal-flow equivalence on the entire graph to showing terminal-flow equivalence of the $\Delta$ and $Y$ structures. Fix a demand $\mathbf{b}\in\mathbb{R}^T$ on the terminals. We want to show that $\mathbf{b}$ is routable in $G$ if and only if it is routable in $H$, assuming that $\Delta$ and $Y$ are terminal flow equivalent. Indeed, take a flow $f$ in $G$ routing the demand and let $f''$ denote the flow $f$ restricted to the edges $(u,v)$, $(v,w)$, and $(u,w)$. The flow $f-f'$ no longer completely routes $\mathbf{b}$; it routes $\mathbf{b}-\mathbf{b}'$ for some residual demand $\mathbf{b}'\in\mathbb{R}^T$, which is supported on $\{u,v,w\}$. Note that $\mathbf{b}'$ can be routed in $\Delta$, specifically via $f'$. Since $\Delta$ and $Y$ are flow equivalent, there must be a flow $f''$ in $Y$ which routes $\mathbf{b}'$. Thus, the flow $f-f'+f''$ routes $\mathbf{b}-\mathbf{b}'+\mathbf{b'}=\mathbf{b}$ in $H$. The flow $f-f'$ is supported on $H-Y$ and $f''$ is supported on $Y$, so $f-f'+f''$ has congestion $1$, as desired. The reduction in the other direction is similar.
        

        It remains to show that $Y$ and $\Delta$ are terminal-flow equivalent. Since both the $Y$ and $\Delta$ structures are Okamura-Seymour instances (planar graphs with all terminals lying on the outer boundary), the flow-cut gap is $1$~\cite{okamura1981multicommodity}. Therefore, it suffices to prove terminal-cut equivalence of $\Delta$ and $Y$ to conclude the flow equivalence claim. Consider any partition of the terminals $\{u,v,w\}$ in $\Delta$ and $Y$. We want to show that the value of the minimum cut separating the terminals in $\Delta$ and $Y$ are the same. By symmetry, it suffices to show that the minimum cut separating $u$ from $v$ and $w$ is the same in $Y$ and $\Delta$. And this is indeed the case since the minimum cut in $Y$ is $c(x,u)$ and the minimum cut in $\Delta$ is $c(u,v)+c(u,w)=[c(x,u)+c(x,v)-c(x,w)]/2+[c(x,u)+c(x,w)-c(x,v)]/2=c(x,u)$, as desired. For the Delta-Wye transformation, we can similarly argue that it suffices to show that the minimum cut separating $u$ from $v$ and $w$ are the same in $\Delta$ and $Y$. This is also the case since the minimum cut in $\Delta$ is $c(u,v)+c(u,w)$ and the minimum cut in $Y$ is $c(x,u)=c(u,v)+c(u,w)$. 
    \end{proof}

\begin{figure}[h]
\centering
\subfigure
{\scalebox{0.08}{\includegraphics{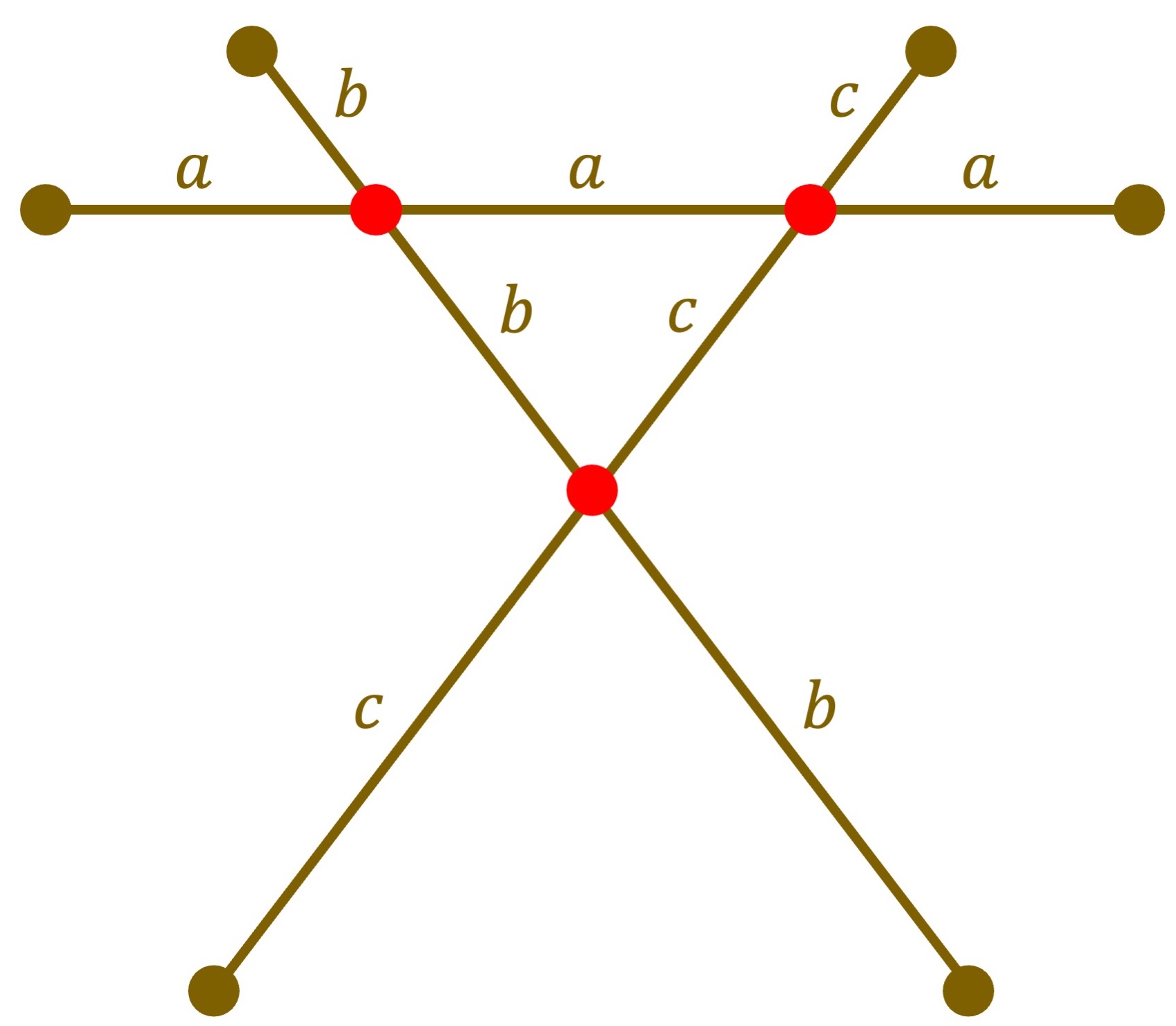}}}
\hspace{0.3cm}
\subfigure
{\scalebox{0.08}{\includegraphics{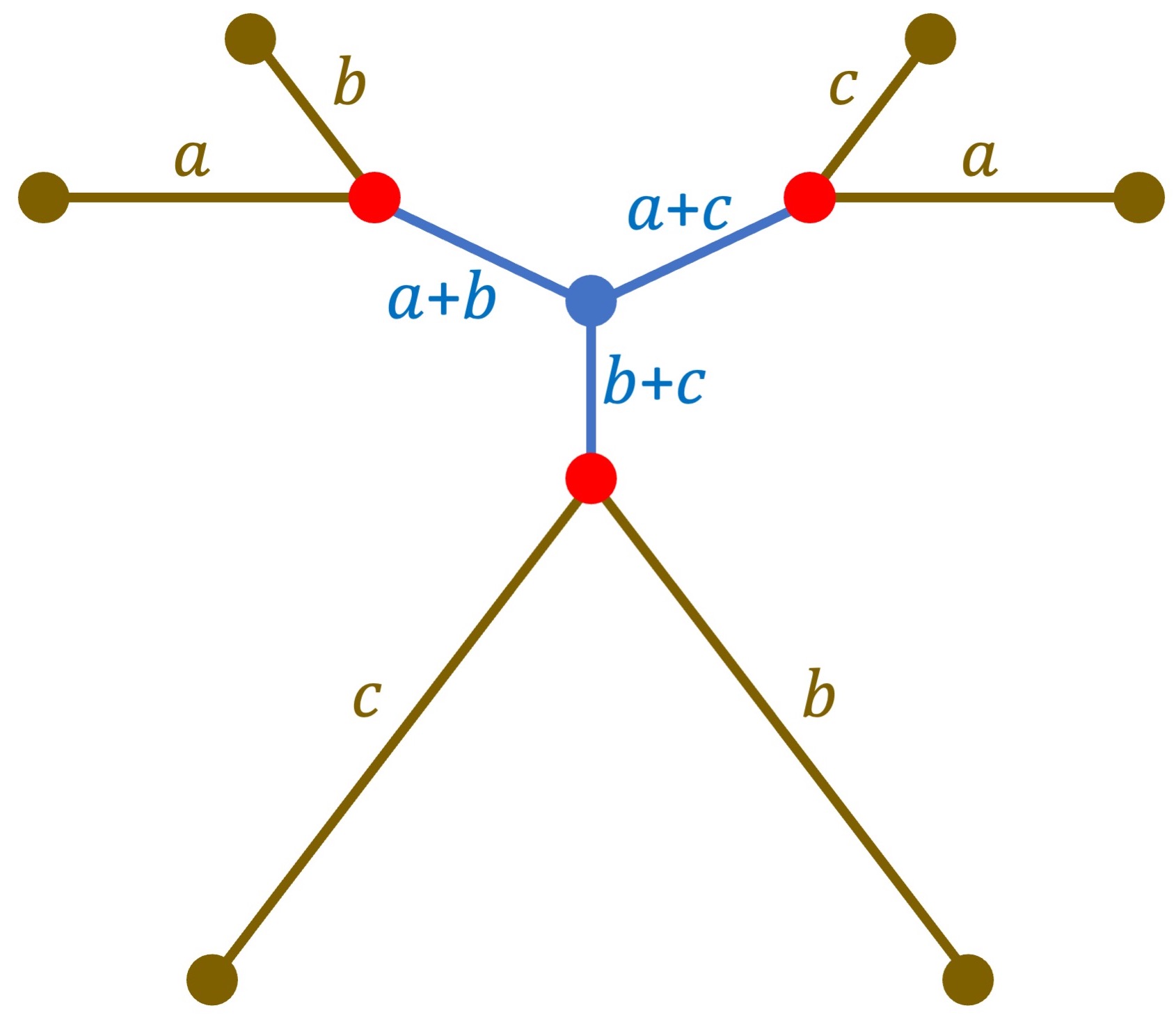}}}
\hspace{0.3cm}
\subfigure
{\scalebox{0.08}{\includegraphics{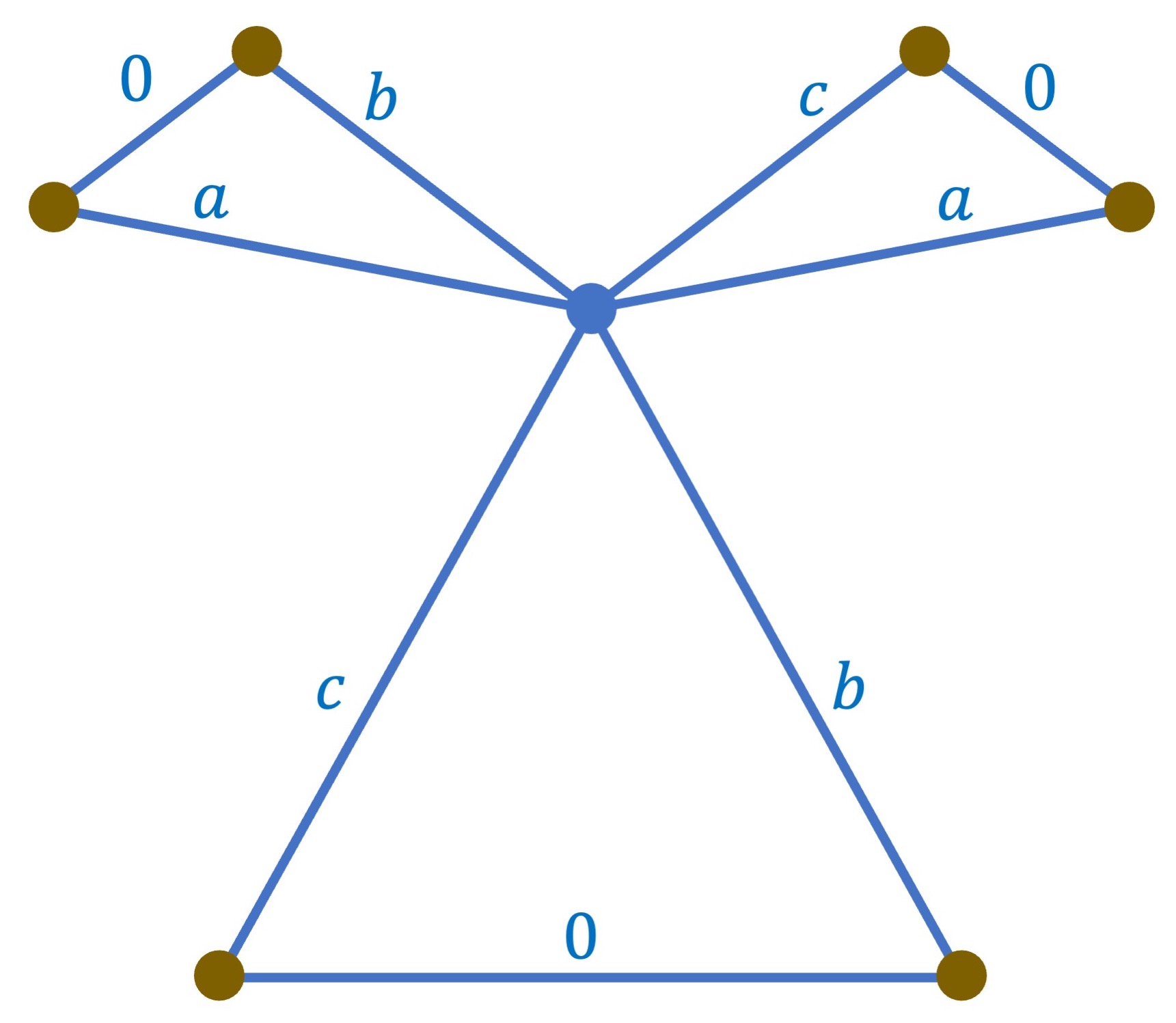}}}
\hspace{0.3cm}
\subfigure
{\scalebox{0.08}{\includegraphics{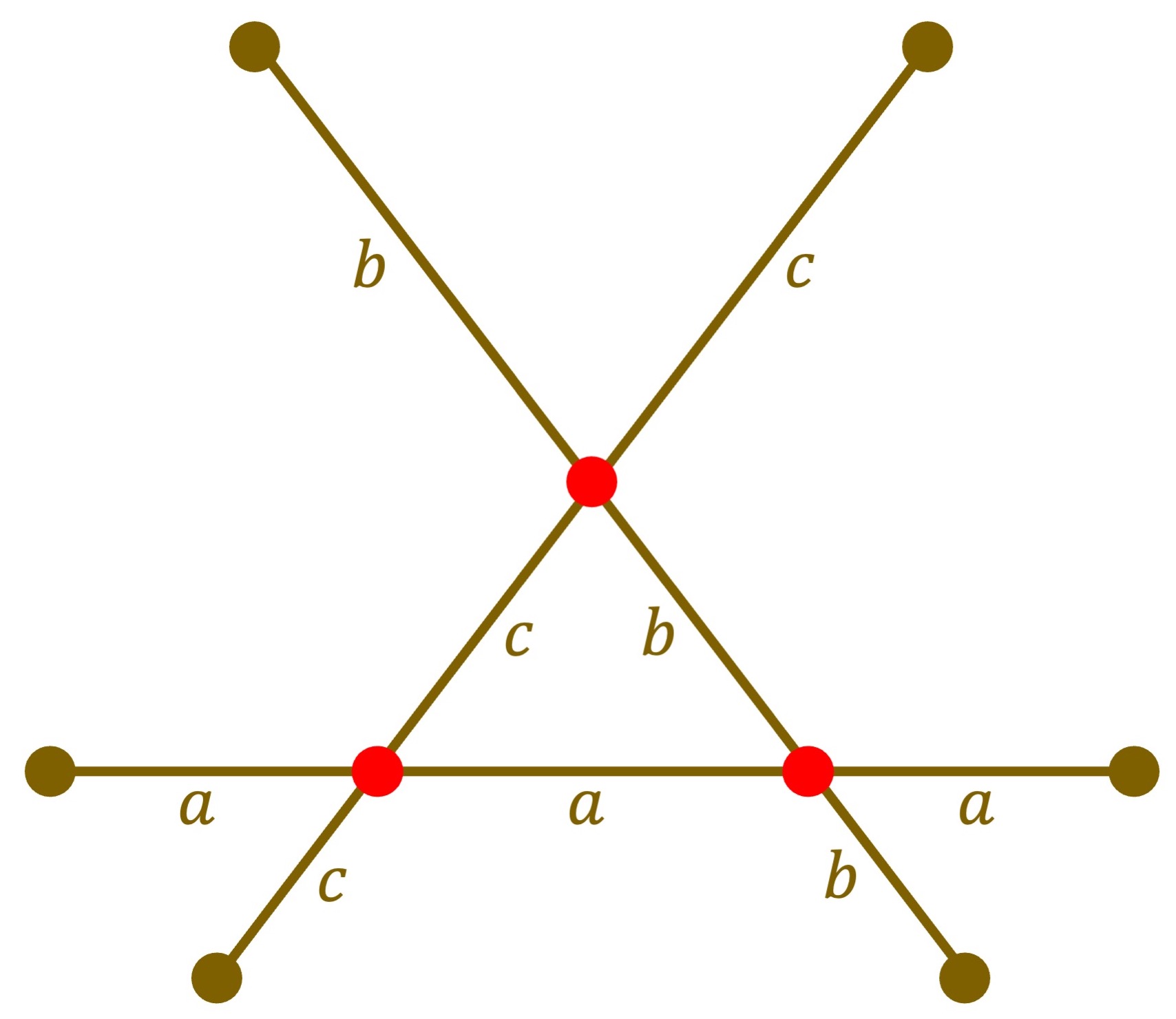}}}
\hspace{0.3cm}
\subfigure
{\scalebox{0.08}{\includegraphics{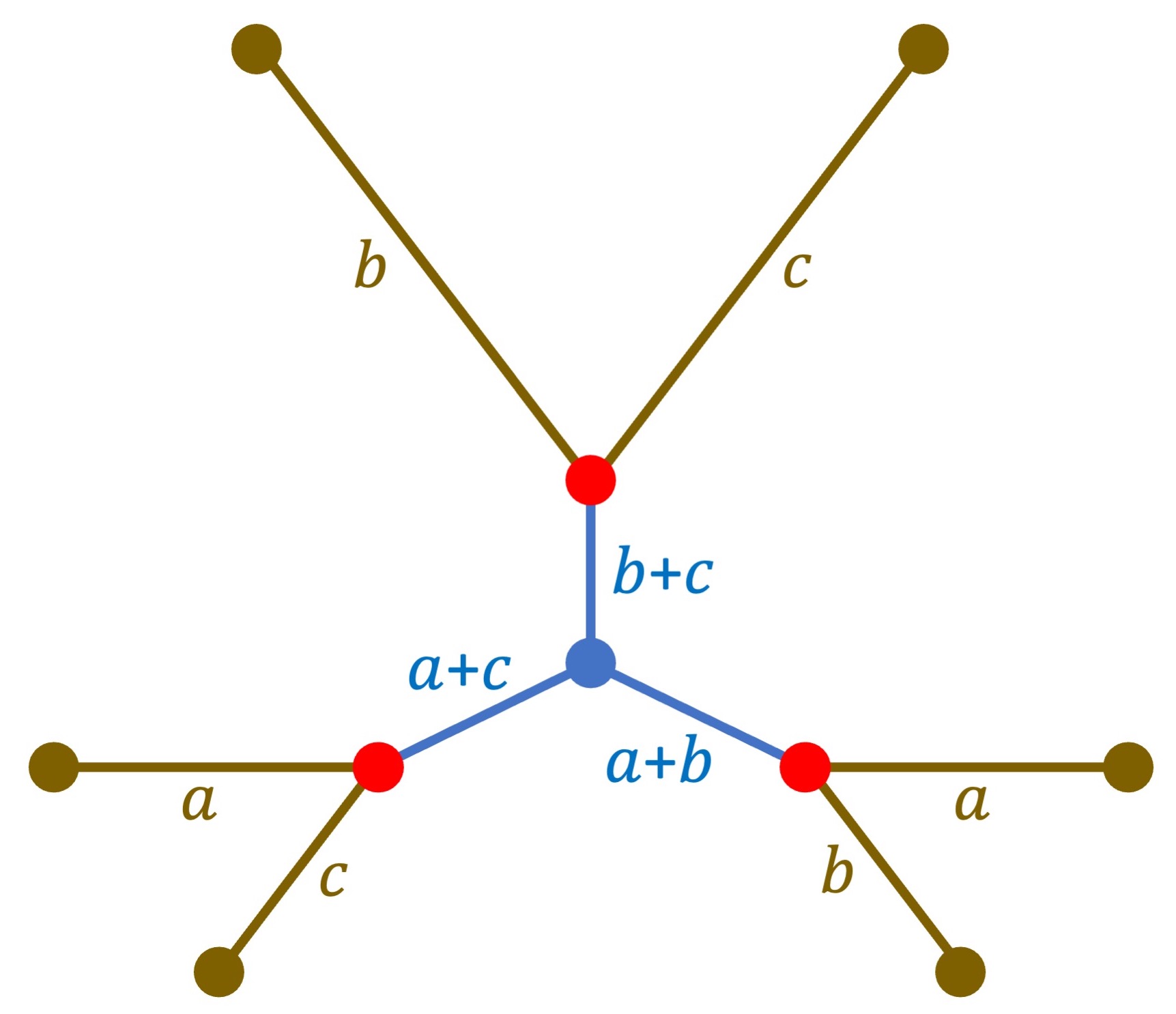}}}
\hspace{0.3cm}
\subfigure
{\scalebox{0.08}{\includegraphics{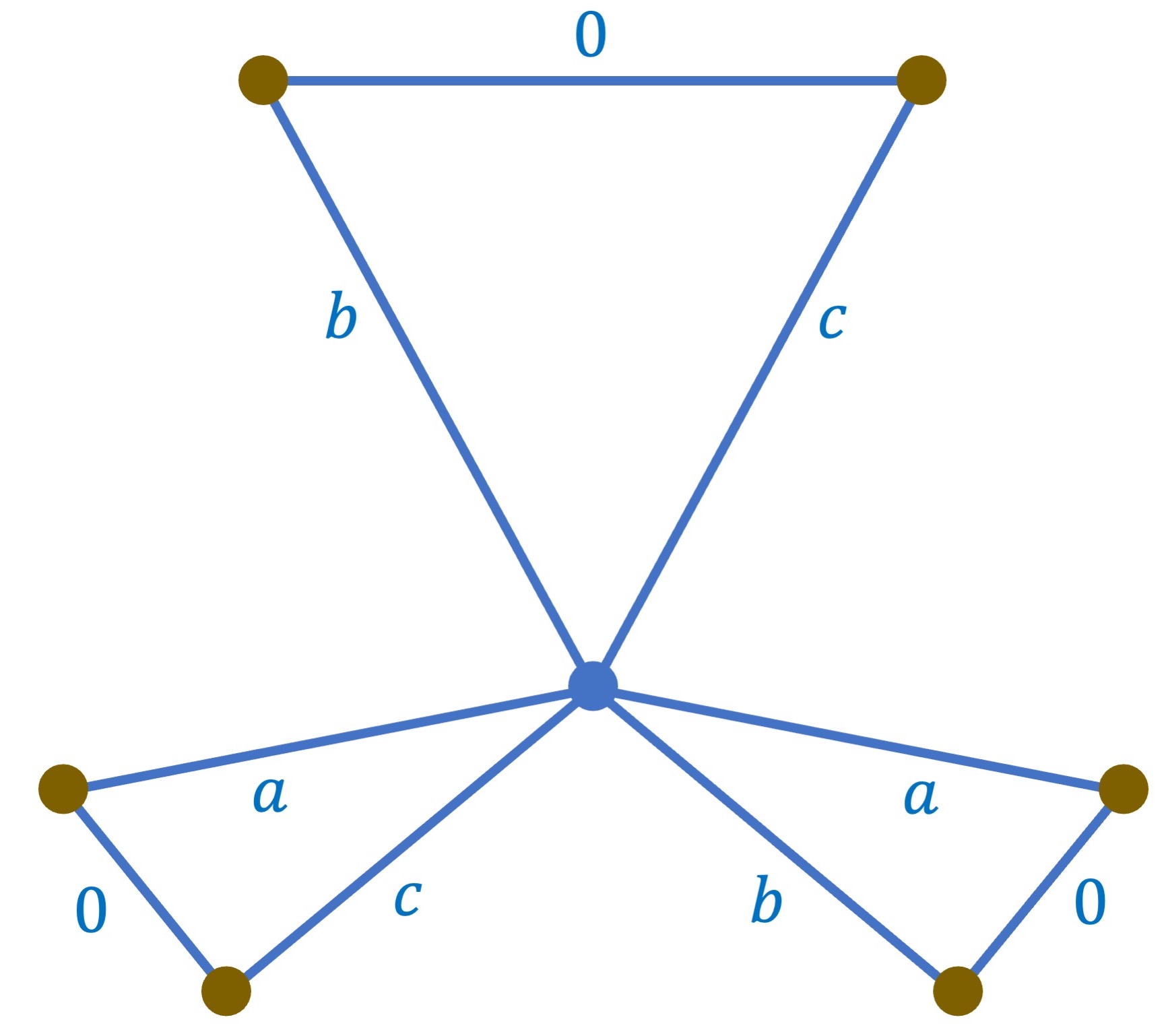}}}
\caption{In the above sequence, we show that locally the old graph $H$ flow-equivalent to the blue structure. In the below sequence, we show that the new graph $H$ after the rerouting is flow-equivalent to the blue structure. But the two blue structures are exactly the same (modulo capacity-$0$ edges), so we can conclude the old and new graphs $H$ are flow-equivalent. \label{fig: YD_2}}
\end{figure}

    The only difference between the old and new graphs $H$ is whether $\gamma$ crosses $\gamma_b$ and $\gamma_c$ before or after $\gamma_b$ and $\gamma_c$ cross each other. The topology of the remainder of the graph is exactly the same, so it suffices to prove flow equivalence locally.
    Let $a,b,c$ denote the capacities of edges in paths $\gamma,\gamma_b,\gamma_c$, respectively. 
    In Figure \ref{fig: YD_2}, we show that the old and new graphs $H$ are locally flow-equivalent via Wye-Delta transformations. 
\end{proof}

\appendix
\section{Missing Proofs in \Cref{sec: warmup}}

\subsection{Proof of \Cref{obs: split}}
\label{apd: Proof of obs: split}

Denote by $a_j$ the segment of the inner face boundary going clockwise from $t'_j$ to $t'_{j+1}$.
Recall that we say that $t_i$ splits at $(t'_j,t'_{j+1})$ if the region enclosed by path $P_{i,j}$, path $P_{i,j+1}$, and segment $a_j$ contains the inner face.

We first prove existence of the splitting pair. If $t_i$ splits at $(t'_j,t'_{j+1})$ for some $j=1,\ldots,\frac{k}{2}-1$, then we are done. Suppose not; then we will show it splits at the pair $(t'_{k/2},t_1)$. To see this, consider the regions $A_j$ inscribed by the compositions of paths $P_{i,j}$, $t'_{j}\to t'_{j+1}$, and $P_{i,j+1}$ for each $j=1,\ldots,\frac{k}{2}-1$. Denote their union by $A=\bigcup_{j=1,\ldots,k/2-1}A_j$, and note that $A$ doesn't contain the inner face. It can be seen by definition that $A$ is the region inscribed by the composition of paths $P_{i,1}$, $t'_1\to t'_2\to\ldots\to t'_{k/2-1}$, and $P_{i,k/2-1}$. As a result, the region inscribed by composition of $P_{i,1}$, $t'_1\to t'_{k/2-1}$, and $P_{i,k/2-1}$ must contain the inner face since the path consisting of the sequence of edges $t'_{j}\to t'_{j+1}$ for $j=1,\ldots,k/2$ inscribes the inner face. Thus, $t_i$ must split at $t_{k/2}$ and we are done.

We now prove uniqueness of the splitting point $(t'_j,t'_{j+1})$. Suppose for contradiction that $t_i$ splits at $(t'_j,t'_{j+1})$ and $(t'_\ell,t'_{\ell+1})$ for $\ell\neq j$. Observe that neither pair of paths $P_{i,j}$ and $P_{i,\ell+1}$ nor $P_{i,j+1}$ and $P_{i,\ell}$ can have a common node other than $t_i$; this would give a contradiction to the assumption that two shortest paths intersect only at a single point. But if neither pair had a common node, then both paths $P_{i,\ell}$ and $P_{i,\ell+1}$ would have to lie in the interior on the region enclosed by $(P_{i,j},P_{i,j+1},a_j)$. This implies that the region enclosed by $(P_{i,\ell},P_{i,\ell+1},a_\ell)$ cannot contain the inner face, a contradiction.

\subsection{Proof of \Cref{obs: split move}}
\label{apd: Proof of obs: split move}

Suppose that $t'_{j_2}$ didn't lie on the (closed) clockwise segment between $t'_{j_1}$ and $t'_{j_3}$; we will show that there must be two shortest paths which intersect twice. Observe that since $j_2$ isn't on the clockwise segment between $t'_{j_1}$ and $t'_{j_3}$, it must be on the counterclockwise segment between $t'_{j_1-1}$ and $t'_{j_3+1}$. Now, consider the critical path $P_{i_2,j_2+1}$. Since $t_{i_2}$ splits at $(t'_{j_2},t'_{j_2+1})$, this forces $P_{i_2,j_2+1}$ to intersect $P_{i_1,j_1}$ twice, giving a contradiction.

\subsection{Formal description of the emulator structure}
\label{apd: 2-face emulator}

Formally, the emulator consists of three parts: 
\begin{itemize}[leftmargin=*]
    \item  The first part is a planar graph $H_1$ that contains terminals $t_1,\ldots,t_{k/2}$ and has them lie on a single face $F_1$ (in the same order as they lie on the face in $G$), preserving the distances between them in $G$. In fact, we will simply use the construction of \cite{ChangO20} and \cite{goranci2020improved}, which gives a construction of $H_1$ on $O(k^2)$ vertices.
    \item The second part is a planar graph $H_2$ that contains terminals $t'_1,\ldots,t'_{k/2}$ and has them lie on a single face $F_2$ (in the same order as they lie on the face in $G$), preserving the distances between them in $G$. Similarly, we will simply use the construction of \cite{ChangO20} and \cite{goranci2020improved}, which gives a construction of $H_2$ on $O(k^2)$ vertices.
    \item the third part is a planar graph $H_3$ that contains all terminals, with terminals $t_1,\ldots,t_{k/2}$ lying on a single face $F_3$ (in the same order), and terminals $t'_1,\ldots,t'_{k/2}$ lying on another face $F_4$ (in the same order). $H_3$ is in charge of preserving the distances between inter-face pairs $t_i,t'_j$ in $G$.
\end{itemize}

\begin{figure}[h]
    \centering
    \subfigure
    {\scalebox{0.1}{\includegraphics{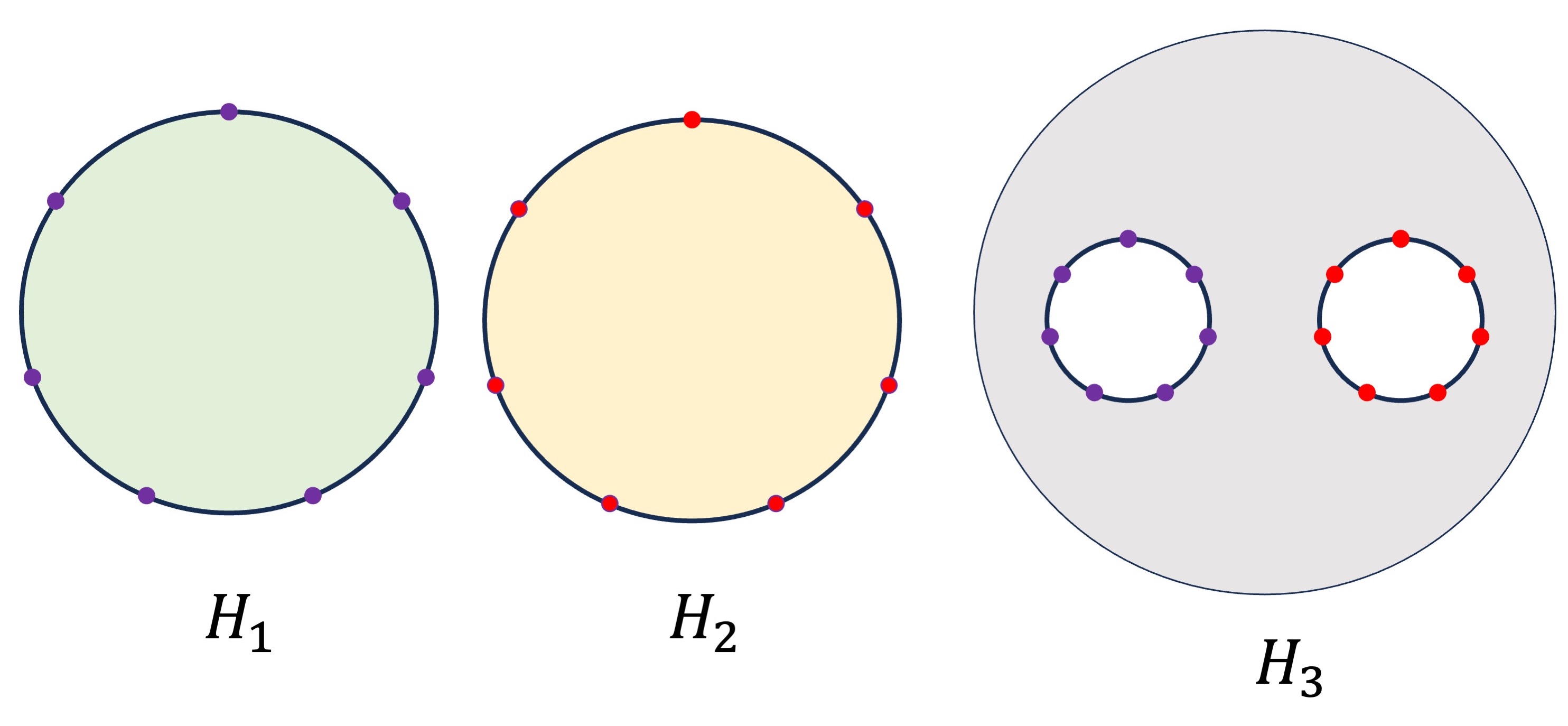}}}
    \hspace{1cm}
    \subfigure
    {\scalebox{0.1}{\includegraphics{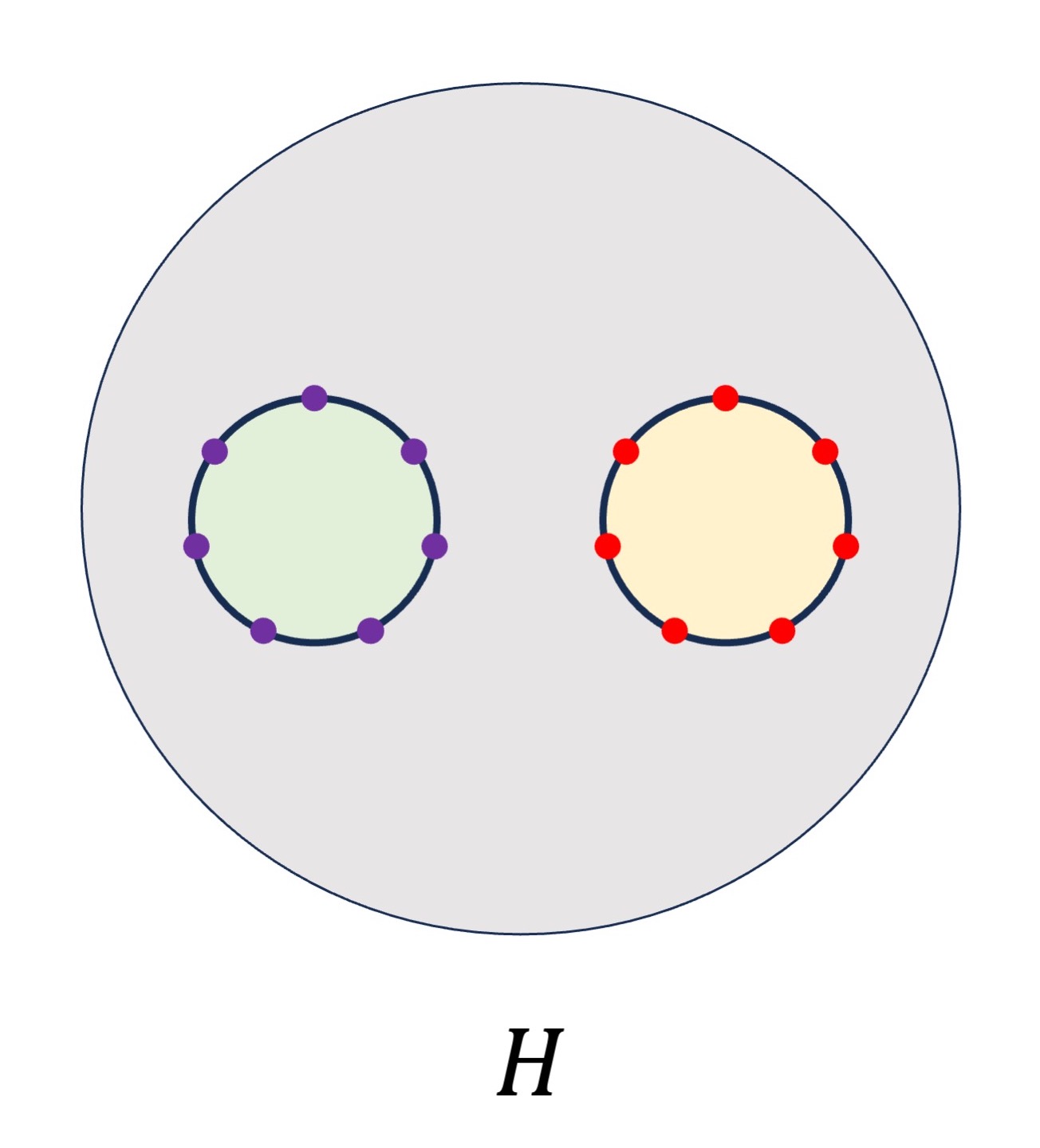}}}
    \caption{An illustration of graphs $H_1,H_2,H_3$ and graph $H$ (right) obtain by concatenating them.\label{fig: concatenation}}
\end{figure}

The final emulator $H$ is obtained by sticking graphs $H_1,H_2,H_3$ together. Specifically, we start from graph $H_3$, and (i) place the drawing of $H_1$ (with $F_1$ being the its outer face) into face $F_3$, and identify the two copies of each terminal $t_i$ in $H_1$ and $H_3$; (ii) place the drawing of $H_2$ (with $F_2$ being the its outer face) into face $F_4$, and identify the two copies of each terminal $t'_j$ in $H_2$ and $H_3$. See \Cref{fig: concatenation} for an illustration.

From the construction, we have that $|V(H)|\le |V(H_1)|+|V(H_2)|+|V(H_3)|\le O(k^2)+|V(H_3)|$. Therefore, if we show $|V(H_3)|=O(k^2)$, then we immediately obtain $|V(H)|=O(k^2)$.

\section{Proof of \Cref{clm: feasibility or flow}}
\label{apd: Proof of clm: feasibility or flow}

We use the following Farkas' lemma \cite{farkas1898fourier}.
\begin{lemma}[Farkas' Lemma]
For any matrix $A\in \mathbb{R}^{m\times n}$ and any vector $b\in \mathbb{R}^m$, exactly one of the two assertions holds:
\begin{itemize}
\item there exists a vector $x\in \mathbb{R}^n$ such that $Ax\le b$ and $x\ge 0$;
\item there exists a vector $y\in \mathbb{R}^m$ such that $A^Ty\ge 0$, $b^Ty<0$, and $y\ge 0$.
\end{itemize}\label{lem:farkas}
\end{lemma}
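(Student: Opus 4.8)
The plan is to prove the two alternatives are mutually exclusive --- the easy half --- and then that at least one of them always holds --- the substantive half.

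\emph{Mutual exclusivity.} First I would observe that if some $x\ge 0$ satisfies $Ax\le b$ while some $y\ge 0$ satisfies $A^Ty\ge 0$, then $b^Ty\ge (Ax)^Ty = x^T(A^Ty)\ge 0$, where the first inequality uses $b-Ax\ge 0$ together with $y\ge 0$, and the last inequality uses $x\ge 0$ together with $A^Ty\ge 0$. This contradicts $b^Ty<0$, so the two alternatives cannot both hold.

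\emph{At least one alternative holds.} The main step is to reformulate the first alternative geometrically. Introducing slack variables, $\{x: Ax\le b,\ x\ge 0\}\ne\emptyset$ if and only if $b$ belongs to the set
\[
C \;=\; \bigl\{\, Ax+s \;:\; x\in\mathbb{R}^n_{\ge 0},\ s\in\mathbb{R}^m_{\ge 0} \,\bigr\},
\]
which is the convex cone generated by the columns $a^{(1)},\dots,a^{(n)}$ of $A$ and the standard basis vectors $e_1,\dots,e_m$ of $\mathbb{R}^m$. If $b\in C$ we are done: writing $b=Ax+s$ with $x,s\ge 0$ gives $Ax=b-s\le b$. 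So suppose $b\notin C$. I would then invoke the fact that a finitely generated convex cone is closed, and apply the separating hyperplane theorem to the closed convex set $C$ and the point $b$: this produces $y\in\mathbb{R}^m$ and $\alpha\in\mathbb{R}$ with $y^Tb<\alpha\le y^Tz$ for all $z\in C$. Since $0\in C$ we get $\alpha\le 0$, hence $b^Ty<0$; and since $C$ is a cone, scaling $z\in C$ by arbitrarily large $\lambda>0$ in the inequality $y^T(\lambda z)\ge\alpha$ forces $y^Tz\ge 0$ for every $z\in C$. Evaluating at $z=a^{(j)}$ gives $(A^Ty)_j\ge 0$ for each $j$, and at $z=e_i$ gives $y_i\ge 0$ for each $i$. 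Thus $A^Ty\ge 0$, $y\ge 0$, $b^Ty<0$, which is the second alternative.

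The hard part will be justifying that the finitely generated cone $C$ is closed, which the separation argument genuinely needs: separating a point from a non-closed convex set only yields $y^Tb\le 0$, which is too weak for the strict inequality $b^Ty<0$. I would prove closedness via the conical Carath\'eodory theorem --- every point of $C$ is a nonnegative combination of a linearly independent subset of the $n+m$ generators --- so that $C$ is a finite union of simplicial cones (injective linear images of nonnegative orthants), each of which is closed, and a finite union of closed sets is closed. An alternative, topology-free route is Fourier--Motzkin elimination applied to the $m+n$ inequalities cutting out $\{x:Ax\le b,\ x\ge 0\}$: eliminating all $n$ variables leaves constant inequalities $0\le c_k$, each obtained as an explicit nonnegative combination of the original inequalities, and an infeasible reduced system (some $c_k<0$) hands back exactly the certificate $y$; this avoids closedness but requires an induction verifying that each elimination step preserves feasibility and that the per-step certificates compose into a single nonnegative vector.
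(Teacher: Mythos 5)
Your proposed proof is correct, but note that the paper does not prove this statement at all: Farkas' Lemma is imported as a classical result (cited to Farkas, 1898) and used as a black box in the proof of Claim~\ref{clm: feasibility or flow}, so there is no in-paper argument to compare against. What you give is the standard cone-separation proof of the inequality form of Farkas' Lemma: the easy direction via $0\le y^T(b-Ax)$ and $x^T(A^Ty)\ge 0$ is fine, and the substantive direction correctly recasts feasibility of $Ax\le b$, $x\ge 0$ as membership of $b$ in the cone generated by the columns of $A$ together with the standard basis vectors, then applies strict separation of a point from a closed convex cone. You also correctly identify the one genuine subtlety---that the strict inequality $b^Ty<0$ requires the finitely generated cone to be closed---and your justification via conical Carath\'eodory (writing the cone as a finite union of simplicial cones, each the image of a nonnegative orthant under an injective, hence proper, linear map) is sound; the Fourier--Motzkin alternative you sketch is likewise a standard and valid route that trades the topological input for a combinatorial induction. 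In short: a complete and correct proof of a lemma the paper chose merely to cite.
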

    
Note that (LP-$H^*$) can be written in the $Ax\le b$ form, and its feasibility is equivalent to the existence of a vector $x\ge 0$ that satisfies $Ax\le b$. Therefore, from Farkas' lemma, either (LP-$H^*$) is feasible, or there exists $y\ge 0$ satisfying $A^Ty\ge 0$ and $b^Ty<0$. 

To define the terminal flows $F$ and $F^\prime$, let us examine the vector $y$. If we view $A^Ty$ as a linear combination of the constraints represented by $A$, then the values in $y$ can be interpreted as the coefficients for the constraints. For each coefficient $y_Q$ of the first type of constraint on simple paths $Q$, define $F^\prime(Q)=y_Q$. For each coefficient $y_P$ of the second type of constraint on canonical paths $P$, define $F(P)=y_P$. Then define $F$ and $F^\prime$ on all other inputs to be zero. We will show that the flows $F,F^\prime$ satisfy the desired conditions.

    Condition (1) is obviously satisfied by how we define $F$. To see that Condition (2) holds, we will try to prove that $F_e-F^\prime_e\ge 0$ for each $e\in E(H)$. We claim that this is equivalent to the fact that the $A^Ty\ge 0$. Let us focus on one edge $e\in E(G)$. The coefficients of $x_e$ form a column of $A$; let us denote this column by $A_e$. Then $A^Ty\ge 0$ implies that $A^T_ey\ge 0$; we claim that this implies $F_e-F^\prime_e\ge 0$. Observe that the coordinate of $A_e$ corresponding to the first type of constraint for simple path $Q$ has value $-1$ if $e\in E(Q)$, and zero otherwise. Similarly, the coordinate of $A_e$ corresponding to the second type of constraint for canonical path $P$ has value $1$ if $e\in E(P)$, and zero otherwise. As a result,
    $$A_e^Ty=\sum_{P:e\in E(P)}y_P-\sum_{Q:e\in E(Q)}y_Q=\sum_{P:e\in E(P)}F(P)-\sum_{Q:e\in E(Q)}F(Q)=F_e-F^\prime_e,$$
    where the summations over $P$ are over canonical paths and summations over $Q$ are over all simple paths.

    Finally, we prove Condition (3) by showing $\cost(F)-\cost(F^\prime)<0$. We claim that the condition is equivalent to the condition that $b^Ty>0$ by showing $\cost(F)-\cost(F^\prime)=b^Ty$. Indeed, the coordinate of $b$ corresponding to the first type of constraint for simple path $Q$ connecting terminals $t,t^\prime$ has value $-\dist_G(t,t^\prime)$. Similarly, the coordinate of $b$ corresponding to the second type of constraint for canonical path $P$ connecting terminals $t,t^\prime$ has value $\dist_G(t,t^\prime)$. As a result, $$b^Ty=\sum_{P}y_P\cdot \dist_G(t,t^\prime)-\sum_{Q}y_Q\cdot\dist_G(t,t^\prime)=\cost(F)-\cost(F^\prime),$$
    where the first summation is over all canonical paths $P$, the second summation is over all simple paths $Q$, and $t,t^\prime$ are the terminals which paths $P$ and $Q$ connect.

\bibliographystyle{alpha}
\bibliography{refs.bib}

\newcommand{\etalchar}[1]{$^{#1}$}
\begin{thebibliography}{CGMW18}

\bibitem[BG08]{basu2008steiner}
Amitabh Basu and Anupam Gupta.
\newblock {Steiner} point removal in graph metrics.
\newblock {\em Unpublished Manuscript, available from http://www. math. ucdavis. edu/\~{} abasu/papers/SPR. pdf}, 1:25, 2008.

\bibitem[CCL{\etalchar{+}}23]{chang2023covering}
Hsien-Chih Chang, Jonathan Conroy, Hung Le, Lazar Milenkovic, Shay Solomon, and Cuong Than.
\newblock Covering planar metrics (and beyond): {$O(1)$} trees suffice.
\newblock In {\em 2023 IEEE 64th Annual Symposium on Foundations of Computer Science (FOCS)}, pages 2231--2261. IEEE, 2023.

\bibitem[CCL{\etalchar{+}}24]{chang2023shortcut}
Hsien{-}Chih Chang, Jonathan Conroy, Hung Le, Lazar Milenkovic, Shay Solomon, and Cuong Than.
\newblock Shortcut partitions in minor-free graphs: {Steiner} point removal, distance oracles, tree covers, and more.
\newblock In David~P. Woodruff, editor, {\em Proceedings of the 2024 {ACM-SIAM} Symposium on Discrete Algorithms, {SODA} 2024}, pages 5300--5331. {SIAM}, 2024.

\bibitem[CGMW18]{chang2018near}
Hsien{-}Chih Chang, Pawel Gawrychowski, Shay Mozes, and Oren Weimann.
\newblock Near-optimal distance emulator for planar graphs.
\newblock In Yossi Azar, Hannah Bast, and Grzegorz Herman, editors, {\em 26th Annual European Symposium on Algorithms, {ESA} 2018}, volume 112 of {\em LIPIcs}, pages 16:1--16:17. Schloss Dagstuhl - Leibniz-Zentrum f{\"{u}}r Informatik, 2018.

\bibitem[Che18]{cheung2018steiner}
Yun~Kuen Cheung.
\newblock {Steiner} point removal: distant terminals don't (really) bother.
\newblock In {\em Proceedings of the Twenty-Ninth Annual ACM-SIAM Symposium on Discrete Algorithms}, pages 1353--1360. Society for Industrial and Applied Mathematics, 2018.

\bibitem[CKT22]{chang2022almost}
Hsien-Chih Chang, Robert Krauthgamer, and Zihan Tan.
\newblock Almost-linear $\varepsilon$-emulators for planar graphs.
\newblock In {\em Proceedings of the 54th Annual ACM SIGACT Symposium on Theory of Computing}, pages 1311--1324, 2022.

\bibitem[CO20]{ChangO20}
Hsien{-}Chih Chang and Tim Ophelders.
\newblock Planar emulators for {Monge} matrices.
\newblock In J.~Mark Keil and Debajyoti Mondal, editors, {\em Proceedings of the 32nd Canadian Conference on Computational Geometry, {CCCG} 2020}, pages 141--147, 2020.

\bibitem[CT24]{chen2024lower}
Yu~Chen and Zihan Tan.
\newblock An {$O(\sqrt{\log|T|)}$} lower bound for {Steiner} point removal.
\newblock In {\em Proceedings of the 2024 Annual ACM-SIAM Symposium on Discrete Algorithms (SODA)}, pages 694--698. SIAM, 2024.

\bibitem[CT25]{chen2025path}
Yu~Chen and Zihan Tan.
\newblock Path and intersections: Characterization of quasi-metrics in directed {Okamura-Seymour} instances.
\newblock In {\em Proceedings of the 2025 Annual ACM-SIAM Symposium on Discrete Algorithms (SODA)}, pages 2467--2490. SIAM, 2025.

\bibitem[CXKR06]{chan2006tight}
T-H~Hubert Chan, Donglin Xia, Goran Konjevod, and Andrea Richa.
\newblock A tight lower bound for the {Steiner} point removal problem on trees.
\newblock In {\em Approximation, Randomization, and Combinatorial Optimization. Algorithms and Techniques}, pages 70--81. Springer, 2006.

\bibitem[EFL18]{erickson2018holiest}
Jeff Erickson, Kyle Fox, and Luvsandondov Lkhamsuren.
\newblock Holiest minimum-cost paths and flows in surface graphs.
\newblock In {\em Proceedings of the 50th Annual ACM SIGACT Symposium on Theory of Computing}, pages 1319--1332, 2018.

\bibitem[Far98]{farkas1898fourier}
Gyula Farkas.
\newblock A {Fourier}-f{\'e}le mechanikai elv algebrai alapja.
\newblock {\em Math{\'e}matikai {\'e}s Term{\'e}szettudom{\'a}nyi Ertesito}, 16:361--364, 1898.

\bibitem[Fil19]{filtser2018steiner}
Arnold Filtser.
\newblock {Steiner} point removal with distortion {$O(\log k)$} using the relaxed-{Voronoi} algorithm.
\newblock {\em {SIAM} J. Comput.}, 48(2):249--278, 2019.

\bibitem[Fil24]{filtser2020scattering}
Arnold Filtser.
\newblock Scattering and sparse partitions, and their applications.
\newblock {\em {ACM} Trans. Algorithms}, 20(4):30:1--30:42, 2024.

\bibitem[GHP20]{goranci2020improved}
Gramoz Goranci, Monika Henzinger, and Pan Peng.
\newblock Improved guarantees for vertex sparsification in planar graphs.
\newblock {\em SIAM Journal on Discrete Mathematics}, 34(1):130--162, 2020.

\bibitem[Gup01]{gupta2001steiner}
Anupam Gupta.
\newblock {Steiner} points in tree metrics don't (really) help.
\newblock In {\em Proceedings of the twelfth annual ACM-SIAM symposium on Discrete algorithms}, pages 220--227. Society for Industrial and Applied Mathematics, 2001.

\bibitem[HL22]{hershkowitz20211}
D.~Ellis Hershkowitz and Jason Li.
\newblock {$O(1)$} {Steiner} point removal in series-parallel graphs.
\newblock In Shiri Chechik, Gonzalo Navarro, Eva Rotenberg, and Grzegorz Herman, editors, {\em 30th Annual European Symposium on Algorithms, {ESA} 2022}, volume 244 of {\em LIPIcs}, pages 66:1--66:17. Schloss Dagstuhl - Leibniz-Zentrum f{\"{u}}r Informatik, 2022.

\bibitem[KKN15]{kamma2015cutting}
Lior Kamma, Robert Krauthgamer, and Huy~L Nguyen.
\newblock Cutting corners cheaply, or how to remove {Steiner} points.
\newblock {\em SIAM Journal on Computing}, 44(4):975--995, 2015.

\bibitem[KNZ14]{krauthgamer2014preserving}
Robert Krauthgamer, Huy~L Nguyen, and Tamar Zondiner.
\newblock Preserving terminal distances using minors.
\newblock {\em SIAM Journal on Discrete Mathematics}, 28(1):127--141, 2014.

\bibitem[OS81]{okamura1981multicommodity}
Haruko Okamura and Paul~D Seymour.
\newblock Multicommodity flows in planar graphs.
\newblock {\em Journal of Combinatorial Theory, Series B}, 31(1):75--81, 1981.

\end{thebibliography}

\end{document}